\newcommand{\source}[1]{Ref.~\cite{#1}} 
\newcommand{\sources}[1]{Refs.~\cite{#1}} 
\crefname{section}{Sec.}{Secs.}
\Crefname{section}{Sec.}{Secs.}
\crefname{lem}{lemma}{lemmas}
\Crefname{lem}{Lemma}{Lemmas}
\crefname{thm}{theorem}{theorems}
\Crefname{thm}{Theorem}{Theorems}
\newcommand{\lat}[1]{\textsl{#1}} 
\newcommand{\ii}{\mathrm{i}}
\newcommand{\dd}{\mathrm{d}}
\renewcommand{\vec}[1]{{\bm#1}} 
\newcommand{\op}[1]{#1} 
\newcommand{\Iop}[2][]{\op{#2}^{#1}} 
\newcommand{\conj}[1]{\overline{#1}}  
\newcommand{\define}{\equiv} 
\newcommand{\defineright}{\equiv} 
\DeclareMathOperator{\normalord}{:} 
\newcommand{\id}{\text{\upshape\sffamily \openone}} 
\newcommand{\laplace}{\Delta} 
\newcommand{\dalembert}{\Box} 
\DeclareMathOperator{\tr}{tr}
\DeclareMathOperator{\sgn}{sgn}
\DeclareMathOperator{\varRe}{Re}
\newcommand{\Z}{\mathds{Z}} 
\newcommand{\R}{\mathds{R}} 
\newcommand{\C}{\mathds{C}} 
\DeclareMathOperator{\Cliff}{Cliff} 
\newcommand{\Iket}[1]{\ket{#1}}
\newcommand{\Ibraket}[1]{\braket{#1}}
\newcommand{\inner}[2]{\braket{#1\,,\,#2}} 
\newcommand{\varinner}[2]{(#1\,,\,#2)}  
\theoremstyle{plain}
\newtheorem{thm}{Theorem}[section]
\newtheorem{lem}[thm]{Lemma}
\newtheorem*{WD}{Auxiliary diagrams for the detector}
\newtheorem*{RL}{Auxiliary diagrams for quantized real fields, linear coupling}
\newtheorem*{RQ}{Auxiliary diagrams for quantized real fields, quadratic coupling}
\newtheorem*{CQ}{Auxiliary diagrams for quantized complex fields, quadratic coupling}
\newtheorem*{SQ}{Auxiliary diagrams for quantized spinor fields, quadratic coupling}
\newtheorem*{GFR}{Generic Feynman rules}
\newtheorem*{EFD}{Elements in Feynman diagrams}
\newtheorem*{CFD}{Constructing a Feynman diagram}
\newtheorem*{EFDMF}{Elements in Feynman diagrams -- massive field}
\newtheorem*{EFDMLF}{Elements in Feynman diagrams -- massless field}
\renewcommand{\paragraph}[1]{\textit{#1}.---~}
\begin{document}
\selectlanguage{english} 


\title{Renormalized Unruh-DeWitt particle detector models for boson and fermion fields}
\author{Daniel Hümmer}
\date{\today}
\affiliation{Institut für Theoretische Physik, Ruprecht-Karls-Universität Heidelberg, Philosophenweg 16, 69120 Heidelberg, Germany}
\affiliation{Department of Applied Mathematics, University of Waterloo, 200 University
Avenue West, Waterloo, Ontario, N2L 3G1, Canada}
\author{Eduardo Mart\'{i}n-Mart\'{i}nez}
\affiliation{Department of Applied Mathematics, University of Waterloo, 200 University
Avenue West, Waterloo, Ontario, N2L 3G1, Canada}
\affiliation{Institute for Quantum Computing, University of Waterloo, Waterloo, Ontario, N2L 3G1, Canada}
\affiliation{Perimeter Institute for Theoretical Physics, 31 Caroline Street North, Waterloo, Ontario, N2L 2Y5, Canada}
\author{Achim Kempf}
\affiliation{Department of Applied Mathematics, University of Waterloo, 200 University
Avenue West, Waterloo, Ontario, N2L 3G1, Canada}
\affiliation{Institute for Quantum Computing, University of Waterloo, Waterloo, Ontario, N2L 3G1, Canada}
\affiliation{Perimeter Institute for Theoretical Physics, 31 Caroline Street North, Waterloo, Ontario, N2L 2Y5, Canada}

\begin{abstract}
Since quantum field theories do not possess proper position observables, Unruh-DeWitt detector models serve as a key theoretical tool for extracting localized spatiotemporal information from quantum fields. Most studies have been limited, however, to Unruh-DeWitt (UDW) detectors that are coupled linearly to a scalar bosonic field. Here, we investigate UDW detector models that probe fermionic as well as bosonic fields through both linear and quadratic couplings. In particular, we present a renormalization method that cures persistent divergencies of prior models. We then show how perturbative calculations with UDW detectors can be streamlined through the use of extended Feynman rules that include localized detector-field interactions. Our findings pave the way for the extension of previous studies of the Unruh and Hawking effects with UDW detectors, and provide new tools for studies in relativistic quantum information, for example,  regarding relativistic quantum communication and studies of the entanglement structure of the fermionic vacuum.
\end{abstract}
\maketitle


\section{Introduction}                                                

While the spatial and temporal meaning of wave functions in first quantization is straightforward, it is notoriously difficult to extract spatiotemporal information from the quantum fields of second quantization. This is in large part due to the fact that there are no position observables in quantum field theory (QFT) \cite{DannyLoc}. As a consequence, one often works with nonlocal field modes instead, for example, in order to calculate $S$-matrix elements in the case of particle physics or Bogolubov transforms in the case of QFT on curved space. 

Actual experiments, of course, probe quantum fields on finite patches of spacetime. A technique which does allow one to model the extraction of localized spatiotemporal information from quantum fields was pioneered by Unruh and DeWitt \cite{unruh_notes_1976,seligman_dewitt_quantum_1979}. The key idea is to probe quantum fields by coupling the quantum field in question to a localized first quantized system, called a detector. An excitation of the detector system is then interpreted 
as the absorption and therefore detection of a particle from the 
quantum field. 

For example, a hydrogen atom, first quantized, can serve as such a detector system for the photons of the second quantized electromagnetic field. The technique of probing quantum fields by coupling them (at least in gedanken experiments) to so-called Unruh-DeWitt (UDW) detectors has been used successfully, in particular, to analyze the Hawking and Unruh effects. One of the key insights gained from these studies has been the finding that and why the very notion of particle is observer dependent. 

More recently, UDW detectors have been used extensively in  studies on quantum communication via field quanta \cite{Jonsson2015,jonsson_quantum_2014} and, more generally, in studies on a host of effects related to the presence of spatially distributed entanglement in the quantum field theoretical vacuum \cite{Algebra1,Valentini1991,Reznik2003,VerSteeg2009,Farming}. In spite of these successes, however, most of these studies have been limited in scope to the model of a UDW detector that probes a quantized real scalar boson field through a linear coupling. 

Here, our aim is to widen the applicability of the general UDW detector approach. 
To this end, we investigate Unruh-DeWitt detector models that probe bosonic as well as fermionic, nonscalar fields through both linear and quadratic couplings. We show how certain persistent divergencies can be overcome through renormalization and we show how perturbative calculations with UDW detectors can be streamlined through the use of extended  Feynman rules that include the scattering of detector excitations.

\medskip{}

Historically, idealized particle detector models were first introduced by Unruh \cite{unruh_notes_1976}, in an attempt to resolve the well-known ambiguity in defining field states corresponding to physical particles on curved spacetimes or for noninertial observers \cite{fulling_nonuniqueness_1973,davies_scalar_1975,unruh_notes_1976}. This was the finding that different quantization procedures yield incompatible Fock spaces and thus lead to ambiguous definitions of particle number eigenstates \cite{grove_notes_1983,birrell_quantum_1984}. Unruh's groundbreaking idea is best summarized in his now famous dictum ``a particle is what a particle detector detects.'' He initially modeled a particle detector for a quantized scalar field as being a quantized scalar field itself, albeit one that is restricted to a cavity so that its excitation events come with spatiotemporal information. Unruh calculated that, when uniformly accelerated through the Minkowski vacuum, this detector will measure a flux of particles that are thermally distributed, which established the famous Unruh effect \cite{unruh_notes_1976,birrell_quantum_1984,crispino_unruh_2008}.

A few years later, DeWitt improved and simplified Unruh's original model by introducing approximations that effectively replace the detecting field with a nonrelativistic two-level quantum system \cite{seligman_dewitt_quantum_1979}, inventing what is now called the Unruh-DeWitt detector. The detector couples directly to the field $\Phi$ through its monopole moment $\op\mu$:
\begin{equation}\label{eqn: standard UDW detector}
  \op{H}_{\mathrm{int}} \propto \op\mu \op\Phi~.
\end{equation}
This model fixed minor problems with Unruh's initial suggestion \cite{grove_notes_1983}. More importantly, the UDW detector has the advantage of being much easier to work with in calculations, because the detector is described by quantum mechanics rather than quantum field theory. Other detector models for quantized scalar fields followed, using both two-level systems and full fields as detectors, and featuring different ways to couple the detector to the fundamental quantum field (see e.g.\,\cite{grove_notes_1983,hinton_particle_1983,hinton_particle_1984,sriramkumar_response_2001}). In the present paper, we will generally call any model that uses DeWitt's monopole as the detecting system a (UDW-type) detector, irrespective of the field it probes and the coupling it uses.

UDW particle detector models have proven to be very versatile and useful tools. They were first used to analyze the effects of accelerations, spacetime curvature and horizons on scalar quantum fields (see e.g.\,\cite{unruh_notes_1976,candelas_irreversible_1977}). Further, UDW detectors have been used in quantum field theory to quantify vacuum fluctuations and the structure of vacuum states \cite{takagi_vacuum_1986,reznik_violating_2005}. In relativistic quantum information, pairs of detectors are crucial for measuring and harvesting the entanglement of quantum fields \cite{cliche_vacuum_2011,martin-martinez_sustainable_2013,lin_entanglement_2010,olson_entanglement_2011}, and serve as senders and recipients of information in quantum signaling \cite{cliche_relativistic_2010,jonsson_quantum_2014}. Moreover, they can be good toy models for light-matter interaction in quantum optics \cite{martin-martinez_wavepacket_2013,Alhambra2013}.

Because UDW detector models for quantized scalar fields have proven so valuable, the question arises to what extent analogous techniques can be developed and applied to other types of fields, such as quantized spinor or vector fields, which in Nature are of course more common than quantized scalar fields. This question is of fundamental interest by itself and there are also concrete applications where detector models for quantized spinor fields could immediately be useful, such as for analyzing the fermionic Unruh effect \cite{soffel_dirac_1980}, and for investigating the related Hawking radiation of Dirac particles \cite{hawking_particle_1975}. Moreover, it has recently been pointed out that such more general UDW detector models could resolve ambiguities that arise in defining measures for the entanglement between disjoint patches of a fermionic field \cite{montero_fermionic_2011}.

A key aim of the present work is, therefore, to find a counterpart of the UDW detector for quantized spinor fields. More precisely, we are interested in a pair of particle detector models, one each for quantized scalar and spinor fields, that is comparable in the sense that differences in the detectors' reaction to scalar and to spinor fields are due to the probed field alone, not caused by any other feature of the respective detector models. We will focus on only half-integer spin fermions and integer-spin bosons, i.e., we will here not be concerned with nontrivial spin/statistic combinations that may arise in low dimensional systems. 

To the best of our knowledge, only two detector models for quantized spinor fields have been suggested so far: the first one by Iyer and Kumar \cite{iyer_detection_1980} uses a quantized scalar field in a cavity as detecting system, close in spirit to Unruh's original suggestion. The second one by Takagi \cite{takagi_response_1985,takagi_vacuum_1986} seeks to imitate DeWitt's simpler model by coupling an UDW-type two-level system to a quantized spinor field $\Psi$ through
\begin{equation}\label{eqn: Takagi's model}
  \op{H}_{\mathrm{int}} \propto \op\mu \op{\conj\Psi}\op\Psi~,
\end{equation}
making it a likely candidate for the closest spinor field equivalent to the scalar field UDW detector. While the above interaction may not be immediate from first principles, that is, from the Standard Model of particle physics, we may for example think of it as an attempt at a first-quantized, simplified version of a second-quantized cavity detector: in quantum electrodynamics, a spinor field (electrons) is coupled to a vector field $A_\mu$ (photons) through
\begin{equation}\label{eqn: interaction H QED}
  \op{H}_{\mathrm{QED}} \propto A_{\mu}\conj{\Psi}\gamma^{\mu}\Psi ~.
\end{equation}
In this logic, certain modes of the electromagnetic field restricted to a cavity with appropriate boundary conditions (realized, e.g., as superconducting mirrors) could serve as a detector for the electron field. The spatiotemporal profile of the detector is then determined by the extent of the cavity and the time over which the electromagnetic field is observed. It is conceivable that by a series of approximations---e.g., neglecting all but one mode in the cavity and restricting its occupation to at most one quantum---we could arrive at a model of the type of \cref{eqn: Takagi's model}. A two-level system coupled through \cref{eqn: Takagi's model} may then be thought of as an approximation to a cavity detector, much in the same way that the UDW detector is a simplification of Unruh's original cavity setup. In the following, however, we will keep with the original idea of such detectors as simple tools to obtain spatiotemporally resolved information about quantum fields, and set aside considerations of how exactly they could arise from first principles. This is justified by the fact that, as discussed above, UDW detectors provide a very useful tool for extracting spatiotemporal information from quantum fields, at least in gedanken experiments.

\medskip{}

We begin in \cref{sec: detector models} with an overview of the UDW-type detector models that we will study. In \cref{sec: VEP}, the excitation probability for the different detector models at rest in Minkowski vacuum is calculated for these models, also to test to what extent these models are comparable. We demonstrate that any UDW-type detector featuring an interaction Hamiltonian which is \emph{quadratic} in the field has divergent excitation probabilities. This is true for spinor field models like \cref{eqn: Takagi's model}, but also for quantized (real and complex) scalar field detector models with coupling
\begin{equation}\label{eqn: Hinton's model}
  \op{H}_{\mathrm{int}} \propto \op\mu \op\Phi^{\dagger}\Phi~.
\end{equation}
Unlike divergent probabilities reported before \cite{grove_notes_1983,takagi_vacuum_1986,sriramkumar_finite-time_1996,louko_transition_2008}, it is not possible to regularize these divergencies by considering a detector of finite size and switching it on and off adiabatically. However, the divergencies are readily understood from a field theoretic point of view: In \cref{sec: renormalizing} we show that they are closely related to tadpoles in quantum electrodynamics, and this analogy leads us to a renormalization scheme for quadratically coupled detectors. \Cref{sec: comparison} is dedicated to the comparison of the vacuum excitation probabilities of the detector models. We establish that it is indeed justified to compare the two models \cref{eqn: Takagi's model} (for quantized spinor fields) and \cref{eqn: Hinton's model} (for quantized complex fields), at least at leading order in the coupling constant. As an example, their response to massless quantum fields in $(1,1)$ dimensions is studied explicitly, using both a sudden switching of the detector, as well as a Gaussian switching function, in addition to a Gaussian spatial detector profile.

The second part of this work is dedicated to the derivation of fully fledged Feynman rules for all detector models discussed, valid on Minkowski spacetime. The Feynman rules are to facilitate investigating whether the discussed models are finite, or at least renormalizable, and whether they are comparable at higher orders as well. \Cref{sec: computation methods} summarizes the required calculation methods, such as the Feynman propagators and Wick's theorem, with the proofs moved to \cref{app: proofs wick} for better readability. 
After applying these methods in \cref{sec: VNRP} to the probability for a detector to remain in its ground state when interacting with a field in the vacuum state (the vacuum no-response probability), the Feynman rules then follow in \cref{sec: feynman rules}.

In \cref{app: quantum fields}, we summarize canonical results in classical and quantum field theory to introduce notation as well as for the convenience of the reader. \Cref{app: normalization} explains the origin of the different normalization conditions for scalar and spinor field mode functions, which plays a crucial role in the UV behavior of the theories.

\medskip{}

Throughout this work, natural units $c = 1 = \hbar$ are used. The metric of $(n+1)$-dimensional Minkowski spacetime has signature $(+,-,\cdots,-)$.


\section{Unruh-DeWitt-type particle detector models}\label{sec: detector models}
Unruh-DeWitt particle detectors are two-dimensional quantum systems. We choose the energy levels to be zero and some $\Omega > 0$ so that the detector's Hamiltonian, which generates translations in its proper time $\tau$, reads
\begin{equation}\label{eqn: H detector}
  \op{H}_{\mathrm{d}} = \Omega \ket{e}\bra{e} = \Omega \,\op\sigma^{+}   \op\sigma^{-}~.
\end{equation}
Here, the ladder operators
\begin{align}\label{eqn: detector ladder operators}
  \op\sigma^{+} &= \ket{e}\bra{g}~, & \op\sigma^{-} &= \ket{g}\bra{e}
\end{align}
are expressed in term of the ground and excited states $\ket{g}$ and $\ket{e}$. 
We parametrize the detector's trajectory $x(\tau) = (t(\tau), \vec{x}(\tau))$ by its proper time. 
Since the detector's trajectory will be prespecified rather than dynamical, spacetime translational symmetry is broken and overall energy and momentum are not conserved (unless one considers also the agent that keeps the detector on its trajectory).
The detector's effective spatial profile, which may be thought of as the distribution of its charge, centered around $\vec{x}$, will be described by a function
\begin{equation}
  p(\vec{x},\cdot): \R^{n}\to \R
\end{equation}
that is normalized to one unit of charge,
\begin{equation}\label{eqn: normalization spatial profile}
  \int_{\R^{n}}p(\vec{x},\vec{y})\,\dd \vec{y} = 1~,
\end{equation}
where $n$ is the number of spatial dimensions. For example, for a pointlike detector
\begin{equation}
  p(\vec{x},\vec{y}) = \delta(\vec{x}-\vec{y})~,
\end{equation}
while for a detector with Gaussian profile
\begin{equation}
  p(\vec{x},\vec{y}) = (2\pi\sigma^{2})^{-n/2}\,\,\exp\left(-\frac{(\vec{y}-\vec{x})^{2}}{2\sigma^{2}}\right)~.
\end{equation}
It will sometimes be convenient to combine the switching function $\chi$ and the spatial profile $p$ into a \emph{spacetime} profile $f$ by defining
\begin{equation}\label{eqn: spacetime profile}
 f(x) = f(t,\vec{x}) \define \chi(t)\,p(0,\vec{x})  
\end{equation}
in the detector's rest frame. 

An UDW detector is to detect field quanta by becoming excited. To this end, the field must couple to an operator, $\mu$, of the UDW detector that does not commute with the detector`s Hamiltonian $\op{H}_{\mathrm{d}}$. Without restricting generality, one can define this so-called \emph{monopole moment} operator of the detector to be
\begin{equation}\label{eqn: monopole operator}
  \op\mu = \op\sigma^{+} + \op\sigma^{-}.
\end{equation}
Let us now systematically consider how the monopole moment $\mu$ of an UDW detector can be coupled to the various types of fields.

\subsection{Linear coupling}\label{sec: linear coupling}
We begin by considering a pointlike UDW-type detector coupled to a quantized real field  $\op\Phi$ through the interaction Hamiltonian
\begin{equation}
  \op{H}_{\mathrm{int}} = \lambda\, \op{\mu}\, \op{\Phi}(\vec{x}) \label{12}~,
\end{equation}
where $\lambda \in \R$ is the coupling strength. This is the particle detector model originally proposed by DeWitt \cite{seligman_dewitt_quantum_1979}. It can be viewed as an idealization of the two lowest states of an atom interacting with the electromagnetic field. 

As is well known, UV divergencies in the excitation probabilities arise in this setup, in particular, if the detector is abruptly switched (i.e. coupled) on or off; see, e.g., \cite{grove_notes_1983,sriramkumar_finite-time_1996,louko_transition_2008}. The pointlike structure of this detector also leads to UV divergencies; see, e.g.,\,\cite{grove_notes_1983,takagi_vacuum_1986}. The divergencies can be regularized by switching the detector on and off gradually, i.e., by using a sufficiently smooth switching function $\chi(t)$, and by endowing the detector with a finite spatial profile $p$: 
\begin{equation}\label{eqn: interaction H real linear}
  \op{H}_{\mathrm{int}}(\tau) = \lambda\chi(\tau)\, \op{\mu} \int_{\R^{n}}p(\vec{x}(\tau),\vec{y})\,\op{\Phi}(\vec{y}) \,\dd \vec{y}~.
\end{equation}
Let us now consider the coupling of an UDW detector to a quantized complex field, $\op\Phi$.
In this case, one cannot couple them as in \cref{12} because $\op\Phi^{\dagger} \neq \op\Phi$ would imply that the interaction Hamiltonian is not self-adjoint. In principle, one may instead view $\op\Phi$ as composed of two quantized real fields $
  \op\Phi =  \frac{1}{\sqrt{2}}\left( \op\Phi_{1} + \ii  \op\Phi_{2}\right)
$ and couple any real linear combination of $\op\Phi_1$ and $\op\Phi_2$ to the detector. This approach, however, would single out a direction in the complex plane and would therefore make any $U(1)$ symmetry for $\op\Phi$ impossible \cite{takagi_vacuum_1986}. The same incompatibility arises with any unitary symmetry if $\op\Phi$ carries higher-dimensional complex group and/or spin representations. We conclude that the coupling of UDW detectors to quantized complex fields should not be linear.

\subsection{Quadratic coupling}\label{sec: quadratic coupling}
Any interaction Hamiltonian between an UDW detector and a field must be a self-adjoint scalar.
Since an UDW detector's monopole moment is a self-adjoint Lorentz scalar, UDW detectors always need to couple to an expression in the field which is also both scalar and self-adjoint. In the case of a quantized spinor field, the simplest self-adjoint Lorentz scalar is $\conj\Psi \Psi = \Psi^{\dagger}\gamma^{0}\Psi$. This suggests the following interaction Hamiltonian:
\begin{equation}\label{eqn: interaction H spinor quad}
  \op{H}_{\mathrm{int}}(\tau) = \lambda\chi(\tau)\, \op{\mu} \int_{\R^{n}}p(\vec{x}(\tau),\vec{y})\,\op{\conj\Psi}(\vec{y})\op{\Psi}(\vec{y}) \,\dd \vec{y}~.
\end{equation}
Note that the interaction is now quadratic in the field. This is plausible also because an UDW detector coupled linearly to a fermion field  would be able to violate fermion number conservation by creating and annihilating individual fermions. Coupling an UDW detector quadratically to the field ensures that the detector can only pair create or annihilate fermion antifermion pairs. An example of this type of coupling was first considered by Takagi \cite{takagi_response_1985,takagi_vacuum_1986} and has been used since; see \cite{diaz_radiative_2002,langlois_causal_2006,bessa_accelerated_2012,harikumar_uniformly_2013}.

It is straightforward to couple UDW detectors quadratically not only to quantized spinor fields but also to quantized (real and complex) scalar fields while preserving symmetries:
\begin{equation}\label{eqn: interaction H scalar quad}
  \op{H}_{\mathrm{int}}(\tau) = \lambda\chi(\tau)\, \op{\mu} \int_{\R^{n}}p(\vec{x}(\tau),\vec{y})\,\op{\Phi}^{\dagger}(\vec{y})\op{\Phi}(\vec{y}) \,\dd \vec{y}~.
\end{equation}
This type of coupling was first suggested by Hinton for quantized real fields \cite{hinton_particle_1984}. It may be possible to justify interpreting this model as an effective description of some fundamental interaction, for example when the scalar field describes composite particles at sufficiently low energy. However, in the present work, we  will use it as a theoretical tool for investigating what difference it makes to an UDW detector's behavior whether it is coupled to a bosonic (scalar) or to a fermionic (spinor) field. This is because to this end we can now compare model \cref{eqn: interaction H scalar quad} and model \cref{eqn: interaction H spinor quad} which are both quadratically coupled, i.e., they differ essentially only in the spinor structure but not in the type of coupling. We will separately investigate the impact of quadratic versus linear coupling, such as model \cref{eqn: interaction H scalar quad} versus model \cref{eqn: interaction H real linear}.

In \cref{tab: overview detector models}, the four different detector models we will discuss are summarized and numbered for easy reference: model 1 for quantized real fields couples linearly through \cref{eqn: interaction H real linear}; model 2 couples quadratically to quantized real fields; model 3 to quantized complex; and model 4 to quantized spinor fields with the Hamiltonians \cref{eqn: interaction H scalar quad} and \cref{eqn: interaction H spinor quad}, respectively.

\begin{table}
  \includegraphics{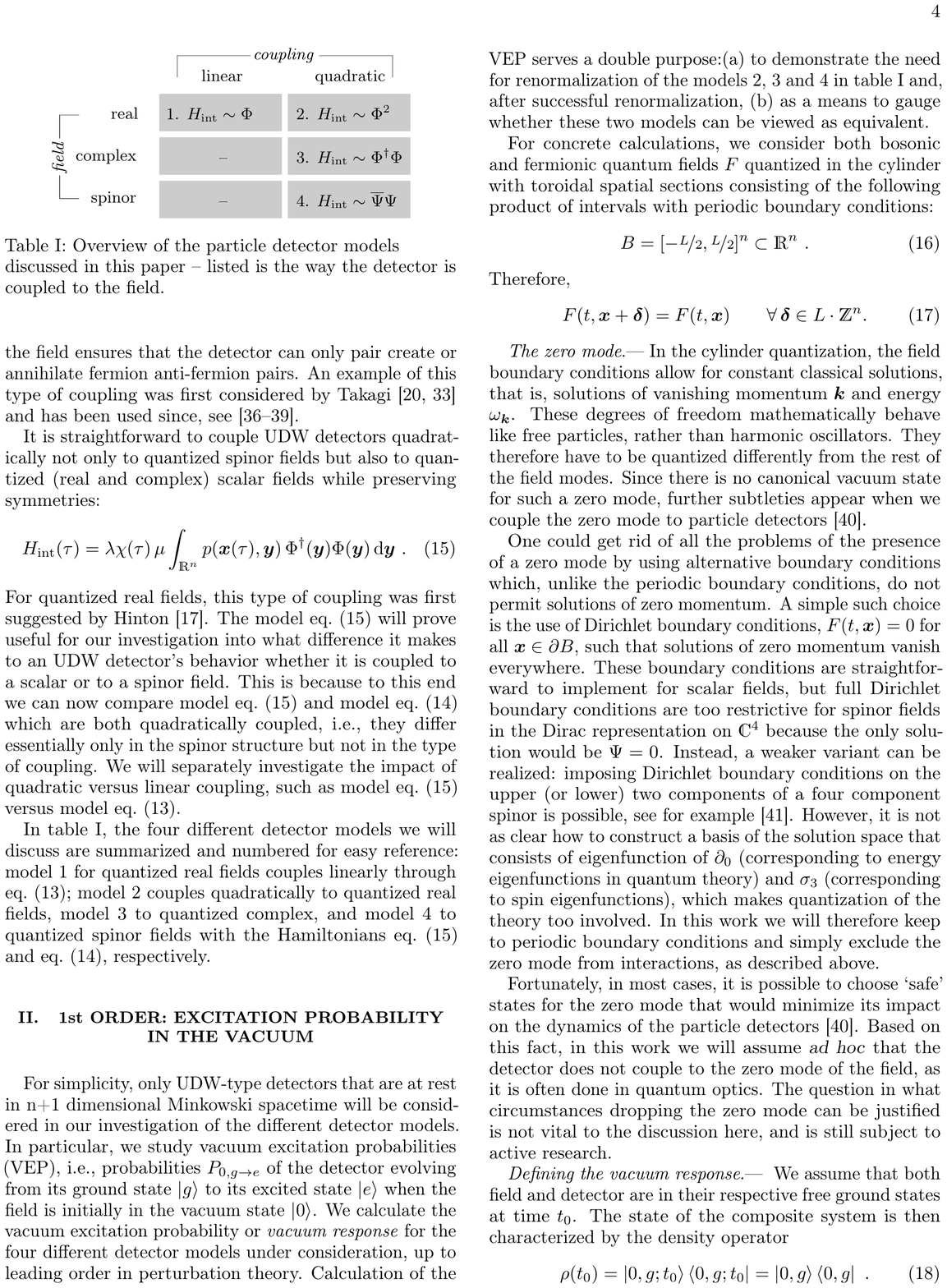}
  \caption{Overview of the particle detector models discussed in this paper -- listed is the power of the field to which the detector (i.e. $\mu$) is coupled.}
  \label{tab: overview detector models}
\end{table}

\section{First order: excitation probability in the vacuum}\label{sec: VEP}

For simplicity, only UDW-type detectors that are at rest in n+1 dimensional Minkowski spacetime will be considered in our investigation of the different detector models. In particular, we study vacuum excitation probabilities (VEP), i.e., probabilities $P_{0,g \to e}$ of the detector evolving from its ground state $\ket{g}$ to its excited state $\ket{e}$ when the field is initially in the vacuum state $\ket{0}$. We calculate the {vacuum excitation probability} or \emph{vacuum response} for the four different detector models under consideration, up to leading order in perturbation theory.  Calculation of the VEP serves a double purpose:\begin{inparaenum}[(a)]\item to demonstrate the need for renormalization of the models 2, 3 and 4 in \cref{tab: overview detector models} and, after successful renormalization, \item as a means to gauge whether these two models can be viewed as equivalent.\end{inparaenum}

For concrete calculations, we consider both bosonic and fermionic quantum fields $F$ quantized in the cylinder with toroidal spatial sections consisting of the following product of intervals with periodic boundary conditions:
\begin{equation}\label{eqn: cavity}
  B = [-\nicefrac{L}{2},\nicefrac{L}{2}]^{n} \subset \R^{n}~.
\end{equation}
Therefore,
\begin{equation}
  F(t,\vec{x} + \vec{\delta}) = F(t,\vec{x}) \quad\quad \forall\,\vec{\delta} \in L\cdot\Z^{n}.
\end{equation}

\paragraph{The zero mode}\label{sec: zero mode}
In the cylinder quantization, the field boundary conditions  allow for constant classical solutions, that is, solutions of vanishing momentum $\vec{k}$ and energy $\omega_{\vec{k}}$. These degrees of freedom mathematically behave like free particles, rather than harmonic oscillators. They therefore have to be quantized differently from the rest of the field modes. Since there is no canonical vacuum state for such a zero mode, further subtleties appear when we couple the zero mode to particle detectors \cite{martin-martinez_particle_2014}.

One could get rid of all the problems of the presence of a zero mode by using alternative boundary conditions which, unlike the periodic boundary conditions, do not permit solutions of zero momentum. A simple such choice is the use of Dirichlet boundary conditions, $F(t,\vec{x}) = 0$ for all $\vec{x} \in \partial B$, such that solutions of zero momentum vanish everywhere. These boundary conditions are straightforward to implement for scalar fields, but full Dirichlet boundary conditions are too restrictive for spinor fields in the Dirac representation on $\C^{4}$ because the only solution would be $\Psi = 0$. Instead, a weaker variant can be realized: imposing Dirichlet boundary conditions on the upper (or lower) two components of a four component spinor is possible; see for example \cite{alonso_boundary_1997}. However, it is not as clear how to construct a basis of the solution space that consists of eigenfunction of $\partial_{0}$ (corresponding to energy eigenfunctions in quantum theory) and $\sigma_{3}$ (corresponding to spin eigenfunctions), which makes quantization of the theory too involved. In this work we will therefore keep to periodic boundary conditions and simply exclude the zero mode from interactions, as described above.

Fortunately, in most cases, it is possible to choose ``safe'' states for the zero mode that would minimize its impact on the dynamics of the particle detectors \cite{martin-martinez_particle_2014}. Based on this fact, in this work  we will assume \lat{ad hoc} that the detector does not couple to the zero mode of the field, as it is often done in quantum optics. The question in what circumstances dropping the zero mode can be justified is not vital to the discussion here, and is still subject to active research.

\paragraph{Defining the vacuum response}
We assume that both field and detector are in their respective free ground states at time $t_{0}$. The state of the composite system is then characterized by the density operator
\begin{equation}
  \op{\rho}(t_{0}) = \ket{0,g;t_{0}}\bra{0,g;t_{0}} = \ket{0,g}\bra{0,g}~.
\end{equation}
The coupled system is allowed to evolve unitarily until time $t = t_{0}+T$; time-evolution is performed using the time-evolution operator $\op{U}$ generated by the total Hamiltonian
\begin{equation}        
\op{H} = \op{H}_{F} + \op{H}_{\mathrm{d}} + \op{H}_{\mathrm{int}}~,
\end{equation}
where $\op{H}_{F}$ is the Hamilton operator of the free field, $\op{H}_{\mathrm{d}}$ is given by \cref{eqn: H detector}, and $\op{H}_{\mathrm{int}}$ depends on the model under consideration. The system ends up in the state
\begin{equation}
  \op{\rho}(t) = \op{U}(t,t_{0})\, \op\rho \, \op{U}^{\dagger}(t,t_{0})
\end{equation}
and the probability for the detector to be excited is encoded in the corresponding component of the density operator, after tracing out the field:
\begin{equation}\label{eqn: VEP 1}
  P_{0,g\to e}(t,t_{0}) = \braket{e|\tr_{F} \op\rho(t)|e}~.
\end{equation}
At the end, we take the limit
\begin{equation}\label{eqn: VEP 1 t limit}
  P_{0,g\to e} = \lim_{\substack{%
    t_{0}\to -\infty\\%
    t\hphantom{_{0}}\to+\infty}%
  } P_{0,g\to e}(t,t_{0})
\end{equation}
such that the actual duration of the interaction is determined by the switching function $\chi(t)$  alone: at times outside the support of $\chi$, both systems evolve freely.

It is convenient to reformulate \cref{eqn: VEP 1} in terms of state vectors in the interaction picture:
\begin{equation}\label{eqn: VEP 2}
  P_{0,g\to e}(t,t_{0}) = \sum_{a} \left|\Ibraket{a,e;t| \Iop{U}(t,t_{0})|0, g;t_{0}}\right|^{2}~,
\end{equation}
where $\ket{a}$ is an orthonormal basis of the Hilbert space of states of the free quantum field $F$. Recall that the interaction picture is related to the Schrödinger picture through the partial inverse time evolution $ \op{U}_{0}(t) = e^{\ii  \op{H}_{0}t}$
which only takes into account the free Hamiltonian $\op{H}_{0} = \op{H}_{F} + \op{H}_\mathrm{d}$ \cite{sakurai_modern_2011}.

Finally, we set up the usual perturbation theory under the assumption that $\lambda$ is small. To this end, the time-evolution operator is expanded in the Dyson series
\begin{equation}\label{eqn: perturbative expansion}
  \Iop{U}(t,t_{0}) = \sum_{n=0}^{\infty} \frac{(-\ii  \lambda)^{n}}{n!}\, \op{U}^{(n)}(t,t_{0})~,
\end{equation}
where the operator at order $n$ is \cite{bjorken_relativistic_1965}
\begin{equation}\label{eqn: time-evolution order n}
  \op{U}^{(n)}(t,t_{0}) =
   \frac{1}{\lambda^{n}}T\prod_{i=1}^{n}\int_{t_{0}}^{t}\dd t_{i}  \Iop{H}_{\mathrm{int}}(t_{i})~.
\end{equation}

For $n=0$ we have $\op{U}^{(0)} = \id$ (no change over time at all), but since $\braket{e|g} = 0$, order zero does not contribute to $P_{0,g \to e}$ in \cref{eqn: VEP 2}. Therefore,
\begin{align}\label{eqn: VEP 3}
  P_{0,g\to e}(t,t_{0}) &\approx \sum_{a} \left|-\ii  \lambda\Ibraket{a,e;t| \op{U}^{(1)}(t,t_{0})|0, g;t_{0}}\right|^{2}\nonumber\\
  &\defineright \sum_{a} \left|\mathcal{A}_{a,e}^{(1)}\right|^{2}
\end{align}
to leading order in $\lambda$, where $\mathcal{A}_{a,e}^{(1)}$ is simply the probability amplitude for the system to evolve from $\ket{0,g}$ to $\ket{a,e}$ at leading order.

In the following, we calculate the VEP for quantized scalar fields in linear coupling, as well as for quantized scalar and spinor fields in quadratic coupling (all models in \cref{tab: overview detector models}).

\subsection{Linear coupling to quantized scalar fields}\label{sec: VEP real field}

\subsubsection{Vacuum excitation probability}

In preparation for the derivation of  new results concerning fermionic detector models, we begin by rederiving the standard case of a quantized real field $\Phi$ (see, among others, \cite{svaiter_inertial_1992,sriramkumar_finite-time_1996,louko_transition_2008}). In order to work as generally as possible, we will avoid choosing a particular spatial profile and switching function. We couple an UDW-type detector to the quantized scalar field using the interaction Hamiltonian \cref{eqn: interaction H real linear} (model 1 in \cref{tab: overview detector models}).  We are dealing with a detector at rest, so its trajectory is
\begin{align}\label{eqn: trajectory at rest}
  \vec{x}(\tau) &= \vec{x}_{0}~, & 
  t(\tau) &= \tau~.
\end{align}
The time-evolution operator at leading order reads
\begin{multline}
  \op{U}^{(1)}(t,t_{0}) =\\
   \int_{t_{0}}^{t}\dd t' \chi(t')\,\Iop{\mu}(t')\int_{\R^{n}} \dd \vec{y}\,p(\vec{x}_{0},\vec{y})\,\Iop{\Phi}(t',\vec{y})
\end{multline}
according to \cref{eqn: time-evolution order n}. Notice that for a sufficiently localized spatial profile $p$ (as for example a Gaussian profile), if the localization length $\sigma$ is much smaller than the compactification length $L$, we can extend the integration region to the full space to a good approximation.

Here, the monopole operator in the interaction picture is 
\begin{equation}\label{eqn: interaction p monopole operator}
  \Iop{\mu}(t) = e^{\ii\Omega t}\op{\sigma}^{+} + ^{-\ii  \Omega t}\op{\sigma}^{-}~.
\end{equation}
Similarly, the field operator
\begin{multline}\label{eqn: interaction p real field operator}
  \Iop\Phi(t,\vec{x})  = \sum_{\vec{k}}\frac{1}{\sqrt{2\omega_{\vec{k}}}} \Big[ \op{a}_{\vec{k}}\,e^{-\ii   \omega_{\vec{k}}t}\,\varphi_{\vec{k}}(\vec{x}) \\
  \hphantom{=~}+\op{a}^{\dagger}_{\vec{k}}\,e^{\ii \omega_{\vec{k}}t} \,\varphi_{\vec{k}}^{*}(\vec{x}) \Big]~,
\end{multline}
where $\omega_{\vec{k}}$ is the energy of a single mode, the ladder operators $\op{a}_{\vec{k}}$ satisfy the usual canonical commutation relations \cref{eqn: cr ladder op real field}, and the mode functions $\varphi_{\vec{k}}$ \cref{eqn: scalar mode functions 2} form an orthonormal basis of the solution space of the Klein-Gordon equation.  The field is restricted to a cylinder, so the momentum spectrum is discrete. 

Since the detector does not couple to the zero mode, we have
\begin{equation}
  \vec{k} \in \frac{2\pi}{L} \cdot \Z^{n}\setminus \{0\} ~.
\end{equation}
Note that we have not yet chosen a particular form for the  switching function or spatial profile.

From \cref{eqn: VEP 3}, the only nonvanishing amplitudes are $\mathcal{A}_{1_{\vec{k}},e}^{(1)}$. Therefore the time evolved states will be superpositions of the ground state $\ket{0,g}$ and states of the form $\ket{1_{\vec{k}},e}$, which feature a single quantum in the field. Summing up the squared moduli of the amplitudes and taking the time limits leads to the VEP:
\begin{multline}\label{eqn: VEP real linear 1}
  P_{0,g\to e} = \frac{\lambda^{2}}{2}\sum_{\vec{k}} \frac{1}{\omega_{\vec{k}}}\left| \int_{\R^{n}}p(\vec{x_{0},\vec{y}})\,\varphi_{\vec{k}}(\vec{y})\,\dd \vec{y}\right|^{2}\\
  \times \left| \int_{-\infty}^{\infty} \chi(t)\,e^{\ii (\Omega + \omega_{\vec{k}}) t}\, \dd t\right|^{2}~.
\end{multline}
Solving the Klein-Gordon equation in a cavity with periodic boundary conditions yields (see \cref{app: quantum fields}) for the spatial part of the mode function 
\begin{equation}\label{eqn: mode functions real field 1}
  \varphi_{\vec{k}}(\vec{x}) = \frac{1}{\sqrt{L^{n}}}~ e^{ \ii   \vec{k}\cdot \vec{x}}~,
\end{equation}
so the final expression for the VEP reads
\begin{multline}\label{eqn: VEP real linear 2}
  P_{0,g\to e} = \frac{\lambda^{2}}{2L^{n}}\sum_{\vec{k}} \frac{1}{\omega_{\vec{k}}}\left| \int_{\R^{n}}p(\vec{x_{0},\vec{y}})\,e^{-\ii  \vec{k}\vec{y}}\,\dd \vec{y}\right|^{2}\\
  \times \left| \int_{-\infty}^{\infty} \chi(t)\,e^{\ii (\Omega + \omega_{\vec{k}}) t}\, \dd t\right|^{2}~.
\end{multline}

\subsubsection[Dependence on Dimension]{Dependence on the dimension of spacetime}\label{sec: VEP discussion linear - dimension}

The momentum sum in \cref{eqn: VEP real linear 2} depends on the spatial dimension $n$, since $\vec{k} = (k^{1},k^{2},\dots,k^{n})$. Therefore the convergence behavior of $P_{0,g\to e}$ also strongly depends on the dimension. To demonstrate this, we assume the detector to be pointlike,
\begin{equation}\label{eqn: spatial profile pointlike detector}
  p(\vec{x},\vec{y}) = \delta(\vec{x}-\vec{y})~,
\end{equation}
and switch it on and off abruptly (sudden switching) by means of a window function:
\begin{equation}\label{eqn: sudden switching}
  \chi(t) = 
  \begin{cases}
    1 & \text{if } t \in [t_{0},t_{0}+T]\\
    0 & \text{else}
  \end{cases}~.
\end{equation}
This is the original UDW detector introduced by DeWitt \cite{seligman_dewitt_quantum_1979}. Plugging $\chi$ and $p(\vec{y},\cdot)$ in \cref{eqn: VEP real linear 2} and performing the time integrals, the VEP simplifies to
\begin{equation}\label{eqn: VEP real linear sudden switching pointlike detector}
   P_{0,g\to e} = \frac{2 \lambda^{2}}{L^{n}}\sum_{\vec{k}} \frac{1}{\omega_{\vec{k}}(\Omega+\omega_{\vec {k}})^{2}}  \sin^{2}\left( \frac{\Omega+\omega_{\vec{k}}}{2}T\right)~.
\end{equation}
In the massless case, the momentum sums are made explicit by writing out the dispersion relation
\begin{equation}\label{eqn: massless dispersion relation spelled out}
  \omega_{\vec{k}} = |\vec{k}| = \frac{2\pi}{L} \sqrt{l_{1}^{2}+l_{2}^{2}+\dots+ l_{n}^{2}}~,
\end{equation}
where $l_{i} \in \Z\setminus \{0\}$.

In one spatial dimension, we have
\begin{multline}\label{eqn: VEP real linear sudden switching pointlike 1 dim}
   P_{0,g\to e} = \frac{\lambda^{2}L^{2}}{2\pi^{3}}\sum_{l_{1} = 1}^{\infty}\\ \frac{1}{l_{1}\left(\frac{\Omega L}{2\pi}+l_{1}\right)^{2}}  \sin^{2}\left[ \frac{\pi}{L}\left(\frac{\Omega L}{2\pi}+l_{1}\right)T\right]~.
\end{multline}
This sum is bounded by
\begin{equation*}
   P_{0,g\to e} \leq \frac{\lambda^{2}L^{2}}{2\pi^{3}}\sum_{l_{1} = 1}^{\infty} \frac{1}{l_{1}^{3}}~,
\end{equation*}
which is convergent. 

For dimension $n=2$, the probability is
\begin{multline}\label{eqn: VEP real linear sudden switching pointlike 2 dim}
   P_{0,g\to e} = \frac{\lambda^{2}L}{\pi^{3}}\sum_{l_{1} = 1}^{\infty}\sum_{l_{2} = 1}^{\infty}\\ \left[\sqrt{l_{1}^{2}+l_{2}^{2}}\left(\frac{\Omega L}{2\pi}+\sqrt{l_{1}^{2}+l_{2}^{2}}\right)^{2}\right]^{-1}\\ \times  \sin^{2}\left[ \frac{\pi}{L}\left(\frac{\Omega L}{2\pi}+\sqrt{l_{1}^{2}+l_{2}^{2}}\right)T\right]~,
\end{multline}
which is similarly bounded by the convergent double sum
\begin{equation*}
   P_{0,g\to e} \leq \frac{\lambda^{2}L}{\pi^{3}}\sum_{l_{1} = 1}^{\infty}\sum_{l_{2} = 1}^{\infty} \frac{1}{\sqrt{l_{1}^{2}+l_{2}^{2}}^{3}}~.
\end{equation*}

In $n=3$ dimensions, however, the VEP diverges. Its formal expression is
\begin{multline}\label{eqn: VEP real linear sudden switching pointlike 3 dim}
   P_{0,g\to e} = \frac{2\lambda^{2}}{\pi^{3}}\sum_{l_{1} = 1}^{\infty}\sum_{l_{2} = 1}^{\infty}\sum_{l_{3} = 1}^{\infty}\\
   \left[\sqrt{\sum_{i=1}^{3}l_{i}^{2}}\left(\frac{\Omega L}{2\pi}+\sqrt{\sum_{i=1}^{3}l_{i}^{2}}\right)^{2}\right]^{-1}\\ \times  \sin^{2}\left[ \frac{\pi}{L}\left(\frac{\Omega L}{2\pi}+\sqrt{\sum_{i=1}^{3}l_{i}^{2}}\right)T\right]~.
\end{multline}
As $\Omega L(2\pi)^{-1} \geq 0$, we can estimate
\begin{multline}\label{eqn: first estimate VEP in 3 dim}
   \frac{\pi^{3}}{2\lambda^{2}}P_{0,g\to e} \geq \sum_{l_{1} = 1}^{\infty}\sum_{l_{2} = 1}^{\infty}\sum_{l_{3} = 1}^{\infty}\left[\frac{\Omega L}{2\pi}+\sqrt{\sum_{i=1}^{3}l_{i}^{2}}\right]^{-3}\\ \times  \sin^{2}\left[ \frac{\pi}{L}\left(\frac{\Omega L}{2\pi}+\sqrt{\sum_{i=1}^{3}l_{i}^{2}}\right)T\right]~.
\end{multline}
By exploiting $\cos(x) = \sin(x+\pi/2)$ and the Pythagorean trigonometric identity, it is straightforward to prove that the right-hand side converges for all choices of parameters if and only if
\begin{equation*}
  \sum_{l_{1} = 1}^{\infty}\sum_{l_{2} = 1}^{\infty}\sum_{l_{3} = 1}^{\infty} \left[A+\sqrt{\sum_{i=1}^{3}l_{i}^{2}}\right]^{-3}
\end{equation*}
does, where $A = \Omega L(2\pi)^{-1}$. This sum is in turn bounded by
\begin{equation}\label{eqn: last estimate VEP in 3 dim}
  \sum_{l_{1} = 1}^{\infty}\sum_{l_{2} = 1}^{\infty}\sum_{l_{3} = 1}^{\infty}\left[A+l_{1}+l_{2}+l_{3}\right]^{-3}~,
\end{equation}
which diverges logarithmically.

In higher dimensions convergence can only get worse, because even more sums will appear. For any dimension $(1,n)$ with $n\geq3$, the VEP is divergent \cite{louko_transition_2008}.

The above results show that the VEP strongly depends on the dimension of the underlying spacetime. In the above example, the vacuum response of a pointlike detector with sudden switching, coupled to a quantized real field is convergent in $(1,1)$ and $(1,2)$ dimensions, but divergent in all higher dimensions, necessitating regularization. 

\subsubsection[regularization]{Regularization through detector profile}\label{sec: VEP discussion linear - regularization}

Those divergences of the VEP can be understood as a result of the pointlike structure of the detector as well as the sudden switching \cite{louko_transition_2008}. In the linear coupling case, the divergences can be regularized by adiabatically switching the detector, or by ``smearing'' the interaction between detector and field over the spatial profile of the detector. Let us study both possibilities in two separate examples.

\paragraph{Gaussian switching and pointlike detector}
Consider the probability \cref{eqn: VEP real linear 2}, still with a pointlike detector \cref{eqn: spatial profile pointlike detector}, but this time employing a Gaussian switching function,
\begin{equation}\label{eqn: gaussian sw}
  \chi(t) = \exp\left(-\frac{t^{2}}{2T^{2}}\right)~,
\end{equation}
which is known to regularize the response of a resting detector in $(1,3)$-dimensional Minkowski spacetime \cite{sriramkumar_finite-time_1996}. We generalize this result to $(1,n)$ dimensions:
the time integral is readily solved, giving
\begin{equation}
  \int_{-\infty}^{\infty}e^{-t^{2}/(2T^{2})}\,e^{\ii (\Omega + \omega_{\vec{k}})t}\dd t = \sqrt{2\pi T^{2}}\,e^{- (\Omega+\omega_{\vec{k}})^{2}T^{2}/2}~.
\end{equation}
Substituting this in \cref{eqn: VEP real linear 2} yields
\begin{equation}\label{eqn: VEP real linear gaussian switching}
  P_{0,g\to e} = \frac{\pi\lambda^{2} T^{2}}{L^{n}}\sum_{\vec{k}} \frac{1}{\omega_{\vec{k}}} e^{-(\Omega + \omega_{\vec{k}})^{2}T^{2}}
\end{equation}
which is bounded by the convergent sum
\begin{equation}
P_{0,g\to e} \leq \frac{2^{n-1}\lambda^{2}T^{2}}{L^{n-1}} \prod_{i=1}^{n}\sum_{l_{i}=1}^{\infty} e^{-(2\pi T/L)^{2}\, l_{i}^{2}}~.
\end{equation}
This shows that a suitably smooth switching function is able to regularize the leading order excitation probability in arbitrary dimensions.

The excitation probability in this example is finite, but nonzero, even though the detector was in the ground state and the field in the vacuum. This should not be surprising considering that we have a time-dependent Hamiltonian: When the interaction is switched on, the  Hamiltonian of the composite system is changed, and the state $\ket{0,g}$ is no longer an energy eigenstate. In consequence, the state begins to evolve nontrivially, and there is a finite probability for the detector to be measured in its excited state $\ket{e}$. The more adiabatically  the interaction is switched on, the lower is the probability to excite the detector \cite{sriramkumar_finite-time_1996}. This is reflected in \cref{eqn: VEP real linear gaussian switching}: the larger $T$, that is, the slower the detector is switched, the smaller $P_{0,g\to e}$. In the limit $T \to \infty$, corresponding to adiabatic switching, $P_{0,g\to e} \to 0$ as expected.

\paragraph{Sudden switching and Gaussian detector profile}
Now consider a Gaussian detector profile,
\begin{equation}\label{eqn: gaussian detector profile}
  p(\vec{x}_{0},\vec{y}) = (2\pi\sigma^{2})^{-n/2}\,\exp\left(-\frac{(\vec{y}-\vec{x}_{0})^{2}}{2\sigma^{2}}\right)~,
\end{equation}
together with a sudden switching. In this case:
\begin{equation}
   P_{0,g\to e} = \frac{2 \lambda^{2}}{L^{n}}\sum_{\vec{k}} \frac{e^{-\vec{k}^{2}\sigma^{2}}}{\omega_{\vec{k}}(\Omega+\omega_{\vec {k}})^{2}}  \sin^{2}\left( \frac{\Omega+\omega_{\vec{k}}}{2}T\right)~,
\end{equation}
which is bounded by
\begin{equation}
   P_{0,g\to e} \leq \frac{2^{n-2} \lambda^{2}}{\pi^{3}L^{n-3}} \prod_{i=1}^{n}\sum_{l_i=1}^{\infty}e^{-(2\pi \sigma/L)^{2}\,l_{i}^{2}}
\end{equation}
and therefore again convergent in arbitrary dimensions---the spatial smearing of the interaction between detector and field is also able to regularize the vacuum response of the detector. Moreover, $P_{0,g\to e} \to 0$, if the detector is completely delocalized, that is for $\sigma \to \infty$.

\paragraph{General effect of spacetime profile}\label{sec: effect of spacetime profile}
The two preceding examples show that it is possible to regularize the VEP to leading order in perturbation theory. But which combinations of switching function $\chi$ and spatial smearing $p$ are able to regularize the VEP? To address this question, it is helpful to use the detector's spacetime profile $f$:
the VEP of the quantized real field in \cref{eqn: VEP real linear 2} can be reformulated as
\begin{equation}\label{eqn: VEP real linear spacetime profile 2}
  P_{0,g\to e} = \frac{\lambda^{2}}{2L^{n}}\sum_{\vec{k}} \frac{1}{\omega_{\vec{k}}} \left| \tilde{f}(\omega_\vec{k} + \Omega,\vec{k})\right|^{2}~,
\end{equation}
where $\tilde{f}$ is the $(1,n)$-dimensional Fourier transform of the spacetime profile $f$:
\begin{equation}\label{eqn: def Fourier transform}
  \tilde{f}(\omega_\vec{\xi},\vec{\xi}) = \int_{\R^{n}}\dd\vec{x} \int_{-\infty}^{\infty}\dd t\, f(\vec{x},t)\,e^{-\ii \omega_\vec{\xi} t } \,e^{\ii \vec{\xi} \vec{x}}~.
\end{equation}
Therefore, a given spacetime profile does regularize the VEP to leading order in $(1,n)$ dimensions if and only if the modulus squared of its $(1,n)$-dimensional Fourier transform $\tilde{f}$ decays fast enough in the UV such that the n-fold sum in the above expression is finite.

\subsection{Quadratic coupling to quantized spinor fields}\label{sec: VEP spinor field}

We now turn to the pivotal question: How can we construct a similar particle detector model for quantized spinor fields? To this end, let us begin by investigating and extending a model that has been commonly employed in the literature, namely, model 4 in \cref{tab: overview detector models} pioneered by Takagi in 1985 \cite{takagi_response_1985} and extensively used, e.g., in \cite{takagi_vacuum_1986,bessa_accelerated_2012,diaz_radiative_2002,harikumar_uniformly_2013,langlois_causal_2006}. We will assess the physicality of this class of models by considering the vacuum response of spatially smeared detectors.

Let $\Psi$ be a spinor field, that is, a field taking values in some representation space $\C^{m}$ of the complexified Clifford algebra $\Cliff(1,n)^{\C}$. In order to be able to do explicit calculations, we choose the irreducible Dirac representation of $\Cliff(1,3)$ on $\C^{4}$ in 4 spacetime dimensions. In 2 spacetime dimensions, a related but reducible representation of $\Cliff(1,1)$ on $\C^{4}$ is used, which yields similar spinor mode functions. Details on these conventions and standard results in classical and quantum field theory can be found in \cref{app: quantum fields}.

We will work again in the interaction picture, where the operator of the quantized field is
\begin{multline}\label{eqn: H pic ladd op mode expansion spinor field in text}
      \op{\Psi}(t,\vec{x}) =\\ \sum_{\vec{k},s} \op{a}_{\vec{k},s}\,e^{-\ii   \omega_{\vec{k}}t}\,\psi_{\vec{k},s,+}(\vec{x})
    + \op{b}_{\vec{k},s}^{\dagger}\,e^{\ii   \omega_{\vec{k}}t}\,\psi_{\vec{k},s,-}(\vec{x})~.
\end{multline}
The ladder operators $\op{a}_{\vec{k},s}$, $\op{b}_{\vec{k},s}$ satisfy the canonical \emph{anticommutation} relations \cref{eqn: acr ladder op spinor field}. The spinor-valued mode functions $\psi_{\vec{k},s,\epsilon}$ span a solution space of the Dirac equation. Depending on dimension and mass, the mode functions take slightly different forms, namely, (a) \cref{eqn: spinor mode function 4dim massive,eqn: spinor mode function 4dim massless} for massive and massless fields in $(1,3)$ dimensions; and (b) \cref{eqn: spinor mode function 4dim massive,eqn: spinor mode function 4dim massless} in $(1,1)$ dimensions. 

This quantized spinor field is coupled \emph{quadratically} to a resting UDW-type detector via the interaction Hamiltonian \cref{eqn: interaction H spinor quad}, which is model 4 in \cref{tab: overview detector models}. Using \cref{eqn: time-evolution order n} we then obtain the leading order contribution to the time evolution operator:
\begin{multline}\label{eqn: 1st order time evolution op spinor field}
  \op{U}^{(1)}(t,t_{0}) = \int_{t_{0}}^{t}\dd t' \chi(t')\,\op{\mu}(t')\\
  \times\int_{\R^{n}} \dd \vec{y}\,p(\vec{x}_{0},\vec{y})\,\op{\conj{\Psi}}(t',\vec{y})\op{\Psi}(t',\vec{y})~.
\end{multline}
The vacuum response \cref{eqn: VEP 3} yields
\begin{multline}\label{eqn: VEP spinor quad 1}
  P_{0,g \to e} = 
  \lambda^{2}  \left| \sum_{\vec{k},s} \int_{\R^{n}}p(\vec{x}_{0},\vec{y})\,\conj\psi_{\vec{k},s,-}(\vec{y})\,\psi_{\vec{k},s,-}(\vec{y})\,\dd \vec{y}\right|^{2}\\
   \times \left|\int_{-\infty}^{+\infty}\chi(t)\,e^{\ii \Omega t}\,\dd t\right|^{2}\\
  + \lambda^{2} \sum_{\vec{k},s}\sum_{\vec{p},r} \left| \int_{\R^{n}}p(\vec{x}_{0},\vec{y})\,\conj\psi_{\vec{k},s,+}(\vec{y})\,\psi_{\vec{p},r,-}(\vec{y})\,\dd \vec{y}\right|^{2}\\
  \times \left|\int_{-\infty}^{+\infty}\chi(t)\,e^{\ii (\Omega +\omega_{\vec{k}}+\omega_{\vec{p}})t}\,\dd t\right|^{2}~.
\end{multline}

There are obvious differences to the vacuum response of the usual UDW model (model 1 in \cref{tab: overview detector models}, i.e., a monopole detector linearly coupled to a quantized scalar field). Comparing \cref{eqn: VEP real linear 1} with \cref{eqn: 1st order time evolution op spinor field}, at leading order in $\lambda$, there are two kinds of terms instead of one.

The first term corresponds to excitation of the detector by creation and subsequent annihilation of a field quantum, with amplitude $\mathcal{A}^{(1)}_{0,e}$. More precisely, the detector is excited by creating and annihilating an antiparticle from the vacuum. However, the equivalent process featuring a particle does not occur.

The second term represents processes where the detector is excited by emission of a particle and an antiparticle, with amplitude $\mathcal{A}^{(1)}_{1_{(\vec{k},s)},\bar{1}_{(\vec{p},r)},e}$. These stood to be expected in the light of our discussion regarding fermion number conservation in \cref{sec: quadratic coupling}. Note that the momenta of the particle and antiparticle are not related since the detector is ``heavy'' and can absorb any amount of momentum.

We can make the mode functions $\psi_{\vec{k},s,\epsilon}$ explicit using \cref{eqn: spinor mode function 4dim massive,eqn: spinor mode function 4dim massless,eqn: spinor mode function 2dim massive,eqn: spinor mode function 2dim massless}. In all four cases considered here, massive and massless fields in $(1,3)$ and $(1,1)$ dimensions, the VEP when coupling quadratically to quantized spinor fields can be brought into the form
\begin{multline}\label{eqn: VEP spinor quad 2}
  P_{0,g\to e}
  = \frac{\lambda^{2}}{2L^{2n}}\Bigg[8m^{2} \left(\sum_{\vec{k}}\frac{1}{\omega_{\vec{k}}}\right)^{2} \left|\int_{-\infty}^{+\infty}\chi(t)\,e^{\ii \Omega t}\,\dd t\right|^{2}\\
  \hphantom{=~}+ \sum_{\vec{k},\vec{p}} \frac{(\omega_{\vec{k}}+m)(\omega_{\vec{p}}+m)}{\omega_{\vec{k}}\omega_{\vec{p}}}\left(\frac{\vec{k}}{\omega_{\vec{k}}+m}-\frac{\vec{p}}{\omega_{\vec{p}}+m}\right)^{2} \\
  \hphantom{=~=~}\times\left| \int_{\R^{n}}p(\vec{x}_{0},\vec{y})\,e^{-\ii  (\vec{k}+\vec{p})\vec{y}}\,\dd \vec{y}\right|^{2}\\
  \hphantom{=~=~}\times \left|\int_{-\infty}^{+\infty}\chi(t)\,e^{\ii (\Omega +\omega_{\vec{k}}+\omega_{\vec{p}})t}\,\dd t\right|^{2}\Bigg]~,
\end{multline}
where $n=1$ or $n=3$. The case of a massless field is correctly recovered by setting $m=0$, such that the first term vanishes. Notice that there is an additional problem with the first term in \cref{eqn: VEP spinor quad 2} for $m \neq 0$. Namely, the  term is proportional to
$\sum_{\vec{k}} \omega_{\vec{k}}^{-1}$. Even in the least divergent scenario, i.e. $n=1$, this sum diverges like $\sum_{a}a^{-1}$.

This divergency is fundamentally worse behaved than those encountered in the standard bosonic Unruh-DeWitt model, which arise from the pointlike structure of the detector \cite{grove_notes_1983,takagi_vacuum_1986}, or its overly abrupt switching \cite{sriramkumar_finite-time_1996,louko_transition_2008}. The first term in \cref{eqn: VEP spinor quad 2} \emph{cannot be regularized} by way of a smooth spacetime profile, and therefore the divergency in model 4 cannot be cured via this common method. This is because the spatial integral containing the (normalized) detector profile factors out and yields a factor of 1. The time integral here is not momentum dependent and merely contributes an overall factor which cannot regularize the divergent sum.

Comparing with the Unruh-DeWitt model, there are three conceivable origins of this new divergency: \begin{inparaenum}[(a)]\item it could be algebraic on the level of ladder operators, because we are now dealing with a field obeying Fermi statistics instead of Bose statistics, \item it could be analytic due to the field's spinorial structure and different inner product compared to the scalar fields, \item or the origin could be the coupling that is now quadratic in the field instead of linear\end{inparaenum}. The question is easily settled by computing the VEP of a \emph{quantized complex} field coupled quadratically to an UDW-type detector, keeping properties (a) and (b) the same as in the usual Unruh-DeWitt model.

\subsection{Quadratic coupling to quantized scalar fields}

Let $\Phi$ be a quantized real or complex field (models 2 and 3 in \cref{tab: overview detector models}, respectively), coupled quadratically to a resting UDW-type detector via the interaction Hamiltonian \cref{eqn: interaction H scalar quad}. Inserting the interaction Hamiltonian in \cref{eqn: time-evolution order n} yields
\begin{multline}
  \op{U}^{(1)}(t,t_{0}) =\\
   \int_{t_{0}}^{t}\dd t' \chi(t')\,\op{\mu}(t')\int_{\R^{n}} \dd \vec{y}\,p(\vec{x}_{0},\vec{y})\,\op{\Phi}^{\dagger}(t',\vec{y})\op{\Phi}(t',\vec{y})~,
\end{multline}
where the field operator (in the interaction picture) is given by \cref{eqn: interaction p real field operator} in case of a quantized real field. If the field is complex
\begin{multline}\label{eqn: interaction p complex field operator}
  \op\Phi(t,\vec{x})  = \sum_{\vec{k}}\frac{1}{\sqrt{2\omega_{\vec{k}}}} \Big[ \op{a}_{\vec{k}}\,e^{-\ii   \omega_{\vec{k}}t}\,\varphi_{\vec{k}}(\vec{x}) \\
  \hphantom{=~}+\op{b}^{\dagger}_{\vec{k}}\,e^{\ii  \omega_{\vec{k}}t} \,\varphi_{\vec{k}}^{*}(\vec{x}) \Big]~.
\end{multline}
Calculation of the VEP \cref{eqn: VEP 3} yields
\begin{align}\label{eqn: VEP scalar quad 1}
  P_{0,g \to e} 
  &= \frac{\lambda^{2}}{4} \Bigg[ \Bigg| \sum_{\vec{k}} \frac{1}{\omega_{\vec{k}}} \int_{\R^{n}}p(\vec{x}_{0},\vec{y})\left|\varphi_{\vec{k}}(\vec{y})\right|^{2} \,\dd \vec{y}\Bigg|^{2} \nonumber\\
  &\hphantom{=~=~}\times \left|\int_{-\infty}^{+\infty}\chi(t)\,e^{\ii \Omega t}\,\dd t\right|^{2}\nonumber \\
  &\hphantom{=~}+ \sum_{\vec{k},\vec{p}} \frac{1}{\omega_{\vec{k}}\omega_{\vec{p}}}\left| \int_{\R^{n}}p(\vec{x}_{0},\vec{y})  \,\varphi_{\vec{k}}^{*}(\vec{y})\,\varphi_{\vec{p}}^{*}(\vec{y})\,\dd \vec{y}\right|^{2} \nonumber \\
  &\hphantom{=~=~}\times \left|\int_{-\infty}^{+\infty}\chi(t)\,e^{\ii (\Omega +\omega_{\vec{k}}+\omega_{\vec{p}})t}\,\dd t\right|^{2}\Bigg]
\end{align}
in both cases at leading order. Notice that this is no longer true at higher orders. This expression has two terms like \cref{eqn: VEP spinor quad 1}, representing analogue processes: The first term again corresponds to excitation of the detector by emission and reabsorption of an antiparticle with amplitude $\mathcal{A}^{(1)}_{0,e}$, and the second term corresponds to processes where the detector is excited by emission of a particle-antiparticle pair with amplitude $\mathcal{A}^{(1)}_{1_{\vec{k}},\bar{1}_{\vec{p}},e}$. Explicitly inserting the mode functions \cref{eqn: mode functions real field 1} for periodic boundary conditions gives
\begin{align}\label{eqn: VEP scalar quad 2}
  P_{0,g\to e}
  &= \frac{\lambda^{2}}{4L^{2n}} \Bigg[ \left(\sum_{\vec{k}}\frac{1}{\omega_{\vec{k}}}\right)^{2} \left|\int_{-\infty}^{+\infty}\chi(t)\,e^{\ii \Omega t}\,\dd t\right|^{2}\nonumber\\
  &\hphantom{=~}+ \sum_{\vec{k},\vec{p}} \frac{1}{\omega_{\vec{k}}\omega_{\vec{p}}}\left| \int_{\R^{n}}p(\vec{x}_{0},\vec{y})\, e^{-\ii  (\vec{k}+\vec{p})\vec{y}}\,\dd \vec{y}\right|^{2}\nonumber\\
  &\hphantom{=~=~}\times \left|\int_{-\infty}^{+\infty}\chi(t)\,e^{\ii (\Omega +\omega_{\vec{k}}+\omega_{\vec{p}})t}\,\dd t\right|^{2}\Bigg]~.
\end{align}
The first term is proportional to $\sum_{\vec{k}}\omega_{\vec{k}}^{-1}$ and thus divergent in the same way as for spinor fields in \cref{eqn: VEP spinor quad 2}. Since this divergence appears in exactly the same form for both scalar and spinor fields, its origin is clearly the quadratic coupling. Also note that for quantized scalar fields the divergence arises even if $m=0$.

It is not surprising that these divergencies arise: Interactions containing second or higher powers of fields tend to lead to divergencies in observables, which require renormalization. In particular, the spinor field detector model by Iyler and Kumar \cite{iyer_detection_1980} faces the same problem.


\section{Renormalization of quadratic models}\label{sec: renormalizing}

Before investigating the renormalization of the quadratically coupled detector models \cref{eqn: interaction H spinor quad,eqn: interaction H scalar quad}, let us briefly review the literature.

\subsection{Literature review}\label{sec: literature review}

In \source{hinton_particle_1984}, Hinton introduces an UDW-type particle detector model coupled to a quantized real field $\Phi$ through the interaction Hamiltonian
\begin{equation}
  \op{H}_{\mathrm{int}} = \lambda \mu \op\Phi^{2}[\vec{x}(\tau)]~,
\end{equation}
which was the template for \cref{eqn: interaction H scalar quad}. Hinton calculates the VEP in \cite[eq. (14)]{hinton_particle_1984}, remarks that the expression is formally divergent and requires renormalization but no explicit renormalization scheme is given.

Takagi, who suggested coupling a quantized spinor field quadratically to DeWitt's two-level detector through \cref{eqn: Takagi's model}, followed a different approach. In the original publications, see \sources{takagi_response_1985,takagi_vacuum_1986}, Takagi points out an infinite term in the VEP at leading order but argues that this term can be dropped. Later publications employing Takagi's detector model or drawing on his results implicitly follow the same reasoning \cite{diaz_radiative_2002,langlois_causal_2006,bessa_accelerated_2012,harikumar_uniformly_2013}. A key point is that instead of the total excitation probability, excitation \emph{rates} are considered. As is well known, the calculations of rates involves fewer integrations and are therefore generally more regular. In addition, Takagi regularizes the detector response by using a smooth, namely exponential switching function. As a consequence, in the excitation rate
\begin{equation}\label{eqn: excitation rate}
  R = \lim_{\substack{%
    t_{0}\to -\infty\\%
    t\hphantom{_{0}}\to+\infty}%
  } \frac{P_{0,g\to e}(t,t_{0})}{t-t_{0}}
\end{equation}
the divergence in $P_{0,g\to e}$ is offset by a divergence in the denominator.

Takagi's expression for the rate \cite[eqs. (8.5.2) to (8.5.4)]{takagi_vacuum_1986} can easily be obtained in the notation developed earlier by using \cref{eqn: VEP 3,eqn: time-evolution order n,eqn: interaction H spinor quad} in \cref{eqn: excitation rate} and by using a pointlike detector with exponential switching function:
\begin{align}
  p(\vec{x},\vec{y}) &= \delta(\vec{x}-\vec{y})~, & \chi(t) &= e^{-s |t|}
\end{align}
The limit $s \to 0$ is taken at the end of the calculation in \source{takagi_vacuum_1986} so that ultimately the detector is switched on and off infinitely slowly. One obtains the rate
\begin{equation}
  R = \lambda^{2} |\braket{e|\mu|g}|^{2} F(\Omega)~.
\end{equation}
Here,
\begin{multline}
  F(\Omega) =  \lim_{s\to 0}\lim_{\substack{t_{0}\to -\infty\\t\hphantom{_{0}}\to+\infty}} \int_{t_{0}}^{t}\dd t_{1}\int_{t_{0}}^{t}\dd t_{2} \frac{1}{t-t_{0}}\\
  \times e^{-s(|t_{1}|+|t_{2}|)}e^{-\ii  \Omega(t_{1}-t_{2})} S(t_{1}-t_{2})~,
\end{multline}
which is usually called the \emph{detector response function} \cite{birrell_quantum_1984}, and
\begin{equation}
  S(t_{1}-t_{2}) = \braket{0|\op{\conj{\Psi}}(t_{1},\vec{x})\op{\Psi}(t_{1},\vec{x})\op{\conj{\Psi}}(t_{2},\vec{x})\op{\Psi}(t_{2},\vec{x})|0}
\end{equation}
is a 4-point correlation function. This step is not made explicit in \source{takagi_vacuum_1986}; rather, Takagi directly argues that the exact form of the switching function is not relevant, since it has only been introduced as a way to regularize the detector response. Therefore, he argues, one is allowed to temporarily replace the exponential switching function with a Gaussian one, and to exploit that only the limits $t\to\infty$, $t_{0}\to -\infty$ and $s\to 0$ are of interest---see end of section 3.3 in \source{takagi_vacuum_1986}. Finally, the Gaussian is replaced \lat{ad hoc} with an exponential function again. If we accept Takagi's reasoning, the above response function can be rewritten as
\begin{equation}
  F(\Omega) = \lim_{s\to 0}\int_{-\infty}^{\infty}\dd t\, e^{-\ii  \Omega(t)} e^{-s|t|} S(t)~,
\end{equation}
which indeed corresponds to \cite[eqs. (8.5.2) to (8.5.4)]{takagi_vacuum_1986} (with two differences: in the original paper, the factor $e^{-s|t|}$ due to the switching function is suppressed, and $\lambda = 1$).
In this notation, the divergence we encountered earlier is now hidden in the correlator $S$: by Wick's theorem it can be expanded as
\begin{equation}\label{eqn: correlator ordered by Wick Theorem}
  S(t_{1}-t_{2}) = [\tr S^{-}(0)]^{2} + \tr[ S^{-}(t_{1}-t_{2}) S^{+}(t_{1}-t_{2})]
\end{equation}
where $S^{-}$, $S^{+}$ are the Wightman functions
\begin{equation}\begin{aligned}
  S^{-}(t_{1},t_{2})^{A}_{B} &= \braket{0|\conj{\op{\Psi}}_{B}(t_{2},\vec{x})\op{\Psi}^{A}(t_{1},\vec{x})|0} \\   S^{+}(t_{1},t_{2})^{A}_{B} &= \braket{0|\op{\Psi}^{A}(t_{1},\vec{x})\conj{\op\Psi}_{B}(t_{2},\vec{x})|0}
\end{aligned}\end{equation}
and the first term is infinite, as expected:
\begin{equation}
  \tr S^{-}(0) \to \infty~.
\end{equation}
Takagi treats this infinite term as formally constant  in \cref{eqn: correlator ordered by Wick Theorem}, which allows one to obtain a delta distribution from the time integral:
\begin{multline}
   F(\Omega) =  \delta(\Omega)\, [\tr S^{-}(0)]^{2} 
 + \lim_{s\to 0}\int_{-\infty}^{\infty}\dd t' e^{-\ii  \Omega(t')} e^{-s|t'|}\\ \times \tr S^{-}(t_{2},t_{1}) S^{+}(t_{1},t_{2})~.
\end{multline}
He concludes, therefore, that the term $\tr S^{-}(0)$ can be dropped, as long as the detector gap is finite, $\Omega>0$, thereby obtaining a finite excitation rate $R$.

This argumentation is not entirely satisfactory. In particular, it merely yields a finite excitation rate. This is not enough to fully characterize the response of a particle detector. For that, one would need to be able to compute its full density matrix, which requires computing probability amplitudes, not rates. Takagi's techniques, however, do not avoid divergences in probabilities. The challenge remains to fully renormalize particle detector models with an interaction Hamiltonian that is quadratic in the field.

\subsection{Renormalization at leading order}\label{sec: renormalization at leading order}

As we discussed above, the first term in \cref{eqn: VEP spinor quad 2,eqn: VEP scalar quad 2} contains a divergence that cannot be regularized by means of a spacetime profile, unlike in the usual linearly coupled UDW model. As we will show now, the occurrence of such divergencies is not surprising
from the point of view of a quantum field theoretical renormalization theory, and can be handled straightforwardly. To this end, let us compare the two problematic detector models to quantum electrodynamics (QED). The interaction Hamiltonian of QED, $\op{H}_{\mathrm{QED}} \propto A_{\mu}\conj{\Psi}\gamma^{\mu}\Psi$,
is also quadratic in the spinor field (electron field), the only difference being that the coupling is not to a scalar detector but to the vectorial photon field. Indeed, QED has a similar divergence, namely the well-known divergent tadppole subdiagram. In QED, it is renormalized straightforardly  by normal-ordering the interaction Hamiltonian (see, e.g.\,\cite{greiner_quantum_2008}). In the following two sections we demonstrate that and how the divergencies in the detector models are related to tadpole diagrams, and we show that normal-ordering of the interaction Hamiltonian renormalizes the detector models as well.

\subsubsection{Structure of the divergence}  As mentioned before, there are two types of perturbative processes that excite the detector at leading order, thereby contributing to the VEP for quantized spinor fields in \cref{eqn: VEP spinor quad 1}, or quantized scalar fields in \cref{eqn: VEP scalar quad 1}. To see this, recall that at leading order, time evolution amounts to applying the interaction Hamiltonian to the initial state $\ket{0,g}$ and integrating over the entire duration of the interaction. In case of a spinor field, the final state is $\ket{e,\psi}$. Given  the quadratic coupling in the interaction Hamiltonian $\sim \op{\conj\Psi}\op\Psi$ for the fermionic case and  $\sim \op{\Phi}^{\dagger}\op\Phi$ for the bosonic case, the algebraic structure of the leading order time evolved field state is of the form:
\begin{equation}\label{eqn: algebraic structure quadratic VEP}
  \ket{\psi},\,\ket{\phi} = \sum_{\vec{k}} E_{\vec{k}}\ket{0} + \sum_{\vec{k},\vec{p}} F_{\vec{k},\vec{p}} \ket{1_{\vec{k}},\bar{1}_{\vec{p}}}~.
\end{equation}
The spinorial degrees of freedom are suppressed and $E_{\vec{k}}$, $F_{\vec{k},\vec{p}}$ are coefficients depending on the momenta (and for spinor fields additionally on spins). In other words: When interacting with the field in the vacuum state, the detector can be excited (to leading order in perturbation theory) either by emission and subsequent annihilation of an antiparticle [divergent first term in \cref{eqn: algebraic structure quadratic VEP,eqn: VEP spinor quad 1,eqn: VEP scalar quad 1}], or by emitting a particle-antiparticle pair (second term in the same equations). Since we are only interested in the state of the detector after the interaction, both types of processes contribute to the vacuum excitation probability. They are visualized as diagrams in \cref{fig: VEP processes 1st order}. Note that these simplified diagrams (used for illustration) are not strictly Feynman diagrams since the detector is not yet second quantized, as we will see in more detail in \cref{sec: feynman rules} when we build the detector-field interaction Feynman rules.
\begin{figure}
  \includegraphics{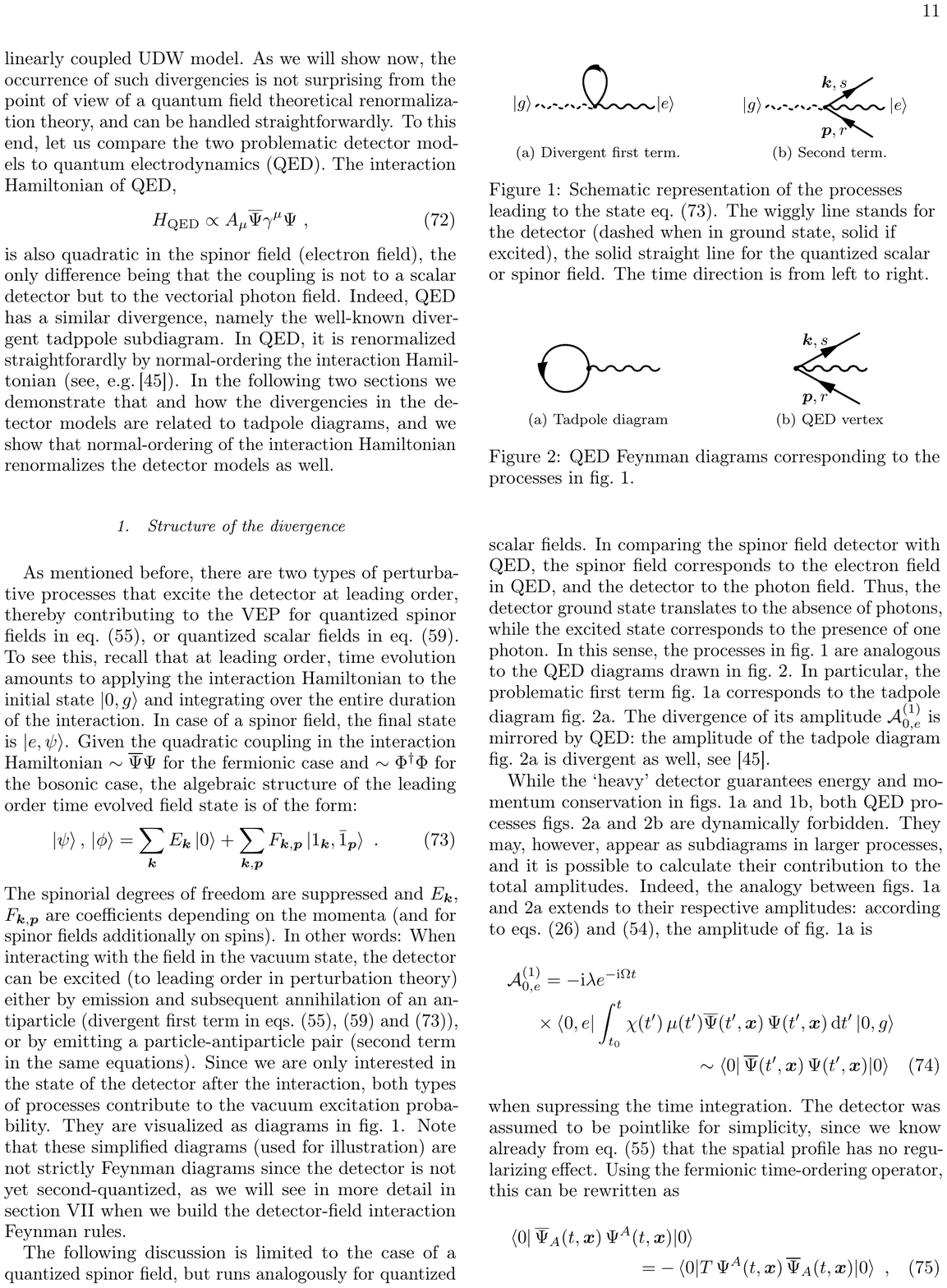}
  \caption{Schematic representation of the processes leading to the state \cref{eqn: algebraic structure quadratic VEP}. The wiggly line stands for the detector (dashed when in ground state, solid if excited), the solid straight line for the quantized scalar or spinor field. The time direction is from left to right.}
  \label{fig: VEP processes 1st order}
\end{figure}

The following discussion is limited to the case of a quantized spinor field, but runs analogously for quantized scalar fields. In comparing the spinor field detector with QED, the spinor field corresponds to the electron field in QED, and the detector to the photon field. Thus, the detector ground state translates to the absence of photons, while the excited state corresponds to the presence of one photon. In this sense, the processes in \cref{fig: VEP processes 1st order} are analogous to the QED diagrams drawn in \cref{fig: QED processes 1st order}. 
\begin{figure}
  \includegraphics{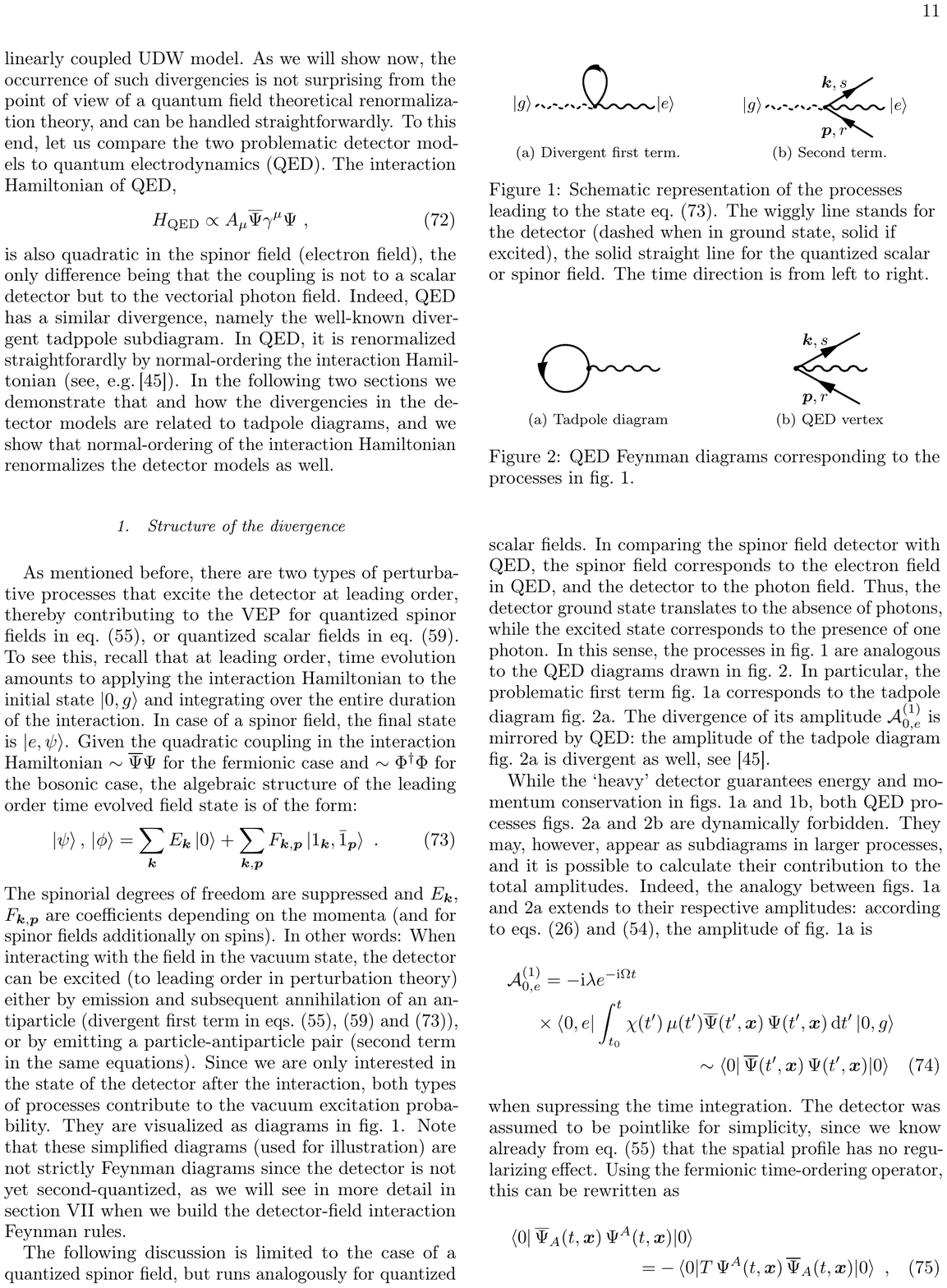}
  \caption{QED Feynman diagrams corresponding to the processes in \cref{fig: VEP processes 1st order}.}
  \label{fig: QED processes 1st order}
\end{figure}
In particular, the problematic first term in \cref{fig: VEP processes 1st order} corresponds to the tadpole diagram in \cref{fig: QED processes 1st order}. The divergence of its amplitude $\mathcal{A}^{(1)}_{0,e}$ is mirrored by QED: the amplitude of the tadpole diagram is divergent as well; see \cite{greiner_quantum_2008}.

While the ``heavy'' detector guarantees energy and momentum conservation in \cref{fig: VEP processes 1st order}, both QED processes in \cref{fig: QED processes 1st order} are dynamically forbidden. They may, however, appear as subdiagrams in larger processes, and it is possible to calculate their contribution to the total amplitudes. Indeed, the analogy between the left-hand diagrams in \cref{fig: VEP processes 1st order,fig: QED processes 1st order} extends to their respective amplitudes: according to \cref{eqn: VEP 3,eqn: 1st order time evolution op spinor field}, the amplitude of the left diagram in \cref{fig: VEP processes 1st order} is
\begin{multline}\label{eqn: 1st term amplitude}
  \mathcal{A}^{(1)}_{0,e} = -\ii  \lambda e^{-\ii  \Omega t}\\
  \times\bra{0,e}\int_{t_{0}}^{t} \chi(t')\,\op{\mu}(t') \op{\conj{\Psi}}(t',\vec{x})\,\op{\Psi}(t',\vec{x})\,\dd t'\ket{0,g}\\
  \sim \braket{0|\,\op{\conj{\Psi}}(t',\vec{x})\,\op{\Psi}(t',\vec{x})|0} 
\end{multline}
when suppressing the time integration. The detector was assumed to be pointlike for simplicity, since we know already from \cref{eqn: VEP spinor quad 1} that the spatial profile has no regularizing effect. Using the fermionic time-ordering operator, this can be rewritten as
\begin{multline}
  \braket{0|\,\op{\conj{\Psi}}_{A}(t,\vec{x})\,\op{\Psi}^{A}(t,\vec{x})|0} \\
  = - \braket{0|T\,\op{\Psi}^{A}(t,\vec{x})\,\op{\conj{\Psi}}_{A}(t,\vec{x})|0}~,
\end{multline}
which is essentially the Feynman propagator:
\begin{equation}
  S_{F}(x-y)^{A}_{B} = \braket{0|T \op{\Psi}^{A}(x)\op{\conj{\Psi}}_{B}(y)|0}~.
\end{equation}
We find
\begin{equation}\label{eqn: 1st term as Feynman propagator}
  \mathcal{A}^{(1)}_{0,e} \sim \tr S_{F}(0) \to \infty~,
\end{equation}
where the trace runs over the spinor indices.

By applying position space Feynman rules to the tadpole diagram in \cref{fig: QED processes 1st order}, on the other hand, one obtains the amplitude
\begin{equation}\label{eqn: tadpole as Feynman propagator}
  \tr [\gamma^{\mu} S_{F}(0)] \to \infty~,
\end{equation}
(see e.g.\,\cite{greiner_quantum_2008}). The divergence of the two amplitudes \cref{eqn: 1st term as Feynman propagator,eqn: tadpole as Feynman propagator} has a common structure: they diverge because the Feynman propagator of the quantized spinor field is ill defined when closed in itself. The gamma matrices in the second amplitude merely appear because electrodynamics has a vector coupling \cref{eqn: interaction H QED}, as opposed to the scalar coupling \cref{eqn: interaction H spinor quad} for the detector model.

In conclusion, it is justified to interpret the divergence found in the quadratically coupling detector models as the analogue of tadpole diagrams known from quantum field theories such as QED.

\subsubsection{Renormalization by normal-ordering}
Tadpoles are renormalized by normal-ordering of the interaction Hamiltonian. This amounts to setting the amplitude of the tadpole to zero, or---equivalently---ignoring any diagram containing tadpoles as subdiagrams (see, for instance, \cite{greiner_quantum_2008}). The same procedure works for both the detector models as well: normal-ordering the interaction Hamiltonians \cref{eqn: interaction H spinor quad} leads to
\begin{multline}
  \mathcal{A}^{(1)}_{0,e} = -\ii  \lambda e^{-\ii  \Omega t}\\
  \times\bra{0,e}\int_{t_{0}}^{t} \chi(t')\,\op{\mu}(t') \normalord{}\op{\conj{\Psi}}(t',\vec{x})\,\op{\Psi}(t',\vec{x})\normalord{}\,\dd t'\ket{0,g} \\ = 0
\end{multline}
instead of \cref{eqn: 1st term amplitude,eqn: 1st term as Feynman propagator}, such that the divergence is renormalized to zero. Equivalently, for detector models coupled quadratically to quantized real or complex fields via the Hamiltonian \cref{eqn: interaction H scalar quad},
\begin{align}
  \mathcal{A}^{(1)}_{0,e} \sim \braket{0|\normalord{}\op{\Phi}^{\dagger}\,\op{\Phi}\normalord{}|0} = 0~.
\end{align}

The above renormalization can be interpreted as shifting the expectation value of the interaction energy from infinity to zero: If detector and field do not interact, $\ket{0,g}$ is the ground state, with energy eigenvalue zero:
\begin{equation}
  \bra{0,g}H\ket{0,g} = 0~.
\end{equation}
Here, $H = H_{\mathrm{d}} + H_{\Phi}$ for quantized scalar fields, and $H = H_{\mathrm{d}} + H_{\Psi}$ for quantized spinor fields as given in \cref{eqn: H detector,eqn: hamilton operator spinor field,eqn: hamilton operator real field,eqn: hamilton operator complex field}. However, as soon as the coupling is switched on, $\lambda >0$, we would naively have
\begin{equation}
  \bra{0,g}H + H_{\mathrm{int}}\ket{0,g} \to \infty
\end{equation}
for the interactions quadratic in the field \cref{eqn: interaction H scalar quad,eqn: interaction H spinor quad}. This would be unphysical and normal-ordering the interaction Hamiltonian is therefore appropriate to ensure that
\begin{equation}
  \bra{0,g}H + \normalord{}H_{\mathrm{int}}\normalord{}\ket{0,g} = 0~.
\end{equation}
In comparison, for linear coupling \cref{eqn: interaction H real linear} the energy expectation value vanishes automatically.

\subsection{The renormalized models and their VEP}\label{sec: renormalized models}

In summary, renormalization of the spinor field detector, model 4 in \cref{tab: overview detector models}, is achieved by coupling through the Hamiltonian
\begin{equation}\label{eqn: interaction H spinor quad renorm}
  \op{H}_{\mathrm{int}}(t) = \lambda\chi(t)\, \op{\mu} \int_{\R^{n}}p(\vec{x}(t),\vec{y})\,\normalord{}\op{\conj\Psi}(\vec{y})\op{\Psi}(\vec{y})\normalord{} \,\dd \vec{y}~.
\end{equation}
Detector models 2 and 3 for real or complex fields are similarly renormalized by coupling through
\begin{equation}\label{eqn: interaction H scalar quad renorm}
  \op{H}_{\mathrm{int}}(t) = \lambda\chi(t)\, \op{\mu} \int_{\R^{n}}p(\vec{x}(t),\vec{y})\,\normalord{}\op{\Phi}^{\dagger}(\vec{y})\op{\Phi}(\vec{y})\normalord{} \,\dd \vec{y}~.
\end{equation}

The vacuum response of the renormalized spinor field detector is obtained by dropping the first term in \cref{eqn: VEP spinor quad 1}:
\begin{multline}\label{eqn: VEP spinor quad 1 renorm}
  P_{0,g\to e} = \\
   \lambda^{2} \sum_{\vec{k},s}\sum_{\vec{p},r} \left| \int_{\R^{n}}p(\vec{x}_{0},\vec{y})\,\conj\psi_{\vec{k},s,+}(\vec{y})\,\psi_{\vec{p},r,-}(\vec{y})\,\dd \vec{y}\right|^{2}\\
  \times \left|\int_{-\infty}^{+\infty}\chi(t)\,e^{\ii (\Omega +\omega_{\vec{k}}+\omega_{\vec{p}})t}\,\dd t\right|^{2}~.
\end{multline}
Explicitly, for periodic boundary conditions:
\begin{multline}\label{eqn: VEP spinor quad 2 renorm}
  P_{0,g\to e}  = \\
  \frac{\lambda^{2}}{2L^{2n}} \sum_{\vec{k},\vec{p}} \frac{(\omega_{\vec{k}}+m)(\omega_{\vec{p}}+m)}{\omega_{\vec{k}}\omega_{\vec{p}}}\left(\frac{\vec{k}}{\omega_{\vec{k}}+m}-\frac{\vec{p}}{\omega_{\vec{p}}+m}\right)^{2} \\
  \times\left| \int_{\R^{n}}p(\vec{x}_{0},\vec{y})\,e^{-\ii  (\vec{k}+\vec{p})\vec{y}}\,\dd \vec{y}\right|^{2} \\
  \times \left|\int_{-\infty}^{+\infty}\chi(t)\,e^{\ii (\Omega +\omega_{\vec{k}}+\omega_{\vec{p}})t}\,\dd t\right|^{2}~.
\end{multline}
Similarly, the leading-order vacuum response of the renormalized real and complex field detector becomes
\begin{multline}\label{eqn: VEP scalar quad 1 renorm}
  P_{0,g\to e}  =\\
   \frac{\lambda^{2}}{4}  \sum_{\vec{k},\vec{p}} \frac{1}{\omega_{\vec{k}}\omega_{\vec{p}}}\left| \int_{\R^{n}}p(\vec{x}_{0},\vec{y}) \, \varphi_{\vec{k}}^{*}(\vec{y})\,\varphi_{\vec{p}}^{*}(\vec{y})\,\dd \vec{y}\right|^{2}\\
  \times \left|\int_{-\infty}^{+\infty}\chi(t)\,e^{\ii (\Omega +\omega_{\vec{k}}+\omega_{\vec{p}})t}\,\dd t\right|^{2}~,
\end{multline}
or explicitly
\begin{multline}\label{eqn: VEP scalar quad 2 renorm}
  P_{0,g \to e}  = \\
  \frac{\lambda^{2}}{4L^{2n}} \sum_{\vec{k},\vec{p}} \frac{1}{\omega_{\vec{k}}\omega_{\vec{p}}}\left| \int_{\R^{n}}p(\vec{x}_{0},\vec{y})\,e^{-\ii  (\vec{k}+\vec{p})\vec{y}}\,\dd \vec{y}\right|^{2}\\
  \times \left|\int_{-\infty}^{+\infty}\chi(t)\,e^{\ii (\Omega +\omega_{\vec{k}}+\omega_{\vec{p}})t}\,\dd t\right|^{2}
\end{multline}
by dropping the first term from \cref{eqn: VEP scalar quad 1,eqn: VEP scalar quad 2}.

As will be demonstrated in the next section, the VEPs \cref{eqn: VEP scalar quad 2 renorm,eqn: VEP spinor quad 2 renorm} may still be divergent even after this renormalization and further regularization of the VEP is necessary. Recall that the VEP of detector model 1 coupling to a quantized scalar field could be rewritten in terms of the Fourier transform of its spacetime profile $f$ in \cref{eqn: VEP real linear spacetime profile 2}. The same is possible for the quadratically coupling detectors: using the Fourier transformation \cref{eqn: def Fourier transform}, the VEP \cref{eqn: VEP scalar quad 2 renorm} for quantized scalar fields can be rewritten as
\begin{equation}\label{eqn: VEP complex quad 2 renorm 2}
  P_{0,g \to e}  = \frac{\lambda^{2}}{4L^{2n}} \sum_{\vec{k},\vec{p}} \frac{1}{\omega_{\vec{k}}\omega_{\vec{p}}}\left| \tilde{f}(-d-k-p)\right|^{2}~.
\end{equation}
The probability in the case of \cref{eqn: VEP spinor quad 2 renorm} is
\begin{multline}\label{eqn: VEP spinor quad 2 renorm 2}
  P_{0,g\to e}
  = \frac{\lambda^{2}}{2L^{2n}} \sum_{\vec{k},\vec{p}} \frac{(\omega_{\vec{k}}+m)(\omega_{\vec{p}}+m)}{\omega_{\vec{k}}\omega_{\vec{p}}}\\
  \times\left(\frac{\vec{k}}{\omega_{\vec{k}}+m}-\frac{\vec{p}}{\omega_{\vec{p}}+m}\right)^{2} \left| \tilde{f}(-d-k-p)\right|^{2}~.
\end{multline}
Thus, the UV behavior of its Fourier transform again decides which spacetime profile $f$ is suitable for regularization. Note that the prefactor in front of the $\tilde{f}$ decays faster in the UV for quantized scalar fields than for quantized spinor fields. This means that spinor field detectors  require a ``stronger'' regularization.


\section{Comparison of particle detector models}\label{sec: comparison}

So far, we have discussed how to obtain a finite vacuum response through renormalization and regularization for four different detector models. We will now investigate whether any of the three scalar field detector models (1,2 and 3 in \cref{tab: overview detector models})  is comparable to the spinor field detector, in the sense that a comparison of their respective responses offers insight into the properties of the field they probe.

For the different models considered, the relevant features that can have a significant impact on the response of UDW-type detectors for quantized scalar and spinor fields are
\begin{compactenum}
  \item The nature of the coupling: linear versus quadratic.
  \item Internal degrees of freedom of the field: for example, the $U(1)$ charge.
  \item The field statistics: bosonic or fermionic.
  \item The analytic structure of the field: scalar or spinorial.
\end{compactenum}
From the results obtained in the previous sections, we discuss below each of these points individually.

\subsection{Linear vs quadratic coupling}

The coupling has a profound influence on the response of an UDW-type detector:  The vacuum response of model 1---i.e. coupling linearly to a quantized real field---given in \cref{eqn: VEP real linear 1},  is fundamentally different from the response [\cref{eqn: VEP scalar quad 1 renorm}] of model 2 to the same field when using the (renormalized) quadratic coupling of \cref{eqn: interaction H scalar quad renorm}. In order to assess the differences between the response of detectors to fermionic (spinor) fields and bosonic (scalar) fields, we will now compare models 2 to 4 that couple to the same power of the fields.

\subsection{Charged vs uncharged field}

Quadratic coupling to quantized scalar fields gives the same VEP \cref{eqn: VEP scalar quad 1 renorm} for uncharged (i.e. real) and charged (i.e. complex) quantized scalar fields to leading order in perturbation theory. However, as we shall see in \cref{sec: feynman rules}, the detector response is not identical in general; the field charge does have an influence. Since spinor fields are generally charged, one  should always use a charged scalar field when comparing quantized scalar fields with quantized spinor fields in order to single out effects coming exclusively from the field statistics or analytic structure. In other words, only model 3 and model 4 can be rightfully compared.

\subsection{Boson vs fermion statistics}

The statistics of the probed field influence the reaction of an UDW-type detector in two ways. First of all, the Pauli exclusion principle will prevent the creation of more than one quantum of a given charge, momentum and spin in a fermionic field. This restricts the set of possible processes that, for example, excite the detector. This would not be the case for bosonic fields. Secondly, the anticommutation of fermionic operators implies that two distinct processes leading to the same final state may have a relative minus sign in their amplitudes. Fermionic fields, unlike bosonic fields, can therefore have cancellations between amplitudes.

Nevertheless, the field statistics do not influence the VEP of the four detector models discussed here at leading order: merely a single quantum (linear coupling), or at most a particle-antiparticle pair (quadratic coupling) is created from the vacuum of the field, so the Pauli exclusion principle has no effect. And since there is only one possible process contributing to the VEP, no cancellation between different processes occurs. This implies that the leading order VEP remains unaffected by the field statistics.

As soon as the field is initially not in its vacuum state, however, the Pauli exclusion principle does affect the response of a particle detector to a fermionic field. And even for the VEP, the exclusion principle and different sign amplitude interference will be relevant at higher orders in perturbation theory. Interestingly, in dimensions larger than two, the spin-statistics theorem ties together field statistics and the spin of a field (which is reflected in its analytic structure). This makes it generally hard to distinguish the effect of the analytic structure and the effect of field  statistics. In the leading order vacuum response, however, due to the uniqueness of the leading order VEP described above, it is possible to independently study the effect of the spinor structure versus a scalar structure.

\subsection{Scalar vs spinor field}

Comparing \cref{eqn: VEP scalar quad 1 renorm,eqn: VEP spinor quad 1 renorm}, one finds that the VEP of model 3 has summands proportional to
\begin{multline}
  \propto \frac{1}{4\omega_{\vec{k}}\omega_{\vec{p}}}\left| \int_{\R^{n}}p(\vec{x}_{0},\vec{y}) \, \varphi_{\vec{k}}^{*}(\vec{y})\,\varphi_{\vec{p}}^{*}(\vec{y})\,\dd \vec{y}\right|^{2}\\
  = \left| \int_{\R^{n}}p(\vec{x}_{0},\vec{y}) \, \tilde\varphi_{\vec{k}}^{*}(t,\vec{y})\,\tilde\varphi_{\vec{p}}^{*}(t,\vec{y})\,\dd \vec{y}\right|^{2}~,
\end{multline}
while the VEP of the spinor field detector, model 4, has summands proportional to
\begin{multline}
  \propto \sum_{s,r}\left| \int_{\R^{n}}p(\vec{x}_{0},\vec{y})\,\conj\psi_{\vec{k},s,+}(\vec{y})\,\psi_{\vec{p},r,-}(\vec{y})\,\dd \vec{y}\right|^{2}\\
  = \sum_{s,r}\left| \int_{\R^{n}}p(\vec{x}_{0},\vec{y})\,\tilde\psi_{\vec{k},s,+}(t,\vec{y})\,\gamma^{0}\,\tilde\psi_{\vec{p},r,-}(t,\vec{y})\,\dd \vec{y}\right|^{2}~.
\end{multline}
The sum over all spins in the second expression is, of course, not present in the first one for quantized scalar fields. The factor $(4\omega_{\vec{k}}\omega_{\vec{p}})^{-1}$ in the first expression is simply part of the normalization factor of the scalar mode functions \cref{eqn: scalar mode functions 1}. This can be made explicit by rewriting both of the above expressions in terms of the full, time-dependent mode functions \cref{eqn: scalar mode functions 1,eqn: spinor mode functions} (at arbitrary time $t$). In this form, it is obvious that the \emph{only} difference between the VEP for scalar and spinor fields is due to the different mode functions.

So does it make any difference at all whether the detector is coupled to a quantized scalar or a spinor field? The answer is yes: Comparing the VEPs \cref{eqn: VEP complex quad 2 renorm 2} and \cref{eqn: VEP spinor quad 2 renorm 2} at the end of the previous chapter, we found that quantized scalar fields are better behaved for large frequencies since the UV behavior of the mode functions is different; the scalar mode functions generally have a factor $\sim (\omega_{\vec{k}})^{-1/2}$ which the spinor mode functions do not: As a simple example, compare the mode functions of massless fields in $(1,1)$ dimensions: For the quantized (real or complex) scalar field, the particle and antiparticle mode functions are
\begin{equation}
  \tilde\varphi_{\vec{k},\epsilon}^{*}(t,x) = \frac{1}{\sqrt{2|k|L}} \,e^{+\ii   |k|t}\,e^{-\ii  kx}
\end{equation}
[see \cref{app: quantum fields},  \cref{eqn: scalar mode functions 2,eqn: mode expansion complex field,eqn: mode expansion real field}], while the corresponding spinor mode functions for particles and antiparticles (in this order) are
\begin{equation}\begin{aligned}
  \tilde\psi_{k,s,+}^{\dagger}(t,x) &= \frac{1}{\sqrt{2L}}
  \begin{bmatrix}
    \xi_{s} \\
    \sgn (k) \,\sigma^{3}\xi_{s}
  \end{bmatrix}^{\dagger}e^{+\ii  |k|t}\,e^{-\ii  kx}\\
  \tilde\psi_{k,s,-}(t,x) &= \frac{1}{\sqrt{2L}}
  \begin{bmatrix}
    \sgn (k) \,\sigma^{3}\xi_{s}\\
    \xi_{s}
  \end{bmatrix}\hphantom{^{\dagger}}e^{+\ii  |k|t}\,e^{- ikx}
\end{aligned}\end{equation}
[see \cref{app: quantum fields},  \cref{eqn: spinor mode function 2dim massive,eqn: spinor mode function 2dim massless,eqn: spinor mode functions,eqn: mode expansion spinor field}].
The spinorial parts do no scale with $k$, while the scalar field has an additional factor $|k|^{-1}= \omega_{\vec{k}}^{-1}$.

This difference has notable implications: it causes the quantized scalar fields to exhibit better convergence properties when summing over all momenta---the VEP for detectors coupling quadratically to quantized scalar fields---\cref{eqn: VEP complex quad 2 renorm 2}---will in general converge better than the VEP for detectors coupling quadratically to quantized spinor fields---\cref{eqn: VEP spinor quad 2 renorm 2}. This is a dramatic difference. For instance, there are switching functions and spatial profiles that give a finite VEP for a detector quadratically coupled to a quantized scalar field  but yield a divergent VEP for quantized spinor fields. In \cref{sec: scalar vs spinor examples}, examples where this situation actually occurs are discussed.

Superficially, the different normalization conditions, stemming from a mathematically different inner product, are the reason for this difference: The scalar mode functions are orthonormal with respect to
\begin{multline}\label{eqn: normalization 1}
  \varinner{\tilde\varphi_{\vec{k}}}{\tilde\varphi_{\vec{p}}} \equiv\\
    -\ii   \int_{B}  \tilde\varphi_{\vec{k}}(x) \big[\partial_{0}\tilde\varphi_{\vec{p}}^{*}(x)\big] 
   - \big[\partial_{0}\tilde\varphi_{\vec{k}}(x)\big]\tilde\varphi_{\vec{p}}^{*}(x)\, \dd \vec{x}\\ = \delta_{\vec{k},\vec{p}}~,
\end{multline}
while the spinor mode functions are normalized according to
\begin{multline}\label{eqn: normalization 2}
  \inner{\tilde\psi_{\vec{k},s,\epsilon}}{\tilde\psi_{\vec{p},r,\delta}} \equiv \\
  \int_{B} \tilde\psi_{\vec{k},s,\epsilon}^{\dagger}(x)\,\tilde\psi_{\vec{p},r,\delta}(x)\,\dd \vec{x} = \delta_{\vec{k},\vec{p}}\,\delta_{s,r}\,\delta_{\epsilon,\delta}~.
\end{multline}
The additional time derivative in the product for scalar functions generates a factor $\omega_{\vec{k}}$. Thus, the normalization factor of the scalar fields contains an additional factor $(\omega_{\vec{k}})^{-1/2}$ compared to the normalization of the spinor fields.

Ultimately, the reason is that the equation of motion of scalar fields, the Klein-Gordon equation, is of second order and features a double time derivative, while the equation of motion of the spinor field, the Dirac equation, is of first order, containing only a single time derivative (see \cref{app: normalization}).

\medskip{}
In order to demonstrate the influence this has on the response of the different detector models, we compare their VEPs in three simple examples on $(1,1)$ dimensional Minkowski spacetime for massless fields: (1) sudden switching with a pointlike detector, (2) Gaussian switching with a pointlike detector, and (3) sudden switching with a Gaussian detector profile.

\subsection{Examples in (1,1) dimensions}\label{sec: scalar vs spinor examples}

Let us start by recalling the general expressions for the massless (1,1) dimension scenario. For model 3 in \cref{tab: overview detector models} (quadratic coupling quantized scalar field detector), the VEP \cref{eqn: VEP scalar quad 2 renorm} simplifies in $(1,1)$ dimensions and for massless fields to
\begin{multline}\label{eqn: VEP complex quad 2 renorm 2dim}
  P_{0,g\to e}
  = \frac{\lambda^{2}}{4L^{2}} \sum_{k,p\neq0} \frac{1}{|kp|}\left| \int_{-\infty}^{\infty}p(x_{0},y) e^{-\ii  (k+p)y}\,\dd y\right|^{2}\\
  \times \left|\int_{-\infty}^{+\infty}\chi(t)\,e^{\ii (\Omega +|k|+|p|)t}\,\dd t\right|^{2}.
\end{multline}
On the other hand, for the spinor field detector (model 4 in \cref{tab: overview detector models}), one similarly obtains 
\begin{align}\label{eqn: VEP spinor quad 2 renorm 2dim}
  &P_{0,g\to e}
  = \frac{2\lambda^{2}}{L^{2}} \sum_{k,p>0}  \left[\left| \int_{-\infty}^{\infty}p(x_{0},y)e^{-\ii  (k-p)y}\dd y\right|^{2}\right.\\
 &\!\!\!\!\! + \left.\left| \int_{-\infty}^{\infty}\!\!p(x_{0},y)e^{-\ii  (p-k)y}\dd y\right|^{2}  \right]
  \left|\int_{-\infty}^{+\infty}\!\!\!\chi(t)\,e^{\ii (\Omega +k+p)t}\,\dd t\right|^{2},\nonumber
\end{align}
directly from \cref{eqn: VEP spinor quad 1 renorm} by plugging in the mode functions and spinors \cref{eqn: spinor mode function 2dim massive,eqn: spinor mode function 2dim massless}.

\subsubsection{Sudden switching and pointlike detector}

In the most basic scenario, the detector is pointlike [delta spatial profile as given in \cref{eqn: spatial profile pointlike detector}], and it is suddenly coupled to the field as given in \cref{eqn: sudden switching}. This is the same situation that was considered in \cref{sec: VEP discussion linear - dimension} for the case of a quantized real field with linear coupling (model 1).

\paragraph{Quantized scalar field}
Plugging \cref{eqn: spatial profile pointlike detector,eqn: sudden switching} in \cref{eqn: VEP complex quad 2 renorm 2dim}, the resulting VEP for the quantized scalar field is
\begin{multline}\label{eqn: VEP complex quad 2 renorm 2dim sudden sw pointlike}
  P_{0,g\to e}
  =\frac{\lambda^{2}L^{2}}{4 \pi^{4}} \sum_{l_{1}=1}^{\infty} \sum_{l_{2}=1}^{\infty} \frac{1}{l_{1}l_{2}\left(\frac{\Omega L}{2\pi}+l_{1}+l_{2}\right)^{2}}\\ \times\sin^{2}\left[\frac{\pi}{L}\left( \frac{\Omega L}{2\pi}+l_{1}+l_{2}\right)T\right]
\end{multline}
This double sum is convergent: we can estimate
\begin{equation*}
  \frac{4 \pi^{4}}{\lambda^{2}L^{2}} P_{0,g\to e} \leq  \sum_{l_{1}=1}^{\infty} \sum_{l_{2}=1}^{\infty} \frac{1}{l_{1}l_{2}\left(l_{1}+l_{2}\right)^{2}}~.
\end{equation*}
The sums on the right-hand side converge if and only if
\begin{equation*}
 S= \sum_{l_{1}=2}^{\infty} \sum_{l_{2}=2}^{\infty} \frac{1}{l_{1}l_{2}\left(l_{1}+l_{2}\right)^{2}}
\end{equation*}
does, where the summation starts at $2$ instead of $1$. We can now exploit that $l_{1}l_{2}\geq l_{1}+l_{2}$ for all $l_{i}\geq 2$ to find the upper bound
\begin{equation*}
 S \leq \sum_{l_{1}=2}^{\infty} \sum_{l_{2}=2}^{\infty} \frac{1}{\left(l_{1}+l_{2}\right)^{3}}~,
\end{equation*}
where the sum on the right-hand side is convergent.

\paragraph{Quantized spinor field}
The formal expression for the VEP in case of a quantized spinor field is obtained by plugging \cref{eqn: spatial profile pointlike detector,eqn: sudden switching} in \cref{eqn: VEP spinor quad 2 renorm 2dim}:
\begin{multline}\label{eqn: VEP spinor quad 2 renorm 2dim sudden sw pointlike}
  P_{0,g\to e}
  =\frac{4\lambda^{2}}{\pi^{2}} \sum_{l_{1}=1}^{\infty} \sum_{l_{2}=1}^{\infty} \frac{1}{\left(\frac{\Omega L}{2\pi}+l_{1}+l_{2}\right)^{2}}\\ \times\sin^{2}\left[\frac{\pi}{L}\left( \frac{\Omega L}{2\pi}+l_{1}+l_{2}\right)T\right]~.
\end{multline}
Similar to \cref{eqn: first estimate VEP in 3 dim}, it is easy to prove that this sum converges if and only if
\begin{equation*}
  \sum_{l_{1}=1}^\infty\sum_{l_{2}=1}^{\infty} \frac{1}{\left(\frac{\Omega L}{2\pi}+l_{1}+l_{2}\right)^{2}}
\end{equation*}
does. However, this sum diverges, since the corresponding integral is ill defined:
\begin{multline*}
  \int_{1}^{\infty}\dd l_{1} \int_{1}^{\infty}\dd l_{2} \frac{1}{(A+l_{1}+l_{2})^{2}} =\\
   \lim_{a \to \infty} \ln(a) -\ln(2+A) \to \infty~.
\end{multline*}
Thus, the VEP \cref{eqn: VEP spinor quad 2 renorm 2dim sudden sw pointlike} is divergent.

This result nicely illustrates that the detector response to quantized scalar fields is better behaved in the UV: On one hand, all the scalar fields detectors (models 1-3) have a finite VEP in $(1,1)$ dimensions, even if the detector is simply pointlike and uses sudden switching [see \cref{eqn: VEP complex quad 2 renorm 2dim sudden sw pointlike,eqn: VEP real linear sudden switching pointlike 1 dim}]. On the other hand, in the case of a quantized spinor field the VEP is already divergent in $(1,1)$ dimensions for this spacetime profile.

\subsubsection{Gaussian switching and pointlike detector}

It was demonstrated in \cref{sec: VEP discussion linear - regularization} that the original UDW detector (model 1), which has a divergent VEP in dimensions $(1,n)$ for $n\geq 3$, can be regularized by introducing a Gaussian switching function \cref{eqn: gaussian sw} of width $T$. We can prove that Gaussian switching also regularizes the spinor field detector, model 4, in $(1,1)$ dimensions.

\paragraph{Quantized spinor field}
Inserting \cref{eqn: gaussian sw} in \cref{eqn: VEP spinor quad 2 renorm 2dim}, the VEP for a spinor field reads
\begin{equation}\label{eqn: VEP spinor quad 2 renorm 2dim gaussian sw pointlike}
  P_{0,g\to e}
  =\frac{8 \pi T^{2}\lambda^{2}}{L^{2}} \sum_{l_{1}=1}^{\infty} \sum_{l_{2}=1}^{\infty} e^{-\frac{4\pi^{2}T^{2}}{L^{2}}\left(\frac{\Omega L}{2\pi}+l_{1}+l_{2}\right)^{2}}~.
\end{equation}
This excitation probability is now clearly finite: the sum converges if and only if 
\begin{equation*}
  \int_{1}^{\infty}\dd l_{1}\int_{1}^{\infty}\dd l_{2}\,e^{-A(B+l_{1}+l_{2})^{2}}
\end{equation*}
does, where $A = 4\pi^{2}T^{2}L^{-2}$ and $B=\Omega L(2\pi)^{-1}$, and the latter expression is bounded from above by
\begin{equation*}
  \int_{1}^{\infty}\dd l_{1}\,e^{-Al_{1}^{2}}\int_{1}^{\infty}\dd l_{2}\,e^{-Al_{2}^{2}} = \frac{\pi}{A}~.
\end{equation*}

\paragraph{Quantized scalar field} 
For model 3, the same switching function yields
\begin{equation}\label{eqn: VEP complex quad 2 renorm 2dim gaussian sw pointlike}
    P_{0,g\to e} =\frac{T^{2}\lambda^{2}}{2\pi} \sum_{l_{1}=1}^{\infty} \sum_{l_{2}=1}^{\infty} \frac{1}{l_{1}l_{2}}e^{-\frac{4\pi^{2}T^{2}}{L^{2}}\left(\frac{\Omega L}{2\pi}+l_{1}+l_{2}\right)^{2}}
\end{equation}
which is of course also finite since the summand decays faster than in \cref{eqn: VEP spinor quad 2 renorm 2dim gaussian sw pointlike} [and the VEP had been convergent in \cref{eqn: VEP complex quad 2 renorm 2dim sudden sw pointlike} even without regularization].

Note that for both field types, $P_{0,g\to e} \to 0$ in the adiabatic limit $T \to \infty$, just like for the linear coupling in \cref{eqn: VEP real linear gaussian switching}.

\subsubsection{Sudden switching and Gaussian detector profile}

The third example discussed in \cref{sec: VEP discussion linear - regularization} for model 1 demonstrated regularization through a Gaussian detector profile \cref{eqn: gaussian detector profile} with variance $\sigma>0$ and sudden switching. We repeat the example in the present case.

\paragraph{Quantized spinor field}
In case of the quantized spinor field, the probability is
\begin{multline}\label{eqn: VEP spinor quad 2 renorm 2dim sudden sw gaussian profile}
  P_{0,g\to e}
  = \frac{4\lambda^{2}}{\pi^{2}} \sum_{l_{1}=1}^{\infty} \sum_{l_{2}=1}^{\infty}
  \frac{e^{-\frac{4\pi^{2}\sigma^{2}}{L^{2}}(l_{1}-l_{2})^{2}}}{\left(\frac{\Omega L}{2\pi}+l_{1}+l_{2}\right)^{2}}\\ \times  \sin^{2}\left[\frac{\pi}{L}\left( \frac{\Omega L}{2\pi}+l_{1}+l_{2}\right)T\right]
\end{multline}
and convergence is harder to assess than for the Gaussian switching function in \cref{eqn: VEP spinor quad 2 renorm 2dim gaussian sw pointlike}: the sine squared is positive, bounded from above by 1 and can thus be ignored. For $l_1$ = $l_2$, the exponential factor is always $1$, but then the denominator decays as $l_1^{-2}$; as it happens, this is just fast enough. As a first step, we can estimate
\begin{equation*}
   \frac{\pi^{2}}{4\lambda^{2}}  P_{0,g\to e} \leq \sum_{l_{1}=1}^{\infty} \sum_{l_{2}=1}^{\infty} \frac{e^{-A(l_{1}-l_{2})^{2}}}{\left(l_{1}+l_{2}\right)^{2}}
\end{equation*}
where $A = 4\pi^{2}\sigma^{2}L^{-2} > 0$. This sum converges if the corresponding integral
\begin{equation*}
 I= \int_{l_{1}=1}^{\infty}\dd l_{1} \int_{l_{2}=1}^{\infty}\dd l_{2}\, \frac{e^{-A(l_{1}-l_{2})^{2}}}{\left(l_{1}+l_{2}\right)^{2}}
\end{equation*}
does. By substituting $l = l_{2}-l_{1}$,
\begin{equation*}
 I = \int_{l_{1}=1}^{\infty}\dd l_{1} \int_{l=1-l_{1}}^{\infty}\dd l\, \frac{e^{-Al^{2}}}{\left(l+2l_{1}\right)^{2}}
\end{equation*}
and since for $l\geq 1-l_{1}$ we have $(l+2l_{1})^{-2} < l_{1}^{-2}$:
\begin{multline*}
  I  < \int_{l_{1}=1}^{\infty}\dd l_{1} \,\frac{1}{l_{1}^{2}} \int_{l=1-l_{1}}^{\infty}\dd l\, e^{-Al^{2}} \\
  \leq \int_{l_{1}=1}^{\infty}\dd l_{1}\, \frac{1}{l_{1}^{2}} \int_{l=-\infty}^{\infty}\dd l\, e^{-Al^{2}} = \sqrt{\frac{\pi}{A}} < \infty~.
\end{multline*}
The VEP \cref{eqn: VEP spinor quad 2 renorm 2dim sudden sw gaussian profile} is finite.

\paragraph{Quantized scalar field}
For the quantized scalar field, the probability
\begin{multline}\label{eqn: VEP complex quad 2 renorm 2dim sudden sw gaussian profile}
  P_{0,g\to e}
  = \frac{\lambda^{2}L^{2}}{8\pi^{4}} \\
  \times \sum_{l_{1}}^{\infty}\sum_{l_{2}=1}^{\infty} \frac{e^{-\frac{4\pi^{2}\sigma^{2}}{L^{2}}(l_{1}+l_{2})^{2}} + e^{-\frac{4\pi^{2}\sigma^{2}}{L^{2}}(l_{1}-l_{2})^{2}}}{l_{1}l_{2}\left(\frac{\Omega L}{2\pi}+l_{1}+l_{2}\right)^{2}}\\ \times\sin^{2}\left[\frac{\pi}{L}\left( \frac{\Omega L}{2\pi}+l_{1}+l_{2}\right)T\right]
\end{multline}
is once again convergent because \cref{eqn: VEP complex quad 2 renorm 2dim sudden sw pointlike} already was even without regularization.

Again, $P_{0,g\to e} \to 0$ for both fields if the detector is maximally delocalized by $\sigma \to \infty$.

\subsection{Comparability of the models}\label{sec: conclusion comparison}

From the analysis above we conclude that  the differences in the response of models 3 and 4 are only due to the analytic structure of the fields and the field statistics.

The field statistics do not enter the leading order VEP, but will have a profound influence on the detector response at higher orders, or also at leading order if the field is initially not in the vacuum state. On the level of the analytic structure, model 4 is sensitive to the additional spin degree of freedom, and, more importantly, requires stronger regularization since it displays a worse-behaved UV response. Interestingly, it is not enough to regularize with a smooth switching function as opposed to the scalar case. Rather, the spinor model requires a spatial smearing in order to yield finite VEP.

Note that only a qualitative comparison is feasible. A naïve quantitative comparison fails because the coupling strength $\lambda$ has different dimensions in model 3 and 4: Scalar fields on $(1,n)$-dimensional spacetime have mass dimension $(n-1)/2$, while spinor fields have mass dimension $n/2$. To obtain a Hamiltonian of mass dimension $1$, the respective coupling constants therefore need to be of different dimensions. A direct comparison is therefore only possible with regards to the functional dependence on model parameters like the detector trajectory $x(\tau)$, the detector gap $\Omega$, and more concretely the interaction time $T$ and the detector size $\sigma$. In this sense, the leading-order vacuum response of models 3 and 4 is equivalent in all (convergent) examples discussed in \cref{sec: scalar vs spinor examples}.

In conclusion, for finite size detectors after renormalization, the pair of models, model 3 (for quantized complex fields) and model 4 (for quantized spinor fields) can be used to reliably study the differences between fermionic and bosonic fields via particle detectors, at least at leading order in perturbation theory.

\section{Computation methods at arbitrary order}\label{sec: computation methods}

Up to now, the discussion was limited to the leading order in the coupling strength $\lambda$. However, potential applications of particle detectors will undoubtedly require calculations to higher orders in perturbation theory: some phenomena that have been investigated for quantized real fields using the original UDW detector, for example, rely on higher-order effects order (see  \sources{martin-martinez_sustainable_2013,martin-martinez_Unruh2013} among others). In preparation of investigations at higher orders, it will be convenient to derive a set of Feynman rules for each detector model. We will adhere to the following roadmap: In this section we develop the computation methods necessary to obtain Feynman rules. In \cref{sec: VNRP}, we will apply these methods to calculations up to second order in the transition amplitudes by way of example, and in \cref{sec: feynman rules} we will generalize the emerging pattern to Feynman rules.

It is to be expected that new divergencies appear at each order in perturbation theory. As in any quantum field theory, there are three sources of divergencies in transition probabilities of UDW-type detector models:
\begin{compactenum}
  \item The amplitude of a single process can be infinite, as it was the case for the tadpolelike diagram in \cref{fig: VEP processes 1st order}.
  \item Finite amplitudes can still amount to a divergent transition probability of the detector when the field is traced out: the sum over all the ingoing and outgoing field configurations can diverge.
  \item The perturbative series itself can diverge when taking into account all orders.
\end{compactenum}
Notice, therefore,  it is important to ascertain that those divergences can be regularized and renormalized away in order to be able to fully trust insights gained from using a particle detector model to describe the physics of a quantum field.

\paragraph{Example: vacuum no-response probability}
In order to show which quantities we are going to need to evaluate, let us first consider a transition probability which involves the second order term in the Dyson expansion \cref{eqn: perturbative expansion}. Every application of the monopole operator $\op{\mu}$ \cref{eqn: monopole operator} to one of the two energy eigenstates of the detector switches it back and forth between ground state $\ket{g}$ and the excited state $\ket{e}$. In consequence, if the detector starts out in the ground state, only even orders in perturbation theory contribute to the probability for it to remain in the ground state, and only odd orders contribute to the probability for the detector to be excited. There are thus no  corrections to the VEP coming from $U^{(2)}$ in \cref{eqn: perturbative expansion,eqn: time-evolution order n}. Instead, we calculate the probability for the detector to \emph{remain} in the ground state when interacting with a quantum field in the vacuum state, and call this probability the \emph{vacuum no-response probability} (VNRP) $P_{0,g\to g}$.

Notice, however,  that it would not be strictly necessary to go through the $U^{(2)}$ calculation in order to compute the VNRP: Since the same order perturbative corrections to the density matrix are traceless \cite{jonsson_quantum_2014}, it follows that $P_{0,g\to g}=1-P_{0,g\to e}$. We will nevertheless compute it using $U^{(2)}$ to illustrate higher order techniques.

Let $F$ be the probed quantum field, and $\op{\rho}(t_{0}) = \ket{0,g}\bra{0,g}$ the density matrix of the coupled system at starting time $t_{0}$. The VNRP is the $(g,g)$ component after tracing out the field:
\begin{multline}\label{eqn: def VNRP}
  P_{0,g\to g}(t,t_{0}) \define\\ \braket{g|\tr_{F} \op\rho(t)|g} = \sum_{a} \left|\braket{a,g;t| \op{U}(t,t_{0})|0, g;t_{0}}\right|^{2}
\end{multline}
at time $t>t_{0}$. In the limit $t_0 \to -\infty$, $t \to \infty$, the actual duration of the interaction between detector and field is determined by the switching function $\chi$.
Using the perturbative expansion \cref{eqn: perturbative expansion,eqn: time-evolution order n}, the VNRP can be written as
\begin{equation}\label{eqn: VNRP t-o operators 1}
    P_{0,g\to g}(t,t_{0}) = 1 + 2 \sum_{a} \left[ \varRe{\left(\mathcal{A}^{(0)}_{a,g}\mathcal{A}^{(2)}_{a,g}\right)} \right] + \mathcal{O}(\lambda^4)~,
\end{equation}
where the transition amplitudes at order $n$ are
\begin{equation}\label{eqn: def transition amplitudes VNRP}
  \mathcal{A}^{(n)}_{a,g} \define (-\ii  \lambda)^{n}\Ibraket{a,g;t|\op{U}^{(n)}(t,t_{0})|0,g;t_{0}}~,
\end{equation}
The contribution at order zero is only nonzero for the final state $\ket{0,g}$ since $\op{U}^{0} = \id$:
\begin{equation}\label{eqn: VNRP amplitude at order 0}
  \mathcal{A}^{(0)}_{a,g} =
  \begin{cases}
    1 & \text{for } a = 0\\
    0 & \text{else}
  \end{cases}~.
\end{equation}
As we have argued earlier, order $n=1$ does not contribute at all, $\mathcal{A}^{(1)}_{a,g} = 0$.
Thus, only the amplitude at order $n=2$,
\begin{multline}\label{eqn: VNRP amplitude at order 2}
  \mathcal{A}^{(2)}_{a,g} = (-\ii  )^{2}\int_{t_{0}}^{t}\dd t_{1}\int_{t_{0}}^{t}\dd t_{2}\\
  \Ibraket{a,g;t|T\Iop{H}_{\mathrm{int}}(t_{1}) \Iop{H}_{\mathrm{int}}(t_{2})|0,g;t_{0}}~,
\end{multline}
remains to be evaluated to obtain the VNRP.

As an example, consider the renormalized detector model 3 for quantized complex fields with interaction Hamiltonian \cref{eqn: interaction H scalar quad renorm}. The second order correction is
\begin{multline}\label{eqn: VNRP scalar amplitudes order 2 t-o op}
  \mathcal{A}^{(2)}_{\phi,g} =  (-\ii  \lambda)^{2}
  e^{-\ii  \omega_{\phi}t} \\
  \times \int_{t_{0}}^{t}\dd t_{1}\int_{t_{0}}^{t}\dd t_{2}\,\chi(t_{1})\chi(t_{2})\,\braket{g|T\op{\mu}(t_{1})\op{\mu}(t_{2})|g}\\
  \times \int_{\R^{n}}\dd \vec{y}_{1}\int_{\R^{n}}\dd \vec{y}_{2}\,p(\vec{x}_{0},\vec{y}_{1})p(\vec{x}_{0},\vec{y}_{2})\\
  \times \braket{\phi|T\normalord \op{\Phi}^{\dagger}(y_{1}) \op{\Phi}(y_{1}) \normalord \normalord \op{\Phi}^{\dagger}(y_{2}) \op{\Phi}(y_{2}) \normalord|0}~.
\end{multline}
The main difficulty now rests in evaluating the time-ordered expectation values
\begin{equation}
  \braket{g|T\op{\mu}(t_{1})\op{\mu}(t_{2})|g}
\end{equation}
and
\begin{equation}\label{eqn: generic expectation value at second order for complex VNRP}
  \braket{\phi|T\normalord \op{\Phi}^{\dagger}(y_{1}) \op{\Phi}(y_{1}) \normalord \normalord \op{\Phi}^{\dagger}(y_{2}) \op{\Phi}(y_{2}) \normalord|0}~.
\end{equation}

\paragraph{Higher orders}
Generalizing the above example to order $n$ and arbitrary initial and final states, we need to evaluate expressions of the form
\begin{equation}\label{eqn: general correlator detector}
  \braket{g|[\sigma^{-}]^{a}[\,T\op{\mu}(t_{1})\cdots\op{\mu}(t_{n})\,]\,[\sigma^{+}]^{b}|g}
\end{equation}
for the detector, where $a,b\in\{0,1\}$ to allow for initial and final states $\ket{g}$ and $\ket{e}$. Similarly, for model 1 and quantized real fields we need to calculate
\begin{equation}\label{eqn: general correlator real field}
  \braket{0|(\op{a}_{\vec{p}})_{\vec{p}}\,[\,T \op{\Phi}(y_{1})  \cdots  \op{\Phi}(y_{n}) \,]\,(\op{a}^{\dagger}_{\vec{k}})_{\vec{k}}|0}~,
\end{equation}
where $(\op{a}^{\dagger}_{\vec{k}})_{\vec{k}} \equiv \op{a}^{\dagger}_{\vec{k}_{1}} \dots \op{a}^{\dagger}_{\vec{k}_{N}}$ is an arbitrary product of creation operators [such as, e.g., $(\op{a}^{\dagger}_{\vec{k}})_{\vec{k}} = a_{\vec{k}_1}^\dagger a_{\vec{k}_2}^\dagger  a_{\vec{k}_3}^\dagger a_{\vec{k}_1}^\dagger$ or $(\op{a}^{\dagger}_{\vec{k}})_{\vec{k}} = \id$] allowing for arbitrary number eigenstates content in the initial state, and $(\op{a}_{\vec{p}})_{\vec{p}}$ are annihilation operators for arbitrary final states. In case of model 3 (for quantized complex fields), we need
\begin{multline}\label{eqn: general correlator complex field}
  \bra{0}(\op{a}_{\vec{p}})_{\vec{p}}\,(\op{b}_{\bar{\vec{p}}})_{\bar{\vec{p}}}\,[\,T\normalord \op{\Phi}^{\dagger}(y_{1}) \op{\Phi}(y_{1}) \normalord \cdots\\ \cdots \normalord \op{\Phi}^{\dagger}(y_{n}) \op{\Phi}(y_{n}) \normalord\,]\,(\op{a}^{\dagger}_{\vec{k}})_{\vec{k}}\,(\op{b}^{\dagger}_{\bar{\vec{k}}})_{\bar{\vec{k}}}\ket{0}
\end{multline}
and in the case of model 2 (for quantized real fields), the previous expression simplifies because the field is self-adjoint, $\Phi^{\dagger} = \Phi$, and there is only one type of ladder operators. Finally, for model 4 coupling to quantized spinor fields expressions of the form
\begin{multline}\label{eqn: general correlator spinor field}
  \bra{0} (\op{a}_{\vec{p},r})_{\vec{p},r}\,(\op{b}_{\bar{\vec{p}},\bar{r}})_{\bar{\vec{p}},\bar{r}}\,[\,T\normalord \op{\conj\Psi}(y_{1}) \op{\Psi}(y_{1}) \normalord \cdots\\
   \cdots \normalord \op{\conj\Psi}(y_{n}) \op{\Psi}(y_{n}) \normalord\,]\,(\op{a}^{\dagger}_{\vec{k},s})_{\vec{k},s}\,(\op{b}^{\dagger}_{\bar{\vec{k}},\bar{s}})_{\bar{\vec{k}},\bar{s}}\ket{0}
\end{multline}
appear.

In short, we need to evaluate the vacuum expectation value of time-ordered products of field operators, which may contain normal-ordered subproducts, with a prepended product of annihilation operators and an appended product of creation operators. Since the field operators and the monopole operator are essentially sums of ladder operators as well, a good strategy is to normal-order the entire product 
by successively commuting (or anticommuting) ladder operators. This procedure will furnish us with variants of \emph{Wick's theorem}. Moreover, the presence of the time-ordering symbol leads to the appearance of \emph{Feynman propagators} instead of ordinary commutators. This approach is closely related to standard techniques in elementary particle physics, where elements of the scattering matrix $S$ are computed to obtain probabilistic predictions about the outcome of scattering experiments. There, the Lehmann-Symanzik-Zimmermann reduction formula similarly allows one to express elements of $S$ in terms of vacuum expectation values of products of time-ordered fields \cite{peskin_introduction_1995,bjorken_relativistic_1965}.

\paragraph{Time- and normal-ordering}
Let us briefly revise the definition of time-ordering and normal-ordering for bosonic and fermionic fields. In the bosonic case, the time-ordering symbol is defined as
\begin{multline}
  T\Phi(x)\Phi(y) \\= \Theta(x^{0}-y^{0}) \Phi(x)\Phi(y) +\Theta(y^{0}-x^{0}) \Phi(y)\Phi(x)~.
\end{multline}
In particular, exchanging two operators inside a time-ordered expression does not engender any change:
\begin{equation}
  T\Phi(x)\Phi(y) = T\Phi(y)\Phi(x) ~.
\end{equation}
Normal-ordering is achieved by moving all annihilation operators to the right:
\begin{align}
  \normalord a_{\vec{k}} a^{\dagger}_{\vec{p}} \normalord &=a^{\dagger}_{\vec{p}}  a_{\vec{k}}~, & \normalord a^{\dagger}_{\vec{p}} a_{\vec{k}} \normalord &=a^{\dagger}_{\vec{p}}  a_{\vec{k}}~.
\end{align}
For fermionic fields, however,
\begin{multline}
  T\Psi(x)\Psi(y) \\= \Theta(x^{0}-y^{0}) \Psi(x)\Psi(y) -\Theta(y^{0}-x^{0}) \Psi(y)\Psi(x)
\end{multline}
so exchanging two fields introduces an additional sign
\begin{equation}
  T\Psi(x)\Psi(y) = -T\Psi(y)\Psi(x) ~.
\end{equation}
Similarly, a sign is introduced when exchanging ladder operators in normal-ordering:
\begin{align}
  \normalord a_{\vec{k},s} a^{\dagger}_{\vec{p},r} \normalord &= -a^{\dagger}_{\vec{p},r}  a_{\vec{k},s}~, &   \normalord a^{\dagger}_{\vec{p},r}a_{\vec{k},s}  \normalord &= a^{\dagger}_{\vec{p},r}  a_{\vec{k},s} ~.
\end{align}

\subsection{The monopole moment as a field}

A two-level particle detector is, in a way, a fermionic system: the ladder operators \cref{eqn: detector ladder operators} of the monopole operator satisfy the anticommutation relations
\begin{align}\label{eqn: acr ladd op monopole}
  \{\sigma^{+},\sigma^{+}\} &= 0~, & \{\sigma^{+},\sigma^{-}\} &= \id~, & \{\sigma^{-},\sigma^{-}\} &= 0~.
\end{align}
By comparison with the ladder operators of the quantized spinor field it is easy to see that, formally, $\sigma^{-}$, $\sigma^{+}$ correspond to the ladder operators of a fermionic field which 
\begin{inparaenum}[(1)]
  \item has only one mode,
  \item does not have spin degrees of freedom, and
  \item has particle modes which are their own antiparticles.
\end{inparaenum}
The analogy extends to the monopole operator: in the Heisenberg picture,
\begin{equation}\label{eqn: monopole moment H picture}
  \mu(t) = \sigma^{+}e^{+\ii  \Omega t} + \sigma^{-}e^{-\ii  \Omega t}~,
\end{equation}
which can be seen as a simplification of the mode expansion of the quantized spinor field \cref{eqn: mode expansion spinor field}. In the following, we will therefore regard the monopole operator as a fermionic quantum field $\mu(x) = \mu (t,\vec{x})\define \mu(t)$
on $(1,n)$-dimensional Minkowski spacetime which is constant in space, rather than as an observable of a system in first quantization.
The time-evolution of this field is described by a Klein-Gordon equation: from \cref{eqn: monopole moment H picture} immediately follows $(\partial_{t}^{2}+\Omega^{2}) \mu(t) = 0$, which can be rewritten as
\begin{equation}
  (\dalembert+\Omega^{2}) \mu(x) = 0
\end{equation}
since $\laplace \mu(x) = 0$.
While this field is defined on the entire spacetime, it interacts with the quantum field only for a limited time and in a certain place, according to the detector's spacetime profile $f$.

In terms of particle number eigenstates, this field is very simple: since there is only one mode, no spin quantum numbers and no distinct antiparticles, there can at most be one quantum; excitation of more quanta is forbidden by the Pauli exclusion principle. The two particle number eigenstates are simply the two energy eigenstates of the detector: the ground state $\ket{g}$ for no quantum, the excited state $\ket{e}$ for one quantum. The detector behaves as a sort of completely delocalized Grassman scalar \cite{montero_fermionic_2011,Bradler2012,Montero2012b}.

\subsection{Feynman propagators}\label{sec: propagators}

\begin{figure}
  \includegraphics{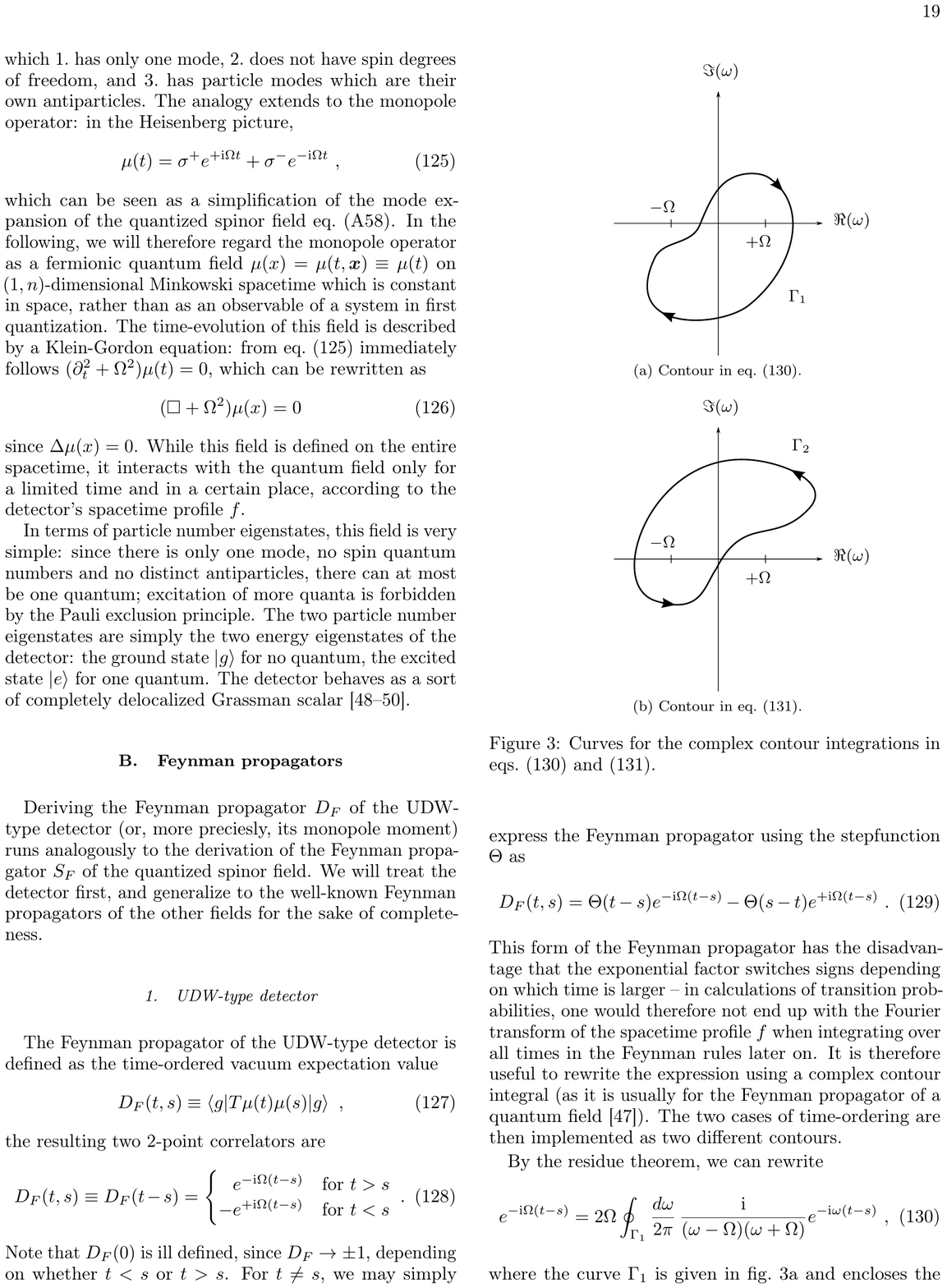}
  \caption{Curves for the complex contour integrations in \cref{eqn: intro contour 1,eqn: intro contour 2}.}
  \label{fig: contours 1}
\end{figure}

Deriving the Feynman propagator $D_{F}$ of the UDW-type detector (or, more precisely, its monopole moment) runs analogously to the derivation of the Feynman propagator $S_{F}$ of the quantized spinor field. We will treat the detector first, and generalize to the well-known Feynman propagators of the other fields for the sake of completeness.

\subsubsection{UDW-type detector}

The Feynman propagator of the UDW-type detector is defined as the time-ordered vacuum expectation value
\begin{equation}\label{eqn: def Feynman P detector}
  D_{F}(t,s) \define \braket{g|T\op{\mu}(t) \op{\mu}(s)|g}~,
\end{equation}
the resulting two 2-point correlators are
\begin{equation}\label{eqn: Feynman P detector 2}
  D_{F}(t,s)\equiv D_{F}(t-s) =
  \begin{cases}
    \hphantom{+}e^{-\ii  \Omega(t-s)} & \text{for }t > s\\
    -e^{+\ii  \Omega(t-s)} & \text{for }t < s
  \end{cases}~.
\end{equation}
Note that $D_{F}(0)$ is ill defined, since $D_{F}\to \pm1$, depending on whether $t<s$ or $t>s$.
For $t \neq s$, we may simply express the Feynman propagator using the step function $\Theta$ as
\begin{equation}\label{eqn: Feynman P detector 1}
    D_{F}(t,s) =\\
     \Theta(t-s) e^{-\ii  \Omega(t-s)} -  \Theta(s-t) e^{+\ii  \Omega(t-s)}~.
\end{equation}
This form of the Feynman propagator has the disadvantage that the exponential factor switches signs depending on which time is larger---in calculations of transition probabilities, one would therefore not end up with the Fourier transform of the spacetime profile $f$ when integrating over all times in the Feynman rules later on. It is therefore useful to rewrite the expression using a complex contour integral (as it is usually for the Feynman propagator of a quantum field \cite{peskin_introduction_1995}). The two cases of time-ordering are then implemented as two different contours.

By the residue theorem, we can rewrite
\begin{equation}\label{eqn: intro contour 1}
  e^{-\ii  \Omega(t-s)} = 2\Omega \oint_{\Gamma_{1}}\frac{d\omega}{2\pi}\, \frac{\ii}{(\omega-\Omega)(\omega+\Omega)}e^{-\ii  \omega(t-s)}~,
\end{equation}
where the curve $\Gamma_{1}$ is given in the first part of \cref{fig: contours 1} and encloses the pole at $\omega = +\Omega$. Similarly,
\begin{equation}\label{eqn: intro contour 2}
  e^{+\ii  \Omega(t-s)} = 2\Omega \oint_{\Gamma_{2}}\frac{d\omega}{2\pi}\, \frac{\ii}{(\omega-\Omega)(\omega+\Omega)}e^{-\ii  \omega(t-s)}
\end{equation}
for $\Gamma_{2}$ as shown in the second part of \cref{fig: contours 1}, enclosing the pole at $\omega = -\Omega$. In order to be able to integrate along the real axis, we shift the poles off the axis by $\pm \ii   \epsilon$ and straighten out the curves as shown in \cref{fig: contours 2}. After the evaluation of the integrals, we will take the limit $\epsilon \to 0$. Since $\Gamma_{1}' = \Gamma(r) \cup \Gamma_{1}(r)$ and $\Gamma_{2}' = \Gamma(r) \cup \Gamma_{2}(r)$ as drawn in \cref{fig: contours 2}, the integral over each of the entire closed curves can be split into two integrals. In the limit $r \to \infty$, the first integral simply turns into an integral over the entire real line. The second one, on the other hand, evaluates to zero in both cases:
\begin{multline}\label{eqn: complex contour 1}
  \Theta(t-s)\,e^{-\ii  \Omega(t-s)} = \Theta(t-s)\,\lim_{\epsilon \to 0} 2\Omega\\ \int_{-\infty}^{\infty}\frac{d\omega}{2\pi}\,\frac{\ii}{(\omega-\Omega+\ii  \epsilon)(\omega+\Omega-\ii  \epsilon)}e^{-\ii  \omega(t-s)}
\end{multline}
and
\begin{multline}\label{eqn: complex contour 2}
  \Theta(s-t)\,e^{+\ii  \Omega(t-s)} = \Theta(s-t)\,\lim_{\epsilon \to 0} 2\Omega\\ \int_{-\infty}^{\infty}\frac{d\omega}{2\pi}\, \frac{\ii}{(\omega-\Omega+\ii  \epsilon)(\omega+\Omega-\ii  \epsilon)}e^{-\ii  \omega(t-s)}~.
\end{multline}
By introducing time derivatives
\begin{align}
  e^{-\ii  \Omega(t-s)} &= \frac{i\partial_{t}}{\Omega}e^{-\ii  \Omega(t-s)}~, &  e^{+\ii  \Omega(t-s)} &= -\frac{i\partial_{t}}{\Omega}e^{+\ii  \Omega(t-s)}
\end{align}
and evaluating them inside the integral, we obtain
\begin{multline}\label{eqn: Feynman P detector 3}
  D_{F}(t-s) =\\ \lim_{\epsilon \to 0} 2 \int_{-\infty}^{\infty}\frac{d\omega}{2\pi}\,\frac{i\omega}{(\omega-\Omega+\ii  \epsilon)(\omega+\Omega-\ii  \epsilon)}e^{-\ii  \omega(t-s)}~.
\end{multline}
Since in the limit
\begin{equation}\label{eqn: pole shift approximation}
  \lim_{\epsilon \to 0} (\omega-\Omega+\ii  \epsilon)(\omega+\Omega-\ii  \epsilon) = \lim_{\delta \to 0}\omega^{2}-\Omega^{2}+\ii  \delta~,
\end{equation}
the above expression is abridged in the usual way to
\begin{equation}\label{eqn: Feynman P detector}
  D_{F}(t-s) = \lim_{\epsilon \to 0} 2 \int_{-\infty}^{\infty}\frac{d\omega}{2\pi}\,\frac{i\omega}{\omega^{2}-\Omega^{2}+\ii  \epsilon}e^{-\ii  \omega(t-s)}~.
\end{equation}

If the arguments of the propagator are exchanged, a sign is picked up,
\begin{equation}
  D_{F}(s-t) = - D_{F}(t-s)~,
\end{equation}
since $T\mu(t)\mu(s) = -T\mu(s)\mu(t)$.

\begin{figure}
  \includegraphics{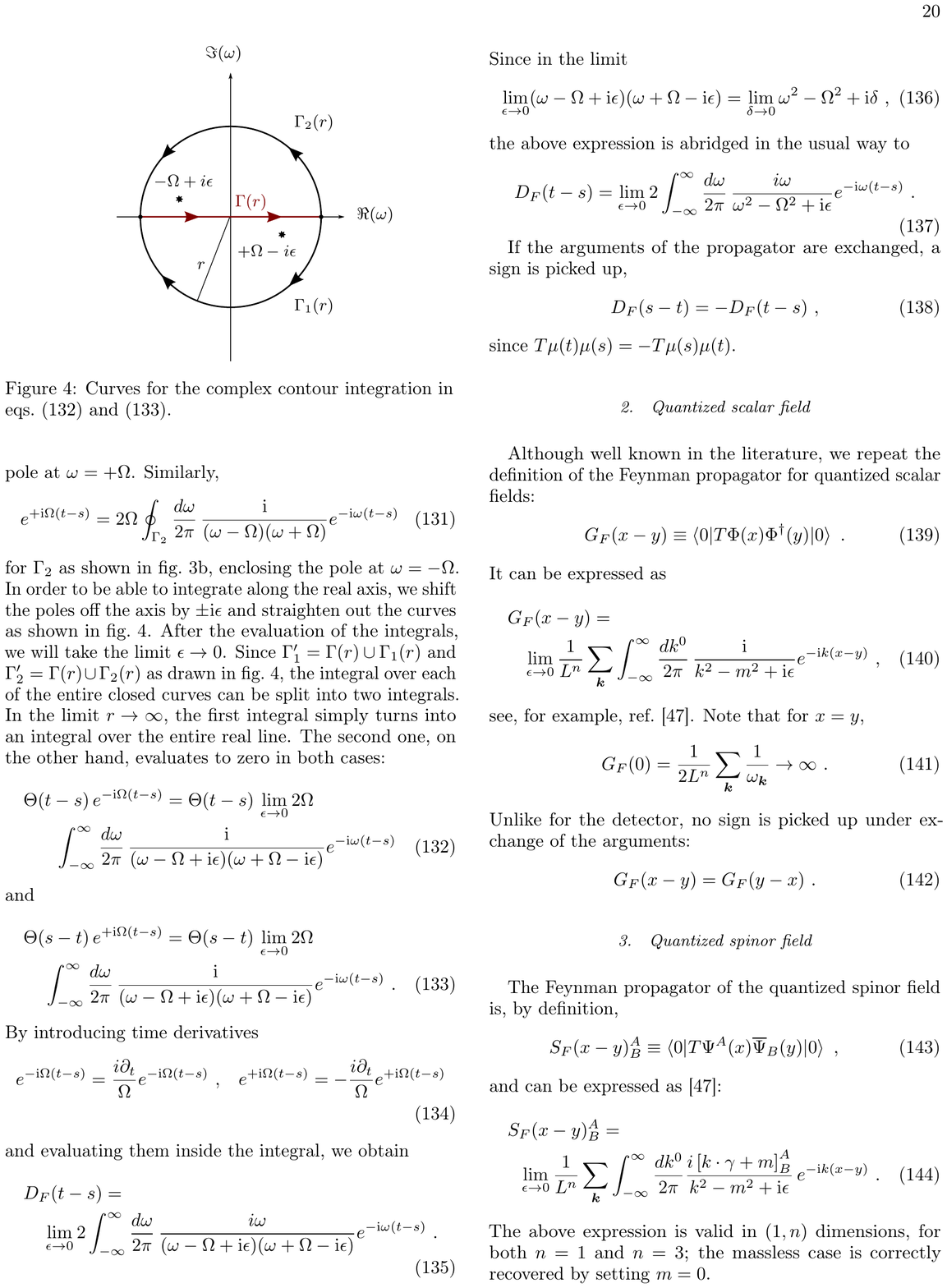}
  \caption{Curves for the complex contour integration in \cref{eqn: complex contour 1,eqn: complex contour 2}.}
  \label{fig: contours 2}
\end{figure}

\subsubsection{Quantized scalar field}

Although well known in the literature, we repeat the definition of the Feynman propagator for quantized scalar fields:
\begin{equation}\label{eqn: def Feynman P complex}
  G_{F}(x-y) \define \braket{0|T\op{\Phi}(x) \op{\Phi}^{\dagger}(y)|0}~.
\end{equation}
It can be expressed as
\begin{multline}\label{eqn: Feynman P scalar field}
  G_{F}(x-y) =\\ \lim_{\epsilon \to 0} \frac{1}{L^{n}}\sum_{\vec{k}} \int_{-\infty}^{\infty}\frac{dk^{0}}{2\pi}\,\frac{\ii}{k^{2}-m^{2}+\ii  \epsilon}e^{-\ii  k(x-y)}~,
\end{multline}
see, for example, \source{peskin_introduction_1995}. 
Note that for $x = y$,
\begin{equation}
  G_{F}(0) = \frac{1}{2L^{n}}\sum_{\vec{k}}\frac{1}{\omega_{\vec{k}}} \to \infty~.
\end{equation}
Unlike for the detector, no sign is picked up under exchange of the arguments:
\begin{equation}
  G_{F}(x-y) = G_{F}(y-x)~.
\end{equation}

\subsubsection{Quantized spinor field}

The Feynman propagator of the quantized spinor field is, by definition,
\begin{equation}\label{eqn: def Feynman P spinor field}
  {S_{F}}(x-y)^{A}_{B} \define \braket{0|T \op{\Psi}^{A}(x)\op{\conj\Psi}_{B}(y)|0}~,
\end{equation}
and can be expressed as \cite{peskin_introduction_1995}
\begin{multline}\label{eqn: Feynman P spinor field}
  {S_{F}}(x-y)^{A}_{B} =\\
  \lim_{\epsilon \to 0}\frac{1}{L^{n}}\sum_{\vec{k}}\int_{-\infty}^{\infty}\frac{dk^{0}}{2\pi}\frac{i\left[k\cdot \gamma+ m\right]^{A}_{B}}{k^{2}-m^{2}+\ii  \epsilon}\,e^{-\ii  k(x-y)}~.
\end{multline}
The above expression is valid in $(1,n)$ dimensions, for both $n=1$ and $n=3$; the massless case is correctly recovered by setting $m=0$.

It is straightforward to check that the Feynman propagator diverges in the coincidence limit
\begin{equation}
  S_{F}(0) \to \infty~.
\end{equation}
Unlike for the detector, where the exchange of the propagator's argument only produced a sign, there is no simple relation between $S_{F}(x-y)$ and $S_{F}(y-x)$:
\begin{multline}
  S_{F}(y-x)^{A}_{B} \\= - \lim_{\epsilon \to 0}\frac{1}{L^{n}}\sum_{\vec{k}}\int_{-\infty}^{\infty}\frac{dk^{0}}{2\pi}\frac{i\left[k\cdot \gamma- m\right]^{A}_{B}}{k^{2}-m^{2}+\ii  \epsilon}\,e^{-\ii  k(x-y)}~.
\end{multline}

\subsection{Wick's theorem}\label{sec: wick}

In this section we will first address Wick's theorem for quantized scalar fields. Then, we will formulate the fermionic version of the theorem for spinor fields, and subsequently simplify it in order to arrive to the case of the detector (monopole field). Notice that full proofs of the theorems are given in \cref{app: proofs wick}.

\subsubsection{Quantized complex field}\label{sec: Wick complex}

Ultimately, we want to calculate expectation values like \cref{eqn: general correlator complex field}, or more generally:
\begin{multline}
  \bra{0}(\op{a}_{\vec{p}})_{\vec{p}}\,(\op{b}_{\bar{\vec{p}}})_{\bar{\vec{p}}}\,[\,T\op{\Phi}^{\epsilon_{1}}(x_{1}) \cdots\\\cdots \op{\Phi}^{\epsilon_{n}}(x_{n})  \,]\,(\op{a}^{\dagger}_{\vec{k}})_{\vec{k}}\,(\op{b}^{\dagger}_{\bar{\vec{k}}})_{\bar{\vec{k}}}\ket{0}~,
\end{multline}
where the index $\epsilon_{i}$ indicates that the fields may be conjugated or not. Since the vacuum expectation value of normal-ordered products of operators vanishes, it is useful to rewrite the original sequence of operators as a sum of normal-ordered expressions. To achieve normal-ordering, the fields need to be expanded in terms of ladder operators, and the ladder operators commuted with each other until normal-ordering is reached. The well-known Wick theorem allows to do so systematically. We will proceed in two steps in evaluating the above expression. In the first step, we present Wick theorem for time-ordered products of field operators. In the second step, we then extend this theorem and allow for more ladder operators to be prepended and appended.

It is useful to decompose the field operator into a part containing the creation operators and a second one containing the annihilation operators:
\begin{equation}\label{eqn: field separation complex}
  \Phi(x) = \Phi_{a}(x) + \Phi^{\dagger}_{b}(x)~,
\end{equation}
where
\begin{align}\label{eqn: scalar field parts}
  \Phi_{a}(x)&\define \sum_{\vec{k}} \op{a}_{\vec{k}}\,\varphi_{\vec{k}}(x)~, &
  \Phi_{b}(x)&\define \sum_{\vec{k}} \op{b}_{\vec{k}}\,\varphi_{\vec{k}}(x)
\end{align}
and $\varphi_{\vec{k}}$ are the mode functions of the quantised scalar field.

\paragraph{Time-ordered product of quantized complex fields}
The above separation facilitates normal-ordering series of fields. If there are only a few field operators, this can still rapidly be done by hand. For example,
\begin{equation}\label{eqn: Wick example complex 1}
  \Phi(x)\Phi^{\dagger}(y) = \normalord \Phi(x) \Phi^{\dagger}(y) \normalord + [\Phi_{a}(x),\Phi_{a}^{\dagger}(y)]~.
\end{equation}
It is straightforward to check that the commutator is identical to a nontrivial two-point function of the quantized complex field:
\begin{equation}\begin{aligned}\label{eqn: commutators complex field parts}
  [\Phi_{a}(x),\Phi_{a}^{\dagger}(y)] &= \id\braket{0|\Phi(x)\Phi^{\dagger}(y)|0}~.
\end{aligned}\end{equation}
Thus, time-ordering of the fields introduces the Feynman propagator, such that for example:
\begin{equation}
  T\Phi(x)\Phi^{\dagger}(y) = \normalord \Phi(x)\Phi^{\dagger}(y) \normalord + \id\,G_{F}(x-y)~.
\end{equation}

These results are readily generalized if more fields are included. To keep the notation light, Feynman propagators are customarily written as \emph{contractions}
\begin{align}
  \contraction{}{\Phi}{(x)}{\Phi}%
  \Phi(x) \Phi^{\dagger}(y)
   &\define \id\,G_{F}(x-y)~, &
   \contraction{}{\Phi}{(x)}{\Phi}%
  \Phi(x) \Phi(y)
   &\define 0~,\nonumber\\
  \contraction{}{\Phi}{^{\dagger}(x)}{\Phi}%
  \Phi^{\dagger}(x) \Phi(y)
   &\define \id\,G_{F}(x-y)~, & 
  \contraction{}{\Phi}{^{\dagger}(x)}{\Phi}%
  \Phi^{\dagger}(x) \Phi^{\dagger}(y)
   & \define 0~.
\end{align}

\begin{thm}[Wick's theorem for time-ordered quantized complex fields]\label{thm: Wick complex t-o}
  A product of $n$ time-ordered field operators of the quantized complex field $\Phi$, which may contain subproducts that are normal-ordered, can be rewritten as follows:
  \begin{align}\label{eqn: Wick complex t-o}
    &T\op{\Phi}^{\epsilon_{1}}(x_{1})\op{\Phi}^{\epsilon_{2}}(x_{2}) \cdots \op{\Phi}^{\epsilon_{n}}(x_{n}) \nonumber\\
    &= \normalord \op{\Phi}^{\epsilon_{1}}(x_{1})\op{\Phi}^{\epsilon_{2}}(x_{2}) \cdots \op{\Phi}^{\epsilon_{n}}(x_{n}) \normalord\nonumber\\
    &\hphantom{={}}+ \sum_{\mathclap{\substack{\text{single}\\\text{contractions}}}} \normalord \contraction{\op{\Phi}^{\epsilon_{1}}(x_{1})\cdots}{\op{\Phi}}{^{\epsilon_{i}}(x_{i}) \cdots }{\op{\Phi}} \op{\Phi}^{\epsilon_{1}}(x_{1})\cdots \op{\Phi}^{\epsilon_{i}}(x_{i}) \cdots \op{\Phi}^{\epsilon_{j}}(x_{j}) \cdots \op{\Phi}^{\epsilon_{n}}(x_{n}) \normalord\nonumber\\
    &\hphantom{={}}+ \sum_{\mathclap{\substack{\text{double}\\\text{contractions}}}}\normalord \contraction{\op{\Phi}^{\epsilon_{1}}(x_{1})\cdots}{\op{\Phi}}{^{\epsilon_{i}}(x_{i}) \cdots }{\op{\Phi}} \op{\Phi}^{\epsilon_{1}}(x_{1})\cdots \op{\Phi}^{\epsilon_{i}}(x_{i})\cdots \op{\Phi}^{\epsilon_{j}}(x_{j}) \cdots\nonumber\\
    &\hphantom{={}+}\cdots  \contraction{}{\op{\Phi}}{^{\epsilon_{k}}(x_{k}) \cdots }{\op{\Phi}}  \op{\Phi}^{\epsilon_{k}}(x_{k})\cdots \op{\Phi}^{\epsilon_{l}}(x_{l})  \normalord+ \cdots~.
  \end{align}
  Two contracted fields belonging to different normal-ordered subproducts are to be replaced as follows:
  \begin{equation}\begin{aligned}\label{eqn: complex field contractions t-o}
    \contraction{}{\Phi}{^{\dagger}(x_{i}) \cdots }{\Phi} \Phi^{\dagger}(x_{i}) \cdots \Phi(x_{j}) &= \id\,G_{F}(x_{i}-x_{j})\\\
    \contraction{}{\Phi}{(x_{i}) \cdots }{\Phi} \Phi(x_{i}) \cdots \Phi^{\dagger}(x_{j}) &= \id\,G_{F}(x_{i}-x_{j}) \\
    \contraction{}{\Phi}{(x_{i}) \cdots }{\Phi} \Phi(x_{i}) \cdots \Phi(x_{j}) &= 0\\
    \contraction{}{\Phi}{^{\dagger}(x_{i}) \cdots }{\Phi} \Phi^{\dagger}(x_{i}) \cdots \Phi^{\dagger}(x_{j}) &= 0~.
  \end{aligned}\end{equation}
  Contraction of any pair of fields belonging to the same normal-ordered subproduct vanish.
\end{thm}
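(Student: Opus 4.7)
The plan is to prove the identity by induction on the total number $n$ of field operators, extending the elementary two-field case of \cref{eqn: Wick example complex 1} to arbitrary products while keeping track of normal-ordered subproducts. Since bosonic time-ordering is totally symmetric in its arguments, I would first assume without loss of generality that the operators already appear in descending time order, $t_1 \geq t_2 \geq \cdots \geq t_n$, so that the outer $T$ symbol can be dropped and the claim reduces to an algebraic identity for an ordered product. Throughout, I would exploit the splitting $\Phi = \Phi_a + \Phi_b^\dagger$ from \cref{eqn: field separation complex,eqn: scalar field parts} together with the observation made in \cref{eqn: commutators complex field parts} that the only nonvanishing commutators of the split components equal $\id$ times the relevant two-point vacuum expectation value.

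For the base case $n=2$ I would expand $\Phi^{\epsilon_1}(x_1)\Phi^{\epsilon_2}(x_2)$ into the four terms produced by the split, note that three of them are manifestly normal-ordered, and rewrite the remaining term as a normal-ordered product plus a c-number commutator. When one field is $\Phi$ and the other is $\Phi^\dagger$, this commutator evaluates to $\id\,G_F(x_1-x_2)$ through either the particle commutator $[\Phi_a,\Phi_a^\dagger]$ or the antiparticle commutator $[\Phi_b,\Phi_b^\dagger]$, with the two cases combining with the $t_1 \gtrless t_2$ cases of the Feynman propagator \cref{eqn: def Feynman P complex} to produce the two nontrivial contractions of \cref{eqn: complex field contractions t-o}. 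When both fields are of the same type, both candidate commutators vanish and the two trivial contractions are recovered.

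For the inductive step, assuming the expansion holds for $n-1$ factors, I would peel the rightmost (smallest-time) field $\Phi^{\epsilon_n}(x_n)$ off the product and split it into creation and annihilation parts. Its creation part, already sitting on the far right, may be absorbed directly into each normal-ordered term produced by the inductive hypothesis without generating new contractions. Its annihilation part, by contrast, must be commuted through the creation operators hidden in those normal-ordered expressions; each such commutation deposits a c-number $\id\,G_F(x_i-x_n)$, which is precisely the ``new'' single-contraction term that must be appended to complete the $n$-field Wick expansion. Combining these with the iterated contractions already present at order $n-1$ reproduces the full sum over pairings in \cref{eqn: Wick complex t-o}.

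The main obstacle is showing that contractions \emph{internal} to any normal-ordered subproduct of the input never appear. My strategy here is to treat each $\normalord\,\cdots\,\normalord$ block as a single unit throughout the induction: when the rightmost annihilation operator sweeps through such a block, it crosses the block's creation operators one at a time, producing contractions that always link an operator outside the block to an operator inside it, but never generates a commutator between two operators that already sit together inside the block, because those have been arranged into normal order by hand and require no further reordering. Formalizing this bookkeeping---particularly tracking the combinatorics when several normal-ordered blocks sit side by side and the inductive step must push operators across multiple blocks---is the step that requires the most care and is where I would rely on the detailed argument presented in \cref{app: proofs wick}.
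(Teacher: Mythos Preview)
Your proposal is essentially the standard textbook induction proof of Wick's theorem---splitting the field into creation/annihilation parts, peeling off the smallest-time operator, and commuting its annihilation part through the rest---which is precisely what the paper has in mind: immediately after stating \cref{thm: Wick complex t-o} the authors write ``This theorem is a canonical result in quantum field theory'' and cite Peskin--Schroeder and Bjorken--Drell rather than proving it themselves. So there is no paper proof to compare against; your outline matches the standard argument those references contain.

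One small correction: your final sentence defers the bookkeeping for normal-ordered subproducts to \cref{app: proofs wick}, but that appendix does \emph{not} prove \cref{thm: Wick complex t-o}. It proves the extended \cref{thm: Wick complex t-o full vev} (and its spinor and detector analogues) \emph{assuming} \cref{thm: Wick complex t-o} as input, by moving the external ladder operators through the already normal-ordered field blocks. So you cannot lean on that appendix for the part you flagged as delicate; you would need to carry out the block-by-block induction yourself. The cleanest way to handle it is to observe that in this paper each normal-ordered subproduct sits at a single spacetime point $y_i$, so the time-ordering symbol permutes whole blocks, and in the inductive step you peel off the entire rightmost block rather than a single field; commuting its annihilation parts leftward then manifestly never generates an intra-block contraction.
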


\noindent{}For three fields, for example,
\begin{multline}
  T\Phi(x)\normalord\Phi^{\dagger}(y)\Phi(z)\normalord \\ 
  =\normalord \Phi(x)\Phi^{\dagger}(y)\Phi(z) \normalord %
  + \contraction{\normalord }{\Phi}{(x)}{\Phi}%
  \normalord \Phi(x)\Phi^{\dagger}(y)\Phi(z) \normalord\\
  = \normalord \Phi(x)\Phi^{\dagger}(y)\Phi(z) \normalord + \id\,G_{F}(x-y)\Phi(z)~.
\end{multline}
This theorem is a canonical result in quantum field theory \cite{peskin_introduction_1995,bjorken_relativistic_1965}.

\begin{thm}[Wick's theorem for full expectation values of the quantized complex field]\label{thm: Wick complex t-o full vev}
  Consider a time-ordered product of quantized complex fields $T\Phi^{\epsilon_{1}}(x_{1}) \cdots  \Phi^{\epsilon_{n}}(x_{n})$ that may contain normal-ordered subproducts. The indices $\epsilon_{i} \in \{ \circ,\dagger\}$ indicate whether the field operator is conjugated ($\epsilon_{i} = \dagger$, such that $\Phi^{\epsilon_{i}} = \Phi^{\dagger}$) or not ($\epsilon_{i} = \circ$, such that $\Phi^{\epsilon_{i}} = \Phi$). Then prepend products of annihilation operators
  \begin{align}
  (\op{a}_{\vec{p}})_{\vec{p}} &\equiv \op{a}_{\vec{p}_{1}} \dots \op{a}_{\vec{p}_{M}}~,  & (\op{b}_{\bar{\vec{p}}})_{\bar{\vec{p}}} &\equiv \op{b}_{\bar{\vec{p}}_{1}} \dots \op{b}_{\bar{\vec{p}}_{\bar{M}}}
  \end{align}
  and append products of creation operators
  \begin{align}
  (\op{a}^{\dagger}_{\vec{k}})_{\vec{k}} &\equiv \op{a}^{\dagger}_{\vec{k}_{1}} \dots \op{a}^{\dagger}_{\vec{k}_{N}}~,  & (\op{b}^{\dagger}_{\bar{\vec{k}}})_{\bar{\vec{k}}} &\equiv \op{b}^{\dagger}_{\bar{\vec{k}}_{1}} \dots\op{b}^{\dagger}_{\bar{\vec{k}}_{\bar{N}}}~.
  \end{align}
  The vacuum expectation value of the resulting product of operators vanishes if the total number of operators is odd. If it is even, it can be expressed as the sum over all full contractions:
\begin{multline}\label{eqn: Wick complex t-o full vev}
  \braket{0|(\op{a}_{\vec{p}})_{\vec{p}}\,(\op{b}_{\bar{\vec{p}}})_{\bar{\vec{p}}}\,[\,T\Phi^{\epsilon_{1}}(x_{1}) \cdots  \Phi^{\epsilon_{n}}(x_{n}) \,]\,(\op{a}^{\dagger}_{\vec{k}})_{\vec{k}}\,(\op{b}^{\dagger}_{\bar{\vec{k}}})_{\bar{\vec{k}}}|0}\\
  = \sum_{\mathclap{\substack{\text{all full}\\\text{contractions}}}}%
  \contraction{}{\vphantom{\Phi}a}{_{\vec{p}_{1}} \cdots \Phi^{\epsilon_{i}}(x_{i}) \cdots}{\Phi}%
  \bcontraction{a_{\vec{p}_{1}} \cdots }{\Phi}{^{\epsilon_{i}}(x_{i}) \cdots \Phi^{\epsilon_{j}}(x_{j}) \cdots }{\Phi}%
  a_{\vec{p}_{1}} \cdots \Phi^{\epsilon_{i}}(x_{i}) \cdots \Phi^{\epsilon_{j}}(x_{j}) \cdots \Phi^{\epsilon_{k}}(x_{k})\cdots\\
   \cdots \contraction{}{\Phi}{^{\epsilon_{l}}(x_{l}) \cdots}{b} \Phi^{\epsilon_{l}}(x_{l}) \cdots b^{\dagger}_{\bar{\vec{k}}_{\bar{N}}}~.
\end{multline}
The contractions between fields that do not belong to the same normal-ordered subproducts are
  \begin{equation}\label{eqn: complex field full vev contractions 1}
    \contraction{}{\Phi}{^{\dagger}(x_{i}) \cdots }{\Phi} \Phi^{\dagger}(x_{i}) \cdots \Phi(x_{j}) = \id\,G_{F}(x_{i}-x_{j}) = \contraction{}{\Phi}{(x_{i}) \cdots }{\Phi} \Phi(x_{i}) \cdots \Phi^{\dagger}(x_{j})~,
  \end{equation}
  the contractions between ladder operators
  \begin{equation}\label{eqn: complex field full vev contractions 2}
    \contraction{}{\op{a}}{_{\vec{k}}\cdots }{a} \op{a}_{\vec{k}} \cdots a_{\vec{p}}^{\dagger} = \id\, \delta_{\vec{k},\vec{p}} =
    \contraction{}{\op{b}}{_{\vec{k}}\cdots }{b} \op{b}_{\vec{k}} \cdots b_{\vec{p}}^{\dagger}~,
  \end{equation}
  and, finally, the mixed contractions
  \begin{equation}\begin{aligned}\label{eqn: complex field full vev contractions 3}
    \contraction{}{\vphantom{\Phi}\op{a}}{_{\vec{k}}\cdots }{\Phi} \op{a}_{\vec{k}}\cdots  \Phi^{\dagger}(x) &= \id\,\varphi^{*}_{\vec{k}}(x)~, &\quad\quad{}
    \contraction{}{\Phi}{(x)\cdots }{\op{a}}  \Phi(x) \cdots \op{a}^{\dagger}_{\vec{k}}&= \id\,\varphi_{\vec{k}}(x)\\
    \contraction{}{\vphantom{\Phi}\op{b}}{_{\vec{k}}\cdots }{\Phi} \op{b}_{\vec{k}} \cdots \Phi(x) &= \id\,\varphi^{*}_{\vec{k}}(x)~, &
    \contraction{}{\Phi}{^{\dagger}(x)\cdots }{\op{b}}  \Phi^{\dagger}(x)\cdots  \op{b}^{\dagger}_{\vec{k}}&= \id\,\varphi_{\vec{k}}(x)~.
  \end{aligned}\end{equation}
  All other contractions vanish, in particular contractions between fields belonging to the same normal-ordered subproducts.
\end{thm}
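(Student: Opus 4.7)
My plan is to reduce the statement to \Cref{thm: Wick complex t-o} plus a ``ladder--field'' version of Wick's theorem, and then to kill all incompletely contracted terms by taking the vacuum expectation value. First I would apply \Cref{thm: Wick complex t-o} to the time-ordered product inside the matrix element. This converts $T\Phi^{\epsilon_1}(x_1)\cdots\Phi^{\epsilon_n}(x_n)$ into a sum of terms of the form $\normalord \Phi^{\epsilon_{i_1}}(x_{i_1})\cdots \Phi^{\epsilon_{i_k}}(x_{i_k})\normalord$ multiplied by some product of propagators $G_F(x_r-x_s)$ arising from the internal contractions \cref{eqn: complex field contractions t-o}; crucially, \Cref{thm: Wick complex t-o} already forbids contractions between fields sitting in the same normal-ordered subproduct, which is exactly the vanishing rule asserted here.

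Next I would deal with the external ladder operators. I would split every remaining field via $\Phi = \Phi_a + \Phi_b^{\dagger}$, as in \cref{eqn: field separation complex,eqn: scalar field parts}, so that each normal-ordered block becomes a sum of monomials already in the form ``all $a^\dagger,b^\dagger$ on the left, all $a,b$ on the right.'' The task reduces to evaluating $\bra 0\,(\op{a}_{\vec p})_{\vec p}(\op{b}_{\bar{\vec p}})_{\bar{\vec p}}\,M\,(\op{a}^\dagger_{\vec k})_{\vec k}(\op{b}^\dagger_{\bar{\vec k}})_{\bar{\vec k}}\ket 0$ for each such monomial $M$. I would then prove by induction on the total number of external ladder operators that this expectation value equals the sum over all perfect matchings of ``outer-to-outer'' and ``outer-to-$M$'' pairs, each pair replaced by the appropriate $c$-number obtained from the canonical commutation relations \cref{eqn: cr ladder op real field} and mode expansion \cref{eqn: scalar mode functions 1}. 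The induction step is just to commute the leftmost external annihilator to the right past every factor, picking up a $\delta$ on hitting an $a^\dagger$ (or $b^\dagger$), or a mode function $\varphi^{*}_{\vec k}(x)$ on hitting a $\Phi_a^\dagger(x)$-piece sitting in $M$; all other commutators vanish. This produces exactly the mixed contractions \cref{eqn: complex field full vev contractions 2,eqn: complex field full vev contractions 3}, and the residual term where the annihilator reaches the right vacuum kills the whole contribution.

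The vacuum expectation of any residual normal-ordered monomial with \emph{uncontracted} field factors vanishes, because $\normalord\cdots\normalord$ puts an annihilator on the right; hence only terms in which every external ladder operator and every remaining field factor has been contracted with something survive. Summing these over all ways of pairing the whole string reproduces precisely the ``sum over all full contractions'' in \cref{eqn: Wick complex t-o full vev}. Parity is automatic: complex-scalar contractions pair a $\Phi$ with a $\Phi^\dagger$ (or an $a$ with an $a^\dagger$, an $a$ with a $\Phi^\dagger$, etc.), so if the total number of operators is odd the matching set is empty and the expectation vanishes. The main obstacle I anticipate is purely combinatorial bookkeeping: checking that the sign-free bosonic case means the inductive commutation really does land on every allowed pairing exactly once, and that the ``same subproduct'' constraint transported from \Cref{thm: Wick complex t-o} is preserved through the ladder-operator step (nothing in the ladder commutations can ever pair two fields from a common $\normalord\cdots\normalord$ block, since those are evaluated on the same side and never cross). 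Once this bookkeeping is tight, the identity \cref{eqn: Wick complex t-o full vev} with the contraction rules \cref{eqn: complex field full vev contractions 1,eqn: complex field full vev contractions 2,eqn: complex field full vev contractions 3} follows.
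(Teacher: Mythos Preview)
Your proposal is correct and follows essentially the same route as the paper: apply \Cref{thm: Wick complex t-o} to the time-ordered block to generate the field--field contractions (with the same-subproduct exclusion), then commute the external ladder operators through the remaining normal-ordered fields to produce the mixed and ladder--ladder contractions, and finally use the vacuum expectation value to kill every incompletely contracted term. The paper packages your inductive ladder step as a separate lemma (\Cref{lem: towards Wick complex t-o full vev}) proving the identity at the operator level before taking the vacuum, but the mechanism and the bookkeeping you describe are the same.
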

\noindent{}A proof of this theorem is given in \cref{sec: proofs wick scalar}.

\subsubsection{Quantized real field}\label{sec: Wick real}

Note that the field operator can be separated as
\begin{equation}\label{eqn: field separation real}
  \Phi(x) = \Phi_{a}(x) + \Phi^{\dagger}_{a}(x)~.
\end{equation}
Unlike for the quantized complex field, the commutator
\begin{equation}\label{eqn: commutators real field part}
  [\Phi_{a}(x),\Phi(y)] = \braket{0|\Phi(x)\Phi(y)|0}
\end{equation}
does not vanish. This changes the contractions in the equivalent of \cref{thm: Wick complex t-o} for the quantized real field: contractions between $\Phi(x)$ and $\Phi(y)$ do not vanish for the quantized real field. Similarly, the contraction between $\op{a}_{\vec{k}}$ and $\Phi(x)$ does not vanish in the equivalent of \cref{thm: Wick complex t-o full vev}, since
\begin{equation}\begin{aligned}
    [a_{\vec{k}},\Phi(x)] &=  \id\,\varphi_{\vec{k}}^{*}(x)\\
    [\Phi(x),a_{\vec{k}}^{\dagger}] &=  \id\,\varphi_{\vec{k}}(x)
\end{aligned}\end{equation}
are now nonzero. Keeping this in mind, we can immediately state:

\begin{thm}[Wick's theorem for full expectation values of the quantized real field]\label{thm: Wick real t-o full vev}
  Consider a time-ordered product of quantized real fields $T\Phi(x_{1}) \cdots  \Phi(x_{n})$ that may contain normal-ordered subproducts. Then prepend products of annihilation operators
  \begin{equation}
    (\op{a}_{\vec{p}})_{\vec{p}} \equiv \op{a}_{\vec{p}_{1}} \dots \op{a}_{\vec{p}_{M}}
  \end{equation}
  and append products of creation operators
  \begin{equation}
    (\op{a}^{\dagger}_{\vec{k}})_{\vec{k}} \equiv \op{a}^{\dagger}_{\vec{k}_{1}}\dots\op{a}^{\dagger}_{\vec{k}_{N}}~.
  \end{equation}
  The vacuum expectation value of the resulting product of operators vanishes if the total number of operators is odd. If it is even, it can be expressed as the sum over all full contractions:
\begin{multline}\label{eqn: Wick real t-o full vev}
  \braket{0|(\op{a}_{\vec{p}})_{\vec{p}}\,[\,T\Phi(x_{1}) \cdots  \Phi(x_{n}) \,]\,(\op{a}^{\dagger}_{\vec{k}})_{\vec{k}}|0}\\
  = \sum_{\mathclap{\substack{\text{all full}\\\text{contractions}}}}%
  \contraction{}{\vphantom{\Phi}a}{_{\vec{p}_{1}} \cdots \Phi(x_{i}) \cdots}{\Phi}%
  \bcontraction{a_{\vec{p}_{1}} \cdots }{\Phi}{(x_{i}) \cdots \Phi(x_{j}) \cdots }{\Phi}%
  a_{\vec{p}_{1}} \cdots \Phi(x_{i}) \cdots \Phi(x_{j}) \cdots \Phi(x_{k}) \cdots \contraction{}{\Phi}{(x_{l}) \cdots}{a} \Phi(x_{l}) \cdots a^{\dagger}_{\vec{k}_{N}}~.
\end{multline}
The contractions between fields that do not belong to the same normal-ordered subproduct are
  \begin{equation}\label{eqn: real field full vev contractions 1}
    \contraction{}{\Phi}{(x_{i}) \cdots }{\Phi} \Phi(x_{i}) \cdots \Phi(x_{j}) = \id\,G_{F}(x_{i}-x_{j})~,
  \end{equation}
  the contractions between ladder operators
  \begin{equation}\label{eqn: real field full vev contractions 2}
    \contraction{}{\op{a}}{_{\vec{k}}\cdots }{a} \op{a}_{\vec{k}} \cdots a_{\vec{p}}^{\dagger} = \id\,\delta_{\vec{k},\vec{p}}~,
  \end{equation}
  and the mixed contractions
  \begin{align}\label{eqn: real field full vev contractions 3}
    \contraction{}{\vphantom{\Phi}\op{a}}{_{\vec{k}}\cdots }{\Phi} \vphantom{\Phi}\op{a}_{\vec{k}} \cdots \Phi(x) &= \id\,\varphi^{*}_{\vec{k}}(x) &
    \contraction{}{\Phi}{(x)\cdots }{\op{a}}  \Phi(x)\cdots  \op{a}^{\dagger}_{\vec{k}}&= \id\,\varphi_{\vec{k}}(x)~.
  \end{align}
  All other contractions vanish, in particular contractions between fields belonging to the same normal-ordered subproducts.
\end{thm}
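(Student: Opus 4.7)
The plan is to adapt the proof of \cref{thm: Wick complex t-o full vev}, which establishes the analogous result for complex scalar fields in \cref{app: proofs wick}. The real-field theorem corresponds to the complex-field theorem under the identification of $\op{a}$ and $\op{b}$ enforced by self-adjointness $\op{\Phi} = \op{\Phi}^{\dagger}$. This collapse has two consequences that need to be tracked: it removes the antiparticle ladder operators $\op{b}_{\vec{k}}$, $\op{b}^{\dagger}_{\vec{k}}$ and all contractions involving them, and it makes the previously vanishing contraction between two unconjugated fields nontrivial, because $[\Phi_{a}(x), \Phi(y)] = \braket{0|\Phi(x)\Phi(y)|0}$ no longer vanishes, as noted in \cref{eqn: commutators real field part}.

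First I would establish the real-field analog of \cref{thm: Wick complex t-o} by strong induction on the total number $n$ of field operators. The base case $n=2$ is verified directly using the decomposition \cref{eqn: field separation real} and the identity $T\Phi(x)\Phi(y) = \normalord\Phi(x)\Phi(y)\normalord + \id\,G_{F}(x-y)$. The inductive step proceeds just as in the complex case: pick the field $\Phi(x_{k})$ with the largest time argument, pull it out of the time-ordered product, write it as $\Phi_{a}(x_{k}) + \Phi_{a}^{\dagger}(x_{k})$, and commute the annihilation part rightward through the remaining fields while the creation part already sits in normal-ordered position on the left. Each commutator $[\Phi_{a}(x_{k}), \Phi(x_{j})]$ then produces the propagator contribution that becomes the single contraction of $\Phi(x_{k})$ with $\Phi(x_{j})$ appearing in \cref{eqn: real field full vev contractions 1}.

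Next I would extend the result to include the prepended annihilation and appended creation operators. Using $[\op{a}_{\vec{p}}, \Phi(x)] = \varphi^{*}_{\vec{p}}(x)\,\id$ and $[\Phi(x), \op{a}^{\dagger}_{\vec{k}}] = \varphi_{\vec{k}}(x)\,\id$, I would iteratively commute each $\op{a}_{\vec{p}}$ rightward through the time-ordered block and each $\op{a}^{\dagger}_{\vec{k}}$ leftward. Every such commutation produces a mode-function term that corresponds precisely to one of the mixed contractions in \cref{eqn: real field full vev contractions 3}, while residual $\op{a}_{\vec{p}}$ acting on $\ket{0}$ or $\op{a}^{\dagger}_{\vec{k}}$ acting on $\bra{0}$ vanish; pairs $\op{a}_{\vec{p}}$, $\op{a}^{\dagger}_{\vec{k}}$ that meet contribute $\delta_{\vec{p},\vec{k}}$ via \cref{eqn: real field full vev contractions 2}. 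The odd-vanishing statement falls out of parity: every contraction pairs two operators, so an unpaired operator must remain and annihilate one of the vacua.

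The main obstacle will be the bookkeeping of the combinatorial expansion, specifically verifying that contractions between two field operators belonging to the same normal-ordered subproduct vanish in the final sum. This follows because inside $\normalord \cdots \normalord$ no commutator terms arise when separating the annihilation and creation parts, so no propagator is produced between such a pair from the outset; this structural point transfers unchanged from the complex-field proof. Once this is in place, combining the inductive Wick expansion with the commutation procedure for the prepended and appended ladder operators, and matching each resulting term to a corresponding full contraction, yields exactly the right-hand side of \cref{eqn: Wick real t-o full vev}.
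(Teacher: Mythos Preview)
Your proposal is correct and follows essentially the same approach as the paper. The paper does not give a separate proof for the real-field case; it simply observes in \cref{sec: Wick real} that the complex-field argument (\cref{thm: Wick complex t-o full vev}, proven via \cref{lem: towards Wick complex t-o full vev} in \cref{sec: proofs wick scalar}) carries over once the contractions are modified to account for $\op{\Phi}^{\dagger}=\op{\Phi}$, which is precisely the specialization you outline.
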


\subsubsection{Quantized spinor field}\label{sec: Wick spinor}

For spinor fields, we need to work out how to evaluate vacuum expectation values of the form
\begin{multline}
  \bra{0} (\op{a}_{\vec{p},r})_{\vec{p},r}\,(\op{b}_{\bar{\vec{p}},\bar{r}})_{\bar{\vec{p}},\bar{r}}\,[\,T \op{\Psi}^{\epsilon_{1}}(x_{1})\cdots\\ \cdots \op{\Psi}^{\epsilon_{n}}(x_{n}) \,]\,(\op{a}^{\dagger}_{\vec{k},s})_{\vec{k},s}\,(\op{b}^{\dagger}_{\bar{\vec{k}},\bar{s}})_{\bar{\vec{k}},\bar{s}}\ket{0}
\end{multline}
where once more the index $\epsilon_{i}$ indicates whether the field is conjugated ($\conj\Psi$), or not ($\Psi$). Let us again split the field in two parts, according to the ladder operators, by defining
\begin{equation}\begin{aligned}\label{eqn: spinor field parts}
  \Psi_{a}(x) &\define \sum_{\vec{k},s}a_{\vec{k},s}\,\psi_{\vec{k},s,+}(x) \\
  \Psi_{b}(x) &\define \sum_{\vec{k},s}b^{\dagger}_{\vec{k},s}\,\psi_{\vec{k},s,-}(x)~,
\end{aligned}\end{equation}
such that
\begin{equation}\begin{aligned}\label{eqn: field separation spinor}
  \Psi(x) &= \Psi_{a}(x) + \Psi_{b}(x) \\
  \conj\Psi(x) &= \conj\Psi_{a}(x) + \conj\Psi_{b}(x)~.
\end{aligned}\end{equation}
Note that this convention varies slightly from \cref{eqn: scalar field parts,eqn: field separation complex} used for quantized scalar fields.

\paragraph{Time-ordered products of quantized spinor fields}
Two spinor fields are easily normal-ordered by hand, for example:
\begin{equation}\label{eqn: Wick example spinor 1}
  \Psi(x)\conj\Psi(y) = \normalord \Psi(x) \conj\Psi(y) \normalord + \{\Psi_{a}(x),\conj\Psi_{a}(y)\}~.
\end{equation}
The anticommutators can be evaluated using \cref{eqn: inverse spinor-bar-prod 3dim massive} for massive fields in $(1,3)$ dimensions [and the analogous equations for massless fields, and in $(1,1)$ dimensions respectively]. It is straightforward to see that
\begin{equation}
  \{\Psi_{a}^{A}(x),\conj\Psi_{a,B}(y)\} =  \id \braket{0|\Psi^{A}(x)\conj\Psi_{B}(y)|0}~.
\end{equation}
The Feynman propagator appears again if the fields are time-ordered. For example,
\begin{equation}
  T\Psi(x)\conj\Psi(y) = \normalord \Psi(x) \conj\Psi(y)  \normalord + \id\,S_{F}(x-y)~,
\end{equation}
where we have used that $\normalord \conj\Psi(y) \Psi(x)  \normalord = -\normalord \Psi(x) \conj\Psi(y) \normalord$.

The generalization of the above is given by the fermionic version of Wick's theorem for spinor fields:

\begin{thm}[Wick's theorem for time-ordered quantized spinor fields]\label{thm: Wick spinor t-o}
  Consider a time-ordered product of spinor fields $T\Psi^{\epsilon_{1}}(x_{1}) \cdots  \Psi^{\epsilon_{n}}(x_{n})$ which may contain normal-ordered subproducts. The indices $\epsilon_{i} \in \{ \circ,\bar{~}\}$ indicate whether the field operator is Dirac conjugated ($\epsilon_{i} = \bar{~}$, such that $\Psi^{\epsilon_{i}} = \conj\Psi$) or not ($\epsilon_{i} = \circ$, such that $\Psi^{\epsilon_{i}} = \Psi$). It can be normal-ordered as follows:
  \begin{align}\label{eqn: Wick spinor t-o}
    &T\op{\Psi}^{\epsilon_{1}}(x_{1})\op{\Psi}^{\epsilon_{2}}(x_{2}) \cdots \op{\Psi}^{\epsilon_{n}}(x_{n}) \nonumber\\
    &= \normalord  \op{\Psi}^{\epsilon_{1}}(x_{1})\op{\Psi}^{\epsilon_{2}}(x_{2}) \cdots \op{\Psi}^{\epsilon_{n}}(x_{n}) \normalord\nonumber\\
    &\hphantom{={}}+ \sum_{\mathclap{\substack{\text{single}\\\text{contractions}}}} \normalord \contraction{\op{\Psi}^{\epsilon_{1}}(x_{1})\cdots}{\op{\Psi}}{^{\epsilon_{i}}(x_{i}) \cdots }{\op{\Psi}} \op{\Psi}^{\epsilon_{1}}(x_{1})\cdots \op{\Psi}^{\epsilon_{i}}(x_{i}) \cdots \op{\Psi}^{\epsilon_{j}}(x_{j}) \cdots \op{\Psi}^{\epsilon_{n}}(x_{n}) \normalord\nonumber\\
    &\hphantom{={}}+ \sum_{\mathclap{\substack{\text{double}\\\text{contractions}}}}\normalord \contraction{\op{\Psi}^{\epsilon_{1}}(x_{1})\cdots}{\op{\Psi}}{^{\epsilon_{i}}(x_{i}) \cdots }{\op{\Psi}} \op{\Psi}^{\epsilon_{1}}(x_{1})\cdots \op{\Psi}^{\epsilon_{i}}(x_{i})\cdots \op{\Psi}^{\epsilon_{j}}(x_{j}) \cdots \nonumber\\
    & \hphantom{={}+\sum_{\mathclap{\substack{\text{double}\\\text{contractions}}}}\normalord} \cdots \contraction{}{\op{\Psi}}{^{\epsilon_{k}}(x_{k}) \cdots }{\op{\Psi}}  \op{\Psi}^{\epsilon_{k}}(x_{k})\cdots \op{\Psi}^{\epsilon_{l}}(x_{l})  \normalord + \cdots~.
  \end{align}
  Contractions of any pair of fields belonging to the same normal-ordered subproduct vanish. Two contracted fields belonging to different normal-ordered subproducts are to be replaced as follows:
  \begin{equation}\begin{aligned}\label{eqn: spinor field contractions t-o}
    \contraction{}{\conj\Psi}{_{A_{i}}(x_{i})  }{\Psi} \conj\Psi_{A_{i}}(x_{i})  \Psi^{A_{j}}(x_{j}) &= -\id\,S_{F}(x_{j}-x_{i})^{A_{j}}_{A_{i}}\\
    \contraction{}{\vphantom{\conj\Psi}\Psi}{^{A_{i}}(x_{i})  }{\conj\Psi} \Psi^{A_{i}}(x_{i})  \conj\Psi_{A_{j}}(x_{j}) &= +\id\,S_{F}(x_{i}-x_{j})^{A_{i}}_{A_{j}}\\
    \contraction{}{\Psi}{(x_{i})  }{\Psi} \Psi(x_{i})  \Psi(x_{j}) &= 0\\
    \contraction{}{\conj\Psi}{(x_{i})  }{\conj\Psi} \conj\Psi(x_{i})  \conj\Psi(x_{j}) &= 0~.
  \end{aligned}\end{equation}
  If contracted operators are not adjacent, the term has to be reordered until they are, with every exchange of two operators introducing an overall factor of $(-1)$ to the term.
\end{thm}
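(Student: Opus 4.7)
The strategy is induction on the total number $n$ of spinor field operators appearing inside the time-ordering symbol, where operators inside a normal-ordered subproduct are counted individually. The base case $n=2$ is essentially already contained in \eqref{eqn: Wick example spinor 1} and its three Dirac-conjugate variants: by \eqref{eqn: field separation spinor} the anticommutators between the pieces $\Psi_a,\Psi_b,\conj\Psi_a,\conj\Psi_b$ are c-numbers proportional to the identity, equal to the appropriate Wightman two-point functions, and, depending on the sign of $x^0-y^0$, to $\pm\id\,S_F(x-y)$, matching the contraction prescription \eqref{eqn: spinor field contractions t-o} exactly.

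For the inductive step I would first use the defining antisymmetry of the fermionic time-ordering symbol to rearrange the $n$ fields, at the cost of an overall sign $(-1)^{\pi}$, so that $x_1^0$ is the largest time coordinate. With this convention
\begin{equation*}
T\bigl[\Psi^{\epsilon_1}(x_1)\cdots\Psi^{\epsilon_n}(x_n)\bigr] = \Psi^{\epsilon_1}(x_1)\,T\bigl[\Psi^{\epsilon_2}(x_2)\cdots\Psi^{\epsilon_n}(x_n)\bigr],
\end{equation*}
and the inductive hypothesis rewrites the factor on the right as a sum of normal-ordered products with contractions. It then remains to multiply $\Psi^{\epsilon_1}(x_1)$ into each such normal-ordered expression, which is the content of a separately proved multiplication lemma: anticommuting $\Psi^{\epsilon_1}(x_1)$ (with largest time) through a normal-ordered product of $k$ spinor fields yields one fully normal-ordered term with $\Psi^{\epsilon_1}(x_1)$ in the leftmost slot, plus a sum of $k$ terms in which $\Psi^{\epsilon_1}(x_1)$ has been contracted with exactly one of the block's fields, each carrying the fermionic sign dictated by the number of operators skipped over. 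This lemma I would prove by splitting $\Psi^{\epsilon_1} = \Psi^{\epsilon_1}_a + \Psi^{\epsilon_1}_b$ via \eqref{eqn: field separation spinor} and processing one anticommutator at a time, noting that the creation-type piece is already in normal-ordered position while the annihilation-type piece must be pushed past all the creation-type pieces inside the block.

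Normal-ordered subproducts present \emph{before} the induction starts are incorporated by running the same argument but treating each such subproduct as a single atomic block whose constituent fields are declared non-contractible: the defining subtraction in $\normalord\,\cdots\,\normalord$ exactly cancels every would-be internal anticommutator, matching the final clause of the theorem. The parallel fermionic structure also makes the proof of \cref{thm: Wick spinor t-o} essentially a mirror of the proof sketched earlier for \cref{thm: Wick complex t-o}, but with anticommutators replacing commutators and minus signs inserted at every swap.

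The principal obstacle I anticipate is the systematic bookkeeping of fermionic signs throughout: the overall sign $(-1)^{\pi}$ from reordering into decreasing-time form, the signs accumulated step by step inside the multiplication lemma, the extra signs introduced whenever contracted pairs are brought adjacent in writing the final schematic expression, and the consistency of all of these with the conventions baked into \eqref{eqn: spinor field contractions t-o}. Once the $n=2$ base case has been handled with the signs nailed down, every subsequent step is a mechanical application of anticommutation, so the real work lies in verifying that the sign accounting closes at every level of the induction.
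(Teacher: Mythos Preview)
Your inductive argument is correct and is the standard textbook proof of fermionic Wick's theorem. The paper, however, does not actually prove this statement: immediately after the theorem it writes ``Since this is again a canonical result [Peskin--Schroeder, Bjorken--Drell], we abstain from giving the proof here.'' So there is nothing to compare against in the paper itself; your proposal simply supplies the standard proof that the cited references contain, and your identification of the sign bookkeeping as the only genuine subtlety is accurate.
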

\noindent{}The last requirement that contracted operators need to be adjacent means for example that
\begin{multline}
  \contraction{\normalord}{\conj\Psi}{(x_{1})\Psi(x_{2})\conj\Psi(x_{3})}{\Psi}%
  \bcontraction{\normalord \conj\Psi(x_{1})}{\Psi}{(x_{2})}{\conj\Psi}%
  \normalord \conj\Psi(x_{1})\Psi(x_{2})\conj\Psi(x_{3})\Psi(x_{4})\normalord \\
  = (-1)^{3}\contraction{\normalord}{\vphantom{\conj\Psi}\Psi}{(x_{4})}{\conj\Psi}%
  \bcontraction{\normalord \Psi(x_{4})\conj\Psi(x_{1})}{\Psi}{(x_{2})}{\conj\Psi}%
  \normalord \Psi(x_{4})\conj\Psi(x_{1})\Psi(x_{2})\conj\Psi(x_{3})\normalord   
\end{multline}
Since this is again a canonical result \cite{peskin_introduction_1995,bjorken_relativistic_1965}, we abstain from giving the proof here.

\begin{thm}[Wick's theorem for full expectation values of the quantized spinor field]\label{thm: Wick spinor t-o full vev}
  Consider a time-ordered product of spinor fields $T\Psi^{\epsilon_{1}}(x_{1}) \cdots  \Psi^{\epsilon_{n}}(x_{n})$ like in \cref{thm: Wick spinor t-o}. Then prepend products of annihilation operators   \begin{equation}\begin{aligned}
  (\op{a}_{\vec{p},r})_{\vec{p},r} &\equiv \op{a}_{(\vec{p},r)_{1}} \dots \op{a}_{(\vec{p},r)_{M}} \\
  (\op{b}_{\bar{\vec{p}},\bar{r}})_{\bar{\vec{p}},\bar{r}} &\equiv \op{b}_{(\bar{\vec{p}},\bar{r})_{1}} \dots \op{b}_{(\bar{\vec{p}},\bar{r})_{\bar{M}}}~,
\end{aligned}\end{equation}
and append products of creation operators
  \begin{equation}\begin{aligned}
  (\op{a}^{\dagger}_{\vec{k},s})_{\vec{k},s} &\equiv \op{a}^{\dagger}_{(\vec{k},s)_{1}} \dots \op{a}^{\dagger}_{(\vec{k},s)_{N}} \\
  (\op{b}^{\dagger}_{\bar{\vec{k}},\bar{s}})_{\bar{\vec{k}},\bar{s}} &\equiv \op{b}^{\dagger}_{(\bar{\vec{k}},\bar{s})_{1}} \dots \op{b}^{\dagger}_{(\bar{\vec{k}},\bar{s})_{\bar{N}}} ~.
\end{aligned}\end{equation}
 The vacuum expectation value of the resulting product of operators vanishes if the total number of operators is odd. If it is even, it can be expressed as the sum over all full contractions:
\begin{multline}\label{eqn: Wick spinor t-o full vev}
  \bra{0}(\op{a}_{\vec{p},r})_{\vec{p},r}\,(\op{b}_{\bar{\vec{p}},\bar{r}})_{\bar{\vec{p}},\bar{r}}\,[\,T\Psi^{\epsilon_{1}}(x_{1}) \cdots \\ \cdots \Psi^{\epsilon_{n}}(x_{n}) \,]\,(\op{a}^{\dagger}_{\vec{k},s})_{\vec{k},s}\,(\op{b}^{\dagger}_{\bar{\vec{k}},\bar{s}})_{\bar{\vec{k}},\bar{s}}\ket{0}\\
  = \sum_{\mathclap{\substack{\text{all full}\\\text{contract.}}}}%
  \contraction{}{\vphantom{\Psi}a}{_{(\vec{p},r)_{1}} \cdots \Psi^{\epsilon_{i}}(x_{i}) \cdots}{\Psi}%
  \bcontraction{a_{(\vec{p},r)_{1}} \cdots }{\Psi}{^{\epsilon_{i}}(x_{i}) \cdots \Psi^{\epsilon_{j}}(x_{j}) \cdots }{\Psi}%
  a_{(\vec{p},r)_{1}} \cdots \Psi^{\epsilon_{i}}(x_{i}) \cdots \Psi^{\epsilon_{j}}(x_{j}) \cdots \Psi^{\epsilon_{k}}(x_{k}) \cdots \\ \cdots \contraction{}{\Psi}{^{\epsilon_{l}}(x_{l}) \cdots}{b} \Psi^{\epsilon_{l}}(x_{l}) \cdots b^{\dagger}_{(\bar{\vec{k}},\bar{s})_{\bar{N}}}~.
\end{multline}
The contractions between fields that do not belong to the same normal-ordered subproduct are
  \begin{equation}\begin{aligned}\label{eqn: spinor field full vev contractions 1}
    \contraction{}{\conj\Psi}{_{A_{i}}(x_{i})  }{\Psi} \conj\Psi_{A_{i}}(x_{i})  \Psi^{A_{j}}(x_{j}) &= -\id\,S_{F}(x_{j}-x_{i})^{A_{j}}_{A_{i}}\\
    \contraction{}{\vphantom{\conj\Psi}\Psi}{^{A_{i}}(x_{i})  }{\conj\Psi} \Psi^{A_{i}}(x_{i})  \conj\Psi_{A_{j}}(x_{j}) &= +\id\,S_{F}(x_{i}-x_{j})^{A_{i}}_{A_{j}}~,
  \end{aligned}\end{equation}
  the contractions between ladder operators
  \begin{equation}\label{eqn: spinor field full vev contractions 2}
    \contraction{}{\op{a}}{_{\vec{k},s}}{a} \op{a}_{\vec{k},s}  a_{\vec{p},r}^{\dagger} = \id\,\delta_{\vec{k},\vec{p}}\,\delta_{s,r} =
    \contraction{}{\op{b}}{_{\vec{k},s}}{b} \op{b}_{\vec{k},s}  b_{\vec{p},r}^{\dagger}~,
  \end{equation}
  and the mixed contractions
  \begin{equation}\begin{aligned}\label{eqn: spinor field full vev contractions 3}
    \contraction{}{\vphantom{\conj\Psi}\op{a}}{_{\vec{k},s}}{\Psi} \op{a}_{\vec{k},s} \conj\Psi_{A}(x) &= \id\,\conj{\psi}_{\vec{k},s,+,A}(x)\\
    \contraction{}{\Psi}{^{A}(x)}{\op{a}}  \Psi^{A}(x) \op{a}^{\dagger}_{\vec{k},s}&= \id\,\psi_{\vec{k},s,+}^{A}(x)\\
    \contraction{}{\vphantom{\Psi}\op{b}}{_{\vec{k},s}}{\Psi} \op{b}_{\vec{k},s} \Psi^{A}(x) &= \id\,\psi_{\vec{k},s,-}^{A}(x) \\
    \contraction{}{\conj\Psi}{_{A}(x)}{\op{b}}  \conj\Psi_{A}(x) \op{b}^{\dagger}_{\vec{k},s}&= \id\,\conj{\psi}_{\vec{k},s.-,A}(x)~.
  \end{aligned}\end{equation}
  If contracted operators are not adjacent, the term has to be reordered until they are, with every exchange of two operators introducing an overall factor of $(-1)$ to the term. All other contractions vanish, in particular contractions between fields belonging to the same normal-ordered subproduct.
\end{thm}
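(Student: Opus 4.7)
The strategy is to reduce the claim to \cref{thm: Wick spinor t-o}, already established for time-ordered products of field operators alone, and then to dispose of the prepended and appended ladder operators by systematic anticommutation towards the vacuum, tracking fermionic signs carefully.

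First I would invoke \cref{thm: Wick spinor t-o} to expand $T\op\Psi^{\epsilon_1}(x_1)\cdots\op\Psi^{\epsilon_n}(x_n)$ into a sum of normal-ordered expressions, each carrying a (possibly empty) product of field-field contractions contributing factors of $S_F$ with signs as in \cref{eqn: spinor field full vev contractions 1}. Substituting this back into the full expectation value, linearity reduces the problem to evaluating, for every partial contraction pattern, residual matrix elements of the form
\begin{equation*}
\bra{0}(\op a_{\vec p,r})(\op b_{\bar{\vec p},\bar r})\,\normalord \op\Psi^{\epsilon_{i_1}}(x_{i_1})\cdots \op\Psi^{\epsilon_{i_k}}(x_{i_k})\normalord\,(\op a^\dagger_{\vec k,s})(\op b^\dagger_{\bar{\vec k},\bar s})\ket{0}
\end{equation*}
multiplied by a product of Feynman propagators. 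It therefore suffices to show that each such residual matrix element equals the sum over all remaining full contractions of its ladder operators and (uncontracted) fields, with the contractions listed in \cref{eqn: spinor field full vev contractions 2,eqn: spinor field full vev contractions 3}.

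Second, I would prove this by induction on the total number of operators present. In the induction step the rightmost operator in the left product of ladder operators is anticommuted past every operator standing to its right, using $\{\op a_{\vec p,r},\op{\conj\Psi}_A(x)\}=\conj\psi_{\vec p,r,+,A}(x)\,\id$, obtained from the mode expansion \cref{eqn: H pic ladd op mode expansion spinor field in text} together with the canonical anticommutation relations, while $\{\op a_{\vec p,r},\op\Psi(x)\}$ and $\{\op a_{\vec p,r},\op b^\dagger_{\bar{\vec k},\bar s}\}$ vanish, and $\{\op a_{\vec p,r},\op a^\dagger_{\vec p',r'}\}=\delta_{\vec p,\vec p'}\delta_{r,r'}\,\id$ supplies the ladder-ladder contraction. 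Each non-vanishing anticommutator spawns a contraction; the ``leftover'' term has $\op a_{\vec p,r}$ adjacent to $\ket 0$ and so dies. The induction hypothesis then applies to the residual expectation value, which contains two fewer operators. Symmetric arguments handle the $\op b_{\bar{\vec p},\bar r}$ operators on the left as well as the creation operators sitting to the right of the normal-ordered block. Contractions between operators inside a common normal-ordered subproduct never arise in this procedure, because the normal-ordering prescription was already applied when \cref{thm: Wick spinor t-o} was used to produce the residual matrix elements; this matches the exclusion stated in the theorem.

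Third, the sign bookkeeping needs checking. Each transposition of two adjacent fermionic operators contributes a factor of $(-1)$, so the cumulative sign picked up when pulling $\op a_{\vec p,r}$ adjacent to the operator with which it will be contracted equals the parity of the number of fermionic operators it has to pass. This coincides with the ``adjacent-contraction'' rule of the theorem: once the contracted pair is factored out as a scalar, the relative order of the remaining operators is undisturbed, so the recursive sign from the induction is in one-to-one correspondence with the sign prescribed by the convention. The main obstacle is precisely this sign tracking---in particular verifying that the three flavors of contractions (field-field, ladder-ladder, and mixed) all enter with a single coherent sign rule---and I would handle it by always anticommuting the rightmost unplaced annihilation operator, which guarantees that no extra permutations of previously-contracted pairs occur and that the accumulated signs match the final adjacency rule term by term.
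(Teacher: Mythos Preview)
Your plan is correct and matches the paper's approach closely: the paper also first applies \cref{thm: Wick spinor t-o} to reduce to normal-ordered blocks sandwiched between ladder operators, and then disposes of the ladder operators by anticommuting them through the block, picking up exactly the mixed and ladder-ladder contractions you list. The only organizational difference is that the paper packages the second step as a standalone lemma (an identity of operators, not just of vacuum expectation values) and proves it by explicitly decomposing each field into its $\Psi_a$ and $\Psi_b$ parts to track how the annihilation operator threads through the normal-ordered product; your induction on the total number of operators with ``anticommute the rightmost unplaced annihilation operator all the way right'' is equivalent, and your sign argument is the same as the paper's.
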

\noindent{}A proof is given in \cref{sec: proofs wick spinor}.

\subsubsection{UDW-type detector}\label{sec: Wick detector}

Last but not least, we would like to compute vacuum expectation values of the from given in \cref{eqn: general correlator detector},
\begin{equation}\label{eqn: general correlator detector 2}
  \braket{g|[\sigma^{-}]^{a}[\,T\op{\mu}(t_{1})\cdots\op{\mu}(t_{n})\,]\,[\sigma^{+}]^{b}|g}
\end{equation}
As before, we decompose the monopole operator into a creation and an annihilation part,
\begin{align}
  \mu_{-}(t) &\define e^{-\ii  \Omega t}\,\sigma^{-}~, & \mu_{+}(t) &\define e^{+\ii  \Omega t} \,\sigma^{+}~,
\end{align}
such that
\begin{equation}
  \mu(t) = \mu_{-}(t) + \mu_{+}(t)~.
\end{equation}

\paragraph{Time-ordered products of monopole operators}
For two monopole operators, normal-ordering is swiftly accomplished and yields
\begin{equation}
  \mu(t_{1})\mu(t_{2}) = \normalord \mu(t_{1}) \mu(t_{2}) \normalord + \,\{\mu_{-}(t_{1}),\mu_{+}(t_{2})\}~,
\end{equation}
where
\begin{equation}
  \{\mu_{-}(t_{1}),\mu_{+}(t_{2})\} = \id\,e^{-\ii  \Omega(t_{1}-t_{2})}~.
\end{equation}
Now consider what happens if the product of monopole operators is time-ordered. For example,
\begin{equation}
  T\mu(t_{1})\mu(t_{2}) =  \normalord \mu(t_{1}) \mu(t_{2}) \normalord + \id\,D_{F}(t_{1}-t_{2})~,
\end{equation}
using the definition of the Feynman propagator \cref{eqn: Feynman P detector 1}.
Since the monopole operator $\mu(x)$ (as a field) has basically the same structure as the quantized spinor field, we can directly conclude that the generalization to $n$ operators is given by

\begin{thm}[Wick's theorem for time-ordered monopole operators]\label{thm: Wick detector t-o}
  A product of $n$ time-ordered monopole operators $\mu$ can be rewritten as a sum of normal-ordered products as follows:
  \begin{align}\label{eqn: Wick detector t-o}
    &T\mu(t_{1}) \cdots \mu(t_{n}) \nonumber\\
    &= \normalord \mu(t_{1}) \cdots \mu(t_{n})  \normalord  \nonumber\\
    &\hphantom{={}}+ \sum_{\mathclap{\substack{\text{single}\\\text{contractions}}}}
    \contraction{\normalord \mu(t_{1}) \cdots}{\mu}{(t_{i}) \cdots }{\mu}%
    \normalord \mu(t_{1}) \cdots \mu(t_{i}) \cdots \mu(t_{j}) \cdots \mu(t_{n}) \normalord \nonumber\\
    &\hphantom{={}}+ \sum_{\mathclap{\substack{\text{double}\\\text{contractions}}}}
    \contraction{\normalord \mu(t_{1}) \cdots }{\mu}{(t_{i}) \cdots \mu(t_{k})\cdots }{\mu}
    \bcontraction{\normalord \mu(t_{1}) \cdots \mu(t_{i}) \cdots }{\mu}{(t_{k})\cdots \mu(t_{j})\cdots }{\mu}
    \normalord \mu(t_{1}) \cdots \mu(t_{i}) \cdots \mu(t_{k})\cdots \mu(t_{j})\cdots \mu(t_{l}) \cdots \nonumber\\
    &\hphantom{={}+}\cdots \mu(t_{n})\normalord + \cdots~.
  \end{align}
  The contraction of adjacent operators is defined as
  \begin{equation}
    \contraction{}{\mu}{(t_{i})}{\mu} \mu(t_{i})\mu(t_{j}) \define \id\,D_{F}(t_{i}-t_{j})~.
  \end{equation}
  If contracted operators are not adjacent, the term has to be reordered until they are, with every exchange of two operators introducing an overall factor of $(-1)$ to the term.
\end{thm}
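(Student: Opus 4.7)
The plan is to proceed by induction on $n$, mimicking the standard proof of the fermionic Wick theorem (\cref{thm: Wick spinor t-o}) and exploiting the fact that, as emphasised in \cref{sec: Wick detector}, the monopole operator $\mu(t)$ is structurally a single-mode fermionic field. Concretely, one has the decomposition $\mu(t) = \mu_+(t) + \mu_-(t)$ with $\{\mu_\pm(t),\mu_\pm(s)\} = 0$ and $\{\mu_-(t),\mu_+(s)\} = \id\,e^{-\ii\Omega(t-s)}$, and the ground state $\ket{g}$ plays the role of the vacuum in the sense that $\mu_-(t)\ket{g} = 0$ and $\bra{g}\mu_+(t) = 0$. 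The propagator $D_F$ defined in \cref{eqn: def Feynman P detector} is precisely the anticommutator of the two parts, dressed with the appropriate step functions.

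The base case $n=2$ is essentially already established in the discussion preceding the theorem: splitting each $\mu$ into $\mu_\pm$, using the anticommutator above, and comparing with the explicit form \cref{eqn: Feynman P detector 1} of $D_F$ yields $T\mu(t_1)\mu(t_2) = \normalord \mu(t_1)\mu(t_2) \normalord + \id\,D_F(t_1-t_2)$. For the inductive step I would, without loss of generality, relabel the arguments so that $t_1 \geq t_2 \geq \cdots \geq t_{n+1}$; any other ordering follows by permuting operators inside $T$, at the cost of overall signs that will match the $(-1)$-per-exchange rule in the statement. Then $T\mu(t_1)\cdots\mu(t_{n+1}) = \mu(t_1)\,T\mu(t_2)\cdots\mu(t_{n+1})$, and the inductive hypothesis applied to the second factor rewrites it as a sum of normal-ordered terms with all possible contractions among $t_2,\dots,t_{n+1}$. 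Next I split $\mu(t_1) = \mu_+(t_1) + \mu_-(t_1)$: the creation piece $\mu_+(t_1)$ can be absorbed into the leftmost slot of every normal-ordered summand, while the annihilation piece $\mu_-(t_1)$ is anticommuted all the way to the right through each remaining operator. Every anticommutation with a $\mu_+(t_k)$ contributes a new contraction $\{\mu_-(t_1),\mu_+(t_k)\} = \id\,e^{-\ii\Omega(t_1-t_k)} = \id\,D_F(t_1-t_k)$ (which is correct because $t_1 \geq t_k$), while the residual term in which $\mu_-(t_1)$ has been moved fully to the right vanishes upon reading off on $\ket{g}$.

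The main obstacle will be the careful bookkeeping of fermionic signs: each anticommutation of $\mu_-(t_1)$ across an intermediate operator costs a minus sign, and in order to cast the result in the form appearing in the statement — where contracted operators are brought adjacent inside $\normalord\;\normalord$ — one must verify that these signs conspire to reproduce exactly the $(-1)$-per-exchange rule. The key consistency check is that $D_F(t-s) = -D_F(s-t)$, so swapping the two contracted operators inside a normal-ordered product precisely compensates the sign collected when they are exchanged, making the final formula independent of the initial decreasing ordering. Once this sign tracking is settled, the statement for arbitrarily ordered $t_i$ follows by permuting operators back inside $T$ using the same convention. As a faster alternative, one may instead directly specialise \cref{thm: Wick spinor t-o} under the identifications $\Psi,\conj\Psi \leftrightarrow \mu$ and $\ket{0}\leftrightarrow\ket{g}$, after checking that the contraction rule reduces to $\id\,D_F(t_i-t_j)$ for the single self-conjugate mode represented by the detector.
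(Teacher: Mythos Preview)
Your faster alternative --- specialising \cref{thm: Wick spinor t-o} under the identification of $\mu$ with a single-mode, self-conjugate fermionic field --- is exactly how the paper proceeds: it simply observes that $\mu(x)$ has the same algebraic structure as the quantized spinor field and concludes that the fermionic Wick theorem carries over verbatim, with the contraction reducing to $\id\,D_F(t_i-t_j)$. So that route is both correct and aligned with the paper.

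Your inductive argument, however, contains a genuine slip. The identity \cref{eqn: Wick detector t-o} is an \emph{operator} identity, not a statement about matrix elements in $\ket{g}$. Hence the residual term in which $\mu_-(t_1)$ has been anticommuted all the way to the right does \emph{not} vanish; there is no $\ket{g}$ to act on. That residual term is precisely what you need: once $\mu_-(t_1)$ sits to the right of all remaining operators it is already in normal-ordered position, and (after absorbing the accumulated sign) it recombines with the $\mu_+(t_1)$ piece you already placed on the left to reconstitute the fully uncontracted summand $\normalord \mu(t_1)\cdots\mu(t_{n+1})\normalord$. If you discard it, your right-hand side is left with only $\normalord \mu_+(t_1)\cdots\normalord$ plus contractions, which is not the Wick expansion. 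The fix is easy: keep the residual term, note that $\normalord A_1\cdots A_k\normalord\,\mu_-(t_1) = (-1)^k\,\normalord \mu_-(t_1)A_1\cdots A_k\normalord$ for fermionic normal ordering, and add it to the $\mu_+(t_1)$ contribution. With that correction the induction goes through and is equivalent to the paper's (much terser) argument.
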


Similar to the spinor field, this last remark implies, for example,
\begin{multline}
  \contraction{\normalord }{\mu}{(t_{1})\mu(t_{2})}{\mu}%
  \bcontraction{\normalord \mu(t_{1})}{\mu}{(t_{2})\mu(t_{3})}{\mu}%
  \normalord \mu(t_{1})\mu(t_{2})\mu(t_{3})\mu(t_{4})\normalord = 
  -\contraction{\normalord }{\mu}{(t_{1})}{\mu}%
  \bcontraction{\normalord \mu(t_{1})\mu(t_{3})}{\mu}{(t_{2})}{\mu}%
  \normalord \mu(t_{1})\mu(t_{3})\mu(t_{2})\mu(t_{4})\normalord \\
  = -\id\,e^{-\ii  \Omega(t_{1}+t_{2}-t_{3}-t_{4})}~.
\end{multline}

\subsubsection{Full expectation value}

When prepending zero or one annihilation operator $\sigma^{-}$ and appending zero or one creation operator $\sigma^{+}$ and taking the vacuum expectation value, more contractions appear:

\begin{thm}[Wick's theorem for the full expectation value of the detector]\label{thm: Wick detector t-o full vev}
  Consider a product of $n$ time-ordered monopole operators $\mu$ with $a\in\{0,1\}$ annihilation operators and $b\in\{0,1\}$ creation operators. Its vacuum expectation value is
  \begin{multline}\label{eqn: Wick detector t-o full vev}
    \braket{g|[\sigma^{-}]^{a}[\,T\op{\mu}(t_{1})\cdots\op{\mu}(t_{n})\,]\,[\sigma^{+}]^{b}|g} \\
    = \sum_{\mathclap{\substack{\text{all full}\\\text{contractions}}}}%
    \contraction{}{\sigma}{^{-} \cdots }{\mu}%
    \contraction{\sigma^{-} \cdots \mu(t_{i}) \cdots }{\mu}{(t_{k}) \cdots \mu(t_{j}) \cdots }{\mu}%
    \bcontraction{\sigma^{-} \cdots \mu(t_{i}) \cdots \mu(t_{k}) \cdots }{\mu}{(t_{j}) \cdots \mu(t_{l})  \cdots }{\sigma}%
    \sigma^{-} \cdots \mu(t_{i}) \cdots \mu(t_{k}) \cdots \mu(t_{j}) \cdots \mu(t_{l})  \cdots \sigma^{+}~.
  \end{multline}
  The contraction between two adjacent monopole fields is defined as
  \begin{equation}\label{eqn: detector full vev contractions 1}
    \contraction{}{\mu}{(t_{i})}{\mu} \mu(t_{i})\mu(t_{j}) \define \id\,D_{F}(t_{i}-t_{j})~,
  \end{equation}
  the contraction between ladder operator and monopole field
  \begin{align}\label{eqn: detector full vev contractions 2}
    \contraction{}{\sigma}{^{-}}{\mu}%
    \sigma^{-} \mu(t) &= \id\,e^{+\ii  \Omega t}~, &
    \contraction{}{\mu}{(t) }{\sigma}%
    \mu(t) \sigma^{+}  &= \id\,e^{-\ii  \Omega t},
  \end{align}
  and finally the contraction between ladder operators
  \begin{equation}\label{eqn: detector full vev contractions 3}
    \contraction{}{\sigma}{^{-} }{\sigma} \sigma^{-} \sigma^{+} = \id\,~.
  \end{equation}
  If contracted operators are not adjacent, the term has to be reordered until they are, with every exchange of two operators introducing an overall factor of $(-1)$ to the term.
\end{thm}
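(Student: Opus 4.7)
The plan is to bootstrap from the already-established Theorem~\ref{thm: Wick detector t-o} (Wick's theorem for a bare time-ordered product of monopole operators) and then process the prepended $[\sigma^-]^a$ and appended $[\sigma^+]^b$ by moving them inside the normal-ordered products via anticommutation, taking full advantage of the fact that $\sigma^-\ket{g}=0$ and $\bra{g}\sigma^+=0$. Concretely, I would first apply Theorem~\ref{thm: Wick detector t-o} to expand $T\mu(t_1)\cdots\mu(t_n)$ as a sum over partial $\mu\mu$--contractions, each accompanied by a normal-ordered string $\normalord \mu(t_{i_1})\cdots\mu(t_{i_k}) \normalord$ in the remaining arguments. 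Sandwiching between $\bra{g}[\sigma^-]^a$ and $[\sigma^+]^b\ket{g}$ then reduces the problem to evaluating expressions of the form $\bra{g}[\sigma^-]^a \normalord \mu(t_{i_1})\cdots\mu(t_{i_k}) \normalord [\sigma^+]^b\ket{g}$.

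Next I would evaluate these reduced matrix elements directly. Since $\mu(t)=\sigma^+e^{+\ii\Omega t}+\sigma^-e^{-\ii\Omega t}$ and the ladder operators satisfy $(\sigma^\pm)^2=0$ (recall \cref{eqn: acr ladd op monopole}), the only summands of the normal-ordered string that can give a nonvanishing contribution when bracketed with $\ket{g}$ are those with at most one $\sigma^+$ on the far left and at most one $\sigma^-$ on the far right. Using the anticommutation relations to move the external $[\sigma^-]^a$ into the string from the left and the external $[\sigma^+]^b$ into the string from the right, one sees that a nonvanishing contribution requires each external $\sigma^-$ to pair (via $\{\sigma^-,\sigma^+\}=\id$) with exactly one $\mu(t_j)\to \sigma^+e^{+\ii\Omega t_j}$ appearing in the normal-ordered string, producing the factor $e^{+\ii\Omega t_j}$ stated in \cref{eqn: detector full vev contractions 2}; similarly for each external $\sigma^+$. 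If $a=b=1$, the direct contraction $\sigma^-\sigma^+$ across an empty normal-ordered string contributes the unit value listed in \cref{eqn: detector full vev contractions 3}.

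The hardest part of the argument will be the sign bookkeeping, and this is where I would spend most of the care. Every exchange of two fermionic operators during the anticommutation of an external $\sigma^\pm$ past a monopole introduces a factor of $(-1)$; the same is true for rearranging contracted pairs to be adjacent in the diagrammatic notation. These signs combine with the sign rule already present in Theorem~\ref{thm: Wick detector t-o} for the internal $\mu\mu$--contractions, and I would verify that the total sign is exactly the $(-1)$-per-swap rule asserted in the statement. Summing over all ways of partially contracting the $\mu$'s and then all ways of pairing the external $\sigma^\pm$ with the remaining $\mu$'s (or with each other) reproduces the sum over all full contractions in \cref{eqn: Wick detector t-o full vev}. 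Alternatively, and perhaps more cleanly, the entire claim can be obtained as a corollary of Theorem~\ref{thm: Wick spinor t-o full vev} by identifying the monopole with a single-mode, spinless, Majorana-like fermionic field with ladder operators $\sigma^\pm$ and mode function $e^{\pm\ii\Omega t}$; under this identification the contractions \cref{eqn: detector full vev contractions 1,eqn: detector full vev contractions 2,eqn: detector full vev contractions 3} are precisely the specializations of the spinor contractions \cref{eqn: spinor field full vev contractions 1,eqn: spinor field full vev contractions 2,eqn: spinor field full vev contractions 3}, and the sign convention carries over verbatim.
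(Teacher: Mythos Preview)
Your proposal is correct and follows essentially the same approach as the paper: the paper first applies Theorem~\ref{thm: Wick detector t-o}, then handles the external $\sigma^\pm$ on normal-ordered strings via an operator identity (Lemma~\ref{lem: towards Wick detector t-o full vev}) whose proof is explicitly stated to run ``analogous to the proof of \cref{lem: towards Wick spinor t-o full vev}, but \dots simpler since there is only one type of ladder operator \dots and the monopole operator is self-adjoint''---which is precisely your alternative route of specializing the spinor-field argument. The only formal difference is that the paper packages the intermediate step as a standalone lemma at the operator level, whereas you evaluate the vacuum matrix element directly; the content and the sign bookkeeping are the same.
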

\noindent{}The proof is presented in \cref{sec: proofs wick detector}.

\subsection{Diagrammatic computation}\label{sec: diagrammatic}

A usual way to visualize the contractions between field operators in the above theorems, as well as the next step towards Feynman diagrams, is the use of line diagrams. Operators in a given product are represented by points (vertices), and contractions are visualized as lines between the vertices. The vacuum expectation value of a given product of operators can be obtained in a structured manner by drawing such a diagram for every nontrivial full contraction (meaning that there are no wholly unconnected vertices), assigning the corresponding contraction as a factor to each line, and then summing up the terms obtained from each diagram. 

To distinguish them from the fully fledged Feynman diagrams that we will develop in \cref{sec: feynman rules}, we will in the following sometimes refer to these diagrams as \emph{auxiliary diagrams}.

\subsubsection{UDW-type detector}\label{sec: auxiliary diagrammatic detector}

The following set of rules can be applied as a help in evaluating expectation values of the form \cref{eqn: general correlator detector 2} using \cref{thm: Wick detector t-o full vev}:

\begin{WD}
\setdefaultenum{1.}{}{}{} 
\hfill\begin{compactenum}
  \item If a creation operator is present, draw a vertex on the left end of the diagram. If an annihilation operator is present, draw a vertex on the right end of the diagram.  These vertices will be referred to as \emph{external vertices} (\emph{ingoing} on the left  and \emph{outgoing} on the right).
  \item In the middle, draw a horizontal series of vertices, one for each $t_{i}$, starting with $t_{n}$ on the left, and label them accordingly. These vertices will be referred to as \emph{internal vertices}.
  \item To account for the nontrivial contractions \cref{eqn: detector full vev contractions 1,eqn: detector full vev contractions 2,eqn: detector full vev contractions 3}, connect all vertices by lines such that each external and internal vertex has one line attached to it.
  \item Draw one diagram for every way to connect all vertices using the rules in 3, such that no vertex is left completely unconnected. If this is impossible, the expectation value is zero.
  \item Associate the following factors to each line in a diagram:
  \begin{compactitem}
    \item $1$ to a line connecting two external vertices
    \item $e^{-\ii  \Omega t_{i}}$ to a line connecting an ingoing vertex with an internal vertex $t_{i}$
    \item $e^{+\ii  \Omega t_{i}}$ to a line connecting an outgoing vertex with an internal vertex $t_{i}$
    \item $D_{F}(t_{i}-t_{j})$ to a line connecting two internal vertices $t_{i}$, $t_{j}$; since $D_{F}(t_{i}-t_{j}) = -D_{F}(t_{j}-t_{i})$, and the overall sign is determined in other ways, the order is irrelevant.
  \end{compactitem}
  \item To obtain the correct overall sign of a diagram, fall back to the interaction picture and reorder the operators such that contracted pairs are adjacent, as described in \cref{thm: Wick detector t-o full vev}.
  \item The vacuum expectation value is obtained by summing up the values associated with each diagram.
\end{compactenum}
\end{WD}

\paragraph{Example}
Assume the detector is excited both at the beginning and at the end. Including second order corrections to the amplitude, one encounters the expectation value
\begin{equation}\label{eqn: detector example 2}
  E = \braket{0|\sigma^{-}\,[\,T \mu(t_{1}) \mu(t_{2})\,]\,\sigma^{+}|0}~.
\end{equation}
There are three full contractions which can be written as follows:
\begin{multline}
  E = \contraction{\bra{0}}{\sigma}{^{-}\,[\,T }{\mu}%
  \contraction{\bra{0}\sigma^{-}\,[\,T \mu(t_{1}) }{\mu}{(t_{2})\,]\,}{\sigma}%
  \bra{0}\sigma^{-}\,[\,T \mu(t_{1}) \mu(t_{2})\,]\,\sigma^{+}\ket{0}\\ 
  + \contraction{\bra{0}}{\sigma}{^{-}\,[\,T \mu(t_{1}) }{\mu}%
  \bcontraction{\bra{0}\sigma^{-}\,[\,T }{\mu}{(t_{1}) \mu(t_{2})\,]\,}{\sigma}%
  \bra{0}\sigma^{-}\,[\,T \mu(t_{1}) \mu(t_{2})\,]\,\sigma^{+}\ket{0}\\
  + \contraction{\bra{0}}{\sigma}{^{-}\,[\,T \mu(t_{1}) \mu(t_{2})\,]\,}{\sigma}%
  \bcontraction{\bra{0}\sigma^{-}\,[\,T }{\mu}{(t_{1}) }{\mu}
  \bra{0}\sigma^{-}\,[\,T \mu(t_{1}) \mu(t_{2})\,]\,\sigma^{+}\ket{0}\\
  \contraction{}{\sigma}{^{-}}{\mu}%
  \contraction{\sigma^{-}\mu(t_{1}) }{\mu}{(t_{2})}{\sigma}%
  \sigma^{-}\mu(t_{1}) \mu(t_{2})\sigma^{+}
  - \contraction{}{\sigma}{^{-}}{\mu}%
  \contraction{\sigma^{-} \mu(t_{2}) }{\mu}{(t_{2})}{\sigma}%
  \sigma^{-} \mu(t_{2}) \mu(t_{1}) \sigma^{+}
  \\+ \contraction{}{\sigma}{^{-}}{\sigma^{+}}%
  \contraction{\sigma^{-}\sigma^{+} }{\mu}{(t_{1})}{\mu}%
  \sigma^{-}\sigma^{+} \mu(t_{1}) \mu(t_{2})\\
  = e^{+\ii  \Omega (t_{1}-t_{2})}-e^{-\ii  \Omega(t_{1}-t_{2})}+D_{F}(t_{1}-t_{2})~,
\end{multline}
according to \cref{thm: Wick detector t-o full vev}.
Up to the signs, the same is obtained from the three corresponding diagrams shown in \cref{fig: Wick detector example 2}.

\begin{figure}
  \includegraphics{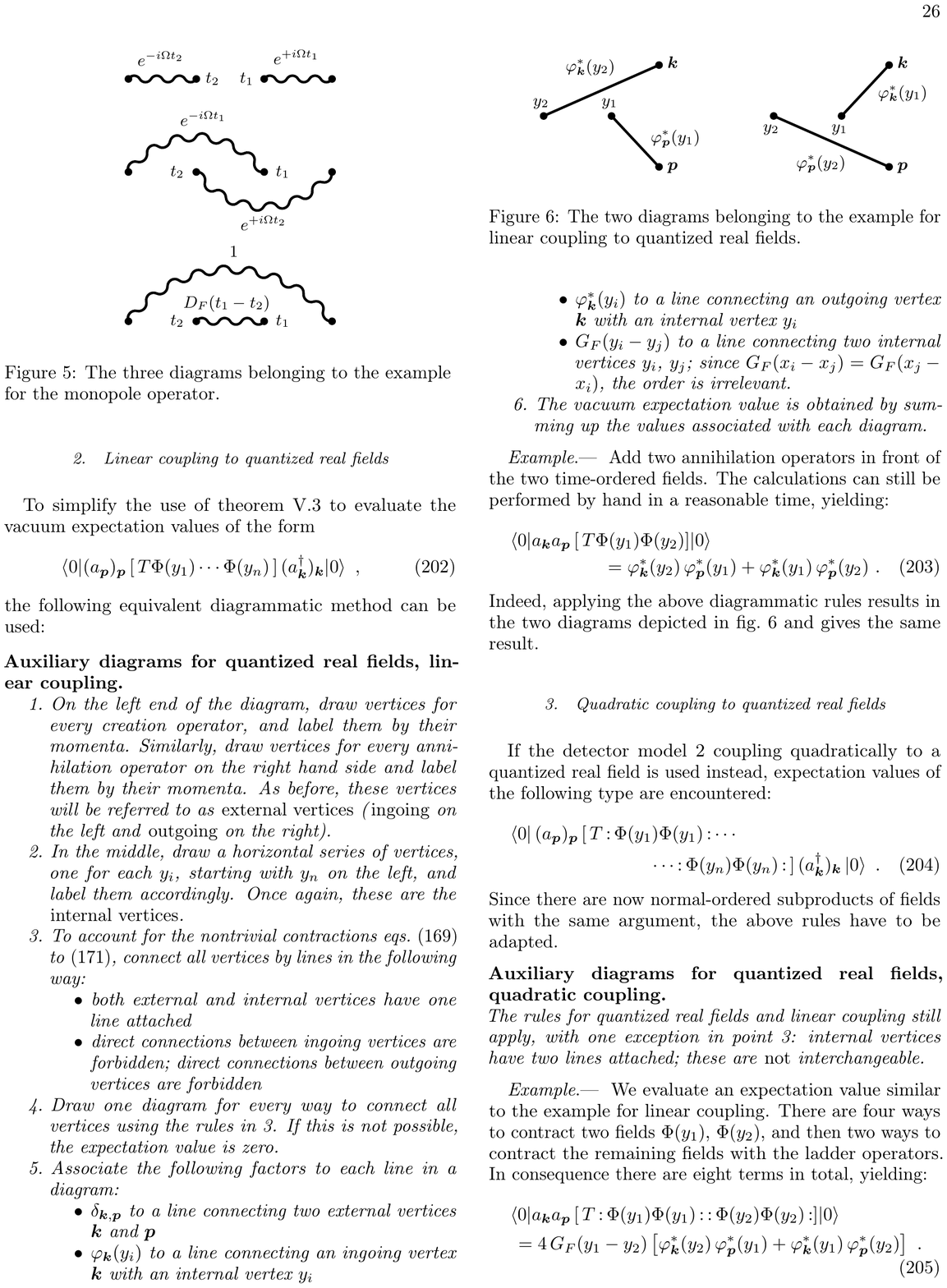}
  \caption{The three diagrams belonging to the example for the monopole operator.}
  \label{fig: Wick detector example 2}
\end{figure}

\subsubsection{Linear coupling to quantized real fields}\label{sec: Wick real linear}

To simplify the use of \cref{thm: Wick real t-o full vev} to evaluate the vacuum expectation values of the form 
\begin{equation}
  \braket{0|(\op{a}_{\vec{p}})_{\vec{p}}\,[\,T\op{\Phi}(y_{1}) \cdots \op{\Phi}(y_{n})\,]\,(\op{a}^{\dagger}_{\vec{k}})_{\vec{k}}|0}~,
\end{equation}
the following equivalent diagrammatic method can be used:

\begin{RL}
  \hfill
  \setdefaultenum{1.}{}{}{} 
  \begin{compactenum}
    \item On the left end of the diagram, draw vertices for every creation operator, and label them by their momenta. Similarly, draw vertices for every annihilation operator on the right-hand side and label them by their momenta. As before, these vertices will be referred to as \emph{external vertices} (\emph{ingoing} on the left  and \emph{outgoing} on the right).
    \item In the middle, draw a horizontal series of vertices, one for each $y_{i}$, starting with $y_{n}$ on the left, and label them accordingly. Once again, these are the \emph{internal vertices}.
    \item To account for the nontrivial contractions \cref{eqn: real field full vev contractions 1,eqn: real field full vev contractions 2,eqn: real field full vev contractions 3}, connect all vertices by lines in the following way:
    \begin{compactitem}
      \item both external and internal vertices have one line attached
      \item direct connections between ingoing vertices are forbidden; direct connections between outgoing vertices are forbidden
    \end{compactitem}
    \item Draw one diagram for every way to connect all vertices using the rules in 3. If this is not possible, the expectation value is zero.
    \item Associate the following factors to each line in a diagram:
    \begin{compactitem}
      \item $\delta_{\vec{k},\vec{p}}$ to a line connecting two external vertices $\vec{k}$ and $\vec{p}$
      \item $\varphi_{\vec{k}}(y_{i})$ to a line connecting an ingoing vertex $\vec{k}$ with an internal vertex $y_{i}$
      \item $\varphi_{\vec{k}}^{*}(y_{i})$ to a line connecting an outgoing vertex $\vec{k}$ with an internal vertex $y_{i}$
      \item $G_{F}(y_{i}-y_{j})$ to a line connecting two internal vertices $y_{i}$, $y_{j}$; since $G_{F}(x_{i}-x_{j}) = G_{F}(x_{j}-x_{i})$, the order is irrelevant.
    \end{compactitem}
    \item The vacuum expectation value is obtained by summing up the values associated with each diagram.
  \end{compactenum}
  \setdefaultenum{(a)}{}{}{} 
\end{RL}

\paragraph{Example}
Add two annihilation operators in front of the two time-ordered fields. The calculations can still be performed by hand in a reasonable time, yielding:
\begin{multline}\label{eqn: real linear example 2}
  \braket{0|a_{\vec{k}}\op{a}_{\vec{p}}\,[\,T \op{\Phi}(y_{1})  \op{\Phi}(y_{2}) ]|0} \\= \varphi^{*}_{\vec{k}}(y_{2})\,\varphi^{*}_{\vec{p}}(y_{1}) + \varphi^{*}_{\vec{k}}(y_{1})\,\varphi^{*}_{\vec{p}}(y_{2})~.
\end{multline}
Indeed, applying the above diagrammatic rules results in the two diagrams depicted in \cref{fig: Wick real linear example 2} and gives the same result.

\begin{figure}
  \includegraphics{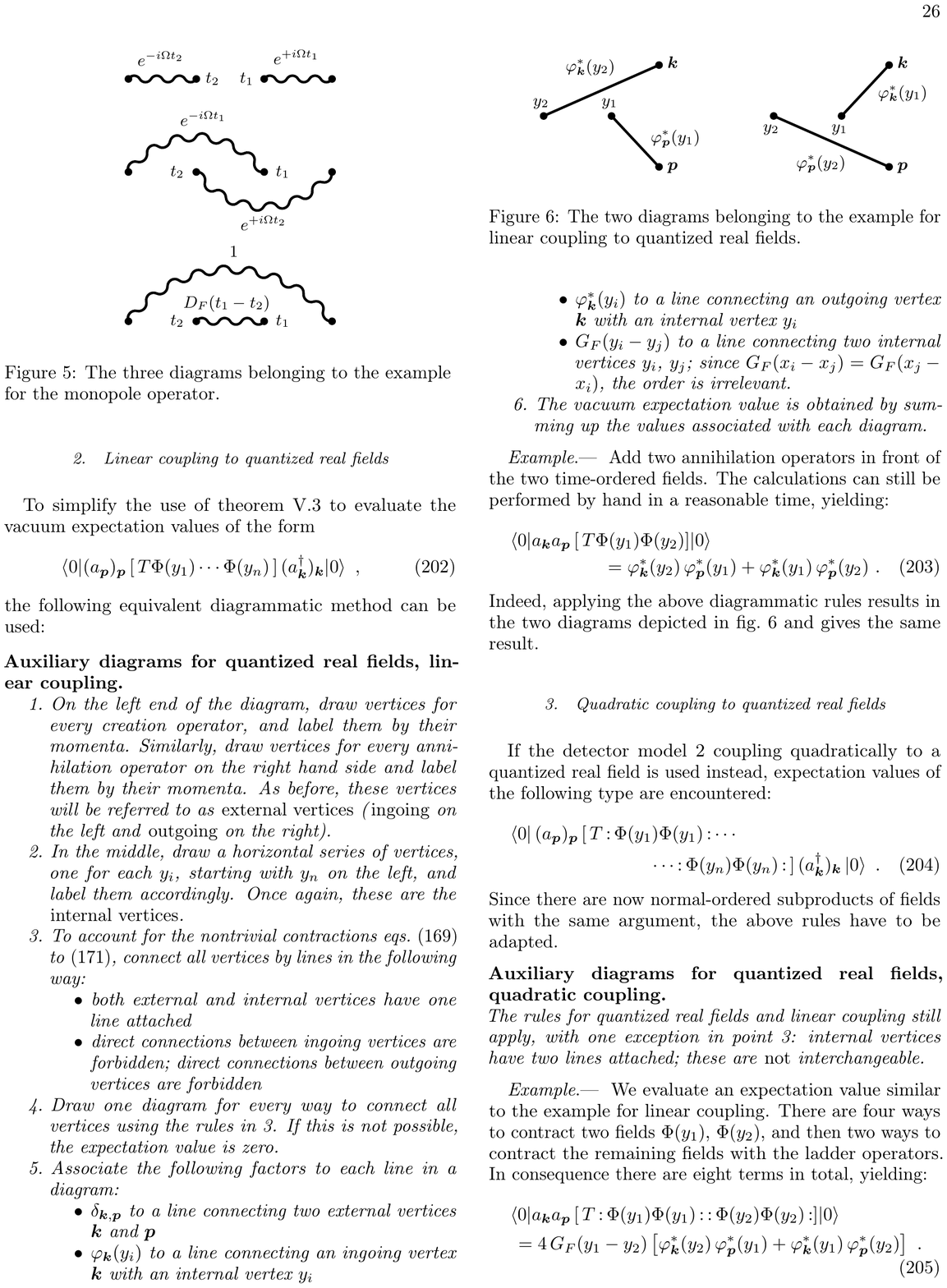}
  \caption{The two diagrams belonging to the example for linear coupling to quantized real fields.}
  \label{fig: Wick real linear example 2}
\end{figure}

\subsubsection{Quadratic coupling to quantized real fields}\label{sec: auxiliary diagrammatic real quad}

If the detector model 2 coupling quadratically to a quantized real field is used instead, expectation values of the following type are encountered:
\begin{multline}
  \bra{0}(\op{a}_{\vec{p}})_{\vec{p}}\,[\,T\normalord \op{\Phi}(y_{1}) \op{\Phi}(y_{1}) \normalord \cdots \\ \cdots \normalord \op{\Phi}(y_{n}) \op{\Phi}(y_{n}) \normalord\,]\,(\op{a}^{\dagger}_{\vec{k}})_{\vec{k}}\ket{0}~.
\end{multline}
Since there are now normal-ordered subproducts of fields with the same argument, the above rules have to be adapted.

\begin{RQ}
  \hfill\\
  The rules for quantized real fields and linear coupling still apply, with one exception in point 3: internal vertices have two lines attached; these are \emph{not} interchangeable.
\end{RQ}

\paragraph{Example}
We evaluate an expectation value similar to the example for linear coupling. There are four ways to contract two fields $\Phi(y_{1})$, $\Phi(y_{2})$, and then two ways to contract the remaining fields with the ladder operators. In consequence there are eight terms in total, yielding:
\begin{multline}\label{eqn: Wick real quad example 1}
  \braket{0|a_{\vec{k}}\op{a}_{\vec{p}}\,[\,T\normalord \op{\Phi}(y_{1}) \op{\Phi}(y_{1}) \normalord \normalord \op{\Phi}(y_{2}) \op{\Phi}(y_{2}) \normalord]|0} \\
  = 4\,G_{F}(y_{1}-y_{2})\left[\varphi^{*}_{\vec{k}}(y_{2})\,\varphi^{*}_{\vec{p}}(y_{1}) + \varphi^{*}_{\vec{k}}(y_{1})\,\varphi^{*}_{\vec{p}}(y_{2}) \right]~.
\end{multline}
The corresponding diagrams are drawn in \cref{fig: Wick real quad example 1}. The same result can be obtained more economically by dropping the distinction between the two lines attached to each internal vertex, leading to merely two distinct diagrams in \cref{fig: Wick real quad example 1 version 2}. To compensate, we assign an additional symmetry factor of four to the line between the two vertices, which counts the different ways the fields $\Phi(y_{1})$ and $\Phi(y_{2})$ can be contracted while now resulting in the same diagram. In accordance with the convention in particle physics, we will in the following use symmetry factors rather than distinguishing different legs of the same vertex.

\begin{figure}
  \includegraphics{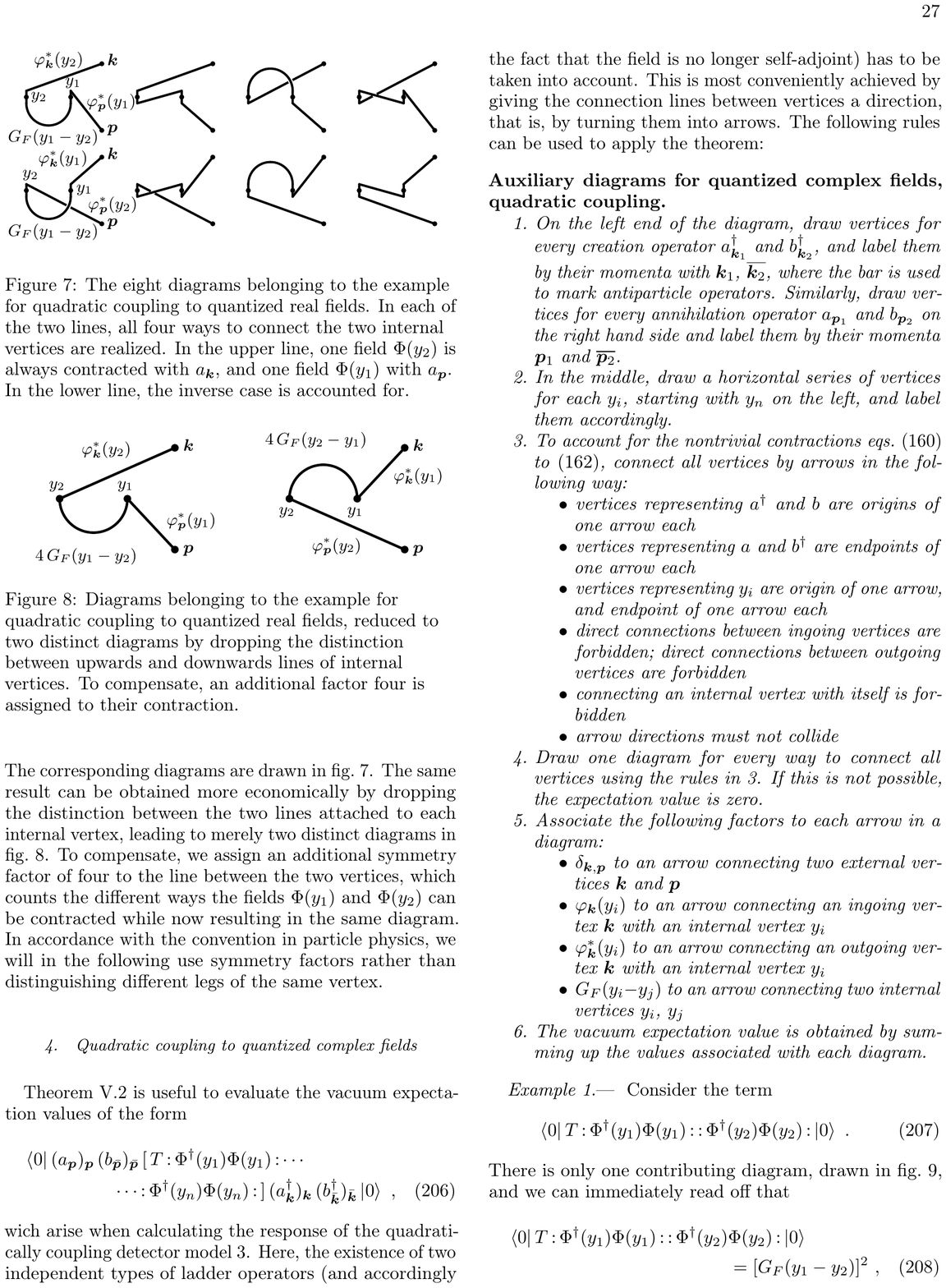}
  \caption{The eight diagrams belonging to the example for quadratic coupling to quantized real fields. In each of the two lines, all four ways to connect the two internal vertices are realized. In the upper line, one field $\Phi(y_{2})$ is always contracted with $a_{\vec{k}}$, and one field $\Phi(y_{1})$ with $a_{\vec{p}}$. In the lower line, the inverse case is accounted for.}
  \label{fig: Wick real quad example 1}
\end{figure}

\begin{figure}
  \includegraphics{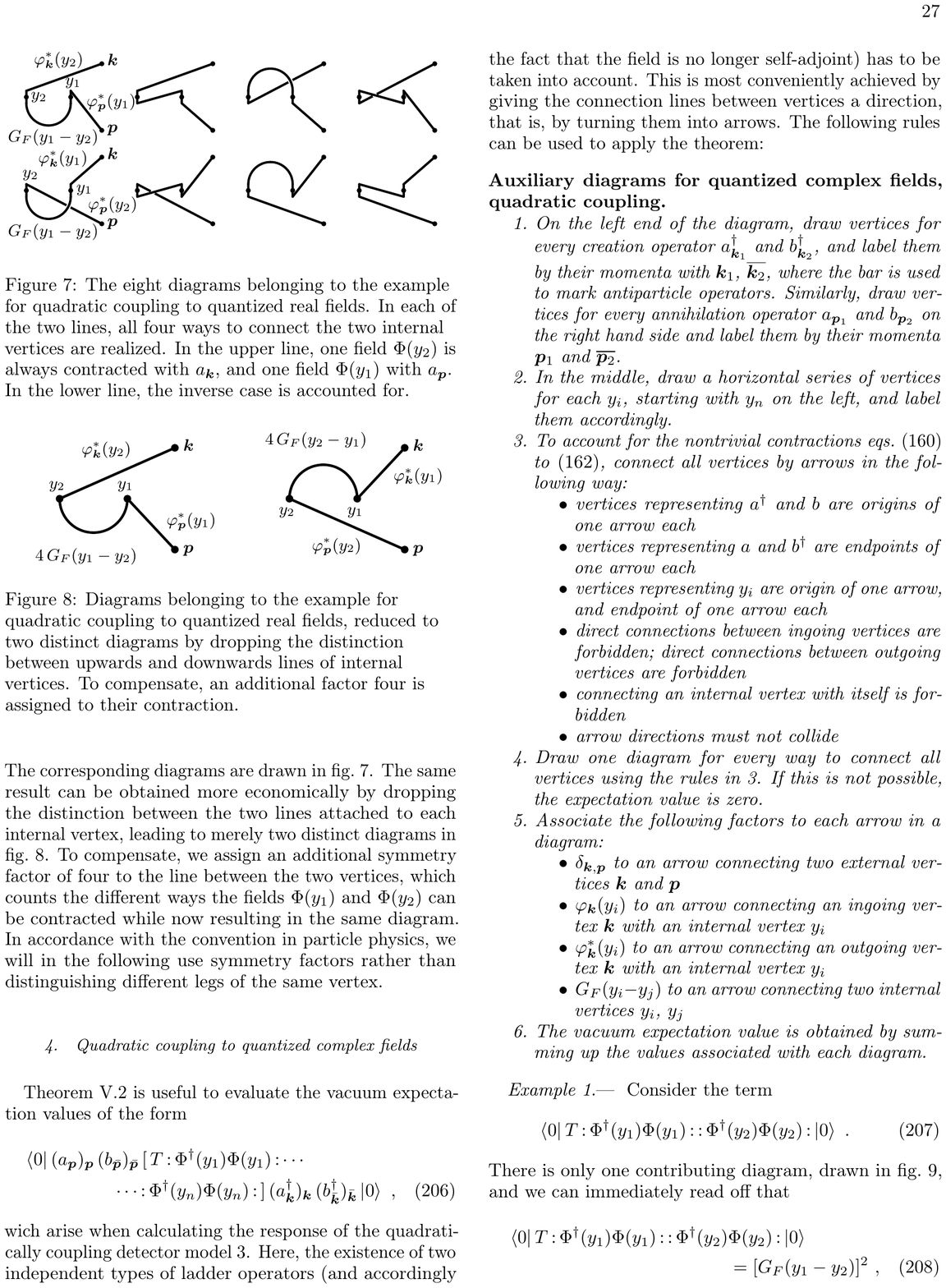}
  \caption{Diagrams belonging to the example for quadratic coupling to quantized real fields, reduced to two distinct diagrams by dropping the distinction between upwards and downwards lines of internal vertices. To compensate, an additional factor four is assigned to their contraction.}
  \label{fig: Wick real quad example 1 version 2}
\end{figure}

\subsubsection{Quadratic coupling to quantized complex fields}\label{sec: auxiliary diagrammatic complex quad}

\Cref{thm: Wick complex t-o full vev} is useful to evaluate the vacuum expectation values of the form
\begin{multline}
  \bra{0}(\op{a}_{\vec{p}})_{\vec{p}}\,(\op{b}_{\bar{\vec{p}}})_{\bar{\vec{p}}}\,[\,T\normalord \op{\Phi}^{\dagger}(y_{1}) \op{\Phi}(y_{1}) \normalord \cdots \\ \cdots \normalord \op{\Phi}^{\dagger}(y_{n}) \op{\Phi}(y_{n}) \normalord\,]\,(\op{a}^{\dagger}_{\vec{k}})_{\vec{k}}\,(\op{b}^{\dagger}_{\bar{\vec{k}}})_{\bar{\vec{k}}}\ket{0}~,
\end{multline}
which arise when calculating the response of the quadratically coupling detector model 3. Here, the existence of two independent types of ladder operators (and accordingly the fact that the field is no longer self-adjoint) has to be taken into account. This is most conveniently achieved by giving the connection lines between vertices a direction, that is, by turning them into arrows. The following rules can be used to apply the theorem:

\begin{CQ}
  \hfill
  \setdefaultenum{1.}{}{}{} 
  \begin{compactenum}
    \item On the left end of the diagram, draw vertices for every creation operator $a^{\dagger}_{\vec{k}_{1}}$ and $b^{\dagger}_{\vec{k}_{2}}$, and label them by their momenta with $\vec{k}_{1}$, $\overline{\vec{k}_{2}}$, where the bar is used to mark antiparticle operators. Similarly, draw vertices for every annihilation operator $a_{\vec{p}_{1}}$ and $b_{\vec{p}_{2}}$ on the right-hand side and label them by their momenta $\vec{p}_{1}$ and $\overline{\vec{p}_{2}}$.
    \item In the middle, draw a horizontal series of vertices for each $y_{i}$, starting with $y_{n}$ on the left, and label them accordingly.
    \item To account for the nontrivial contractions \cref{eqn: complex field full vev contractions 1,eqn: complex field full vev contractions 2,eqn: complex field full vev contractions 3}, connect all vertices by arrows in the following way:
    \begin{compactitem}
      \item vertices representing $\op{a}^{\dagger}$ and $\op{b}$ are origins of one arrow each
      \item vertices representing $\op{a}$ and $\op{b}^{\dagger}$ are endpoints of one arrow each
      \item vertices representing $y_{i}$ are origin of one arrow, and endpoint of one arrow each
      \item direct connections between ingoing vertices are forbidden; direct connections between outgoing vertices are forbidden
      \item connecting an internal vertex with itself is forbidden
      \item arrow directions must not collide
    \end{compactitem}
    \item Draw one diagram for every way to connect all vertices using the rules in 3. If this is not possible, the expectation value is zero.
    \item Associate the following factors to each arrow in a diagram:
    \begin{compactitem}
      \item $\delta_{\vec{k},\vec{p}}$ to an arrow connecting two external vertices $\vec{k}$ and $\vec{p}$
      \item $\varphi_{\vec{k}}(y_{i})$ to an arrow connecting an ingoing vertex $\vec{k}$ with an internal vertex $y_{i}$
      \item $\varphi_{\vec{k}}^{*}(y_{i})$ to an arrow connecting an outgoing vertex $\vec{k}$ with an internal vertex $y_{i}$
      \item $G_{F}(y_{i}-y_{j})$ to an arrow connecting two internal vertices $y_{i}$, $y_{j}$
    \end{compactitem}
    \item The vacuum expectation value is obtained by summing up the values associated with each diagram.
  \end{compactenum}
  \setdefaultenum{(a)}{}{}{} 
\end{CQ}

\paragraph{Example 1}
Consider the term
\begin{equation}\label{eqn: complex quad example 1}
  \braket{0|\,T\normalord \op{\Phi}^{\dagger}(y_{1}) \op{\Phi}(y_{1}) \normalord \normalord \op{\Phi}^{\dagger}(y_{2}) \op{\Phi}(y_{2}) \normalord|0}~.
\end{equation}
There is only one contributing diagram, drawn in \cref{fig: Wick complex quad example 1}, and we can immediately read off that
\begin{multline}
  \braket{0|\,T\normalord \op{\Phi}^{\dagger}(y_{1}) \op{\Phi}(y_{1}) \normalord \normalord \op{\Phi}^{\dagger}(y_{2}) \op{\Phi}(y_{2}) \normalord|0}  \\ = [G_{F}(y_{1}-y_{2})]^{2}~,
\end{multline}
where we have used that $G_{F}(y_{2}-y_{1}) = G_{F}(y_{1}-y_{2})$.
The same result is obtained when doing the ladder operator algebra directly by hand.

\begin{figure}
  \includegraphics{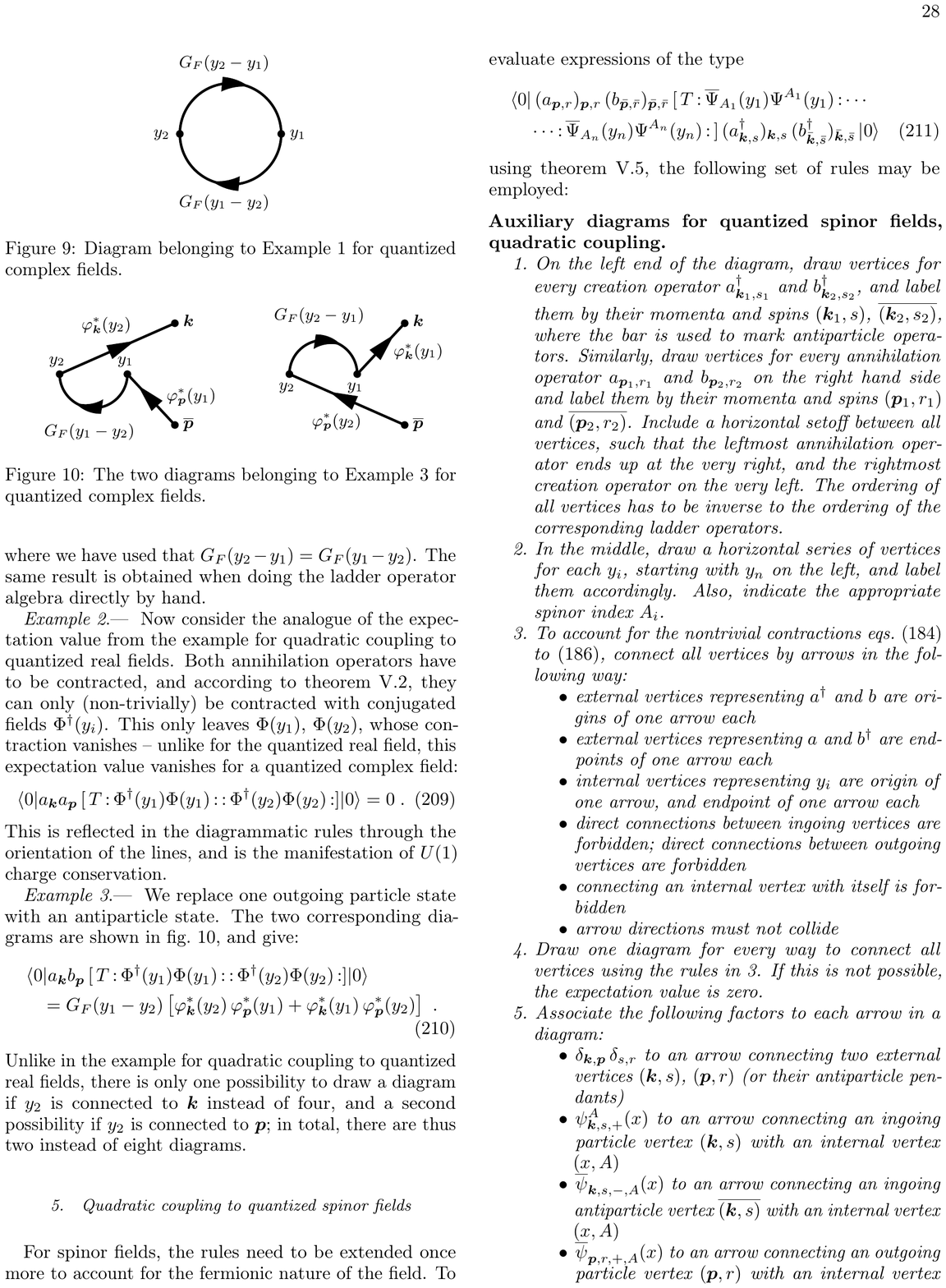}
  \caption{Diagram belonging to Example 1 for quantized complex fields.}
  \label{fig: Wick complex quad example 1}
\end{figure}

\paragraph{Example 2}
Now consider the analogue of the expectation value from the example for quadratic coupling to quantized real fields. Both annihilation operators have to be contracted, and according to \cref{thm: Wick complex t-o full vev}, they can only (nontrivially) be contracted with conjugated fields $\Phi^{\dagger}(y_{i})$. This only leaves $\Phi(y_{1})$, $\Phi(y_{2})$, whose contraction vanishes -- unlike for the quantized real field, this expectation value vanishes for a quantized complex field:
\begin{equation}
  \braket{0|a_{\vec{k}}\op{a}_{\vec{p}}\,[\,T\normalord \op{\Phi}^{\dagger}(y_{1}) \op{\Phi}(y_{1}) \normalord \normalord \op{\Phi}^{\dagger}(y_{2}) \op{\Phi}(y_{2}) \normalord]|0} = 0~.
\end{equation}
This is reflected in the diagrammatic rules through the orientation of the lines, and is the manifestation of $U(1)$ charge conservation.

\paragraph{Example 3}
We replace one outgoing particle state with an antiparticle state. The two corresponding diagrams are shown in \cref{fig: Wick complex quad example 3}, and give
\begin{multline}\label{eqn: complex quad example 3}
  \braket{0|a_{\vec{k}}\op{b}_{\vec{p}}\,[\,T\normalord \op{\Phi}^{\dagger}(y_{1}) \op{\Phi}(y_{1}) \normalord \normalord \op{\Phi}^{\dagger}(y_{2}) \op{\Phi}(y_{2}) \normalord]|0} \\
  = G_{F}(y_{1}-y_{2})\left[\varphi^{*}_{\vec{k}}(y_{2})\,\varphi^{*}_{\vec{p}}(y_{1}) + \varphi^{*}_{\vec{k}}(y_{1})\,\varphi^{*}_{\vec{p}}(y_{2}) \right]~.
\end{multline}
Unlike in the example for quadratic coupling to quantized real fields, there is only one possibility to draw a diagram if $y_{2}$ is connected to $\vec{k}$ instead of four, and a second possibility if $y_{2}$ is connected to $\vec{p}$; in total, there are thus two instead of eight diagrams.

\begin{figure}
  \includegraphics{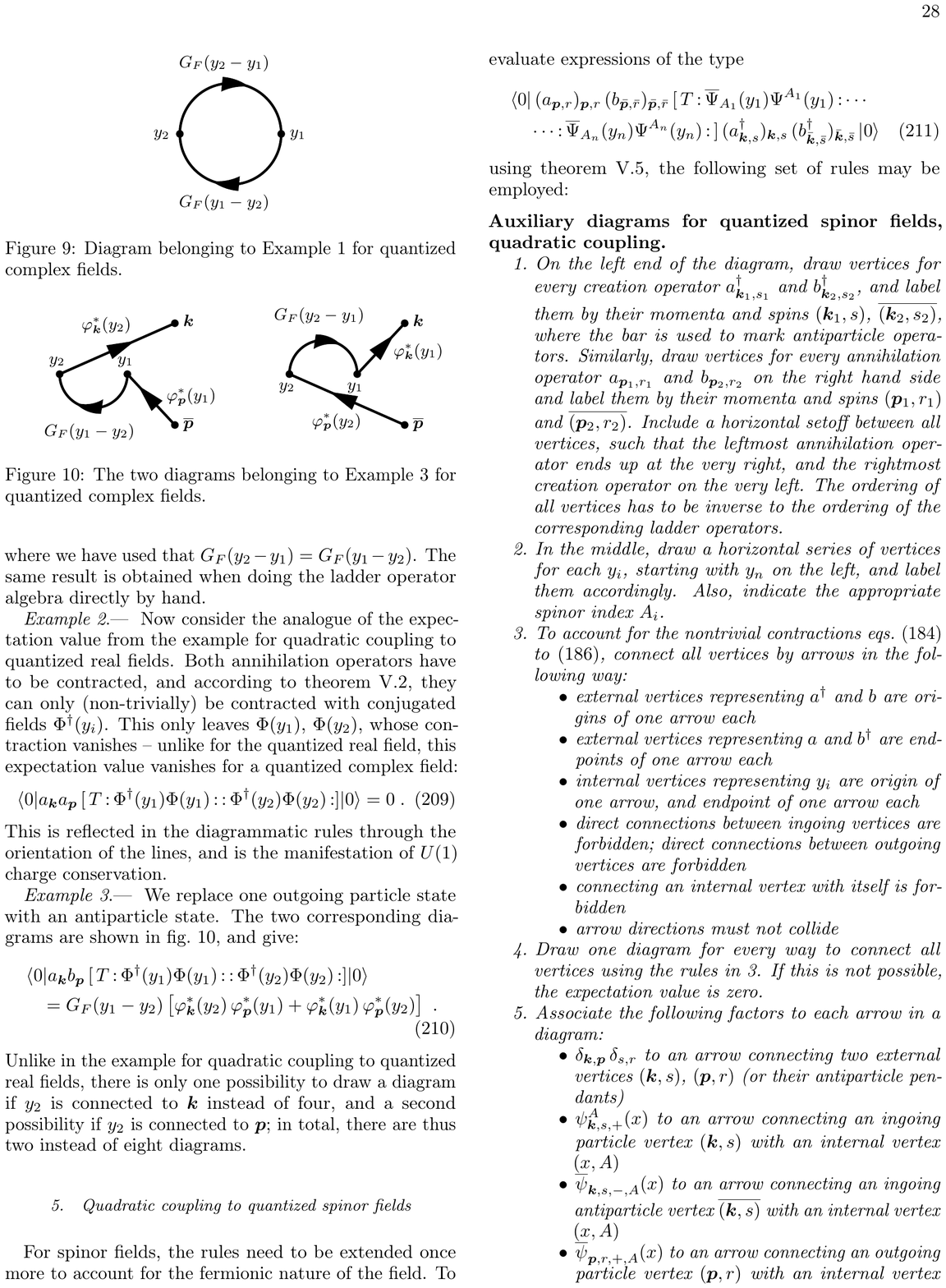}
  \caption{The two diagrams belonging to Example 3 for quantized complex fields.}
  \label{fig: Wick complex quad example 3}
\end{figure}

\subsubsection{Quadratic coupling to quantized spinor fields}\label{sec: auxiliary diagrammatic spinor quad}

For spinor fields, the rules need to be extended once more to account for the fermionic nature of the field. To evaluate expressions of the type
\begin{multline}
  \bra{0} (\op{a}_{\vec{p},r})_{\vec{p},r}\,(\op{b}_{\bar{\vec{p}},\bar{r}})_{\bar{\vec{p}},\bar{r}}\,[\,T\normalord \op{\conj\Psi}_{A_{1}}(y_{1}) \op{\Psi}^{A_{1}}(y_{1}) \normalord \cdots\\
   \cdots \normalord \op{\conj\Psi}_{A_{n}}(y_{n}) \op{\Psi}^{A_{n}}(y_{n}) \normalord\,]\,(\op{a}^{\dagger}_{\vec{k},s})_{\vec{k},s}\,(\op{b}^{\dagger}_{\bar{\vec{k}},\bar{s}})_{\bar{\vec{k}},\bar{s}}\ket{0}
\end{multline}
using \cref{thm: Wick spinor t-o full vev}, the following set of rules may be employed:

\begin{SQ}
  \hfill
  \setdefaultenum{1.}{}{}{} 
  \begin{compactenum}
    \item On the left end of the diagram, draw vertices for every creation operator $a^{\dagger}_{\vec{k}_{1},s_{1}}$ and $b^{\dagger}_{\vec{k}_{2},s_{2}}$, and label them by their momenta and spins $(\vec{k}_{1},s)$, $\overline{(\vec{k}_{2},s_{2})}$, where the bar is used to mark antiparticle operators. Similarly, draw vertices for every annihilation operator $a_{\vec{p}_{1},r_{1}}$ and $b_{\vec{p}_{2},r_{2}}$ on the right-hand side and label them by their momenta and spins $(\vec{p}_{1},r_{1})$ and $\overline{(\vec{p}_{2},r_{2})}$. Include a horizontal setoff between all vertices, such that the leftmost annihilation operator ends up at the very right, and the rightmost creation operator on the very left. The ordering of all vertices has to be inverse to the ordering of the corresponding ladder operators.
    \item In the middle, draw a horizontal series of vertices for each $y_{i}$, starting with $y_{n}$ on the left, and label them accordingly. Also, indicate the appropriate spinor index $A_{i}$.
    \item To account for the nontrivial contractions \cref{eqn: spinor field full vev contractions 1,eqn: spinor field full vev contractions 2,eqn: spinor field full vev contractions 3}, connect all vertices by arrows in the following way:
    \begin{compactitem}
      \item external vertices representing $\op{a}^{\dagger}$ and $\op{b}$ are origins of one arrow each
      \item external vertices representing $\op{a}$ and $\op{b}^{\dagger}$ are endpoints of one arrow each
      \item internal vertices representing $y_{i}$ are origin of one arrow, and endpoint of one arrow each
      \item direct connections between ingoing vertices are forbidden; direct connections between outgoing vertices are forbidden
      \item connecting an internal vertex with itself is forbidden
      \item arrow directions must not collide
    \end{compactitem}
    \item Draw one diagram for every way to connect all vertices using the rules in 3. If this is not possible, the expectation value is zero.
    \item Associate the following factors to each arrow in a diagram:
    \begin{compactitem}
      \item $\delta_{\vec{k},\vec{p}}\,\delta_{s,r}$ to an arrow connecting two external vertices $(\vec{k},s)$, $(\vec{p},r)$ (or their antiparticle pendants)
      \item $\psi_{\vec{k},s,+}^{A}(x)$ to an arrow connecting an ingoing particle vertex $(\vec{k},s)$ with an internal vertex $(x,A)$
      \item $\conj\psi_{\vec{k},s,-,A}(x)$ to an arrow connecting an ingoing antiparticle vertex $\overline{(\vec{k},s)}$ with an internal vertex $(x,A)$
      \item $\conj\psi_{\vec{p},r,+,A}(x)$ to an arrow connecting an outgoing particle vertex $(\vec{p},r)$ with an internal vertex $(x,A)$
      \item $\psi_{\vec{p},r,-}^{A}(x)$ to an arrow connecting an outgoing antiparticle vertex $\overline{(\vec{p},r)}$ with an internal vertex $(x,A)$
      \item $S_{F}(x-y)^{A}_{B}$ to an arrow going from an internal vertex $(y,B)$ to an internal vertex $(x,A)$
    \end{compactitem}
    \item To determine the overall sign of a diagram, fall back to the interaction picture calculations and order the operators according to the contractions, as described in \cref{thm: Wick spinor t-o full vev}.
    \item The vacuum expectation value is obtained by summing up the values associated with each diagram.
  \end{compactenum}
  \setdefaultenum{(a)}{}{}{} 
\end{SQ}

\paragraph{Example 1}
Consider the following term, which allows only one nontrivial contraction according to \cref{thm: Wick spinor t-o full vev}:
\begin{multline}\label{eqn: spinor quad example 1}
  \bra{0}T\normalord \op{\conj\Psi}(y_{1}) \op{\Psi}(y_{1}) \normalord  \normalord \op{\conj\Psi}(y_{2}) \op{\Psi}(y_{2}) \ket{0}\\
  = \contraction{\bra{0}T\normalord }{\op{\conj\Psi}}{_{A}(y_{1}) \op{\Psi}^{A}(y_{1}) \normalord  \normalord \op{\conj\Psi}_{B}(y_{2})}{\op{\Psi}}%
  \bcontraction{\bra{0}T\normalord \op{\conj\Psi}_{A}(y_{1}) }{\op{\Psi}}{^{A}(y_{1}) \normalord  \normalord }{\op{\conj\Psi}}
  \bra{0}T\normalord \op{\conj\Psi}_{A}(y_{1}) \op{\Psi}^{A}(y_{1}) \normalord  \normalord \op{\conj\Psi}_{B}(y_{2}) \op{\Psi}^{B}(y_{2}) \ket{0}\\
  = -\tr S_{F}(y_{1}-y_{2})\,S_{F}(y_{2}-y_{1})~,
\end{multline}
where the trace is over spinor indices. The same factors are obtained diagrammatically from \cref{fig: Wick spinor quad example 1}.

\begin{figure}
  \includegraphics{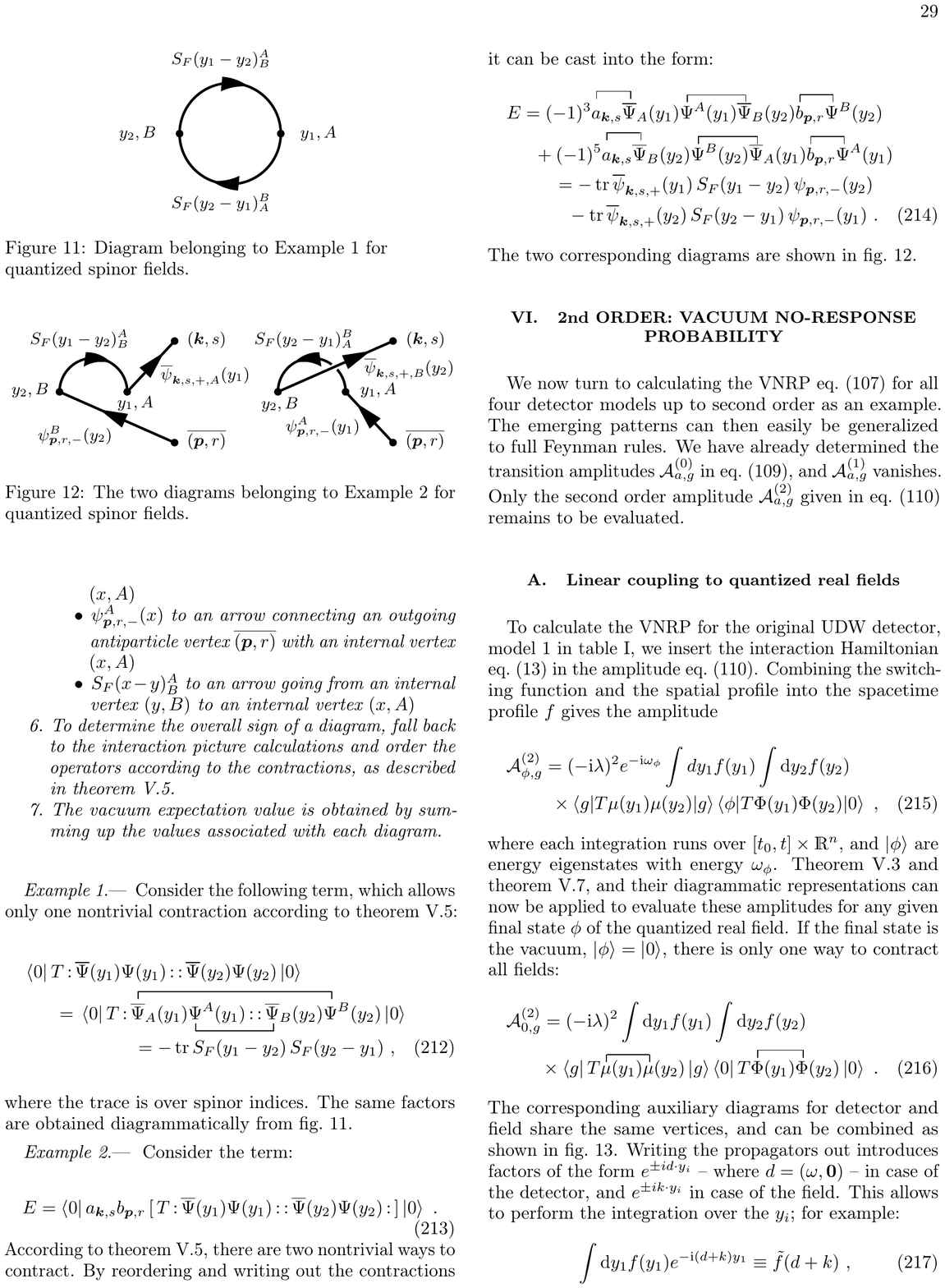}
  \caption{Diagram belonging to Example 1 for quantized spinor fields.}
  \label{fig: Wick spinor quad example 1}
\end{figure}

\begin{figure}
  \includegraphics{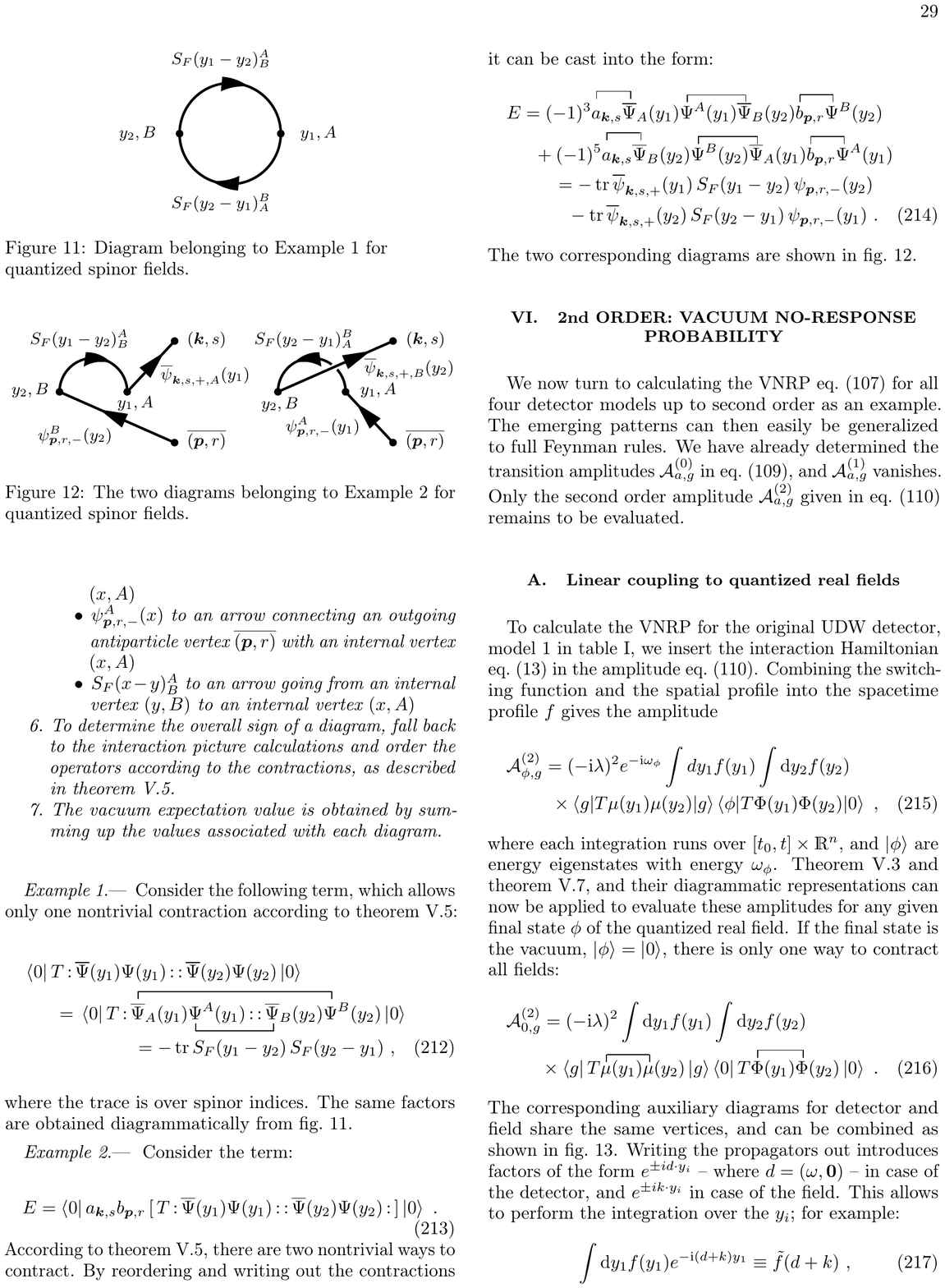}
  \caption{The two diagrams belonging to Example 2 for quantized spinor fields.}
  \label{fig: Wick spinor quad example 2}
\end{figure}

\paragraph{Example 2}
Consider the term
\begin{equation}
  E = \bra{0} a_{\vec{k},s}b_{\vec{p},r}\,[\,T\normalord \op{\conj\Psi}(y_{1}) \op{\Psi}(y_{1}) \normalord  \normalord \op{\conj\Psi}(y_{2}) \op{\Psi}(y_{2}) \normalord\,]\ket{0}~.
\end{equation}
According to \cref{thm: Wick spinor t-o full vev}, there are two nontrivial ways to contract. By reordering and writing out the contractions it can be cast into the form:
\begin{multline}\label{eqn: spinor quad example 2}
 E = \contraction{(-1)^{3}}{\vphantom{\conj\Psi}a}{_{\vec{k},s} }{\op{\conj\Psi}}%
  \contraction{(-1)^{3}a_{\vec{k},s} \op{\conj\Psi}_{A}(y_{1}) }{\op{\Psi}}{^{A}(y_{1}) }{\op{\conj\Psi}}%
  \contraction{(-1)^{3}a_{\vec{k},s} \op{\conj\Psi}_{A}(y_{1}) \op{\Psi}^{A}(y_{1}) \op{\conj\Psi}_{B}(y_{2}) }{\vphantom{\Psi}b}{_{\vec{p},r}}{\op{\Psi}}
  (-1)^{3}a_{\vec{k},s} \op{\conj\Psi}_{A}(y_{1}) \op{\Psi}^{A}(y_{1}) \op{\conj\Psi}_{B}(y_{2}) b_{\vec{p},r}\op{\Psi}^{B}(y_{2})\\
  + \contraction{(-1)^{5}}{\vphantom{\conj\Psi}a}{_{\vec{k},s} }{\op{\conj\Psi}}%
  \contraction{(-1)^{5}a_{\vec{k},s} \op{\conj\Psi}_{B}(y_{2}) }{\op{\Psi}}{^{B}(y_{2}) }{\op{\conj\Psi}}%
  \contraction{(-1)^{5}a_{\vec{k},s} \op{\conj\Psi}_{B}(y_{2}) \op{\Psi}^{B}(y_{2}) \op{\conj\Psi}_{A}(y_{1}) }{b}{_{\vec{p},r}}{\op{\Psi}}
   (-1)^{5}a_{\vec{k},s} \op{\conj\Psi}_{B}(y_{2}) \op{\Psi}^{B}(y_{2}) \op{\conj\Psi}_{A}(y_{1}) b_{\vec{p},r}\op{\Psi}^{A}(y_{1})\\
  = -\tr \conj\psi_{\vec{k},s,+}(y_{1})\, S_{F}(y_{1}-y_{2}) \,\psi_{\vec{p},r,-}(y_{2}) \\
  -\tr \conj\psi_{\vec{k},s,+}(y_{2}) \,S_{F}(y_{2}-y_{1})\, \psi_{\vec{p},r,-}(y_{1})~.
\end{multline}
The two corresponding diagrams are shown in \cref{fig: Wick spinor quad example 2}.


\section{Second order: vacuum no-response probability}\label{sec: VNRP}

We now turn to calculating the VNRP \cref{eqn: VNRP t-o operators 1} for all four detector models up to second order as an example. The emerging patterns can then easily be generalized to full Feynman rules. We have already determined the transition amplitudes $\mathcal{A}^{(0)}_{a,g}$ in \cref{eqn: VNRP amplitude at order 0}, and $\mathcal{A}^{(1)}_{a,g}$ vanishes. Only the second order amplitude $\mathcal{A}^{(2)}_{a,g}$ given in \cref{eqn: VNRP amplitude at order 2} remains to be evaluated.

\subsection{Linear coupling to quantized real fields}

To calculate the VNRP for the original UDW detector, model 1 in \cref{tab: overview detector models}, we insert the interaction Hamiltonian \cref{eqn: interaction H real linear} in the amplitude \cref{eqn: VNRP amplitude at order 2}. Combining the switching function and the spatial profile into the spacetime profile $f$ gives the amplitude
\begin{multline}\label{eqn: VNRP amplitude order 2 real linear}
  \mathcal{A}^{(2)}_{\phi,g} = (-\ii  \lambda)^{2}e^{-\ii  \omega_{\phi}}\int dy_{1} f(y_{1})\int \dd y_{2} f(y_{2})\\
   \times \braket{g|T\mu(y_{1})\mu(y_{2})|g}  \braket{\phi|T\Phi(y_{1})\Phi(y_{2})|0}~,
\end{multline}
where each integration runs over $[t_{0},t]\times \R^{n}$, and $\ket{\phi}$ are energy eigenstates  with energy $\omega_{\phi}$. \Cref{thm: Wick real t-o full vev} and \cref{thm: Wick detector t-o full vev}, and their diagrammatic representations can now be applied to evaluate these amplitudes for any given final state $\phi$ of the quantized real field.
If the final state is the vacuum, $\ket{\phi} = \ket{0}$, there is only one way to contract all fields:
\begin{multline}
  \mathcal{A}^{(2)}_{0,g} = (-\ii  \lambda)^{2}\int \dd y_{1} f(y_{1})\int \dd y_{2} f(y_{2})\\
  \contraction{\times \bra{g}T}{\mu}{(y_{1})}{\mu}%
  \contraction{\times \bra{g}T\mu(y_{1})\mu(y_{2})\ket{g}  \bra{0}T}{\Phi}{(y_{1})}{\Phi}%
  \times \bra{g}T\mu(y_{1})\mu(y_{2})\ket{g}  \bra{0}T\Phi(y_{1})\Phi(y_{2})\ket{0}~.
\end{multline}
The corresponding auxiliary diagrams for detector and field share the same vertices, and can be combined as shown in \cref{fig: VNRP linear 1a}. Writing the propagators out introduces factors of the form $e^{\pm id\cdot y_{i}}$---where $d = (\omega,\vec{0})$---in case of the detector, and $e^{\pm ik \cdot y_{i}}$ in case of the field. This allows to perform the integration over the $y_{i}$; for example:
\begin{equation}\label{eqn: integration over spacetime profile}
  \int \dd y_{1} f(y_{1})e^{-\ii  (d+k)y_{1}} \defineright \tilde{f}(d+k)~,
\end{equation}
where the domain of integration is $[t_{0},t]\times \R^{n}$. In the limit $t_{0} \to -\infty$, $t\to \infty$, this is just the Fourier transform of $f$ as defined in \cref{eqn: def Fourier transform}. For simplicity, the same symbol $\tilde{f}$ will be used both for finite times and in the limit $\pm\infty$. The amplitude then reads
\begin{multline}\label{eqn: VNRP scalar linear t-o op amplitude 2}
  \mathcal{A}^{(2)}_{0,g} = \lim_{\epsilon\to 0}\lim_{\delta \to 0} \int_{-\infty}^{\infty}\frac{d\omega}{2\pi} \,\frac{1}{L^{n}}\sum_{\vec{k}}\int_{-\infty}^{\infty}\frac{dk^{0}}{2\pi}\, (-\ii  \lambda)\tilde{f}(d+k)\\
  \times \frac{2\ii\omega}{\omega^{2}-\Omega^{2}+\ii  \epsilon} \,\frac{\ii}{k^{2}-m^{2}+\ii  \delta} 
   \,(-\ii  \lambda)\tilde{f}(-d-k)~.
\end{multline}

\begin{figure}
  \includegraphics{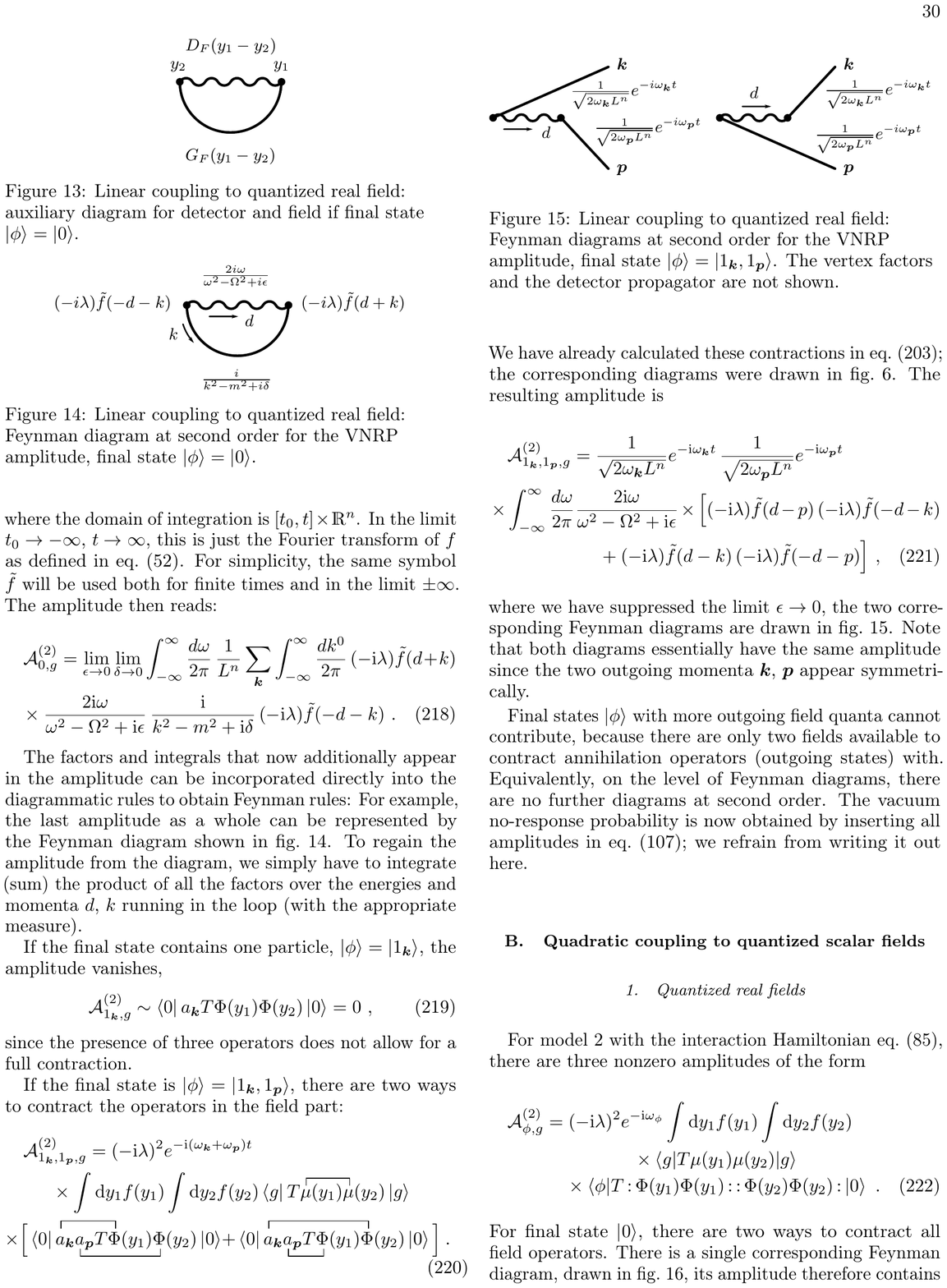}
  \caption{Linear coupling to quantized real field: auxiliary diagram for detector and field if final state $\ket{\phi} = \ket{0}$.}
  \label{fig: VNRP linear 1a}
\end{figure}

The factors and integrals that now additionally appear in the amplitude can be incorporated directly into the diagrammatic rules to obtain Feynman rules: For example, the last amplitude as a whole can be represented by the Feynman diagram shown in \cref{fig: VNRP linear 1b}. To regain the amplitude from the diagram, we simply have to integrate (sum) the product of all the factors over the energies and momenta $d$, $k$ running in the loop (with the appropriate measure).

\begin{figure}
  \includegraphics{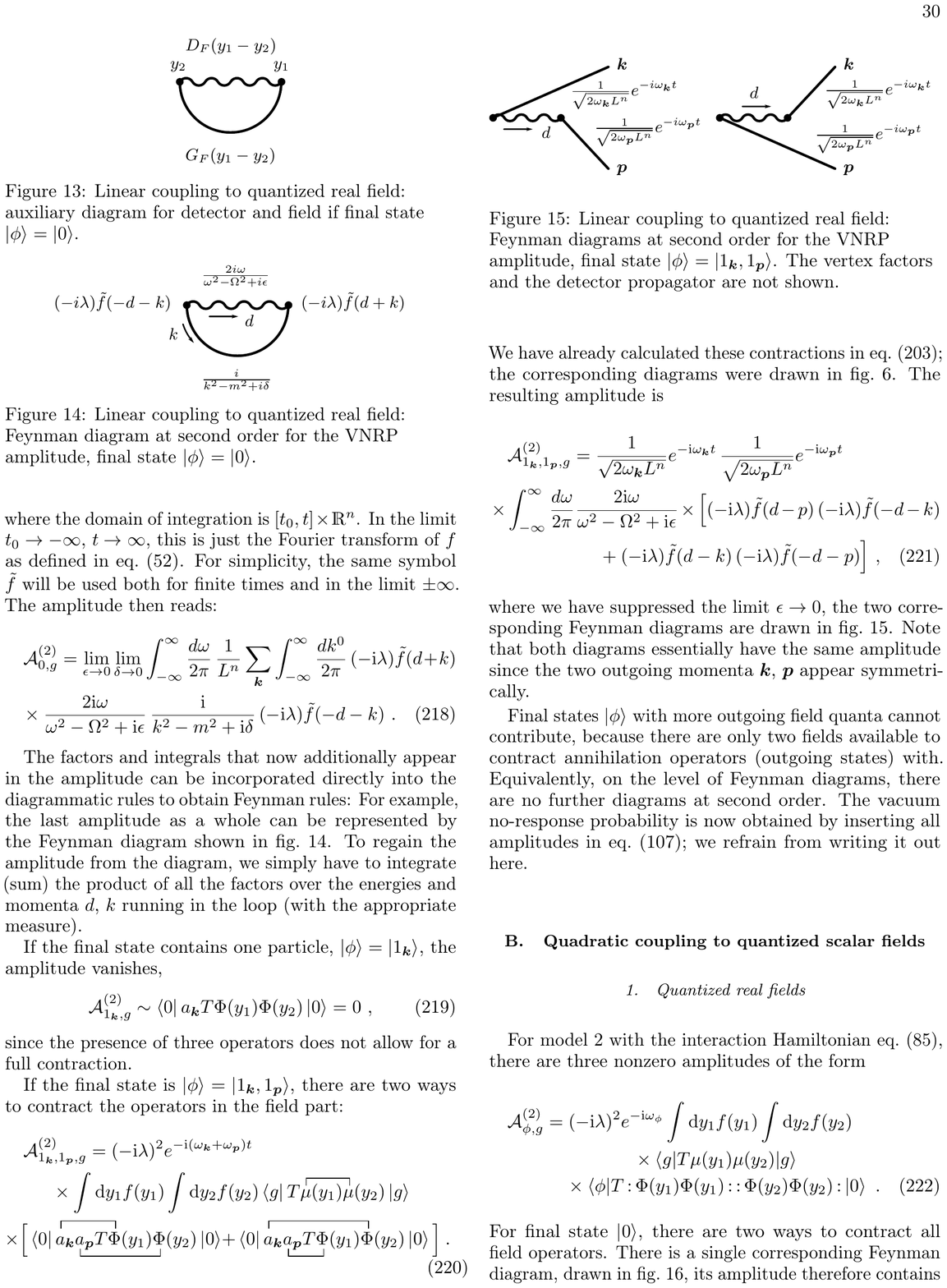}
  \caption{Linear coupling to quantized real field: Feynman diagram at second order for the VNRP amplitude, final state $\ket{\phi} = \ket{0}$.}
  \label{fig: VNRP linear 1b}
\end{figure}

If the final state contains one particle, $\ket{\phi} = \ket{1_{\vec{k}}}$, the amplitude vanishes,
\begin{equation}
  \mathcal{A}^{(2)}_{1_{\vec{k}},g} \sim \bra{0}a_{\vec{k}}T\Phi(y_{1})\Phi(y_{2})\ket{0} = 0~,
\end{equation}
since the presence of three operators does not allow for a full contraction.

If the final state is $\ket{\phi} = \ket{1_{\vec{k}},1_{\vec{p}}}$, there are two ways to contract the operators in the field part:
\begin{multline}
  \mathcal{A}^{(2)}_{1_{\vec{k}},1_{\vec{p}},g} =  (-\ii  \lambda)^{2}e^{-\ii  (\omega_{\vec{k}}+\omega_{\vec{p}})t}\\
  \times \int \dd y_{1} f(y_{1})\int \dd y_{2} f(y_{2})
  \contraction{\bra{g}T}{\mu}{(y_{1})}{\mu}%
  \bra{g}T\mu(y_{1})\mu(y_{2})\ket{g} \\ \times \Big[
  \contraction{\bra{0}}{\vphantom{\Phi}a}{_{\vec{k}}a_{\vec{p}}T}{\Phi}%
  \bcontraction{\bra{0}a_{\vec{k}}}{a}{_{\vec{p}}T\Phi(y_{1})}{\Phi}%
  \bra{0}a_{\vec{k}}a_{\vec{p}}T\Phi(y_{1})\Phi(y_{2})\ket{0} +%
  \contraction{\bra{0}}{\vphantom{\Phi}a}{_{\vec{k}}a_{\vec{p}}T\Phi(y_{1})}{\Phi}%
  \bcontraction{\bra{0}a_{\vec{k}}}{a}{_{\vec{p}}T}{\Phi}%
  \bra{0}a_{\vec{k}}a_{\vec{p}}T\Phi(y_{1})\Phi(y_{2})\ket{0} \Big]~.
\end{multline}
We have already calculated these contractions in \cref{eqn: real linear example 2}; the corresponding diagrams were drawn in \cref{fig: Wick real linear example 2}. The resulting amplitude is
\begin{multline}\label{eqn: VNRP scalar linear t-o op amplitude 3}
  \mathcal{A}^{(2)}_{1_{\vec{k}},1_{\vec{p}},g} = \frac{1}{\sqrt{2\omega_{\vec{k}}L^{n}}}e^{-\ii  \omega_{\vec{k}}t}\,\frac{1}{\sqrt{2\omega_{\vec{p}}L^{n}}}e^{-\ii  \omega_{\vec{p}}t}\\
  \times \int_{-\infty}^{\infty}\frac{d\omega}{2\pi} \frac{2\ii\omega}{\omega^{2}-\Omega^{2}+\ii  \epsilon}
  \times \Big[ (-\ii  \lambda)\tilde{f}(d-p)\, (-\ii  \lambda)\tilde{f}(-d-k) \\
  + (-\ii  \lambda)\tilde{f}(d-k)\, (-\ii  \lambda)\tilde{f}(-d-p) \Big]~,
\end{multline}
where we have suppressed the limit $\epsilon \to 0$, the two corresponding Feynman diagrams are drawn in \cref{fig: VNRP linear 2}. Note that both diagrams essentially have the same amplitude since the two outgoing momenta $\vec{k}$, $\vec{p}$ appear symmetrically.

\begin{figure}
  \includegraphics{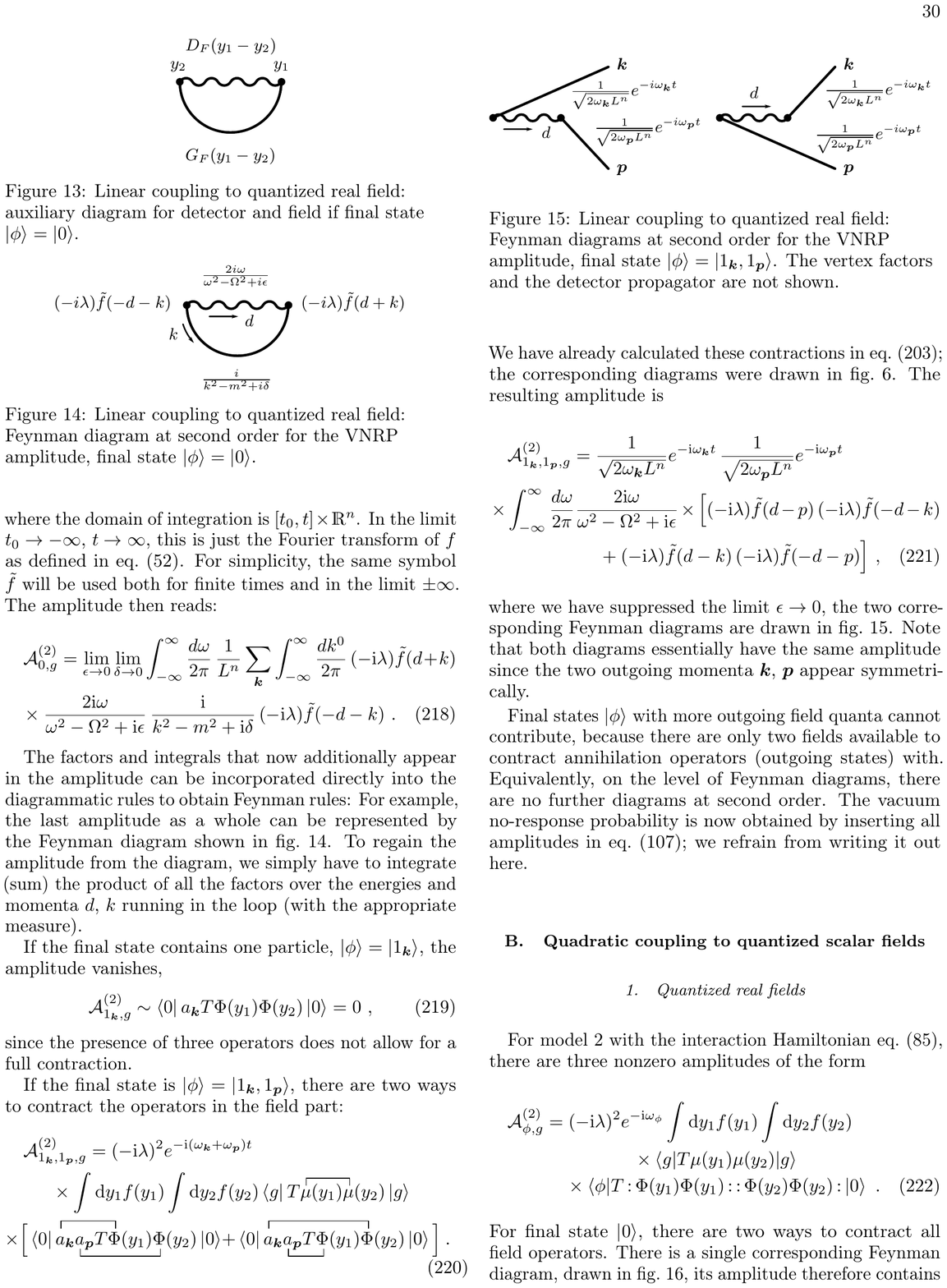}
  \caption{Linear coupling to quantized real field: Feynman diagrams at second order for the VNRP amplitude, final state $\ket{\phi} = \ket{1_{\vec{k}},1_{\vec{p}}}$. The vertex factors and the detector propagator are not shown.}
  \label{fig: VNRP linear 2}
\end{figure}

Final states $\ket{\phi}$ with more outgoing field quanta cannot contribute, because there are only two fields available to contract annihilation operators (outgoing states) with. Equivalently, on the level of Feynman diagrams, there are no further diagrams at second order. The vacuum no-response probability is now obtained by inserting all amplitudes in \cref{eqn: VNRP t-o operators 1}; we refrain from writing it out here.

\subsection{Quadratic coupling to quantized scalar fields}

\subsubsection{Quantized real fields}

For model 2 with the interaction Hamiltonian \cref{eqn: interaction H scalar quad renorm}, there are three nonzero amplitudes of the form
\begin{multline}
  \mathcal{A}^{(2)}_{\phi,g} = (-\ii  \lambda)^{2}e^{-\ii  \omega_{\phi}}\int \dd y_{1} f(y_{1})\int \dd y_{2} f(y_{2})\\
   \times \braket{g|T\mu(y_{1})\mu(y_{2})|g} \\
   \times \braket{\phi|T\normalord \Phi(y_{1})\Phi(y_{1}) \normalord \normalord \Phi(y_{2})\Phi(y_{2})\normalord|0}~.
\end{multline}
For final state $\ket{0}$, there are two ways to contract all field operators. There is a single corresponding Feynman diagram, drawn in \cref{fig: VNRP real quad 1}, its amplitude therefore contains a symmetry factor of two:
\begin{multline}\label{eqn: VNRP real quad t-o op amplitude 1}
  \mathcal{A}^{(2)}_{0,g} = 2\int_{-\infty}^{\infty}\frac{d\omega}{2\pi}\,\frac{1}{L^{n}}\sum_{\vec{k}}\int_{-\infty}^{\infty}\frac{dk^{0}}{2\pi}\,\frac{1}{L^{n}}\sum_{\vec{p}}\int_{-\infty}^{\infty}\frac{dp^{0}}{2\pi} \\
  \times (-\ii  \lambda)\tilde{f}(d+k+p) \frac{2\ii\omega}{\omega^{2}-\Omega^{2}+\ii  \epsilon} \,\frac{\ii}{k^{2}-m^{2}+\ii  \delta}\\
  \times  \,\frac{\ii}{p^{2}-m^{2}+\ii  \nu} \,(-\ii  \lambda)\tilde{f}(-d-k-p)~.
\end{multline}

\begin{figure}
  \includegraphics{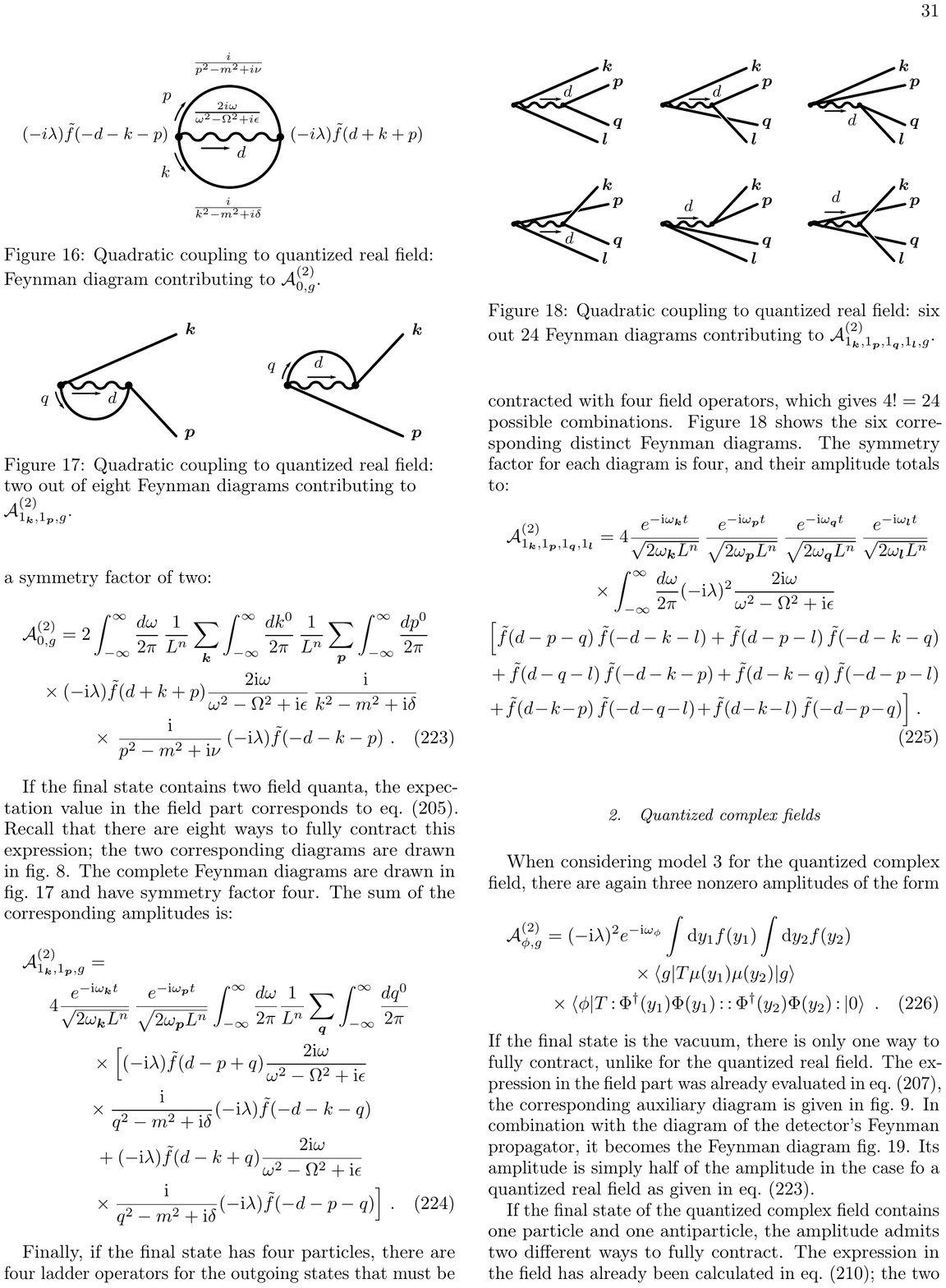}
  \caption{Quadratic coupling to quantized real field: Feynman diagram contributing to $\mathcal{A}^{(2)}_{0,g}$.}
  \label{fig: VNRP real quad 1}
\end{figure}

If the final state contains two field quanta, the expectation value in the field part corresponds to \cref{eqn: Wick real quad example 1}. Recall that there are eight ways to fully contract this expression; the two corresponding diagrams are drawn in \cref{fig: Wick real quad example 1 version 2}. The complete Feynman diagrams are drawn in \cref{fig: VNRP real quad 2} and have symmetry factor four. The sum of the corresponding amplitudes is
  \begin{multline}\label{eqn: VNRP real quad t-o op amplitude 2}
    \mathcal{A}^{(2)}_{1_{\vec{k}},1_{\vec{p}},g} =\\
      4 \frac{e^{-\ii  \omega_{\vec{k}}t}}{\sqrt{2\omega_{\vec{k}}L^{n}}}\,\frac{e^{-\ii  \omega_{\vec{p}}t}}{\sqrt{2\omega_{\vec{p}}L^{n}}} \int_{-\infty}^{\infty}\frac{d\omega}{2\pi} \frac{1}{L^{n}}\sum_{\vec{q}}\int_{-\infty}^{\infty}\frac{dq^{0}}{2\pi}\\
     \times \Big[ (-\ii  \lambda)\tilde{f}(d-p+q)\frac{2\ii\omega}{\omega^{2}-\Omega^{2}+\ii  \epsilon}\\
     \times \frac{\ii}{q^{2}-m^{2}+\ii  \delta} (-\ii  \lambda)\tilde{f}(-d-k-q) \\
     + (-\ii  \lambda)\tilde{f}(d-k+q)\frac{2\ii\omega}{\omega^{2}-\Omega^{2}+\ii  \epsilon}\\
     \times \frac{\ii}{q^{2}-m^{2}+\ii  \delta} (-\ii  \lambda)\tilde{f}(-d-p-q) \Big]~.
  \end{multline}

\begin{figure}
  \includegraphics{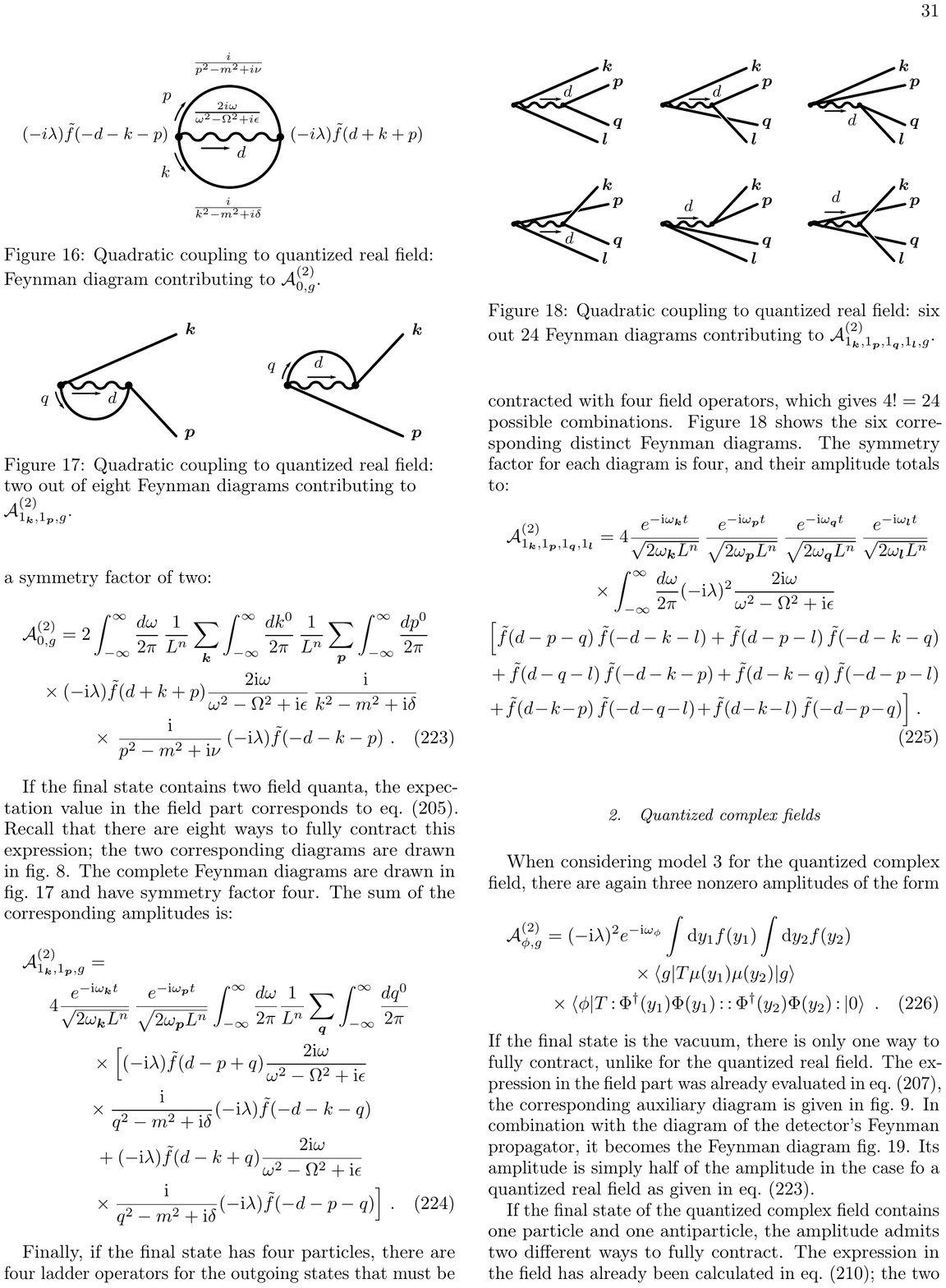}
  \caption{Quadratic coupling to quantized real field: two out of eight Feynman diagrams contributing to $\mathcal{A}^{(2)}_{1_{\vec{k}},1_{\vec{p}},g}$.}
  \label{fig: VNRP real quad 2}
\end{figure}

Finally, if the final state has four particles, there are four ladder operators for the outgoing states that must be contracted with four field operators, which gives $4! = 24$ possible combinations. \Cref{fig: VNRP real quad 3} shows the six corresponding distinct Feynman diagrams. The symmetry factor for each diagram is four, and their amplitude totals to
  \begin{multline}\label{eqn: VNRP real quad t-o op amplitude 3}
    \mathcal{A}^{(2)}_{1_{\vec{k}},1_{\vec{p}},1_{\vec{q}},1_{\vec{l}}} =  4 \frac{e^{-\ii  \omega_{\vec{k}}t}}{\sqrt{2\omega_{\vec{k}}L^{n}}}\,\frac{e^{-\ii  \omega_{\vec{p}}t}}{\sqrt{2\omega_{\vec{p}}L^{n}}}\,\frac{e^{-\ii  \omega_{\vec{q}}t}}{\sqrt{2\omega_{\vec{q}}L^{n}}}\,\frac{e^{-\ii  \omega_{\vec{l}}t} }{\sqrt{2\omega_{\vec{l}}L^{n}}}\\
    \times \int_{-\infty}^{\infty}\frac{d\omega}{2\pi} (-\ii  \lambda)^{2}\frac{2\ii\omega}{\omega^{2}-\Omega^{2}+\ii  \epsilon}\\
    \Big[ \tilde{f}(d-p-q)\,\tilde{f}(-d-k-l) + \tilde{f}(d-p-l)\,\tilde{f}(-d-k-q)    \\
    + \tilde{f}(d-q-l)\,\tilde{f}(-d-k-p) +\tilde{f}(d-k-q)\,\tilde{f}(-d-p-l) \\
    +\tilde{f}(d-k-p)\,\tilde{f}(-d-q-l) +\tilde{f}(d-k-l)\,\tilde{f}(-d-p-q)\Big]~.
  \end{multline}

\begin{figure}
  \includegraphics{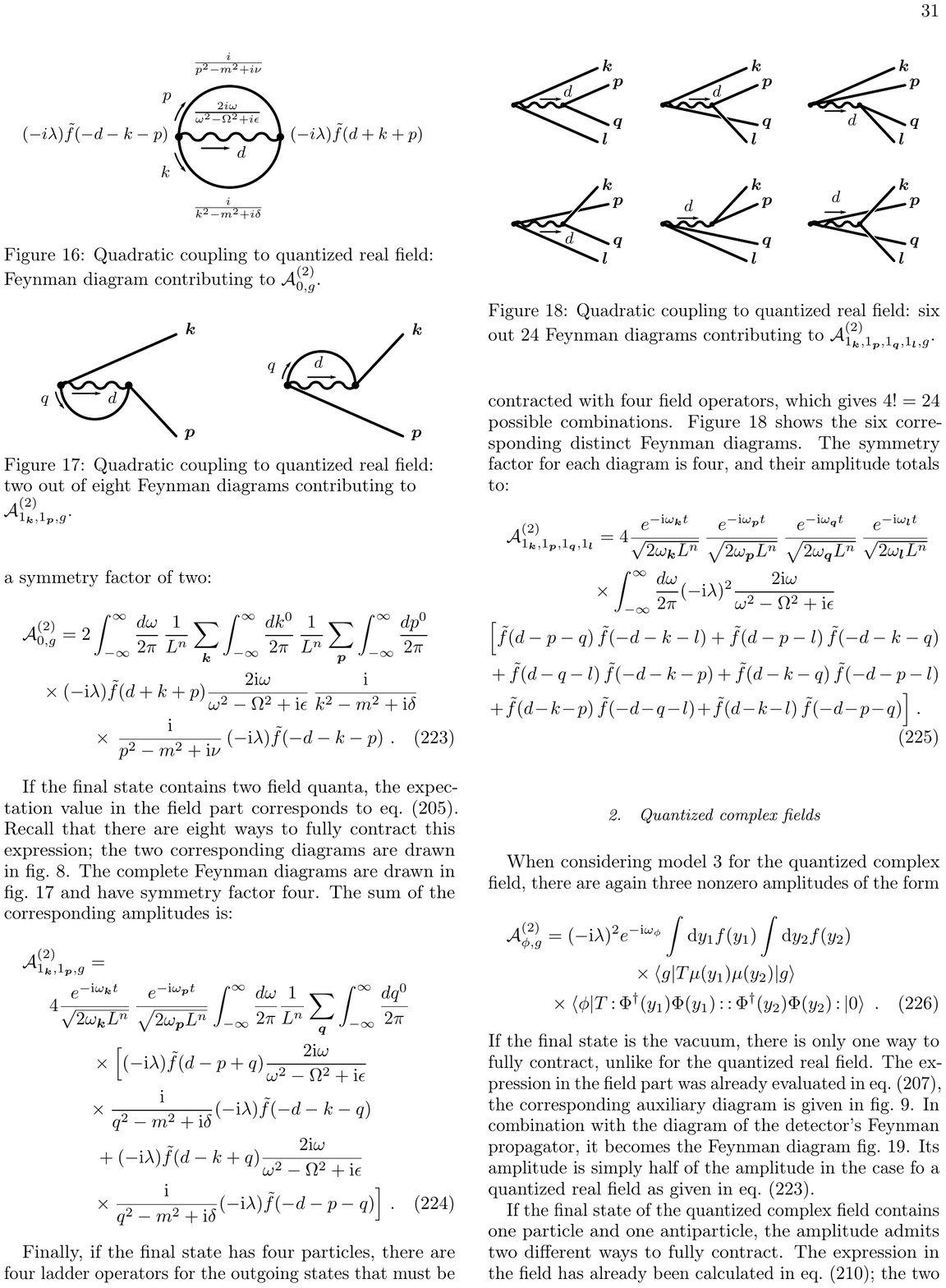}
  \caption{Quadratic coupling to quantized real field: six out 24 Feynman diagrams contributing to $\mathcal{A}^{(2)}_{1_{\vec{k}},1_{\vec{p}},1_{\vec{q}},1_{\vec{l}},g}$.}
  \label{fig: VNRP real quad 3}
\end{figure}

\subsubsection{Quantized complex fields}

When considering model 3 for the quantized complex field, there are again three nonzero amplitudes of the form
\begin{multline}
  \mathcal{A}^{(2)}_{\phi,g} = (-\ii  \lambda)^{2}e^{-\ii  \omega_{\phi}}\int \dd y_{1} f(y_{1})\int \dd y_{2} f(y_{2})\\
   \times \braket{g|T\mu(y_{1})\mu(y_{2})|g}\\
   \times   \braket{\phi|T\normalord \Phi^{\dagger}(y_{1})\Phi(y_{1}) \normalord \normalord \Phi^{\dagger}(y_{2})\Phi(y_{2})\normalord|0}~.
\end{multline}
If the final state is the vacuum, there is only one way to fully contract, unlike for the quantized real field. The expression in the field part was already evaluated in \cref{eqn: complex quad example 1}, the corresponding auxiliary diagram is given in \cref{fig: Wick complex quad example 1}. In combination with the diagram of the detector's Feynman propagator, it becomes the Feynman diagram \cref{fig: VNRP complex quad 1}. Its amplitude is simply half of the amplitude in the case of a quantized real field as given in \cref{eqn: VNRP real quad t-o op amplitude 1}.

\begin{figure}
  \includegraphics{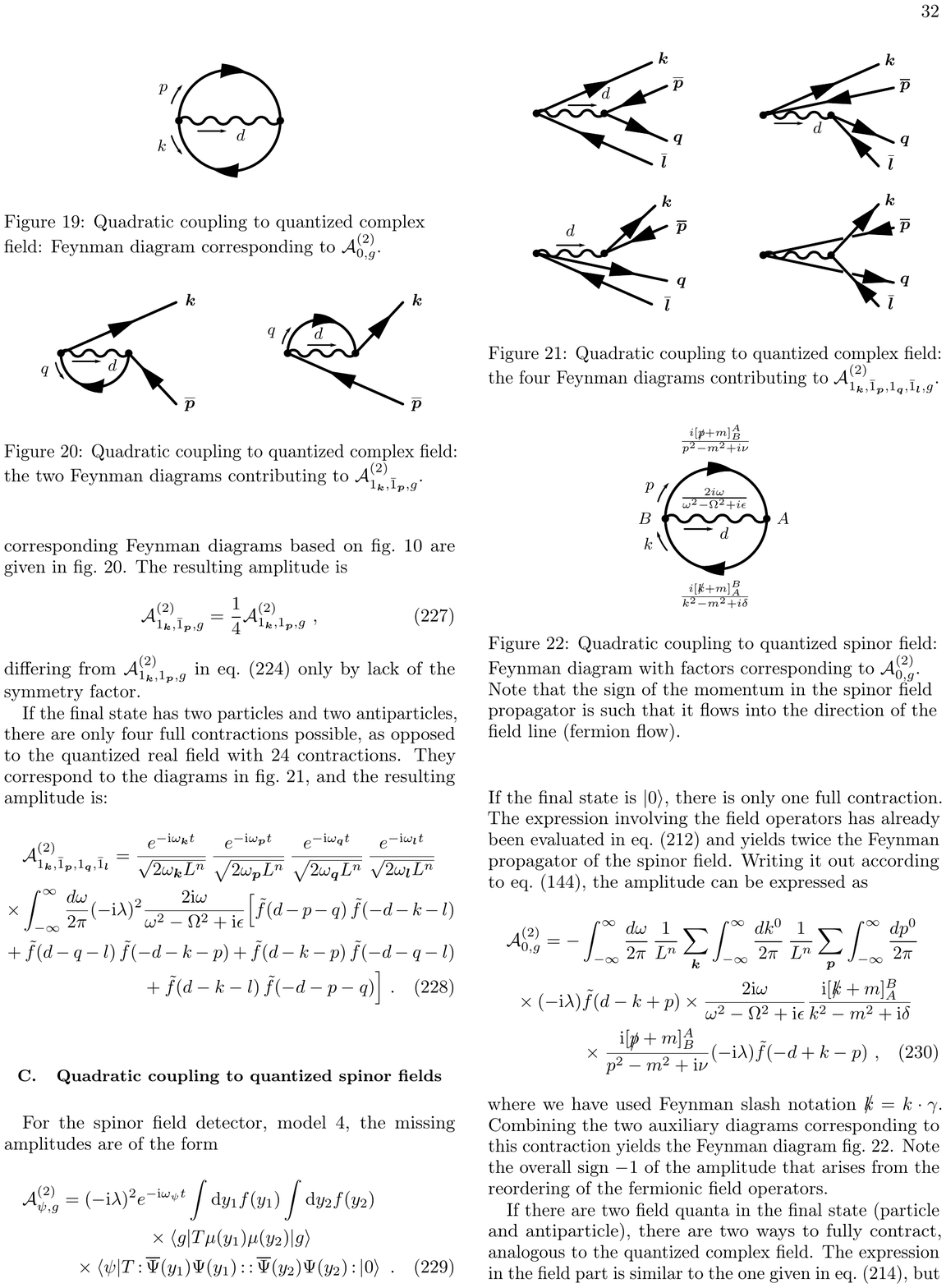}
  \caption{Quadratic coupling to quantized complex field: Feynman diagram corresponding to $\mathcal{A}^{(2)}_{0,g}$.}
  \label{fig: VNRP complex quad 1}
\end{figure}

If the final state of the quantized complex field contains one particle and one antiparticle, the amplitude admits two different ways to fully contract. The expression in the field has already been calculated  in \cref{eqn: complex quad example 3}; the two corresponding Feynman diagrams based on \cref{fig: Wick complex quad example 3} are given in \cref{fig: VNRP complex quad 2}. The resulting amplitude is
\begin{equation}\label{eqn: VNRP complex quad t-o op amplitude 2}
  \mathcal{A}^{(2)}_{1_{\vec{k}},\bar{1}_{\vec{p}},g} = \frac{1}{4}\mathcal{A}^{(2)}_{1_{\vec{k}},1_{\vec{p}},g}~,
\end{equation}
differing from $\mathcal{A}^{(2)}_{1_{\vec{k}},1_{\vec{p}},g}$ in \cref{eqn: VNRP real quad t-o op amplitude 2} only by lack of the symmetry factor.

\begin{figure}
  \includegraphics{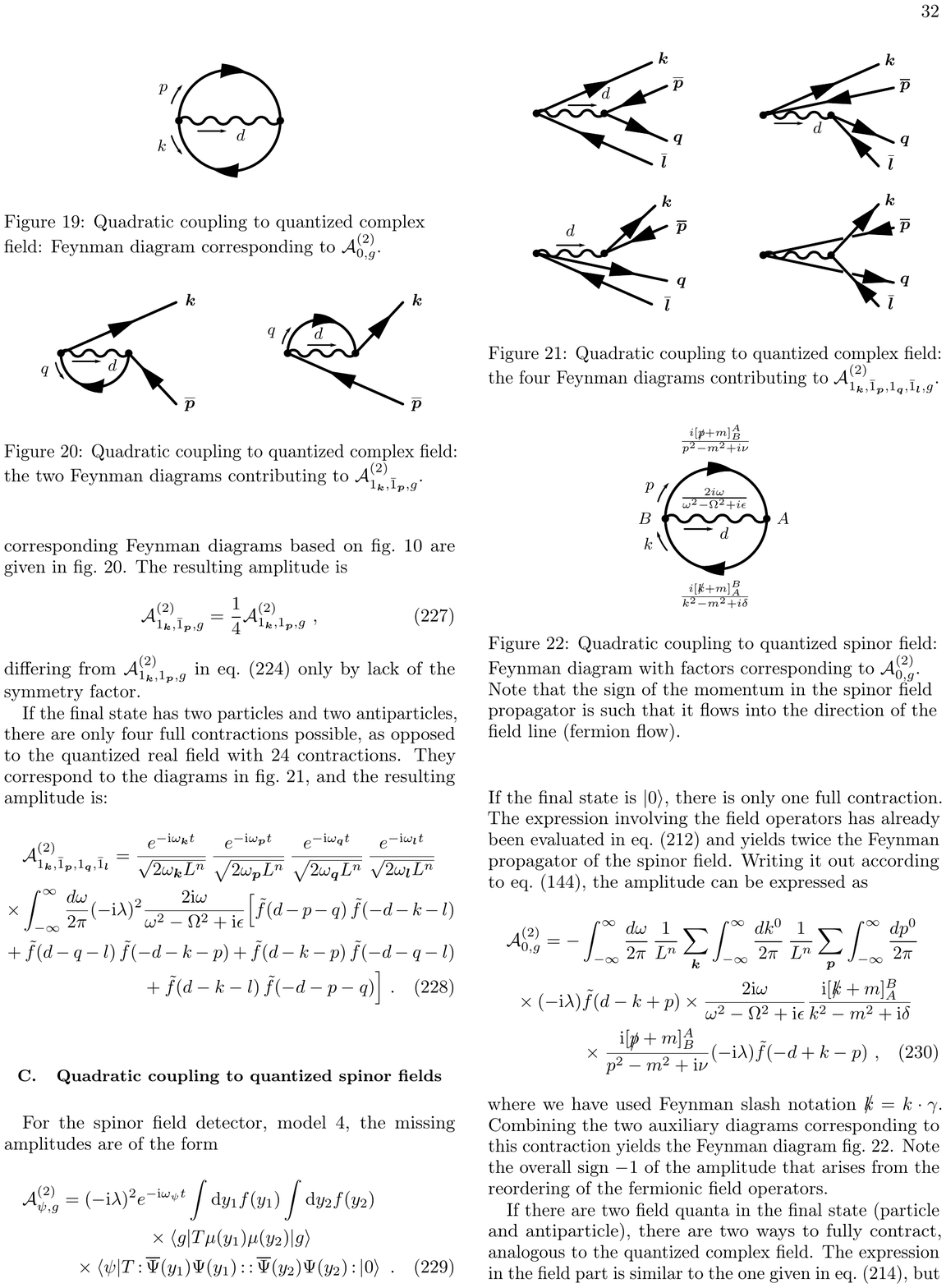}
  \caption{Quadratic coupling to quantized complex field: the two Feynman diagrams contributing to  $\mathcal{A}^{(2)}_{1_{\vec{k}},\bar{1}_{\vec{p}},g}$.}
  \label{fig: VNRP complex quad 2}
\end{figure}

If the final state has two particles and two antiparticles, there are only four full contractions possible, as opposed to the quantized real field with 24 contractions. They correspond to the diagrams in \cref{fig: VNRP complex quad 3}, and the resulting amplitude is
\begin{multline}\label{eqn: VNRP complex quad t-o op amplitude 3}
\mathcal{A}^{(2)}_{1_{\vec{k}},\bar{1}_{\vec{p}},1_{\vec{q}},\bar{1}_{\vec{l}}} =  \frac{e^{-\ii  \omega_{\vec{k}}t}}{\sqrt{2\omega_{\vec{k}}L^{n}}}\,\frac{e^{-\ii  \omega_{\vec{p}}t}}{\sqrt{2\omega_{\vec{p}}L^{n}}}\,\frac{e^{-\ii  \omega_{\vec{q}}t}}{\sqrt{2\omega_{\vec{q}}L^{n}}}\,\frac{e^{-\ii  \omega_{\vec{l}}t} }{\sqrt{2\omega_{\vec{l}}L^{n}}}\\
\times \int_{-\infty}^{\infty}\frac{d\omega}{2\pi} (-\ii  \lambda)^{2}\frac{2\ii\omega}{\omega^{2}-\Omega^{2}+\ii  \epsilon} \Big[ \tilde{f}(d-p-q)\,\tilde{f}(-d-k-l) \\
+ \tilde{f}(d-q-l)\,\tilde{f}(-d-k-p) +\tilde{f}(d-k-p)\,\tilde{f}(-d-q-l) \\
+\tilde{f}(d-k-l)\,\tilde{f}(-d-p-q)\Big]~.
\end{multline}

\begin{figure}
  \includegraphics{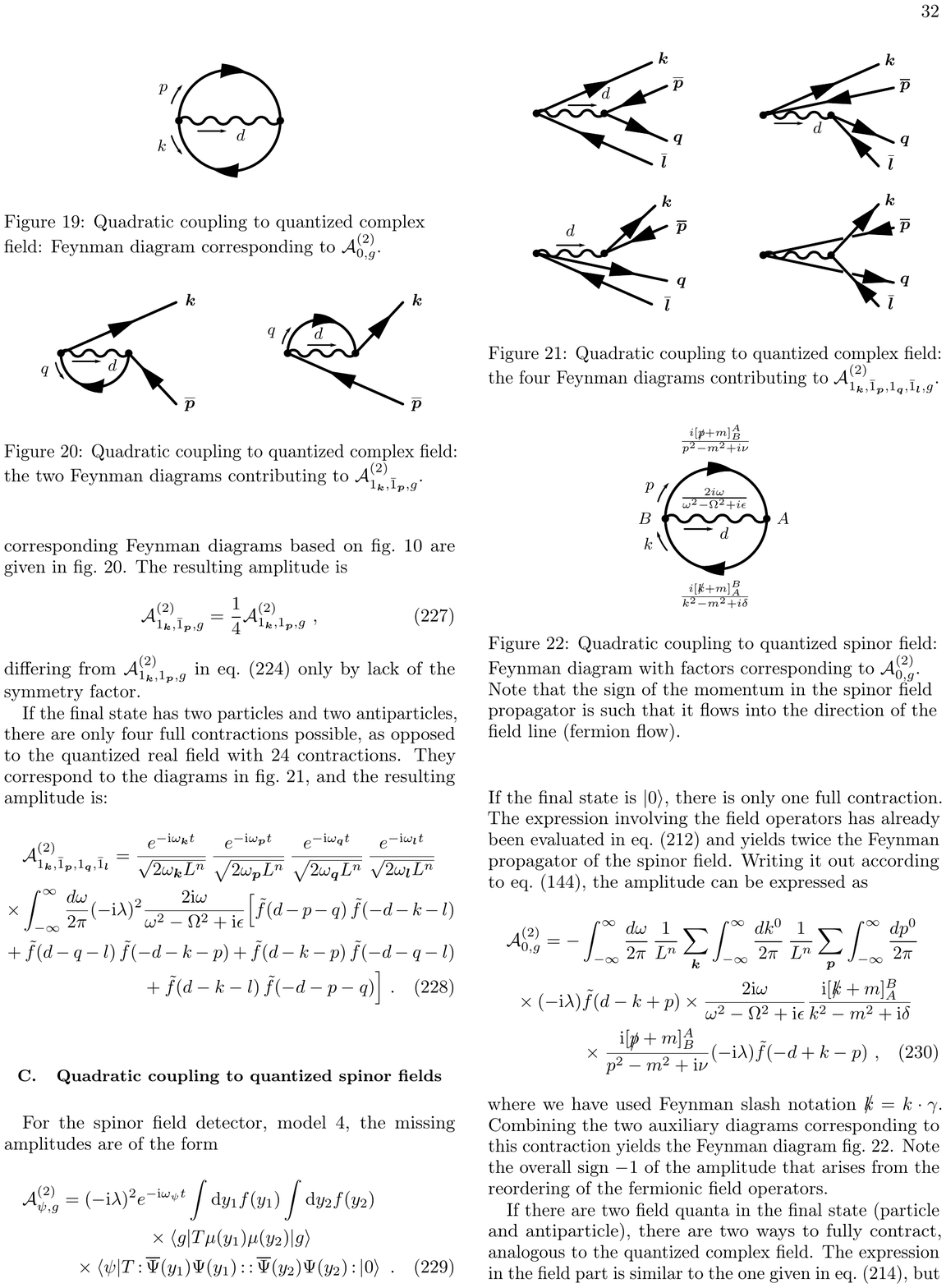}
  \caption{Quadratic coupling to quantized complex field: the four Feynman diagrams contributing to $\mathcal{A}^{(2)}_{1_{\vec{k}},\bar{1}_{\vec{p}},1_{\vec{q}},\bar{1}_{\vec{l}},g}$.}
  \label{fig: VNRP complex quad 3}
\end{figure}

\subsection{Quadratic coupling to quantized spinor fields}

For the spinor field detector, model 4, the missing amplitudes are of the form
\begin{multline}\label{eqn: VNRP amplitude order 2 spinor quad}
  \mathcal{A}^{(2)}_{\psi,g} = (-\ii  \lambda)^{2}e^{-\ii  \omega_{\psi}t}\int \dd y_{1} f(y_{1})\int \dd y_{2} f(y_{2})\\
   \times \braket{g|T\mu(y_{1})\mu(y_{2})|g}\\
   \times  \braket{\psi|T\normalord \conj\Psi(y_{1}) \Psi(y_{1}) \normalord\normalord \conj\Psi(y_{2}) \Psi(y_{2}) \normalord|0}~.
\end{multline}
If the final state is $\ket{0}$, there is only one full contraction. The expression involving the field operators has already been evaluated in \cref{eqn: spinor quad example 1} and yields twice the Feynman propagator of the spinor field. Writing it out according to \cref{eqn: Feynman P spinor field}, the amplitude can be expressed as
\begin{multline}
  \mathcal{A}^{(2)}_{0,g} =
  -\int_{-\infty}^{\infty}\frac{d\omega}{2\pi}\,\frac{1}{L^{n}}\sum_{\vec{k}} \int_{-\infty}^{\infty}\frac{dk^{0}}{2\pi}\,\frac{1}{L^{n}} \sum_{\vec{p}} \int_{-\infty}^{\infty}\frac{dp^{0}}{2\pi}\\
  \times(-\ii  \lambda)\tilde{f}(d-k+p)\times \frac{2\ii\omega}{\omega^{2}-\Omega^{2}+\ii  \epsilon}\frac{\ii[\slashed{k}+m]^{B}_{A}}{k^{2}-m^{2}+\ii  \delta}\\
  \times\frac{\ii[\slashed{p}+m]^{A}_{B}}{p^{2}-m^{2}+\ii  \nu}(-\ii  \lambda)\tilde{f}(-d+k-p)~,
\end{multline}
where we have used Feynman slash notation $\slashed{k} = k \cdot \gamma$. Combining the two auxiliary diagrams corresponding to this contraction yields the Feynman diagram \cref{fig: VNRP spinor quad 1}. Note the overall sign $-1$ of the amplitude that arises from the reordering of the fermionic field operators.

\begin{figure}
  \includegraphics{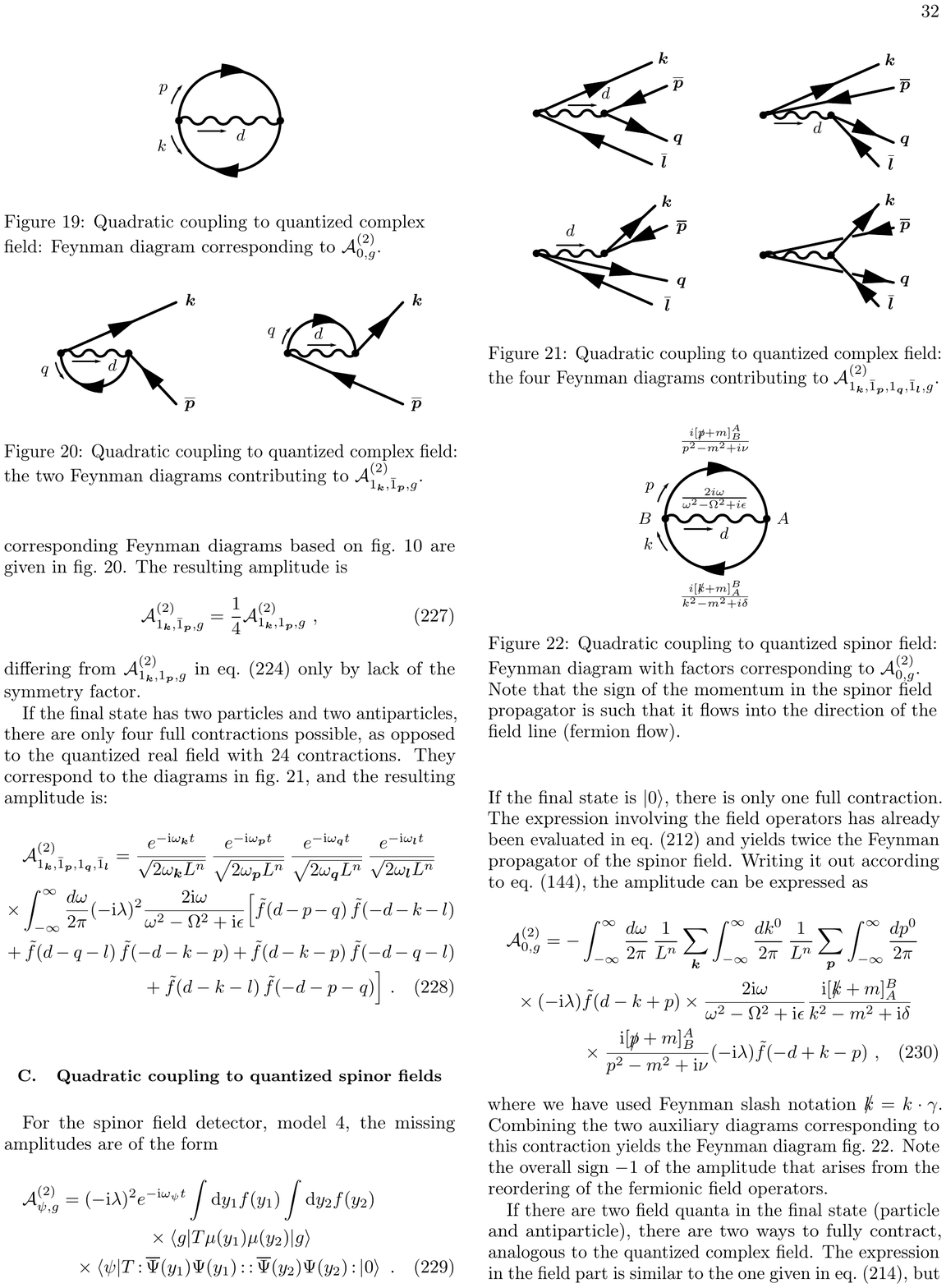}
  \caption{Quadratic coupling to quantized spinor field: Feynman diagram with factors corresponding to $\mathcal{A}^{(2)}_{0,g}$. Note that the sign of the momentum in the spinor field propagator is such that it flows into the direction of the field line (fermion flow).}
  \label{fig: VNRP spinor quad 1}
\end{figure}

If there are two field quanta in the final state (particle and antiparticle), there are two ways to fully contract, analogous to the quantized complex field. The expression in the field part is similar to the one given in \cref{eqn: spinor quad example 2}, but the annihilation operators are exchanged. This introduces an overall sign $-1$, such that
\begin{multline}\label{eqn: vnrp amplitude spinor field detector}
  \mathcal{A}^{(2)}_{1_{\vec{k},s},\bar{1}_{\vec{p},r},g} =\\
   \sqrt{\frac{m}{\omega_{\vec{k}}L^{n}}}e^{-\ii  \omega_{\vec{k}}t}\conj{u}_{\vec{k},s,B}\,\sqrt{\frac{m}{\omega_{\vec{p}}L^{n}}}e^{-\ii  \omega_{\vec{p}}t}v_{\vec{p},r}^{A}\int_{-\infty}^{\infty}\frac{d\omega}{2\pi}\\
  \times \frac{1}{L^{n}}\sum_{\vec{q}} \int_{-\infty}^{\infty}\frac{dq^{0}}{2\pi} (-\ii  \lambda)\tilde{f}(d-p-q)\frac{2\ii\omega}{\omega^{2}-\Omega^{2}+\ii  \epsilon}\\
  \times \frac{\ii[\slashed{q}+m]^{B}_{A}}{q^{2}-m^{2}+\ii  \delta} (-\ii  \lambda)\tilde{f}(-d-k-q)\\
  + \sqrt{\frac{m}{\omega_{\vec{k}}L^{n}}}e^{-\ii  \omega_{\vec{k}}t}\conj{u}_{\vec{k},s,A}\,\sqrt{\frac{m}{\omega_{\vec{p}}L^{n}}}e^{-\ii  \omega_{\vec{p}}t}v_{\vec{p},r}^{B}\int_{-\infty}^{\infty}\frac{d\omega}{2\pi}\\
  \times \frac{1}{L^{n}}\sum_{\vec{q}} \int_{-\infty}^{\infty}\frac{dq^{0}}{2\pi} (-\ii  \lambda)\tilde{f}(d-k+q)\frac{2\ii\omega}{\omega^{2}-\Omega^{2}+\ii  \epsilon}\\
  \times \frac{\ii[\slashed{q}+m]^{B}_{A}}{q^{2}-m^{2}+\ii  \delta} (-\ii  \lambda)\tilde{f}(-d-p-q)~.
\end{multline}
The expression for massless fields is recovered by $m\,\omega_{\vec{k}}^{-1}L^{-n} \to L^{-n}$. The corresponding Feynman diagrams are drawn in \cref{fig: VNRP spinor quad 2}.

\begin{figure}
  \includegraphics{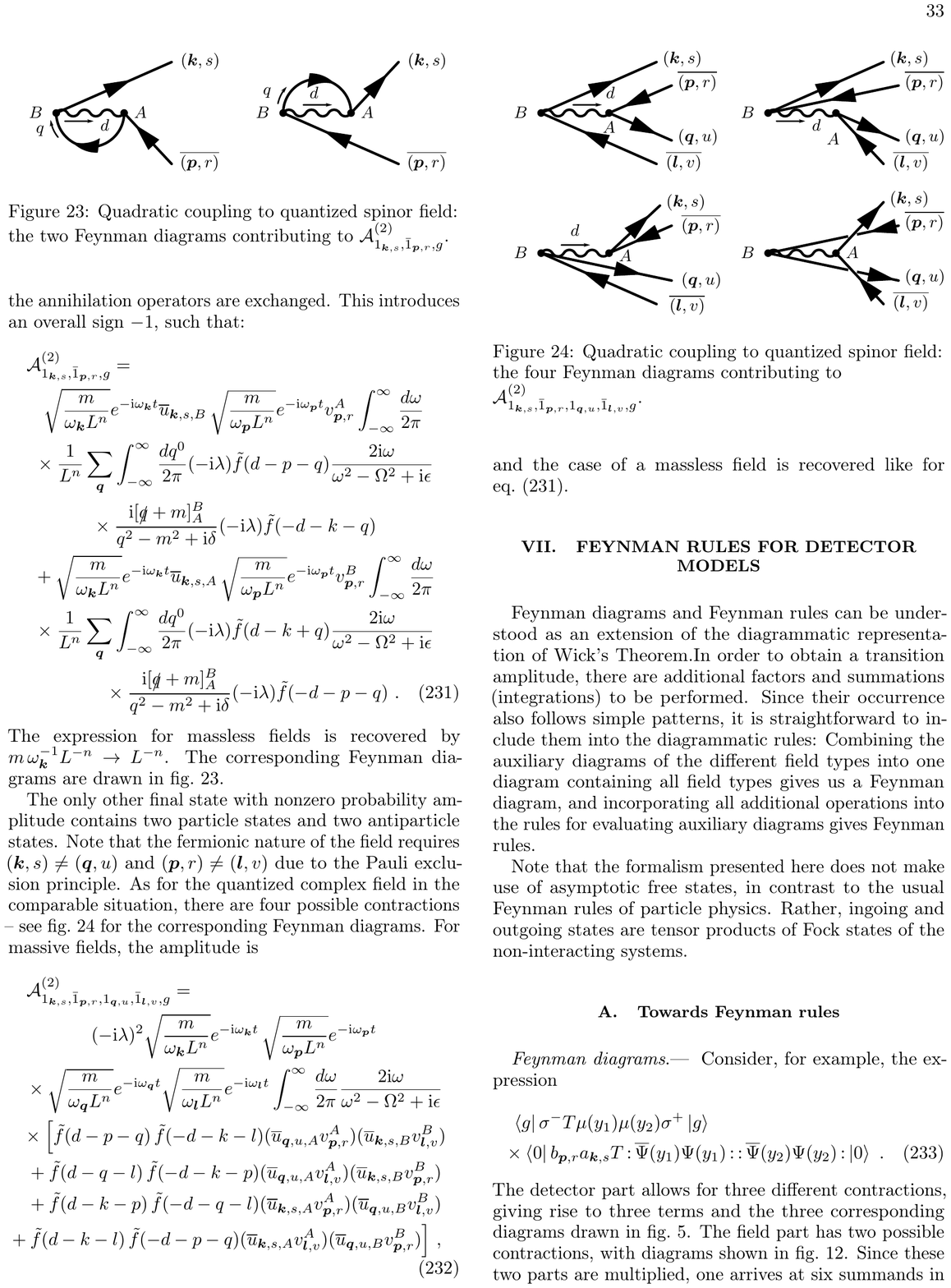}
  \caption{Quadratic coupling to quantized spinor field: the two Feynman diagrams contributing to  $\mathcal{A}^{(2)}_{1_{\vec{k},s},\bar{1}_{\vec{p},r},g}$.}
  \label{fig: VNRP spinor quad 2}
\end{figure}

The only other final state with nonzero probability amplitude contains two particle states and two antiparticle states. Note that the fermionic nature of the field requires $(\vec{k},s)\neq(\vec{q},u)$ and $(\vec{p},r)\neq(\vec{l},v)$ due to the Pauli exclusion principle. As for the quantized complex field in the comparable situation, there are four possible contractions---see \cref{fig: VNRP spinor quad 3} for the corresponding Feynman diagrams. For massive fields, the amplitude is
\begin{multline}
  \mathcal{A}^{(2)}_{1_{\vec{k},s},\bar{1}_{\vec{p},r},1_{\vec{q},u},\bar{1}_{\vec{l},v},g} =\\
  (-\ii  \lambda)^{2}\sqrt{\frac{m}{\omega_{\vec{k}}L^{n}}}e^{-\ii  \omega_{\vec{k}}t}\,\sqrt{\frac{m}{\omega_{\vec{p}}L^{n}}}e^{-\ii  \omega_{\vec{p}}t}\\
  \times \sqrt{\frac{m}{\omega_{\vec{q}}L^{n}}}e^{-\ii  \omega_{\vec{q}}t}\sqrt{\frac{m}{\omega_{\vec{l}}L^{n}}}e^{-\ii  \omega_{\vec{l}}t}\int_{-\infty}^{\infty}\frac{d\omega}{2\pi} \frac{2\ii\omega}{\omega^{2}-\Omega^{2}+\ii  \epsilon}\\
  \times \Big[\tilde{f}(d-p-q)\,\tilde{f}(-d-k-l)(\conj{u}_{\vec{q},u,A}v_{\vec{p},r}^{A})(\conj{u}_{\vec{k},s,B}v_{\vec{l},v}^{B}) \\
  +\tilde{f}(d-q-l)\,\tilde{f}(-d-k-p)(\conj{u}_{\vec{q},u,A}v_{\vec{l},v}^{A})(\conj{u}_{\vec{k},s,B}v_{\vec{p},r}^{B}) \\
  +\tilde{f}(d-k-p)\,\tilde{f}(-d-q-l)(\conj{u}_{\vec{k},s,A}v_{\vec{p},r}^{A})(\conj{u}_{\vec{q},u,B}v_{\vec{l},v}^{B}) \\
  +\tilde{f}(d-k-l)\,\tilde{f}(-d-p-q)(\conj{u}_{\vec{k},s,A}v_{\vec{l},v}^{A})(\conj{u}_{\vec{q},u,B}v_{\vec{p},r}^{B}) \Big]~,
\end{multline}
and the case of a massless field is recovered like for \cref{eqn: vnrp amplitude spinor field detector}.

\begin{figure}
  \includegraphics{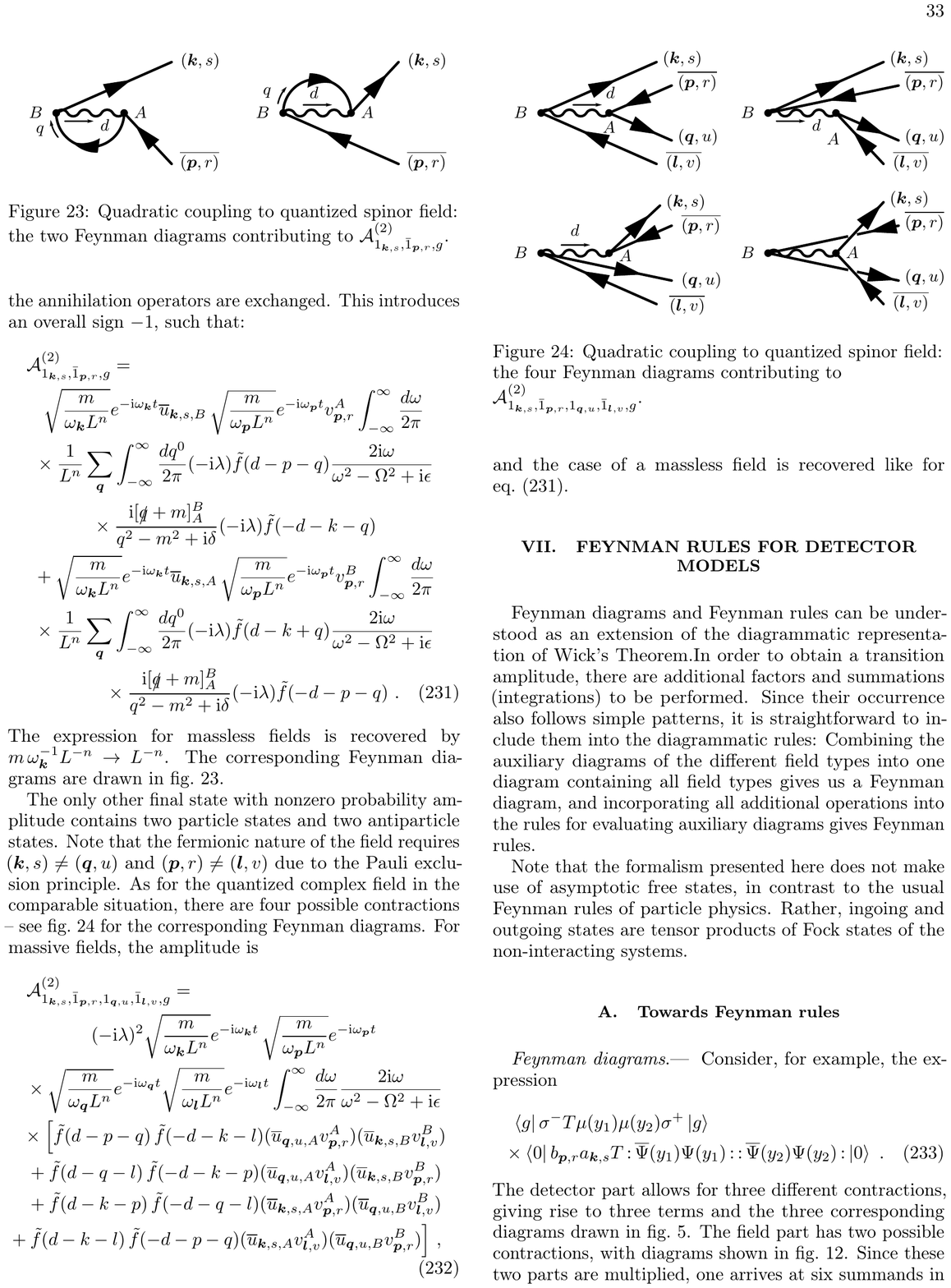}
  \caption{Quadratic coupling to quantized spinor field: the four Feynman diagrams contributing to $\mathcal{A}^{(2)}_{1_{\vec{k},s},\bar{1}_{\vec{p},r},1_{\vec{q},u},\bar{1}_{\vec{l},v},g}$.}
  \label{fig: VNRP spinor quad 3}
\end{figure}


\section{Feynman rules for detector models}\label{sec: feynman rules}

Feynman diagrams and Feynman rules can be understood as an extension of the diagrammatic representation of Wick's theorem.In order to obtain a transition amplitude, there are additional factors and summations (integrations) to be performed. Since their occurrence also follows simple patterns, it is straightforward to include them into the diagrammatic rules: Combining the auxiliary diagrams of the different field types into one diagram containing all field types gives us a Feynman diagram, and incorporating all additional operations into the rules for evaluating auxiliary diagrams gives Feynman rules.

Note that the formalism presented here does not make use of asymptotic free states, in contrast to the usual Feynman rules of particle physics. Rather, ingoing and outgoing states are tensor products of Fock states of the noninteracting systems.

\subsection{Towards Feynman rules}\label{sec: towards Feynman rules}

\paragraph{Feynman diagrams} Consider, for example, the expression
\begin{multline}
  \bra{g}\sigma^{-}T\mu(y_{1})\mu(y_{2})\sigma^{+}\ket{g}\\
  \times \bra{0}b_{\vec{p},r}a_{\vec{k},s}T\normalord \conj\Psi(y_{1}) \Psi(y_{1}) \normalord\normalord \conj\Psi(y_{2}) \Psi(y_{2}) \normalord\ket{0}~.
\end{multline}
The detector part allows for three different contractions, giving rise to three terms and the three corresponding diagrams drawn in \cref{fig: Wick detector example 2}. The field part has two possible contractions, with diagrams shown in \cref{fig: Wick spinor quad example 2}. Since these two parts are multiplied, one arrives at six summands in total. Both the diagrams for the detector and the field always have the same amount of vertices (equal to the order in perturbation theory), and can be combined into six Feynman diagrams which simultaneously show contractions for the detector and the field. 

\paragraph{Feynman rules} From the VNRP calculations in \cref{sec: VNRP}, we can directly deduce that outgoing quanta contribute the factors
\begin{equation}\begin{aligned}
  e^{-\ii  \Omega t_{0}}~, &\quad\quad \frac{1}{\sqrt{2\omega_{\vec{k}}L^{n}}}e^{-\ii  \omega_{\vec{k}}t_{0}}~,\\  \frac{m}{\sqrt{\omega_{\vec{k}}L^{n}}}e^{-\ii  \omega_{\vec{k}}t_{0}}\,u_{\vec{k},s}^{A}~, &\quad\quad \frac{m}{\sqrt{\omega_{\vec{k}}L^{n}}}e^{-\ii  \omega_{\vec{k}}t_{0}}\,\conj{v}_{\vec{k},s,A}
\end{aligned}\end{equation}
to the amplitude [for the detector, quantized scalar fields, quantized spinor field particles and (massive) spinor field antiparticles from left to right and top to bottom]. Internal lines between vertices give (Fourier transformed) Feynman propagators as usual; for the detector, scalar fields, and (massive) spinor fields from left to right:
\begin{align}
 \frac{2\ii\omega}{\omega^{2}-\Omega^{2}+\ii  \epsilon}~, && \frac{\ii}{k^{2}-m^{2}+\ii  \epsilon}~, && \frac{\ii[\slashed{k}+m]^{A}_{B}}{k^{2}-m^{2}+\ii  \epsilon}~.
\end{align}
Each vertex contributes
$
  \tilde{f}(k)
$
where $k$ is the energy-momentum flowing into the vertex. Finally, all internal momenta (the ones in the propagators) need to be integrated or summed according to
  \begin{align}
    &\int_{-\infty}^{\infty}\frac{d\omega}{2\pi}~, & \frac{1}{L^{n}}\sum_{\vec{k}}\int_{-\infty}^{\infty}\frac{dk^{0}}{2\pi}
  \end{align}
for the detector (left) and the fields (right).

Let us now consider the factors associated with incoming quanta on one hand, and cases where two ladder operators are contracted, i.e., diagrammatically, when an incoming line is directly connected to an outgoing line. The contractions related to the first case are $\contraction{}{\sigma}{^{-}}{\mu}\sigma^{-}\mu(t_{i}) = \id\, e^{+\ii  \Omega t_{i}}$ for the detector, $\contraction{}{\vphantom{\Phi}\op{a}}{_{\vec{k}}\cdots }{\Phi} \op{a}_{\vec{k}}\cdots  \Phi^{\dagger}(x) =\id\,  \varphi^{*}_{\vec{k}}(x)$ for a quantized real field, and so on. The exponential functions simply contribute to the Fourier transform of $f$, such that the incoming momenta appear in the momentum balance of the vertex they are attached to. The normalization factors and spinors are left over and assigned to the incoming diagram leg. Moreover, since we are working in the interaction picture, there is a time-dependent phase in addition to the ladder operator creating the state, for example,
\begin{align}
  \Iket{e;t_{0}} &= e^{+\ii  \Omega t_{0}}\ket{e}, & \Iket{1_{\vec{k}};t_{0}} &= e^{+\ii  \omega_{\vec{k}}t_{0}}\ket{1_{\vec{k}}}~.
\end{align}
Altogether, the incoming legs of the diagram contribute factors
\begin{equation}\begin{aligned}
  e^{+\ii  \Omega t}~, &\quad\quad \frac{1}{\sqrt{2\omega_{\vec{k}}L^{n}}}e^{+\ii  \omega_{\vec{k}}t}~,\\
  \frac{m}{\sqrt{\omega_{\vec{k}}L^{n}}}e^{+\ii  \omega_{\vec{k}}t}\conj{u}_{\vec{k},s,A}~, &\quad\quad \frac{m}{\sqrt{\omega_{\vec{k}}L^{n}}}e^{+\ii  \omega_{\vec{k}}t}\,v_{\vec{k},s}^{A}
\end{aligned}\end{equation}
in the same order as before.

If two ladder operators are contracted, they merely contribute Kronecker deltas ensuring that both are of the same type (momentum, spin) according to \cref{eqn: real field full vev contractions 2,eqn: complex field full vev contractions 2,eqn: spinor field full vev contractions 2,eqn: detector full vev contractions 3}. In the full amplitude, they additionally contribute phases from the interaction picture. For example, in case of a quantized real field:
\begin{multline}
  \Ibraket{1_{\vec{k}},g;t|U^{(0)}(t,t_{0})|1_{\vec{p}},g;t_{0}}
  = e^{-\ii  \omega_{\vec{k}}(t-t_{0})} \,\delta_{\vec{k},\vec{p}}~.
\end{multline}
Similar results hold for the monopole moment, and the remaining fields. Disconnected lines from the left to the right end of the diagram will thus contribute
\begin{align}
  e^{-\ii  \Omega (t-t_{0})}~, && e^{-\ii  \omega_{\vec{k}}(t-t_{0})}\,\delta_{\vec{k},\vec{p}}~, && e^{-\ii  \omega_{\vec{k}}(t-t_{0})}\,\delta_{\vec{k},\vec{p}}\,\delta_{s,r}
\end{align}
for the detector, quantized scalar fields, and quantized spinor fields, in that order.

The Feynman diagrams presented in this chapter, as well as the auxiliary diagrams in \cref{sec: diagrammatic}, do not contain lines for the detector in its ground state---in this they differ from the representations we used in \cref{sec: VEP,sec: renormalizing} and it differs from the diagrams used in \cite{reznik_violating_2005}. The reason for this is that Feynman diagrams generally describe the energy-momentum flows involved in the interaction processes. The detectors (as macroscopic physical objects) do carry energy and momentum, but in the UDW-type detector models, only the excitations of its internal degree of freedom couple to the field. This is why lines are drawn not for the detector itself but instead for the excitation, i.e., for the quantum of energy, that the detector can carry.

\subsection{Feynman rules for single detectors}

We summarize the Feynman rules for all four UDW-type detector models considered in this work. The rules are applicable for detectors on Minkowski spacetime , the trajectory of the detector is determined by its spacetime profile $f$: if the spatial dependence varies with time in the restframe of the field, the detector is moving. Assume the field is initially in the number eigenstate $\ket{a_{i}}$ at time $t_{0}$, and the detector in the state $\ket{\beta_{i}}$ (which may either be the ground state $\ket{g}$ or the excited state $\ket{e}$). The aim is to calculate the probability amplitude
\begin{equation}\label{eqn: generic probability amplitude}
  \mathcal{A} = \braket{a_{f},\beta_{f};t|U(t,t_{0})|a_{i},\beta_{i};t_{0}}
\end{equation}
for the detector to end up in the state $\ket{\beta_{f}}$ and the field in $\ket{a_{f}}$ at time $t>t_{0}$ to order $\lambda^{n}$ in perturbation theory. In general, this is accomplished by combining the perturbative expansion \cref{eqn: perturbative expansion,eqn: time-evolution order n} of the time-evolution operator with the appropriate Wick theorem. 

The calculations can be organized using a set of Feynman rules; the procedure itself is independent of the detector model and the type of field:

\begin{GFR}
  In order to calculate a probability amplitude of the type \cref{eqn: generic probability amplitude} to order $\lambda^{n}$ in perturbation theory, proceed as follows:
  \setdefaultenum{1.}{}{}{} 
  \begin{compactenum}
    \item At a given order $k\leq n$, draw all possible Feynman diagrams using the rules appropriate for the detector model and field in question, as described in the following sections.
    \item Evaluate the amplitude $\mathcal{A}^{(k)}_{i}$ of each diagram by first multiplying the factors associated with the individual parts of that diagram, and then summing (integrating) over all undetermined energies and momenta.
    \item Sum up the amplitudes at order $k$: their contribution to the total amplitude $\mathcal{A}$ is
    \begin{equation}
      \mathcal{A}^{(k)} = \sum_{i}\mathcal{A}^{(k)}_{i}
    \end{equation}
    \item Repeat the above steps for every order $0 \leq k \leq n$, and add the resulting amplitudes to obtain $\mathcal{A}$:
    \begin{equation}
      \mathcal{A} = \sum_{k=0}^{n}\frac{1}{k!}\mathcal{A}^{(k)}
    \end{equation}
  \end{compactenum}
  \setdefaultenum{(a)}{}{}{} 
\end{GFR}

The four sections below specify how to draw and evaluate the Feynman diagrams at a given order $k$ for the different detector models and fields.

\subsubsection{Linear coupling to quantized real fields}\label{sec: feynman rules real linear}

The Feynman diagrams for the detector model 1 coupling a single UDW-type detector to a quantized real field consist of the seven different types of elements listed below. Each element is associated with a factor in the amplitude $\mathcal{A}^{(k)}_{i}$ of an individual diagram. The corresponding factor is printed to the right of each element. Straight lines represent excitations of the quantized scalar field, wiggly lines excitations of the UDW-type detector.
  \newlength{\diagramheight}
  \setlength{\diagramheight}{1cm}
  
\begin{EFD}
  \hfill\\
  \includegraphics{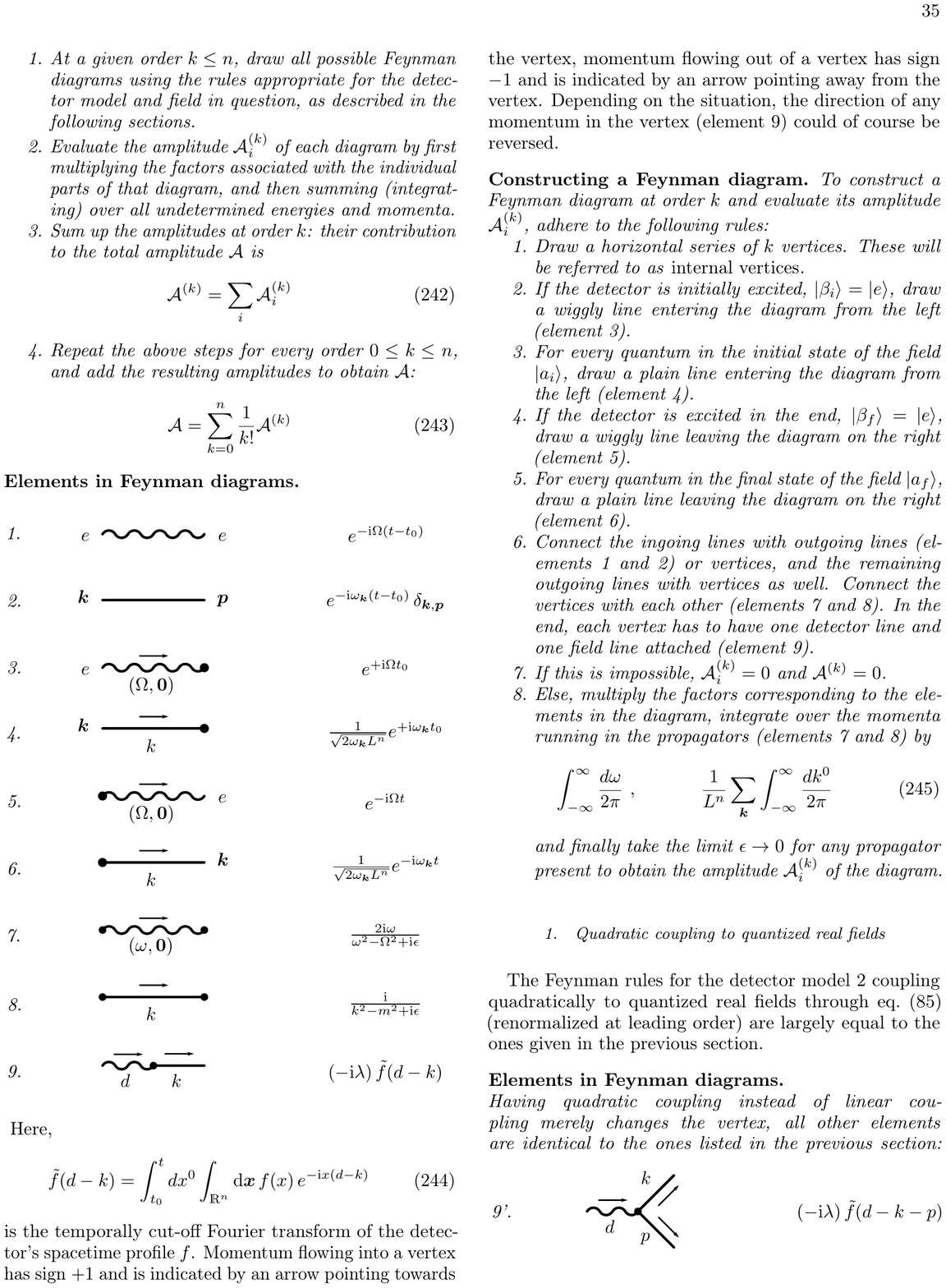}
\end{EFD}

\noindent{}
Here,
\begin{equation}
  \tilde{f}(d-k) = \int_{t_{0}}^{t}dx^{0}\int_{\R^{n}}\dd \vec{x}\,f(x)\,e^{-\ii  x(d-k)}
\end{equation}
is the temporally cutoff Fourier transform of the detector's spacetime profile $f$. Momentum flowing into a vertex has sign $+1$ and is indicated by an arrow pointing towards the vertex, momentum flowing out of a vertex has sign $-1$ and is indicated by an arrow pointing away from the vertex. Depending on the situation, the direction of any momentum in the vertex (element 9) could of course be reversed.

\begin{CFD}
  To construct a Feynman diagram at order $k$ and evaluate its amplitude $\mathcal{A}^{(k)}_{i}$, adhere to the following rules:
  \setdefaultenum{1.}{}{}{} 
  \begin{compactenum}
    \item Draw a horizontal series of $k$ vertices. These will be referred to as \emph{internal vertices}.
    \item If the detector is initially excited, $\ket{\beta_{i}} = \ket{e}$, draw a wiggly line entering the diagram from the left (element 3).
    \item For every quantum in the initial state of the field $\ket{a_{i}}$, draw a plain line entering the diagram from the left (element 4).
    \item If the detector is excited in the end, $\ket{\beta_{f}} = \ket{e}$, draw a wiggly line leaving the diagram on the right (element 5).
    \item For every quantum in the final state of the field $\ket{a_{f}}$, draw a plain line leaving the diagram on the right  (element 6).
    \item Connect the ingoing lines with outgoing lines (elements 1 and 2) or vertices, and the remaining outgoing lines with vertices as well. Connect the vertices with each other (elements 7 and 8). In the end, each vertex has to have one detector line and one field line attached (element 9).
    \item If this is impossible, $\mathcal{A}^{(k)}_{i} = 0$ and $\mathcal{A}^{(k)} = 0$.
    \item Else, multiply the factors corresponding to the elements in the diagram, integrate over the momenta running in the propagators (elements 7 and 8) by
    \begin{align}
      &\int_{-\infty}^{\infty}\frac{d\omega}{2\pi}~, & \frac{1}{L^{n}}\sum_{\vec{k}}\int_{-\infty}^{\infty}\frac{dk^{0}}{2\pi}
    \end{align}
    and finally take the limit $\epsilon \to 0$ for any propagator present to obtain  the amplitude $\mathcal{A}^{(k)}_{i}$ of the diagram.
  \end{compactenum}
  \setdefaultenum{(a)}{}{}{} 
\end{CFD}

\subsubsection{Quadratic coupling to quantized real fields}\label{sec: feynman rules real quad}

The Feynman rules for the detector model 2 coupling quadratically to quantized real fields through \cref{eqn: interaction H scalar quad renorm} (renormalized at leading order) are largely equal to the ones given in the previous section.

\begin{EFD}
  \hfill{}\\
  Having quadratic coupling instead of linear coupling merely changes the vertex, all other elements are identical to the ones listed in the previous section:
  \includegraphics{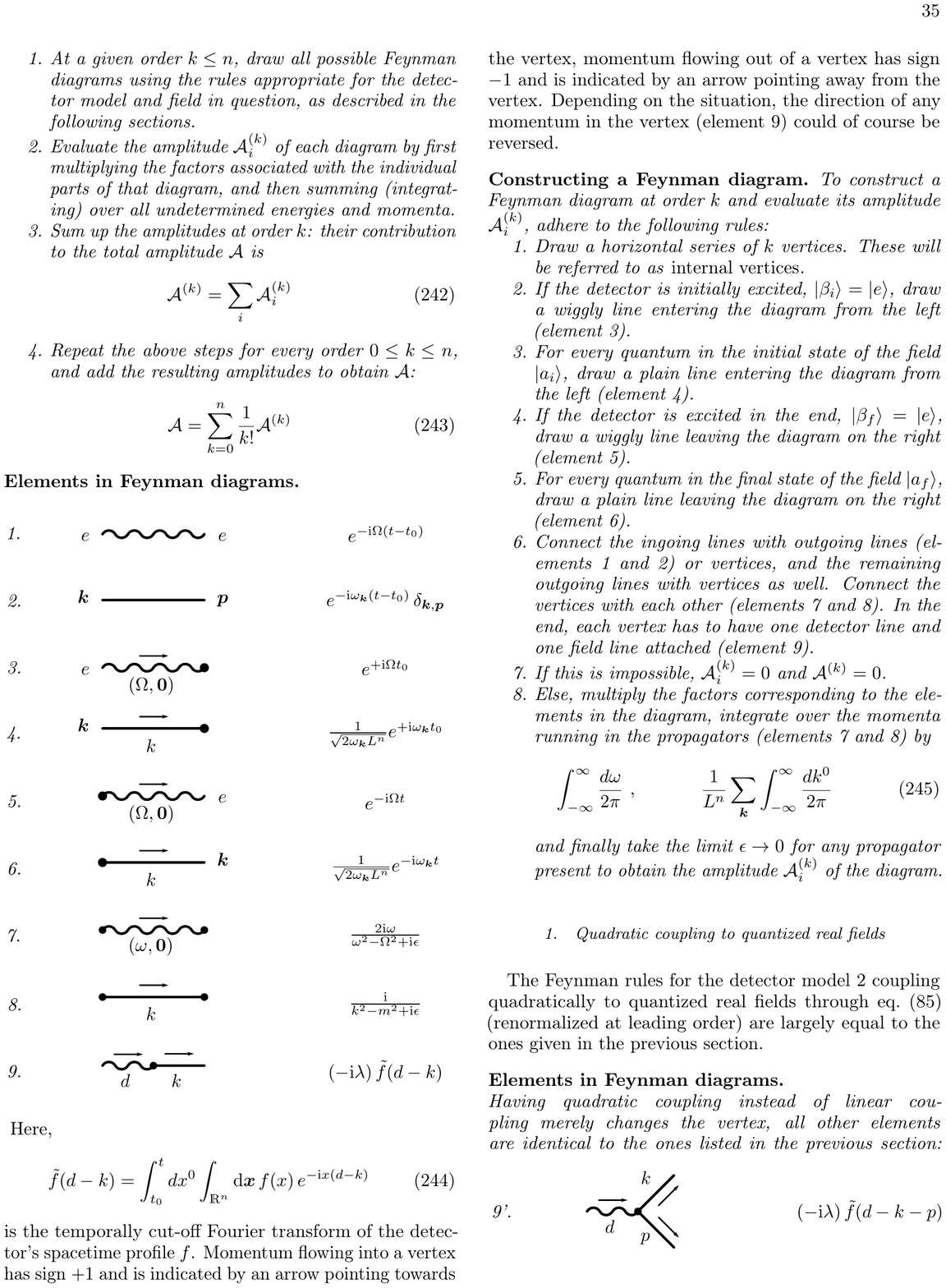}
\end{EFD}

\medskip{}

\begin{CFD}
  Construction of a Feynman diagram at a given order $k$ proceeds as before, with the exception of
  \begin{compactenum}
    \item[6'.] [\dots] In the end, each vertex has to have one detector line and \emph{two} field lines attached (element 9').
  \end{compactenum}
and the appearance of symmetry factors because each vertex corresponds to two identical field operators $\Phi(y_{i})\Phi(y_{i})$:
  \begin{compactenum}
    \item[9.] Multiply the amplitude with the symmetry factor of the diagram (compare \cref{sec: auxiliary diagrammatic real quad}).
  \end{compactenum}
\end{CFD}

\subsubsection{Quadratic coupling to quantized complex fields}\label{sec: feynman rules complex quad}
In case of the (leading order renormalized) model 3 and a quantized complex field, one additionally has to distinguish between particle and antiparticle quanta. Diagrammatically, this is achieved by turning field lines into arrows.

\begin{EFD}
  \hfill\\
  \includegraphics{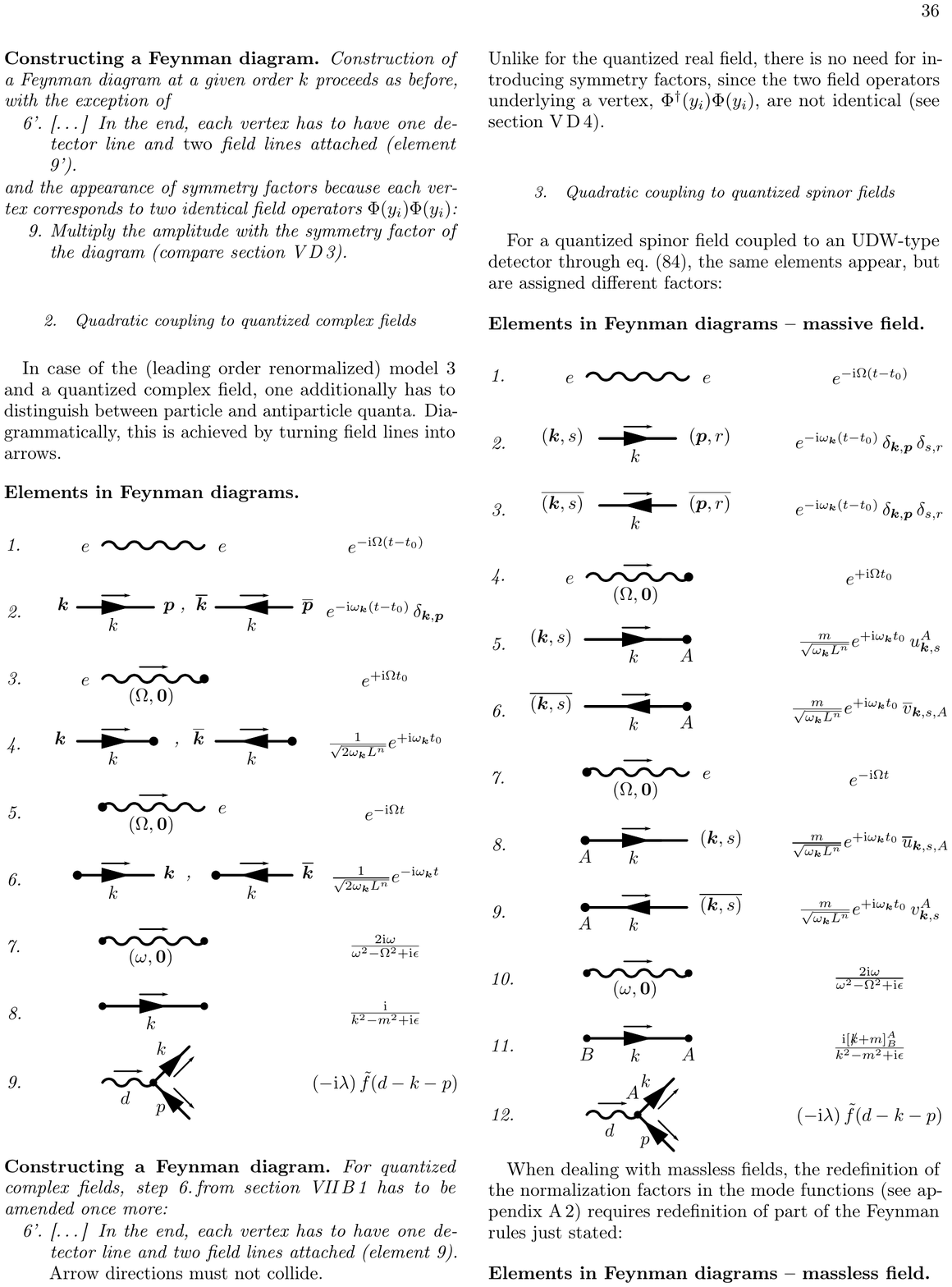}
\end{EFD}

\medskip{}

\begin{CFD}
  For quantized complex fields, step 6.\,from \cref{sec: feynman rules real linear} has to be amended once more:
  \begin{compactenum}
    \item[6'.] [\dots] In the end, each vertex has to have one detector line and two field lines attached (element 9). \emph{Arrow directions must not collide.}
  \end{compactenum}
\end{CFD}
\noindent{}Unlike for the quantized real field, there is no need for introducing symmetry factors, since the two field operators underlying a vertex, $\Phi^{\dagger}(y_{i})\Phi(y_{i})$, are not identical (see \cref{sec: auxiliary diagrammatic complex quad}).

\subsubsection{Quadratic coupling to quantized spinor fields}\label{sec: feynman rules spinor quad}

For a quantized spinor field coupled to an UDW-type detector through \cref{eqn: interaction H spinor quad renorm}, the same elements appear, but are assigned different factors:

\begin{EFDMF}
  \hfill\\
   \includegraphics{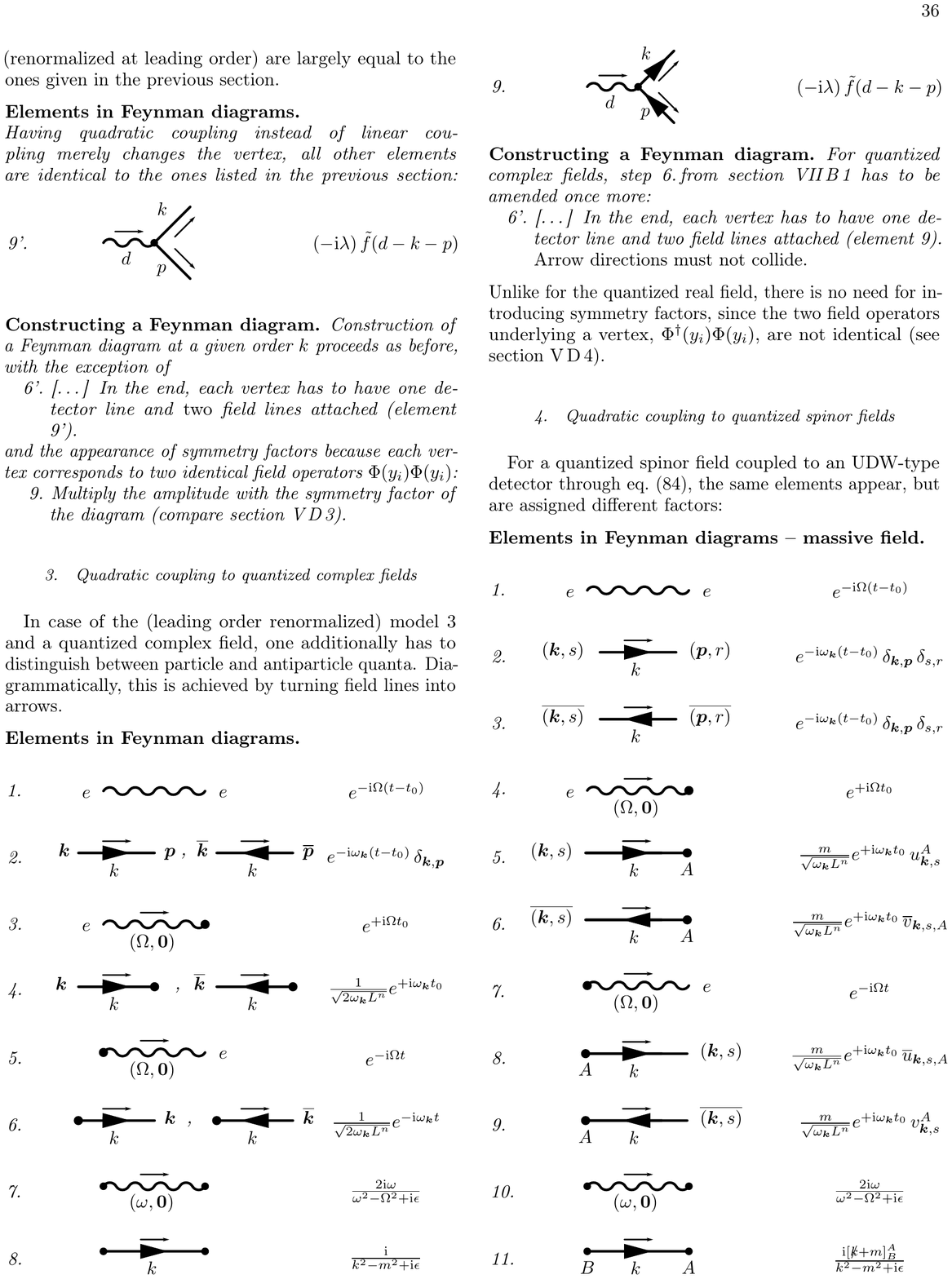}
\end{EFDMF}

When dealing with massless fields, the redefinition of the normalization factors in the mode functions (see \cref{app: spinor fields}) requires redefinition of part of the  Feynman rules just stated:

\vfill{}

\begin{EFDMLF}
  \hfill\\
  \includegraphics{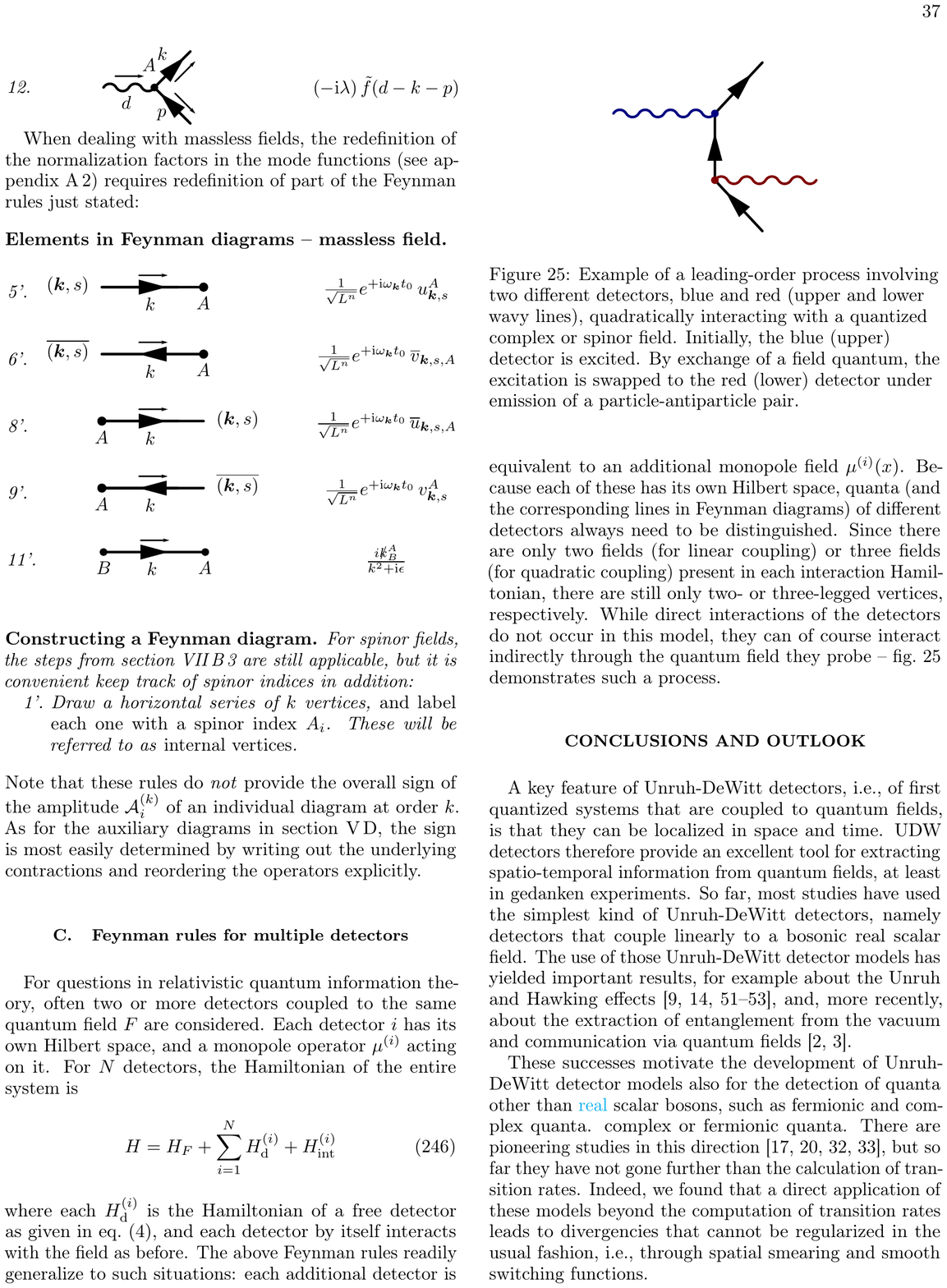}
\end{EFDMLF}

\medskip{}

\begin{CFD}
  For spinor fields, the steps from \cref{sec: feynman rules complex quad} are still applicable, but it is convenient keep track of spinor indices in addition:
  \begin{compactenum}
    \item[1'.] Draw a horizontal series of $k$ vertices, \emph{and label each one with a spinor index} $A_{i}$. These will be referred to as \emph{internal vertices}.
  \end{compactenum}
\end{CFD}
\noindent{}Note that these rules do \emph{not} provide the overall sign of the amplitude $\mathcal{A}^{(k)}_{i}$ of an individual diagram at order $k$. As for the auxiliary diagrams in \cref{sec: diagrammatic}, the sign is most easily determined by writing out the underlying contractions and reordering the operators explicitly.

\subsection{Feynman rules for multiple detectors}

For questions in relativistic quantum information theory, often two or more detectors coupled to the same quantum field $F$ are considered. Each detector $i$ has its own Hilbert space, and a monopole operator $\mu^{(i)}$ acting on it. For $N$ detectors, the  Hamiltonian of the entire system is
\begin{equation}
H = H_{F} + \sum_{i=1}^{N} H_{\mathrm{d}}^{(i)} + H_{\mathrm{int}}^{(i)}
\end{equation}
where each $H_{\mathrm{d}}^{(i)}$ is the Hamiltonian of a free detector as given in \cref{eqn: H detector}, and each detector by itself interacts with the field as before. The above Feynman rules readily generalize to such situations: each additional detector is equivalent to an additional monopole field $\mu^{(i)}(x)$. Because each of these has its own Hilbert space, quanta (and the corresponding lines in Feynman diagrams) of different detectors always need to be distinguished. Since there are only two fields (for linear coupling) or three fields (for quadratic coupling) present in each interaction Hamiltonian, there are still only two- or three-legged vertices, respectively. While direct interactions of the detectors do not occur in this model, they can of course interact indirectly through the quantum field they probe---\cref{fig: fr example two detectors} demonstrates such a process.

\begin{figure}
  \includegraphics{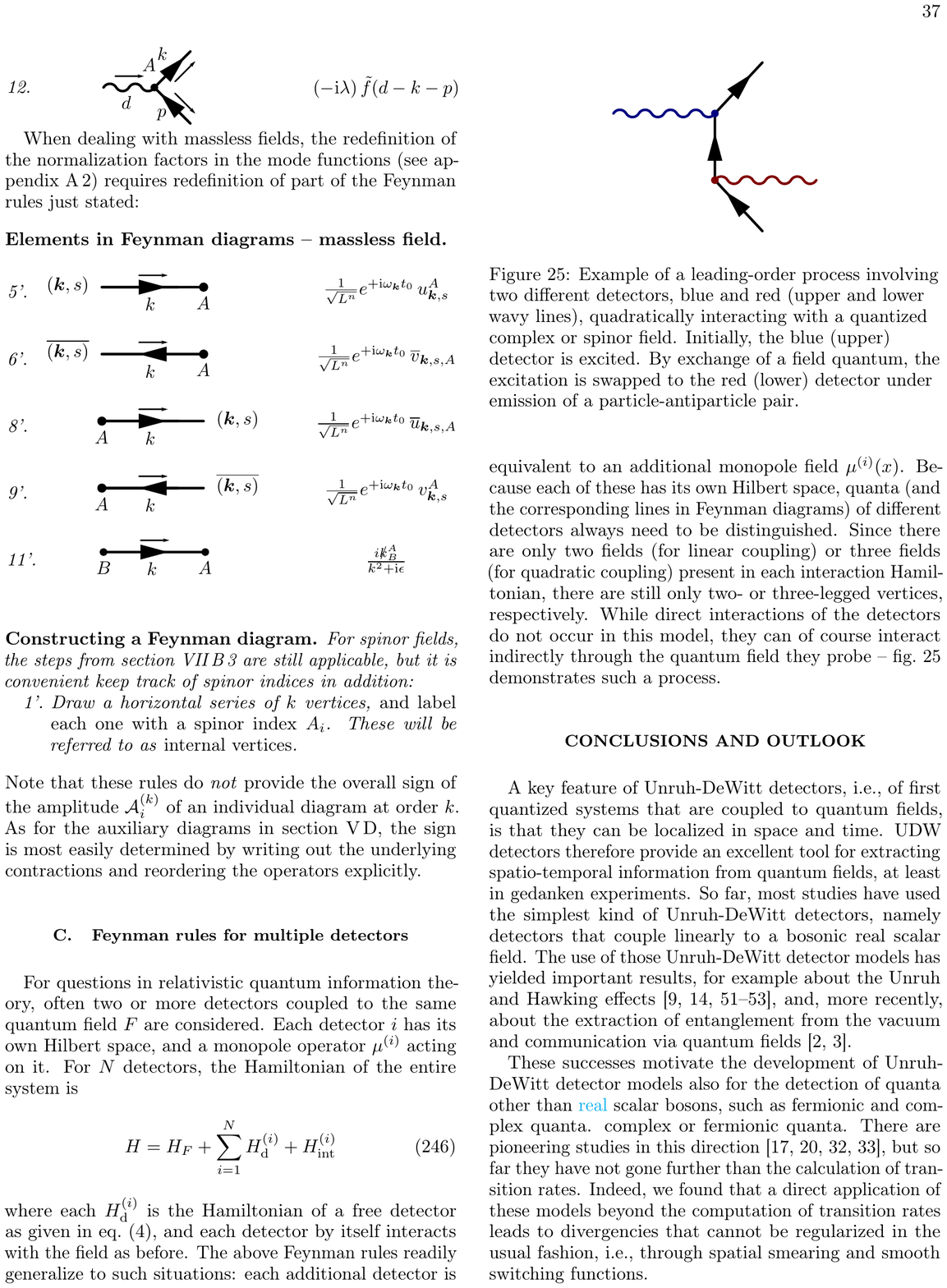}
  \caption{Example of a leading-order process involving two different detectors, blue and red (upper and lower wavy lines), quadratically interacting with a quantized complex or spinor field. Initially, the blue (upper) detector is excited. By exchange of a field quantum, the excitation is swapped to the red (lower) detector under emission of a particle-antiparticle pair.}
  \label{fig: fr example two detectors}
\end{figure}


\section{Conclusions and Outlook}

A key feature of Unruh-DeWitt detectors, i.e., of first quantized systems that are coupled to quantum fields, is that they can be localized in space and time. UDW detectors therefore provide an excellent tool for extracting spatiotemporal information from quantum fields, at least in gedanken experiments. So far, most studies have used the simplest kind of Unruh-DeWitt detectors, namely detectors that couple linearly to a bosonic real scalar field. The use of those Unruh-DeWitt detector models has yielded important results, for example about the Unruh and Hawking effects \cite{unruh_notes_1976,crispino_unruh_2008,Jormacircular,Keith2014,JormaJuarez}, and, more recently,  about the extraction of entanglement from the vacuum and communication via quantum fields \cite{jonsson_quantum_2014,Jonsson2015}. 

These successes  motivate the development of Unruh-DeWitt detector models also for the detection of quanta other than real scalar bosons, such as fermionic and complex quanta.
There are pioneering studies in this direction \cite{takagi_response_1985,takagi_vacuum_1986,hinton_particle_1984,iyer_detection_1980}, but so far they have not gone further than the calculation of transition rates. Indeed, we found that a direct application of these models beyond the computation of transition rates leads to divergencies that cannot be regularized in the usual fashion, i.e., through spatial smearing and smooth switching functions.    

Here, we  introduced improved Unruh-DeWitt type detector models that cure these divergencies by means of renormalization. The new UDW detector models therefore possess the full versatility of the usual bosonic Unruh-DeWitt detectors. For example, they now allow for the investigation of the entanglement structure of the fermionic vacuum.

To this end, we started by analyzing the four different particle detector models summarized in \cref{tab: overview detector models}. 
We then showed that, without further treatment, the vacuum excitation probability is already divergent at leading order in perturbation theory for the models 2-4, which are coupling quadratically in the field. These divergences cannot  be regularized by the introduction of a spacetime profile (switching function and spatial smearing), unlike divergencies appearing in the case of the traditional linearly coupled UDW detector \cite{grove_notes_1983,sriramkumar_finite-time_1996,louko_transition_2008,takagi_vacuum_1986}. 

We found the origin of these divergences to be the fact that the quadratic coupling gives rise to loop diagrams (see left side of \cref{fig: VEP processes 1st order})  which are the exact equivalent of tadpole diagrams in quantum electrodynamics (QED) \cite{greiner_quantum_2008}, with the detector playing the role of the electromagnetic field. Motivated by the analogy to QED we renormalized models 2-4 at leading order in perturbation theory by normal-ordering the interaction Hamiltonian. Physically, this corresponds to setting the (otherwise infinite) interaction energy to zero.

Finally, we derived Feynman rules for all four detector models in \cref{sec: feynman rules}. The Feynman rules allow for the systematic, graph-based calculation and renormalization of, for example, transition probabilities to arbitrary order, in the  way familiar from particle physics.

With the Feynman rules for the various detector models at hand, one can now straightforwardly study the behaviors not only of bosonic but also of fermionic fields, for example, concerning the communication between two first quantized quantum systems (UDW detectors) via a quantum field, or the harvesting or farming of entanglement from the vacuum. In particular, the latter may now also allow one to develop quantitative tools to measure entanglement in the fermionic vacuum. 

There are various further promising possibilities, in particular for applying the new renormalized detector model for spinor fields. Among others, this detector model can now be used to study the fermionic Unruh effect beyond the detector transition rates. This should allow one to study the universality and the thermalization of a detector in the fermionic Unruh effect \cite{martin-martinez_Unruh2013}  and to study possible finite time effects (see for instance \cite{Anti_Unruh2013}). 

We also addressed the question of which bosonic detector model allows for a fair comparison with a fermionic model. We found that the leading order behavior of the renormalized models 3 and 4 can be compared on equal footing. 

Finally, while UDW detectors are of great value for gedanken experiments, let us recall that the UDW detector for bosonic fields can be viewed as an idealization of a realistic system, such as the lowest two energy levels of an atom or molecule \cite{martin-martinez_wavepacket_2013,Alhambra2013}. The atom detects photons by internal excitation that can later be measured via standard experimental procedures. It should be interesting to explore to what extent the UDW detector for fermionic fields, model 4 in \cref{tab: overview detector models}, can also be viewed as a simple idealization of a realistic system, such as the optical cavity envisioned in the introduction, which would detect fermions through electromagnetic mode excitations. More generally, it should be interesting to further investigate how the proposed detector models along with their mode selection and switching functions could find experimental realizations.

\appendix


\section{Notes on quantum fields}\label{app: quantum fields}

In this section, we summarize results of the classical and quantum field theory of scalar fields on Minkowski spacetime, as well as spinor fields on $\R^{1,1}$ and $\R^{1,3}$. All fields are spatially restricted to the cavity
\begin{equation}
  B = [-\nicefrac{L}{2},\nicefrac{L}{2}]^{n} \subset \R^{n}
\end{equation}
with periodic boundary conditions
\begin{equation}
  \Phi(t,\vec{x} + \vec{\delta}) = \Phi(t,\vec{x}) \quad\quad \forall\,\vec{\delta} \in L\cdot\Z^{n}
\end{equation}
The objective is to establish notation and provide a reference basis for the main body of this paper. These results are obtained in close analogy to the standard treatment of unrestricted fields, see e.g.\,\cite{bjorken_relativistic_1965,peskin_introduction_1995}.

\subsection{Scalar fields}

\subsubsection{Results in classical theory}

Consider a real or complex field $\Phi$ on Minkowski spacetime. The dynamics of this field are determined by the Lagrangian densities
\begin{equation}\begin{aligned}\label{eqn: lagrange density real}
  \mathcal{L}_{\Phi} &= \frac{1}{2} (\partial_{\mu}\Phi)(\partial^{\mu}\Phi) - \frac{1}{2}m^{2}\Phi^{2}\\
  \mathcal{L}_{\Phi}^{c} &= (\partial_{\mu}\Phi^{*})(\partial^{\mu}\Phi) - m^{2}\Phi^{*}\Phi
\end{aligned}\end{equation}
(for the real and the complex field, respectively), which both yield Klein-Gordon equation
\begin{equation}\label{eqn: Klein-Gordon equation}
  (\partial_{\mu}\partial^{\mu}+m^{2})\Phi(t,\vec{x}) = 0
\end{equation}
as equation of motion. A solution space to the Klein-Gordon equation is spanned by the family of mode functions
\begin{equation}\label{eqn: scalar mode functions 1}
  \tilde\varphi_{\vec{k}}(t,\vec{x}) = \frac{1}{\sqrt{2\omega_{\vec{k}}L^{n}}} \,e^{-\ii  k\cdot x}~,
\end{equation}
where the momentum takes the following values:
\begin{align}\label{eqn: allowed momenta}
  \vec{k} &\in 2\pi/L \cdot \Z^{n}~.
\end{align}
The energy $\omega_{\vec{k}}$ is determined by the dispersion relation
\begin{equation}\label{eqn: dispersion relation}
  \omega_{\vec{k}} = \sqrt{|\vec{k}|^{2}+m^{2}}~.
\end{equation}
These mode functions have been normalized according to the condition
\begin{equation}\label{eqn: normalization scalar mode functions}
  \varinner{\tilde\varphi_{\vec{k}}}{\tilde\varphi_{\vec{p}}} =  \delta_{\vec{k},\vec{p}}
\end{equation}
with respect to the Klein-Gordon product
\begin{multline}\label{eqn: Klein-Gordon product}
  \varinner{\Phi_{1}}{\Phi_{2}}\define  -\ii   \int_{B}  \Phi_{1}(t,\vec{x}) \left[\partial_{0}\Phi_{2}^{*}(t,\vec{x})\right]\\
   - \left[\partial_{0}\Phi_{1}(t,\vec{x})\right]\Phi_{2}^{*}(t,\vec{x}) \dd \vec{x}~.
\end{multline}

It is useful to define purely spatial mode functions
\begin{equation}\label{eqn: scalar mode functions 2}
  \varphi_{\vec{k}}(\vec{x}) = \frac{1}{\sqrt{L^{n}}}~ e^{ \ii   \vec{k}\cdot \vec{x}}~.
\end{equation}
These are orthonormal with respect to the $L^{2}$-scalar product
\begin{equation}
  \inner{\varphi_{\vec{k}}}{\varphi_{\vec{p}}} = \int_{B} \varphi_{\vec{k}}^{*}(\vec{x}) \varphi_{\vec{p}}(\vec{x}) \dd \vec{x} = \delta_{\vec{k},\vec{p}}~.
\end{equation}
To avoid cluttered notation, we will sometimes drop the tilde distinguishing time-dependent from time-independent mode functions.

\subsubsection{Elements of canonical quantization}

Canonical quantization  of the free real field leads to the mode expansion
\begin{equation}\label{eqn: mode expansion real field}
  \op\Phi(t,\vec{x}) = \sum_{\vec{k}}\op{a}_{\vec{k}}\,\tilde\varphi_{\vec{k}}(t,\vec{x})
  + \op{a}^{\dagger}_{\vec{k}}\,\tilde\varphi^{*}_{\vec{k}}(t,\vec{x})
\end{equation}
for the Heisenberg picture field operator in terms of the annihilators $\op{a}_{\vec{k}}$ and creators $\op{a}^{\dagger}_{\vec{k}}$. In the Schrödinger picture,
\begin{equation}\label{eqn: SP mode expansion real field}
  \op\Phi(\vec{x}) = \sum_{\vec{k}} \frac{1}{\sqrt{2\omega_{\vec{k}}}} \Big[ \op{a}_{\vec{k}}\,\varphi_{\vec{k}}(\vec{x})
  + \op{a}^{\dagger}_{\vec{k}}\,\varphi^{*}_{\vec{k}}(\vec{x})\Big]~.
\end{equation}
The field operator is self-adjoined, $\op{\Phi}^{\dagger} = \op{\Phi}$ , and the ladder operators satisfy the canonical commutation relations
\begin{equation}\begin{aligned} \label{eqn: cr ladder op real field}
  [\op{a}_{\vec{k}},\op{a}^{\dagger}_{\vec{p}}] &= \id\,\delta_{\vec{k},\vec{p}}\\
  [\op{a}_{\vec{k}},\op{a}_{\vec{p}}] &= 0 = [\op{a}_{\vec{k}}^{\dagger},\op{a}^{\dagger}_{\vec{p}}]~.
\end{aligned}\end{equation}
After renormalizing the vacuum energy by normal-ordering, the dynamics of the system are determined by the Hamilton operator
\begin{align}\label{eqn: hamilton operator real field}
  \op{H}_{\Phi} = \sum_{\vec{k}} \omega_{\vec{k}} \op{a}^{\dagger}_{\vec{k}}\op{a}_{\vec{k}}~.
\end{align}
Particle number eigenstates are created by acting on the vacuum state with creation operators 
\begin{equation}\label{eqn: number eigenstates real}
  \ket{(n_{\vec{k}})_\vec{k}} =  \prod_{\vec{k}}\frac{1}{\sqrt{n_{\vec{k}}!}}(\op{a}_{\vec{k}}^{\dagger})^{n_{\vec{k}}}|0\rangle~.
\end{equation}
Creation operators commute, so their ordering is not relevant.

Quantization of the free complex field leads to the mode expansion
\begin{multline}\label{eqn: mode expansion complex field}
  \op\Phi(t,\vec{x}) = \sum_{\vec{k}} \frac{1}{\sqrt{2\omega_{\vec{k}}}} \Big[ \op{a}_{\vec{k}}\,e^{-\ii  \omega_{\vec{k}}t}\,\varphi_{\vec{k}}(\vec{x})  \\
   + \op{b}_{\vec{k}}^{\dagger}\,e^{+\ii  \omega_{\vec{k}}t}\,\varphi_{\vec{k}}^{*}(\vec{x})\Big]
\end{multline}
in the Heisenberg picture. The field operator is \emph{not} self-adjoint, the ladder operators satisfy the canonical commutation relations 
\begin{align}\label{eqn: cr ladder op complex field}
  [\op{a}_{\vec{k}},\op{a}^{\dagger}_{\vec{p}}] &= \id\, \delta_{\vec{k},\vec{p}}~, & [\op{b}_{\vec{k}},\op{b}^{\dagger}_{\vec{p}}] &= \id\,\delta_{\vec{k},\vec{p}}~,
\end{align}
and all other commutators not yet fixed by these relations vanish. The Hamilton operator of the free complex field is
\begin{equation}\label{eqn: hamilton operator complex field}
  \op{H} = \sum_{\vec{k}} \omega_{\vec{k}} \left( \op{a}_{\vec{k}}^{\dagger}\op{a}_{\vec{k}} + \op{b}_{\vec{k}}^{\dagger} \op{b}_{\vec{k}}\right)~,
\end{equation}
and particle number eigenstates are again generated by applying the creation operators on the vacuum. The complex field has $U(1)$ charge, so we need to distinguish between particle states created by $\op{a}$ and antiparticle states created by $\op{b}$:
\begin{multline}\label{eqn: number eigenstates complex}
  \ket{(n_{\vec{k}})_\vec{k},(\bar{n}_{\vec{p}})_\vec{p}} =\\
    \prod_{\vec{k}}\frac{1}{\sqrt{\vphantom{\bar{n}_{\vec{k}}}n_{\vec{k}}!}}(\op{a}_{\vec{k}}^{\dagger})^{n_{\vec{k}}}    \prod_{\vec{p}}\frac{1}{\sqrt{\bar{n}_{\vec{p}}!}}(\op{b}_{\vec{p}}^{\dagger})^{\bar{n}_{\vec{p}}}|0\rangle~.
\end{multline}

\subsection{Spinor fields}\label{app: spinor fields}

\subsubsection{Results in classical theory}

Let $\Psi$ be a spinor field on Minkowski spacetime $\R^{1,1}$ or $\R^{1,3}$, taking values in $\C^{4}$ as a representation space of the Clifford algebras $\Cliff(1,1)$ or $\Cliff(1,3)$, respectively. In $(1,3)$ dimensions we choose the irreducible Dirac representation, generated by the Dirac matrices
\begin{align}
  \gamma^{0} &= \begin{bmatrix} \id_{2} & 0 \\ 0 & -\id_{2} \end{bmatrix}~, & \gamma^{i} &= \begin{bmatrix} 0 & \sigma_{i} \\ -\sigma_{i} & 0 \end{bmatrix}~,
\end{align}
where $\id_{2}$ is the two-dimensional identity matrix, and
\begin{align}
  \sigma_{1} &= \begin{pmatrix} 0 & 1 \\ 1 & 0 \end{pmatrix}~, & \sigma_{2} &= \begin{pmatrix} 0 & -\ii   \\ \ii   & 0 \end{pmatrix}~, & \sigma_{3} &= \begin{pmatrix} 1 & 0 \\ 0 & -1 \end{pmatrix}
\end{align}
are the Pauli matrices. The Dirac matrices satisfy the anticommutation relations
\begin{equation*}
  \{\gamma^{\mu},\gamma^{\nu}\} = 2 \eta^{\mu\nu}\id_{2}
\end{equation*}
for $\mu, \nu \in \{0,1,2,3\}$, and therefore generate a representation of the Clifford algebra $\Cliff(1,3)^{\C}$ on $\C^{4}$. In $(1,1)$ dimensions, we use $\gamma^{0}$ and $\gamma^{3}$ to generate a (reducible) representation of $\Cliff(1,1)$ on $\C^{4}$ very similar to the Dirac representation.

The dynamics of the field are governed by the Lagrange density
\begin{equation}\label{eqn: lagrange density spinor}
  \mathcal{L}_{\Psi} = \conj\Psi (i\gamma^{\mu}\partial_{\mu}-m)\Psi{}~,
\end{equation}
leading to the Dirac equation:
\begin{equation}\label{eqn: Dirac equation}
  (i\gamma^{\mu}\partial_{\mu}-m)\Psi(t,\vec{x}) = 0~.
\end{equation}
Here $\gamma^{\mu}$ are the Dirac matrices. Solutions to the Dirac equation are linear combinations of the mode functions
\begin{equation}\label{eqn: spinor mode functions}
  \tilde\psi_{\vec{k},s,\epsilon}(t,\vec{x}) = \psi_{\vec{k},s,\epsilon}(\vec{x})\,e^{-\epsilon i\omega_{\vec{k}}t}~,
\end{equation}
where $s \in \{-\nicefrac{1}{2},\nicefrac{1}{2}\}$ describes the spin degree of freedom, and $\vec{k}$ and $\epsilon$ can take the same values as in \cref{eqn: allowed momenta}. These mode functions are taken to be normalized according to the condition
\begin{equation}\label{eqn: normalization spinor mode functions}
  \inner{\tilde\psi_{\vec{k},s,\epsilon{}}}{\tilde\psi_{\vec{p},r,\delta}} = \delta_{\vec{k},\vec{p}}\,\delta_{s,r}\,\delta_{\epsilon,\delta} = \inner{\psi_{\vec{k},s,\epsilon{}}}{\psi_{\vec{p},r,\delta}}~,
\end{equation}
where we use the (generalized) scalar product
\begin{equation}\label{eqn: spinor inner product}
  \inner{\Psi_{1}}{\Psi_{2}} \define \int_{B} \Psi_{1}^{\dagger}(t,\vec{x})\Psi_{2}(t,\vec{x})\dd \vec{x}~.
\end{equation}
Again, to lighten notation, we will sometimes drop the tilde if safely possible.

The solution of the Dirac equation in the cavity $B$ can be accomplished in close analogy to the case of a free spinor field, a detailed treatment of the latter can e.g.\,be found in \cite{bjorken_relativistic_1965}. The actual form and the properties of the mode functions depend on the spatial dimension as well as on the mass parameter $m$.

\paragraph{(1,3) dimensions}
In case of a finite mass $m \neq 0$, the mode functions in three spatial dimensions can be decomposed as
\begin{equation}\label{eqn: spinor mode function 4dim massive}
  \psi_{\vec{k},s,\epsilon}(\vec{x}) = \sqrt{\frac{m}{\omega_{\vec{k}}L^{3}}}\,u_{\vec{k},s,\epsilon}\,e^{+\epsilon \ii   \vec{k}\vec{x}}~,
\end{equation}
where the spatial dependence has been separated from the spinorial part. For the latter, we will use the well-established notation $u_{\vec{k},s,\epsilon=+1} 
= u_{\vec{k},s}$ and $u_{\vec{k},s,\epsilon=-1} = v_{\vec{k},s}$. Explicitly, they are
\begin{equation}\begin{aligned}\label{eqn: spinors 4dim massive}
  u_{\vec{k},s} = \sqrt{\frac{\omega_{\vec{k}}+m}{2m}}
  \begin{bmatrix}
    \xi_{s}\\
    \frac1{\omega_{\vec{k}}+m} \left(\sum_{i=1}^{3}\sigma^{i}k^{i}\right)\xi_{s}
  \end{bmatrix}\\
  v_{\vec{k},s} = \sqrt{\frac{\omega_{\vec{k}}+m}{2m}}
  \begin{bmatrix}
    \frac1{\omega_{\vec{k}}+m} \left(\sum_{i=1}^{3}\sigma^{i}k^{i}\right)\xi_{s}\\
    \xi_{s}
  \end{bmatrix}~,
\end{aligned}\end{equation}
where $\sigma^{i}$ are the Pauli matrices, and
\begin{align}
  \xi_{\nicefrac{1}{2}} &= \begin{pmatrix}1 \\ 0\end{pmatrix}~, & \xi_{-\nicefrac{1}{2}} &= \begin{pmatrix}0 \\ 1\end{pmatrix}
\end{align}
are eigenvectors of $\sigma^{3}$. Multiplying two spinors yields
\begin{multline}
  \conj{u}_{\vec{k},s}\,u_{\vec{p},r} = \frac{1}{2m} \Bigg[\sqrt{(\omega_{\vec{k}}+m)(\omega_{\vec{p}}+m)} \\
  -\frac{ 2si(k^{1}p^{2}-k^{2}p^{1})+\vec{k}\cdot \vec{p}}{\sqrt{(\omega_{\vec{k}}+m)(\omega_{\vec{p}}+m)}}\Bigg]\delta_{s,r}\\
  + \frac{1}{2m}\frac{2s (k^{1}p^{3}-k^{3}p^{1}) -\ii  (k^{2}p^{3}-k^{3}p^{2})}{\sqrt{(\omega_{\vec{k}}+m)(\omega_{\vec{p}}+m)}}\delta_{s,-r}
\end{multline}
\begin{multline}\label{eqn: spinor prod 4dim massive}
  \conj{u}_{\vec{k},s}\,v_{\vec{p},r} = \frac{\sqrt{(\omega_{\vec{k}}+m)(\omega_{\vec{p}}+m)}}{2m}\\
  \times \Bigg[\left( \frac{2sp^{3}}{\omega_{\vec{p}}+m}-\frac{2sk^{3}}{\omega_{\vec{k}}+m}\right)\delta_{s,r}\\
  +\left( \frac{p^{1}-2sip^{2}}{\omega_{\vec{p}}+m}-\frac{k^{1}-2sik^{2}}{\omega_{\vec{k}}+m}\right)\delta_{s,-r}\Bigg]
\end{multline}
and
\begin{equation}
  \conj{v}_{\vec{k},s}\,v_{\vec{p},r} = - \conj{u}_{\vec{k},s}\,u_{\vec{p},r}
\end{equation}
\begin{equation}
  \conj{v}_{\vec{k},s}\,u_{\vec{p},r} =  - \left(\conj{u}_{\vec{k},s}\,v_{\vec{p},r}\right)^{*} ~.
\end{equation}
The bar indicates the Dirac conjugate $\bar{u} = u^{\dagger}\gamma^{0}$ as usual.
In particular, the normalization is chosen such that for $\vec{k} = \vec{p}$ all products simplify to
\begin{equation}\label{eqn: spinor prod 4dim massive 2}
  \conj{u}_{\vec{k},s,\epsilon}\,u_{\vec{k},r,\delta} = \epsilon\,\delta_{s,r}\,\delta_{\epsilon,\delta}~.
\end{equation}
Moreover, products with the inverse order of spinors (such that they multiply to a $4\times4$ matrix) can be brought into a compact form as well:
\begin{equation}\begin{aligned}\label{eqn: inverse spinor-bar-prod 3dim massive}
  \sum_{s}u_{\vec{k},s}^{A}\conj{u}_{\vec{k},s,B} = \frac{1}{2m} \left[k \cdot \gamma +\id_{4}m \right]^{A}_{B}\\
  \sum_{s}v_{\vec{k},s}^{A}\conj{v}_{\vec{k},s,B} = \frac{1}{2m} \left[k \cdot \gamma -\id_{4}m \right]^{A}_{B}~.
\end{aligned}\end{equation}

For vanishing mass $m = 0$, the mode functions
\begin{equation}\label{eqn: spinor mode function 4dim massless}
  \psi_{\vec{k},s,\epsilon}(\vec{x}) = \frac{1}{\sqrt{L^{3}}}\,u_{\vec{k},s,\epsilon}\,e^{+\epsilon i\vec{k}\cdot\vec{x}}
\end{equation}
are simply the limit $m \to 0$ of \cref{eqn: spinor mode function 4dim massive}, but the normalization of the spinor parts is changed:
\begin{equation}\begin{aligned}\label{spinors 3dim massless periodic-bc}
  u_{\vec{k},s} = \frac{1}{\sqrt{2}}
  \begin{bmatrix}
    \xi_{s}\\
    \frac{1}{|\vec{k}|} \left(\sum_{i=1}^{3}\sigma^{i}k^{i}\right)\xi_{s}
  \end{bmatrix}\\
  v_{\vec{k},s} =\frac{1}{\sqrt{2}}
  \begin{bmatrix}
    \frac{1}{|\vec{k}|} \left(\sum_{i=1}^{3}\sigma^{i}k^{i}\right)\xi_{s}\\
    \xi_{s}
  \end{bmatrix}  ~.
\end{aligned}\end{equation}
This ensures that products of spinors are still finite. Explicitly,
\begin{multline}
  \conj{u}_{\vec{k},s}\,u_{\vec{p},r} =\\
  \frac{1}{2} \left[1-\frac{1}{|\vec{k}||\vec{p}|} \left( 2si(k^{1}p^{2}-k^{2}p^{1})+\vec{k}\cdot\vec{p}\right)\right]\delta_{s,r}\\
  + \frac{1}{2}\frac{1}{|\vec{k}||\vec{p}|} \left[ 2s (k^{1}p^{3}-k^{3}p^{1}) -\ii  (k^{2}p^{3}-k^{3}p^{2})\right]\delta_{s,-r}
\end{multline}
\begin{multline}\label{eqn: spinor prod 4dim massless}
  \conj{u}_{\vec{k},s}\,v_{\vec{p},r} =
   \frac{1}{2} \left( \frac{2sp^{3}}{|\vec{p}|}-\frac{2sk^{3}}{|\vec{k}|}\right)\delta_{s,r}\\
   \frac{1}{2} \left( \frac{p^{1}-2sip^{2}}{|\vec{p}|}-\frac{k^{1}-2sik^{2}}{|\vec{k}|}\right)\delta_{s,-r}
\end{multline}
\begin{equation}
  \conj{v}_{\vec{k},s}\,v_{\vec{p},r} = - \conj{u}_{\vec{k},s}\,u_{\vec{p},r}
\end{equation}
\begin{equation}
  \conj{v}_{\vec{k},s}\,u_{\vec{p},r} = - \left(\conj{u}_{\vec{k},s}\,v_{\vec{p},r}\right)^{*}~,
\end{equation}
and for $\vec{k} = \vec{p}$ all products vanish:
\begin{align}\label{eqn: spinor prod 4dim massless 2}
  \conj{u}_{\vec{k},s,\epsilon}\,u_{\vec{k},r,\delta} = 0~.
\end{align}
Note that these products are \emph{not} simply the limits $m \to 0$ of \cref{eqn: spinor prod 4dim massive,eqn: spinor prod 4dim massive 2}. Finally,
\begin{equation}\begin{aligned}\label{eqn: inverse spinor-bar-prod 3dim massless}
  \sum_{s}u_{\vec{k},s}^{A}\conj{u}_{\vec{k},s,B} = \frac{1}{2|\vec{k}|} \left[k \cdot \gamma \right]^{A}_{B}\\
  \sum_{s}v_{\vec{k},s}^{A}\conj{v}_{\vec{k},s,B} = \frac{1}{2|\vec{k}|} \left[k \cdot \gamma \right]^{A}_{B}~.
\end{aligned}\end{equation}

\paragraph{(1,1) dimension}
For finite mass $m \neq 0$, the mode functions can be decomposed as
\begin{equation}\label{eqn: spinor mode function 2dim massive}
  \psi_{k,s,\epsilon}(x) = \sqrt{\frac{m}{\omega_{k}L}}\,u_{k,s,\epsilon}\,e^{+\epsilon \ii   kx}~,
\end{equation}
where the spinorial parts are
\begin{equation}\begin{aligned}
  u_{k,s} = \sqrt{\frac{\omega_{k}+m}{2m}}
  \begin{bmatrix}
    \xi_{s}\\
    \frac{k}{\omega_{k}+m} \sigma^{3}\xi_{s}
  \end{bmatrix}\\
  v_{k,s} = \sqrt{\frac{\omega_{k}+m}{2m}}
  \begin{bmatrix}
    \frac{k}{\omega_{k}+m} \sigma^{3}\xi_{s}\\
    \xi_{s}
  \end{bmatrix}~.
\end{aligned}\end{equation}
Here, $\sigma^{3}$ is the third Pauli matrix. The products of spinors are
\begin{multline}
  \conj{u}_{k,s}\,u_{p,r} = \frac{\sqrt{(\omega_{k}+m)(\omega_{p}+m)}}{2m}\\
  \times \left[1-\frac{kp}{(\omega_{k}+m)(\omega_{p}+m)}\right]\delta_{s,r}
\end{multline}
\begin{multline}\label{eqn: spinor prod 2dim massive}
  \conj{u}_{k,s}\,v_{p,r} = \frac{\sqrt{(\omega_{k}+m)(\omega_{p}+m)}}{2m}\\
  \times \left( \frac{2sp}{\omega_{p}+m}-\frac{2sk}{\omega_{k}+m}\right)\delta_{s,r}
\end{multline}
\begin{equation}
  \conj{v}_{k,s}\,v_{p,r} = - \conj{u}_{k,s}\,u_{p,r}
\end{equation}
\begin{equation}
  \conj{v}_{k,s}\,u_{p,r} = -\conj{u}_{k,s}\,v_{p,r}~.
\end{equation}
For $k = p$, all relations simplify to
\begin{equation}\label{eqn: spinor prod 2dim massive 2}
  \conj{u}_{k,s,\epsilon}\,u_{k,r,\delta} = \epsilon\,\delta_{s,r}\,\delta_{\epsilon,\delta}
\end{equation}
as before. Finally,
\begin{equation}\begin{aligned}\label{eqn: inverse spinor-bar-prod 1dim massive}
  \sum_{s}u_{\vec{k},s}^{A}\conj{u}_{\vec{k},s,B} = \frac{1}{2m} \left[k_{\mu} \gamma^{\mu} +\id_{4}m \right]^{A}_{B}\\
  \sum_{s}v_{\vec{k},s}^{A}\conj{v}_{\vec{k},s,B} = \frac{1}{2m} \left[k_{\mu} \gamma^{\mu} -\id_{4}m \right]^{A}_{B}~,
\end{aligned}\end{equation}
where here $k_{\mu} \gamma^{\mu} = k^{0}\gamma^{0}-k^{3}\gamma^{3} \equiv \omega_{\vec{k}}\gamma^{0}-k\gamma^{3}$.

In case of vanishing mass $m = 0$, the mode functions can be decomposed as
\begin{equation}\label{eqn: spinor mode function 2dim massless}
  \psi_{k,s,\epsilon}(x) = \frac{1}{\sqrt{L}}\,u_{k,s,\epsilon}\,e^{+\epsilon ikx}
\end{equation}
with spinors
\begin{equation}\begin{aligned}\label{eqn: spinors 2dim massless}
   u_{k,s} = \frac{1}{\sqrt{2}}
  \begin{bmatrix}
    \xi_{s}\\
    \sgn k \,\sigma^{3}\xi_{s}
  \end{bmatrix}\\
  v_{k,s} = \frac{1}{\sqrt{2}}
  \begin{bmatrix}
    \sgn k \,\sigma^{3}\xi_{s}\\
    \xi_{s}
  \end{bmatrix}~.
\end{aligned}\end{equation}
The products then read
\begin{equation}\begin{aligned}\label{eqn: spinor prod 2dim massless}
  \conj{u}_{k,s}\,u_{p,r} &= \frac{1}{2}[1-\sgn(kp)]\,\delta_{s,r}\\
  \conj{u}_{k,s}\,v_{p,r} &= s [\sgn p-\sgn k]\,\delta_{s,r}\\
  \conj{v}_{k,s}\,v_{p,r} &= - \conj{u}_{k,s}\,u_{p,r}\\
  \conj{v}_{k,s}\,u_{p,r} &= -\conj{u}_{k,s}\,v_{p,r}~,
\end{aligned}\end{equation}
and for $k = p$ all products vanish like in four dimensions
\begin{equation}\label{eqn: spinor prod 2dim massless 2}
  \conj{u}_{k,s,\epsilon}\,u_{k,r,\delta} = 0~.
\end{equation}
Lastly,
\begin{equation}\begin{aligned}\label{eqn: inverse spinor-bar-prod 1dim massless}
  \sum_{s}u_{\vec{k},s}^{A}\conj{u}_{\vec{k},s,B} = \frac{1}{2|k|} \left[k_{\mu} \gamma^{\mu} \right]^{A}_{B}\\
  \sum_{s}v_{\vec{k},s}^{A}\conj{v}_{\vec{k},s,B} = \frac{1}{2|k|} \left[k_{\mu} \gamma^{\mu} \right]^{A}_{B}~.
\end{aligned}\end{equation}

\subsubsection{Elements of canonical quantization}

Canonical quantization of the free spinor field results in the mode expansion
\begin{equation}\label{eqn: mode expansion spinor field}
  \op\Psi(t,\vec{x}) =\sum_{\vec{k},s} \op{a}_{\vec{k},s} \tilde\psi_{\vec{k},s,+}(t,\vec{x}) + \op{b}^{\dagger}_{\vec{k},s} \tilde\psi_{\vec{k},s,-}(t,\vec{x})
\end{equation}
in the Heisenberg picture, or
\begin{equation}\label{eqn: SP mode expansion spinor field}
  \op\Psi(\vec{x}) =\sum_{\vec{k},s} \op{a}_{\vec{k},s} \psi_{\vec{k},s,+}(\vec{x}) + \op{b}^{\dagger}_{\vec{k},s} \psi_{\vec{k},s,-}(\vec{x})
\end{equation}
in the Schrödinger picture.
Here, $\op{a}_{\vec{k},s}$ and $\op{b}_{\vec{k},s}$ are particle and antiparticle ladder operators. They satisfy the canonical anticommutation relations
\begin{align}\label{eqn: acr ladder op spinor field}
  \{\op{a}_{\vec{k},s},\op{a}^{\dagger}_{\vec{p},r}\} &= \delta_{\vec{k},\vec{p}}\delta_{s,r}~, & 
  \{\op{b}_{\vec{k},s},\op{b}^{\dagger}_{\vec{p},r}\} &= \delta_{\vec{k},\vec{p}}\delta_{s,r}~,
\end{align}
and all yet undetermined relations vanish. Again after renormalizing the vacuum energy, the Hamiltonian of the free spinor field is
\begin{equation}\label{eqn: hamilton operator spinor field}
  \op{H}_{\Psi} = \sum_{\vec{k},s} \omega_{\vec{k}} \,\left(\op{a}^{\dagger}_{\vec{k},s}\op{a}_{\vec{k},s} + \op{b}_{\vec{k},s}^{\dagger}\op{b}_{\vec{k},s} \right) ~.
\end{equation}
As before, particle number eigenstates are generated from the vacuum by application of creation operators. However, the order of the creation operators is now relevant, because they anticommute.


\section{On the normalization of field modes}\label{app: normalization}

We briefly sketch why the mode functions of the scalar field are normalized with respect to \cref{eqn: normalization 1}, and those of the spinor field according to \cref{eqn: normalization 2}.

The underlying reason is that the Lagrange density of the free scalar field $\Phi$ given in \cref{eqn: lagrange density real} contains a double time derivative of the field, which leads to the conjugate momentum $\Pi =  \partial_{0}\Phi$. The Lagrange density \cref{eqn: lagrange density spinor} for spinor fields $\Psi$, on the other hand, only contains a single time derivative of the field, and leads to the conjugate momentum $\Pi = \ii   \Psi^{\dagger}$. This in turn makes said normalization conditions necessary to ensure that the canonical (anti)commutation relations of the ladder operators are met. To make the connection evident, assume the modified relations
\begin{align}\label{eqn: ccr ladder op mod}
  [\op{a}_{\vec{k}},\op{a}^{\dagger}_{\vec{p}}] &= \alpha\,\delta_{\vec{k},\vec{p}}~, & \{\op{a}_{\vec{k},s},\op{a}_{\vec{p},r}^{\dagger}\} &= \alpha\,\delta_{\vec{k},\vec{p}}\,\delta_{s,r}~,
\end{align}
and so on.

\subsection{Scalar field}

To see how the normalization arises naturally, quantize the scalar field directly in the Heisenberg picture. The task is to find a linear operator $\op\Phi(x)$ which
\begin{inparaenum}[(a)]
  \item solves the Klein-Gordon equation
  \begin{equation}
    (\dalembert +m^{2})\op\Phi(x) = 0
  \end{equation}
  as an operator equation,
  \item satisfies the usual commutation relations with its canonical conjugate field $\Pi = \partial_{0}\Phi$, namely
  \begin{equation}
    [\Phi(x),\Pi(x)]_{|_{x_{0}=y_{0}}} = \ii   \delta(\vec{x}-\vec{y})~,
  \end{equation}
  and \item is self-adjoint:
  \begin{align}
    \op\Phi^{\dagger}(x) &= \op\Phi(x)~, & \op\Pi^{\dagger}(x) &= \op\Pi(x)~.
  \end{align}
\end{inparaenum}

In the first step, assume that $\op\Phi$ can be expanded in a Fourier series using the mode functions \cref{eqn: scalar mode functions 2}:
\begin{align}
  \op\Phi(x) &= \sum_{\vec{k}} \op\Phi_{\vec{k}}(t)\,\varphi_{\vec{k}}(\vec{x})~, & \op\Pi(x) &= \sum_{\vec{k}} \op\Pi_{\vec{k}}(t)\,\varphi_{\vec{k}}(\vec{x})~,
\end{align}
where $\Pi_{\vec{k}}(t) = \partial_{0}\Phi_{\vec{k}}(t)$. In the Fourier domain, the Klein-Gordon equation becomes
\begin{equation}
  (\partial_{0}^{2}+\omega_{\vec{k}}^{2})\op\Phi_{\vec{k}}(t)
\end{equation}
and the canonical commutation relations imply
\begin{equation}\label{eqn: Fourier space ccr}
  [\op\Phi_{\vec{k}}(t),\op\Pi_{\vec{p}}^{\dagger}(t)] = \ii   \delta_{\vec{k},\vec{p}}~,
\end{equation}
where we have already exploited that
\begin{align}
  \op\Phi_{\vec{k}}^{\dagger}(t) &= \op\Phi_{-\vec{k}}(t)~, &     \op\Pi_{\vec{k}}^{\dagger}(t) &= \op\Pi_{-\vec{k}}(t)~.
\end{align}
Solving the problem now amounts to finding $\op\Phi_{\vec{k}}(t)$.

The next step is to make an ansatz for $\op\Phi_{\vec{k}}$:
\begin{equation}
  \op\Phi_{\vec{k}}(t) = u_{\vec{k}}(t)\,\op{a}_{\vec{k}} + u_{-\vec{k}}^{*}(t) \,\op{a}_{-\vec{k}}^{\dagger}~.
\end{equation}
The $u_{\vec{k}}(t)$ are simply functions, while $\op{a}_{\vec{k}}$ are linear operators. The Klein-Gordon equation reduces to 
\begin{equation}
  (\partial_{0}^{2}+\omega_{\vec{k}}^{2})\,u_{\vec{k}}(t) = 0~,
\end{equation}
which means that $u_{\vec{k}}$ is a linear combination of $e^{\pm \ii   \omega_{\vec{k}}t}$. Requiring the commutation relations \cref{eqn: ccr ladder op mod} for the ladder operators, the commutation relations \cref{eqn: Fourier space ccr} for the field lead to the condition
\begin{equation}\label{eqn: second part of scalar normalization condition}
   -\ii  \alpha\left[u_{\vec{k}}(t)\partial_{0}u_{\vec{k}}^{*}(t) - u_{\vec{k}}^{*}(t)\partial_{0}u_{\vec{k}}(t)\right] = 1~.
\end{equation}
In combination with $\inner{\varphi_{\vec{p}}}{\varphi_{\vec{k}}} = \delta_{\vec{k},\vec{p}}$, we obtain
\begin{equation}\label{eqn: scalar mode function normalization 1 mod}
  \varinner{\tilde\varphi_{\vec{k}}}{\tilde\varphi_{\vec{p}}} = \frac{1}{\alpha} \,\delta_{\vec{k},\vec{p}}~,
\end{equation}
which, in case of the canonical commutation relations of the ladder operators $\alpha = 1$, is indeed the normalization condition \cref{eqn: normalization 1}.

\subsection{Spinor field}

A very similar analysis for spinor fields leads to the normalization \cref{eqn: normalization 2} for spinor mode functions. This time, the task is to find a linear operators $\Psi(x)$ which
\begin{inparaenum}[(a)]
  \item solves the Dirac equation
  \begin{equation}
    (i\gamma^{\mu}\partial_{\mu}-m)\op\Psi(x) = 0
  \end{equation}
  as an operator equation, and
  \item satisfies the canonical anticommutation relations with its conjugate field $\Pi = \ii   \Psi^{\dagger}$:
  \begin{equation}
    \{\op\Psi^{A}(x),\op\Pi_{B}(y)\}_{|_{x^{0}=y^{0}}} = \ii   \delta(\vec{x}-\vec{y})\,\delta^{A}_{B}~.
  \end{equation}
\end{inparaenum}
The Hermitian conjugate of the field is automatically $\op\Psi_{B}^{\dagger}(x) = -\ii   \op\Pi_{B}(x)$ by definition of $\op\Pi$.

Now assume once again that the operator can be expanded in a Fourier series
\begin{equation}\begin{aligned}
  \op\Psi^{A}(x) &= \sum_{\vec{k}} \op\Psi_{\vec{k}}^{A}(t)\,\varphi_{\vec{k}}(\vec{x}) \\  \op\Pi_{B}(x) &= \sum_{\vec{k}} \op\Pi_{\vec{k},B}(t)\,\varphi_{\vec{k}}(\vec{x})~,
\end{aligned}\end{equation}
where $\op\Pi_{\vec{k},B}(t) = \ii   \op\Psi^{\dagger}_{-\vec{k},B}(t)$.
Translated to the Fourier domain, the problem becomes the following: find a linear operator $\op\Psi^{A}_{\vec{k}}(t)$ that solves the differential equation
\begin{equation}
  (i\gamma^{0}\partial_{0}+\gamma^{i}k_{i}-m)\op\Psi_{\vec{k}}^{A}(t) = 0
\end{equation}
and satisfies 
\begin{equation}\label{eqn: Fourier domain ACR}
  \{\op\Psi_{\vec{k}}^{A}(t),\op\Pi_{-\vec{p},B}(t)\} = \ii   \delta_{\vec{k},\vec{p}}\,\delta^{A}_{B}~.
\end{equation}

In the next step, make an ansatz that again separates the functional part from the operator part:
\begin{equation}\begin{aligned}
  \op\Psi^{A}_{\vec{k}}(t) &= \sum_{s} u^{A}_{\vec{k},s}(t)\,\op{a}_{\vec{k},s} + v^{A}_{-\vec{k},s}(t)\,\op{b}_{-\vec{k},s}^{\dagger}\\
  \op\Pi_{\vec{k},B}(t) &= \ii   \sum_{s} v^{\dagger}_{\vec{k},s,B}(t)\,\op{b}_{\vec{k},s} + u_{-\vec{k},s,B}^{\dagger}(t)\,\op{a}_{-\vec{k},s}^{\dagger}~.
\end{aligned}\end{equation}
The Dirac equation then reduces to
\begin{equation}\begin{aligned}
  (i\gamma^{0}\partial_{0}+\gamma^{i}k_{i}-m)\,u_{\vec{k},s}^{A}(t) &= 0\\
  (i\gamma^{0}\partial_{0}-\gamma^{i}k_{i}-m)\,v_{\vec{k},s}^{A}(t) &= 0~.
\end{aligned}\end{equation}
The spinors \cref{eqn: spinors 4dim massive} that solve the classical Dirac equation satisfy
\begin{equation}\begin{aligned}\label{eqn: dirac eq for spinors alone}
  (\gamma^{\mu}k_{\mu}-m)\,u_{\vec{k},s} &= 0\\
  (\gamma^{\mu}k_{\mu}+m)\,v_{\vec{k},s} &= 0~,
\end{aligned}\end{equation}
so it is easy to see that the above equations are solved by
\begin{equation}\begin{aligned}
  u_{\vec{k},s}^{A}(t) = C_{s}\,u_{\vec{k},s}\,e^{-\ii  \omega_{\vec{k}}t}\\
  v_{\vec{k},s}^{A}(t) = D_{s}\,v_{\vec{k},s}\,e^{+\ii  \omega_{\vec{k}}t}~,
\end{aligned}\end{equation}
where $C_{s}$, $D_{s} \in \C$. Our ansatz thus reads
\begin{multline}
  \op\Psi^{A}_{\vec{k}}(t) =\\
   \sum_{s}C_{s}\,u_{\vec{k},s}^{A}\,e^{-\ii  \omega_{\vec{k}}t}\,\op{a}_{\vec{k},s} +D_{s}v_{-\vec{k},s}^{A}\,e^{+\ii  \omega_{\vec{k}}t}\,\op{b}_{-\vec{k},s}^{\dagger}~.
\end{multline}
The anticommutation relations \cref{eqn: Fourier domain ACR}, together with those for the ladder operators \cref{eqn: ccr ladder op mod}, then lead to the condition
\begin{equation}
  \alpha \sum_{s} C_{s}^{2}u_{\vec{k},s}^{A}u_{\vec{k},r,B}^{\dagger} + D_{s}^{2}v_{-\vec{k},s}^{A}v_{-\vec{k},r,B}^{\dagger} = \delta^{A}_{B}~.
\end{equation}
Direct computation shows that this condition is met for $C_{s} = D_{s} = \sqrt{m\,(\alpha\omega_{\vec{k}})^{-1}}$. The field operator thus enjoys the mode expansion
\begin{equation}\label{eqn: spinor field operator full mode expansion}
  \op\Psi^{A}(x) = \sum_{\vec{k},s}\tilde\psi_{\vec{k},s,+}(x)\,\op{a}_{\vec{k},s} + \tilde\psi_{\vec{k},s,-}(x)\,\op{b}_{\vec{k},s}^{\dagger}
\end{equation}
with mode functions
\begin{equation}\begin{aligned}
  \tilde\psi_{\vec{k},s,+}(x) &= \sqrt{\frac{m}{\alpha\omega_{\vec{k}}L^{n}}}u_{\vec{k},s}\,e^{-\ii  k\cdot x}\\
  \tilde\psi_{\vec{k},s,-}(x) &= \sqrt{\frac{m}{\alpha\omega_{\vec{k}}L^{n}}}v_{\vec{k},s}\,e^{+\ii  k\cdot x}~.
\end{aligned}\end{equation}
These satisfy
\begin{equation}
  \inner{\tilde\psi_{\vec{k},s,\epsilon}}{\tilde\psi_{\vec{p},r,\delta}} = \frac{1}{\alpha} \delta_{\vec{k},\vec{p}}\,\delta_{s,r}\,\delta_{\epsilon,\delta}~,
\end{equation}
and, choosing $\alpha = 1$, we finally obtain \cref{eqn: normalization 2}.


\section{Proofs of Wick normal-ordering}\label{app: proofs wick}

\subsection{Quantized scalar field}\label{sec: proofs wick scalar}

In this section, \cref{thm: Wick complex t-o full vev} is proven on the basis of \cref{thm: Wick complex t-o}. We already know from \cref{thm: Wick complex t-o} how to rewrite the time-ordered product of fields as a sum of normal-ordered products with more and more contractions. To normal-order the entire product and obtain the expression stated in the theorem, we therefore merely need to commute all creation operators from right to the very left, and all annihilation operators from the left to the very right. In doing so, we will pick up commutators, between ladder operators and fields as well as between the different ladder operators themselves.
By \cref{thm: Wick complex t-o}, the full product
\begin{equation*}
(\op{a}_{\vec{p}})_{\vec{p}}\,(\op{b}_{\bar{\vec{p}}})_{\bar{\vec{p}}}\,[\,T\Phi^{\epsilon_{1}}(x_{1}) \cdots  \Phi^{\epsilon_{n}}(x_{n}) \,]\,(\op{a}^{\dagger}_{\vec{k}})_{\vec{k}}\,(\op{b}^{\dagger}_{\bar{\vec{k}}})_{\bar{\vec{k}}}
\end{equation*}
can be reduced to a sum of products of the form
\begin{equation*}
  (\op{a}_{\vec{p}})_{\vec{p}}\,(\op{b}_{\bar{\vec{p}}})_{\bar{\vec{p}}}\,\normalord \Phi^{\epsilon_{1}}(x_{1}) \cdots  \Phi^{\epsilon_{n}}(x_{n}) \normalord\,(\op{a}^{\dagger}_{\vec{k}})_{\vec{k}}\,(\op{b}^{\dagger}_{\bar{\vec{k}}})_{\bar{\vec{k}}}~.
\end{equation*}
For these holds

\begin{lem}\label{lem: towards Wick complex t-o full vev}
  Using the notation of \cref{thm: Wick complex t-o full vev}, the following identity holds:
  \begin{align}\label{eqn: towards Wick complex t-o full vev}
    &(\op{a}_{\vec{p}})_{\vec{p}}\,(\op{b}_{\bar{\vec{p}}})_{\bar{\vec{p}}}\,\normalord \Phi^{\epsilon_{1}}(x_{1}) \cdots  \Phi^{\epsilon_{n}}(x_{n}) \normalord\,(\op{a}^{\dagger}_{\vec{k}})_{\vec{k}}\,(\op{b}^{\dagger}_{\bar{\vec{k}}})_{\bar{\vec{k}}} \nonumber\\
    &= \normalord(\op{a}_{\vec{p}})_{\vec{p}}\,(\op{b}_{\bar{\vec{p}}})_{\bar{\vec{p}}}\,\Phi^{\epsilon_{1}}(x_{1}) \cdots  \Phi^{\epsilon_{n}}(x_{n}) \,(\op{a}^{\dagger}_{\vec{k}})_{\vec{k}}\,(\op{b}^{\dagger}_{\bar{\vec{k}}})_{\bar{\vec{k}}}\normalord\nonumber\\
    &\hphantom{={}}+ \sum_{\mathclap{\substack{\text{single}\\\text{contractions}}}} \normalord
    \contraction{a_{\vec{p}_{1}} \cdots}{\vphantom{\Phi}a}{_{\vec{p}_{i}} \cdots }{\Phi}%
    a_{\vec{p}_{1}} \cdots a_{\vec{p}_{i}} \cdots \Phi^{\epsilon_{j}}(x_{j}) \cdots b^{\dagger}_{\bar{\vec{k}}_{\bar{N}}}\normalord\nonumber\\
    &\hphantom{={}}+ \sum_{\mathclap{\substack{\text{double}\\\text{contractions}}}} \cdots~.
  \end{align}
  where contractions are only allowed between two ladder operators, and between a ladder operator and a field (not between two fields).
\end{lem}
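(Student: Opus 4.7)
The plan is to prove the lemma by induction on the total number of external ladder operators $M + \bar{M} + N + \bar{N}$. The base case, in which no external operators are present, is immediate: the left-hand side is the normal-ordered product itself and on the right-hand side only the first term survives. The inductive step reduces to a single technical ``pulling'' identity that moves one external ladder operator through a normal-ordered block.

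The core identity I would establish is
\begin{align*}
  a_{\vec{p}}\, \normalord \Phi^{\epsilon_{1}}(x_{1}) \cdots \Phi^{\epsilon_{n}}(x_{n}) \normalord
  &= \normalord a_{\vec{p}}\, \Phi^{\epsilon_{1}}(x_{1}) \cdots \Phi^{\epsilon_{n}}(x_{n}) \normalord \\
  &\quad + \sum_{i=1}^{n} \bigl[a_{\vec{p}},\, \Phi^{\epsilon_{i}}(x_{i})\bigr] \, \normalord \Phi^{\epsilon_{1}}(x_{1}) \cdots \widehat{\Phi^{\epsilon_{i}}(x_{i})} \cdots \Phi^{\epsilon_{n}}(x_{n}) \normalord ,
\end{align*}
where $\widehat{\,\cdot\,}$ denotes omission. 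I would prove this by expanding each field via $\Phi = \Phi_{a} + \Phi_{b}^{\dagger}$ and $\Phi^{\dagger} = \Phi_{a}^{\dagger} + \Phi_{b}$ from \cref{eqn: field separation complex}, noting that $a_{\vec{p}}$ commutes with all annihilation parts $\Phi_{a}, \Phi_{b}$ and also with $\Phi_{b}^{\dagger}$, so that commuting it to the right inside the normal-ordering only generates commutators with the $\Phi_{a}^{\dagger}$ pieces. Each such commutator is the $c$-number $[a_{\vec{p}}, \Phi^{\dagger}(x_{i})]$, which by \cref{eqn: commutators complex field parts} coincides precisely with the contraction prescribed in \cref{eqn: complex field full vev contractions 3}. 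Three analogous identities hold for $b_{\bar{\vec{p}}}$ from the left and for $a^{\dagger}_{\vec{k}}, b^{\dagger}_{\bar{\vec{k}}}$ from the right.

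For the inductive step I would pick the leftmost external annihilation operator $a_{\vec{p}_{1}}$ and apply the pulling identity to the innermost normal-ordered block. This produces a term in which $a_{\vec{p}_{1}}$ is absorbed into the normal-ordering, plus a sum over single contractions of $a_{\vec{p}_{1}}$ with one of the field operators. Before reaching the normal-ordered block, $a_{\vec{p}_{1}}$ may also be commuted past any of the external creation operators $a^{\dagger}_{\vec{k}}$ to its right, producing a contraction $\delta_{\vec{p}, \vec{k}}\,\id$ as in \cref{eqn: complex field full vev contractions 2}. In each resulting term the number of external ladder operators has decreased by one or two, so the inductive hypothesis applies. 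Repeating this procedure for every external operator yields exactly the sum on the right-hand side of \cref{eqn: towards Wick complex t-o full vev}.

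The main obstacle will be combinatorial bookkeeping: one must verify that every admissible contraction pattern is produced exactly once and that no contraction between two fields in the same normal-ordered subproduct ever appears. Field-field contractions are excluded automatically, since the pulling identity only generates commutators between an external ladder operator and a single field, and no term of the recursion ever produces a commutator of two fields. Uniqueness of each pattern follows by fixing an order in which the external operators are processed (say left-to-right for the annihilation operators, right-to-left for the creation operators) and observing that each term on the right-hand side corresponds to a unique sequence of recursive choices.
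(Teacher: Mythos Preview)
Your proposal is correct and follows essentially the same route as the paper: both arguments isolate the single ``pulling identity'' that moves one external ladder operator through a normal-ordered string of fields, producing the fully normal-ordered term plus a sum of single $c$-number contractions, and then iterate (you phrase this as formal induction on the number of external operators, the paper as ``repeated application'' of the intermediate identities \cref{eqn: towards Wick complex t-o full vev intermediate result 1,eqn: towards Wick complex t-o full vev intermediate result 2}). One small slip in your description: the external creation operators sit to the \emph{right} of the normal-ordered block, so $a_{\vec{p}_{1}}$ picks up the $\delta_{\vec{p},\vec{k}}$ contractions \emph{after} passing through the block, not before; this does not affect the argument.
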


\begin{proof}[Proof of \Cref{lem: towards Wick complex t-o full vev}]
  To achieve normal-ordering, the annihilation operators have to be moved past the field operators as well as the creation operators to the very right of the expression. Then, the creation operators have to be commuted past the field operators to the very left. Recall the commutation relations of the ladder operators of the quantized complex field \cref{eqn: cr ladder op real field}, with all other pairings vanishing. In the first step, we need the resulting commutation relations between annihilation operators and the field parts. The  four possible commutators are
  \begin{equation}\begin{aligned}\label{eqn: cr w fields towards Wick complex t-o full vev 1}
    [a_{\vec{k}},\Phi(x)] &= 0~, & [a_{\vec{k}},\Phi^{\dagger}(x)] &=  \id\, \varphi^{*}_{\vec{k}}(x)~, \\
    [b_{\vec{k}},\Phi(x)] &= \id\,  \varphi^{*}_{\vec{k}}(x)~, & [b_{\vec{k}},\Phi^{\dagger}(x)] &= 0~.
  \end{aligned}\end{equation}
  
  Consider the simplest case, when there is only one annihilation operator $a_{\vec{k}}$ on the left, and no annihilation operators on the right. To move $a_{\vec{k}}$ to the right, it has to be commuted with each and every field $\Psi^{\epsilon_{i}}(x_{i})$ operator sooner or later---or more precisely with their constituent ladder operators. Each field operator decomposes into two part, one containing the annihilation operators, and the other the creation operators:
  \begin{multline}
    a_{\vec{k}}\,\normalord \Phi^{\epsilon_{1}}(x_{1}) \cdots  \Phi^{\epsilon_{n}}(x_{n}) \normalord \\
    = a_{\vec{k}}\,\normalord \Phi^{\epsilon_{1}}(x_{1}) \cdots  [\Phi_{a}^{\epsilon_{i}}(x_{i}) + \Phi_{b}^{\dagger\epsilon_{i}}(x_{i})] \cdots \Phi^{\epsilon_{n}}(x_{n}) \normalord~.
  \end{multline}
  This decomposition can be done for all the other $n-1$ field operators as well. The result are $2^{n-1}$ summands containing $\Phi_{a}^{\epsilon_{i}}$ as a factor (and all possible combinations of the parts of the other fields), and another set of $2^{n-1}$ summands containing $\Phi_{b}^{\dagger\epsilon_{i}}$ multiplied with the same set of combinations of parts of the other fields. Let us pick one such set and call it $S$ for brevity. For example, $S$ might be
  \begin{equation*}
    S = \Phi_{a}^{\epsilon_{1}}(x_{1}) \cdots \Phi_{a}^{\epsilon_{i-1}}(x_{i-1}) \Phi_{a}^{\epsilon_{i+1}}(x_{i+1}) \cdots \Phi_{a}^{\epsilon_{n}}(x_{n}) ~.
  \end{equation*}
  Note that
  \begin{equation}\label{eqn: first S sum}
    \sum_{S} S = \normalord \Phi^{\epsilon_{1}}(x_{1}) \cdots \widehat{\Phi^{\epsilon_{i}}(x_{i})} \cdots \Phi^{\epsilon_{n}}(x_{n}) \normalord~,
  \end{equation}
  where the term with hat is omitted. Since the entire product is normal-ordered at the end, $\Phi_{a}^{\epsilon_{i}}$ and $\Phi_{b}^{\dagger\epsilon_{i}}$ will end up at different positions inside $S$:
  \begin{multline} \label{eqn: second S sum}
    \normalord \Phi^{\epsilon_{1}}(x_{1}) \cdots \Phi^{\epsilon_{n}}(x_{n}) \normalord \\
    = \sum_{S} \left[S_{1}\Psi^{\epsilon_{i}}_{a}(x_{i})S_{2} + S^{*}_{1}\Psi^{\dagger\epsilon_{i}}_{b}(x_{i})S^{*}_{2}  \right]~,
  \end{multline}
  where $S_{i}$, $S^{*}_{i}$ are two differen partitions of $S$: $S_{1}S_{2} = S$ and $S^{*}_{1}S^{*}_{2} = S$. Thus,
  \begin{multline}
    a_{\vec{k}}\,\normalord \Phi^{\epsilon_{1}}(x_{1}) \cdots  \Phi^{\epsilon_{n}}(x_{n}) \normalord \\
    = \cdots + a_{\vec{k}}S_{1}\Phi^{\epsilon_{i}}(x_{i})S_{2} + a_{\vec{k}}S^{*}_{1}\Phi_{b}^{\dagger\epsilon_{i}}(x_{i})S^{*}_{2} + \cdots~.
  \end{multline}
   We now start moving $a_{\vec{k}}$ to the right, and the contribution from $\Phi(x_{i})$ will be the following: $a_{\vec{k}}$ is commuted through $S_{1}$ (creating an additional term containing the commutator with other field parts in each step), and will eventually end up next to $\Phi^{\epsilon_{i}}(x_{i})$. Commuting $a_{\vec{k}}$ and $\Phi^{\epsilon_{i}}(x_{i})$ gives two terms
  \begin{equation}
    S_{1}a_{\vec{k}}\Phi^{\epsilon_{i}}(x_{i})S_{2} = [a_{\vec{k}},\Phi^{\epsilon_{i}}(x_{i})]S + S_{1}a_{\vec{k}}\Phi^{\epsilon_{i}}(x_{i})S_{2}~,
  \end{equation}
  where we have used that the commutator is proportional to the identity operator $\id$ in any case. The same is true for the second term containing $\Phi_{b}^{\dagger\epsilon_{i}}$:
  \begin{multline}
    S^{*}_{1}a_{\vec{k}}\Phi_{b}^{\dagger\epsilon_{i}}(x_{i})S^{*}_{2} \\
    = [a_{\vec{k}},\Phi_{b}^{\dagger\epsilon_{i}}(x_{i})]S + S^{*}_{1}a_{\vec{k}}\Phi_{b}^{\dagger\epsilon_{i}}(x_{i})S^{*}_{2}~.
  \end{multline}
   After combining the two commutators one obtains
  \begin{multline}
    a_{\vec{k}}\,\normalord \Phi^{\epsilon_{1}}(x_{1}) \cdots  \Phi^{\epsilon_{n}}(x_{n}) \normalord \\
    = \cdots + [a_{\vec{k}},\Phi^{\epsilon_{i}}(x_{i})]S + S_{1}\Phi^{\epsilon_{i}}(x_{i})a_{\vec{k}}S_{2} \\
    + S^{*}_{1}\Phi_{b}^{\dagger\epsilon_{i}}a_{\vec{k}}S^{*}_{2} + \cdots ~.
  \end{multline}
  All the terms containing commutators with the other fields have not been made explicit. Since this works for any combination $S$ of the ladder operators of the other fields, the sum of all these will just give
  \begin{equation}
    [a_{\vec{k}},\Phi^{\epsilon_{i}}(x_{i})]\normalord \Phi^{\epsilon_{1}}(x_{1}) \cdots \widehat{\Phi^{\epsilon_{i}}(x_{i})} \cdots \Phi^{\epsilon_{n}}(x_{n}) \normalord
  \end{equation}
  from the first term according to \cref{eqn: first S sum}. The other two terms need to be processed further, giving rise to commutators with the other fields. In the end, after commuting $a_{\vec{k}}$ all the way through $S$ there will be a commutator term like the above for every field, and a term which yields
  \begin{multline}
    \sum_{S} \left[S_{1}\Psi^{\epsilon_{i}}_{a}(x_{i})S_{2} + S^{*}_{1}\Psi^{\dagger\epsilon_{i}}_{b}(x_{i})S^{*}_{2}  \right]a_{\vec{k}} \\
    = \normalord \Phi^{\epsilon_{1}}(x_{1}) \cdots \Phi^{\epsilon_{n}}(x_{n}) \normalord a_{\vec{k}} \\
    = \normalord a_{\vec{k}} \Phi^{\epsilon_{1}}(x_{1}) \cdots \Phi^{\epsilon_{n}}(x_{n})  \normalord 
  \end{multline}
  according to \cref{eqn: second S sum}. By virtue of the commutators \cref{eqn: cr w fields towards Wick complex t-o full vev 1} and the definition \cref{eqn: complex field full vev contractions 3} of the contractions, we finally obtain
  \begin{multline}\label{eqn: towards Wick complex t-o full vev intermediate result 1}
    a_{\vec{k}}\,\normalord \Phi^{\epsilon_{1}}(x_{1}) \cdots  \Phi^{\epsilon_{n}}(x_{n}) \normalord = \normalord a_{\vec{k}}\Phi^{\epsilon_{1}}(x_{1}) \cdots  \Phi^{\epsilon_{n}}(x_{n}) \normalord \\
    + \sum_{i=1}^{n}%
    \contraction{\normalord}{\vphantom{\Phi}a}{_{\vec{k}}\Phi^{\epsilon_{1}}(x_{1}) \cdots}{\Phi}%
    \normalord a_{\vec{k}}\Phi^{\epsilon_{1}}(x_{1}) \cdots  \Phi^{\epsilon_{i}}(x_{i}) \cdots \Phi^{\epsilon_{n}}(x_{n}) \normalord~.
  \end{multline}
  Since it was nowhere used that $a$ is a particle annihilation operator, the same relation holds for antiparticle annihilation operators $b$.
  
  Now, if there is a single creation operator on the right instead, the relevant commutation relations are:
  \begin{equation}\begin{aligned}\label{eqn: cr w fields towards Wick complex t-o full vev 2}
    [\Phi(x),a^{\dagger}_{\vec{k}}] &= \id\, \varphi_{\vec{k}}(x)~, & [\Phi^{\dagger}(x),a^{\dagger}_{\vec{k}}] &=  0~,\\
    [\Phi(x),b^{\dagger}_{\vec{k}}] &= 0~, & [\Phi^{\dagger}(x),b^{\dagger}_{\vec{k}}] &= \id\, \varphi_{\vec{k}}(x) ~,
  \end{aligned}\end{equation}
  and a similar reasoning eventually gives
  \begin{multline}\label{eqn: towards Wick complex t-o full vev intermediate result 2}
    \,\normalord \Phi^{\epsilon_{1}}(x_{1}) \cdots  \Phi^{\epsilon_{n}}(x_{n}) \normalord \,a^{\dagger}_{\vec{p}} = \normalord \Phi^{\epsilon_{1}}(x_{1}) \cdots  \Phi^{\epsilon_{n}}(x_{n}) a^{\dagger}_{\vec{p}}\normalord \\
    + \sum_{i=1}^{n}%
    \contraction{\normalord \Phi^{\epsilon_{1}}(x_{1}) \cdots}{\Phi}{^{\epsilon_{i}}(x_{i}) \cdots \Phi^{\epsilon_{n}}(x_{n}) }{a}%
    \normalord \Phi^{\epsilon_{1}}(x_{1}) \cdots  \Phi^{\epsilon_{i}}(x_{i}) \cdots \Phi^{\epsilon_{n}}(x_{n}) a^{\dagger}_{\vec{p}}\normalord
  \end{multline}
  in terms of the contractions \cref{eqn: complex field full vev contractions 3}.
  
  If both types are present, \cref{eqn: towards Wick complex t-o full vev intermediate result 1} immediately allows to write:
  \begin{multline}\label{eqn: towards Wick complex t-o full vev intermediate 1}
    a_{\vec{k}}\,\normalord \Phi^{\epsilon_{1}}(x_{1}) \cdots  \Phi^{\epsilon_{n}}(x_{n}) \normalord\,a^{\dagger}_{\vec{p}} \\
    = \normalord \Phi^{\epsilon_{1}}(x_{1}) \cdots  \Phi^{\epsilon_{n}}(x_{n}) \normalord \,a_{\vec{k}}a^{\dagger}_{\vec{p}}  \\
    + \sum_{i=1}^{n}%
    \contraction{\normalord}{\vphantom{\Phi}a}{_{\vec{k}}\Phi^{\epsilon_{1}}(x_{1}) \cdots}{\Phi}%
    \normalord a_{\vec{k}}\Phi^{\epsilon_{1}}(x_{1}) \cdots  \Phi^{\epsilon_{i}}(x_{i}) \cdots \Phi^{\epsilon_{n}}(x_{n}) \normalord a^{\dagger}_{\vec{p}}~.
  \end{multline}
  The first term on the right-hand side evaluates to
  \begin{multline}\label{eqn: towards Wick complex t-o full vev intermediate 2}
  \normalord  \Phi^{\epsilon_{1}}(x_{1}) \cdots  \Phi^{\epsilon_{n}}(x_{n})   \normalord \,a^{\dagger}_{\vec{p}}a_{\vec{k}}   \normalord \\
  + %
    \contraction{ \normalord}{a}{_{\vec{k}} \Phi^{\epsilon_{1}}(x_{1}) \cdots  \Phi^{\epsilon_{n}}(x_{n}) }{a}%
    \normalord a_{\vec{k}} \Phi^{\epsilon_{1}}(x_{1}) \cdots  \Phi^{\epsilon_{n}}(x_{n}) a^{\dagger}_{\vec{p}}  \normalord
  \end{multline}
  using the contractions \cref{eqn: complex field full vev contractions 2}. The second term in \cref{eqn: towards Wick complex t-o full vev intermediate 1}, as well as the first term in \cref{eqn: towards Wick complex t-o full vev intermediate 2} can then be evaluated using \cref{eqn: towards Wick complex t-o full vev intermediate result 2}. In total, one obtains
  \begin{multline}
    a_{\vec{k}}  \,\normalord \Phi^{\epsilon_{1}}(x_{1}) \cdots  \Phi^{\epsilon_{n}}(x_{n}) \normalord \,a^{\dagger}_{\vec{p}} =
    \normalord a_{\vec{k}}  \Phi^{\epsilon_{1}}(x_{1}) \cdots  \Phi^{\epsilon_{n}}(x_{n})  a^{\dagger}_{\vec{p}} \normalord \\
    +\sum_{i=1}^{n}%
    \contraction{\normalord}{\vphantom{\Phi}a}{_{\vec{k}}\Phi^{\epsilon_{1}}(x_{1}) \cdots }{\Phi}%
    \normalord a_{\vec{k}}\Phi^{\epsilon_{1}}(x_{1}) \cdots \Phi^{\epsilon_{i}}(x_{i}) \cdots \Phi^{\epsilon_{n}}(x_{n}) a^{\dagger}_{\vec{p}}\normalord \\
    + \contraction{\normalord\, }{a}{_{\vec{k}}\Phi^{\epsilon_{1}}(x_{1})  \cdots \Phi^{\epsilon_{n}}(x_{n})}{a}%
    \normalord\, a_{\vec{k}}\Phi^{\epsilon_{1}}(x_{1})  \cdots \Phi^{\epsilon_{n}}(x_{n})  a^{\dagger}_{\vec{p}}\normalord\\
    +\sum_{i=1}^{n}%
    \contraction{\normalord a_{\vec{k}}\Phi^{\epsilon_{1}}(x_{1}) \cdots }{\Phi}{^{\epsilon_{i}}(x_{i}) \cdots  \Phi^{\epsilon_{n}}(x_{n})}{a}%
    \normalord a_{\vec{k}}\Phi^{\epsilon_{1}}(x_{1}) \cdots \Phi^{\epsilon_{i}}(x_{i})  \Phi^{\epsilon_{n}}(x_{n}) \cdots a^{\dagger}_{\vec{p}}\normalord \\
    +\sum_{i=1}^{n}\sum_{\substack{j=1\\j\neq i}}^{n}%
    \contraction{\normalord}{\vphantom{\Phi}a}{_{\vec{k}}\Phi^{\epsilon_{1}}(x_{1}) \cdots \Phi^{\epsilon_{j}}(x_{j}) \cdots}{\Phi}%
    \bcontraction{\normalord a_{\vec{k}}\Phi^{\epsilon_{1}}(x_{1}) \cdots}{\Phi}{^{\epsilon_{j}}(x_{j}) \cdots \Phi^{\epsilon_{i}}(x_{i}) \cdots \Phi^{\epsilon_{n}}(x_{n})}{a}
    \normalord a_{\vec{k}}\Phi^{\epsilon_{1}}(x_{1}) \cdots \Phi^{\epsilon_{j}}(x_{j}) \cdots \Phi^{\epsilon_{i}}(x_{i}) \cdots \Phi^{\epsilon_{n}}(x_{n}) a^{\dagger}_{\vec{p}}\normalord~.
  \end{multline}
  Repeated application of \cref{eqn: towards Wick complex t-o full vev intermediate result 1,eqn: towards Wick complex t-o full vev intermediate result 2} finally yields \cref{eqn: towards Wick complex t-o full vev} in the general case.
  
\end{proof}

\begin{proof}[Proof of \Cref{thm: Wick complex t-o full vev}]
  Combining \cref{thm: Wick complex t-o} and \cref{lem: towards Wick complex t-o full vev} directly allows one to express
  \begin{equation*}
    (\op{a}_{\vec{p}})_{\vec{p}}\,(\op{b}_{\bar{\vec{p}}})_{\bar{\vec{p}}}\,[\,T\Phi^{\epsilon_{1}}(x_{1}) \cdots  \Phi^{\epsilon_{n}}(x_{n}) \,]\,(\op{a}^{\dagger}_{\vec{k}})_{\vec{k}}\,(\op{b}^{\dagger}_{\bar{\vec{k}}})_{\bar{\vec{k}}}
  \end{equation*}
  as normal-ordered, plus sums of the same normal-ordered products with all the contractions (single, double,\dots fully contracted), defined in \cref{eqn: complex field contractions t-o,eqn: complex field full vev contractions 2,eqn: complex field full vev contractions 3}. If the vacuum expectation value is taken, all terms which still contain uncontracted normal-ordered operators vanish. If the total number of operators (ladder operators and field operators) is odd, there will be at least one normal-ordered operator remaining and all terms vanish. Only if the number of operators is even, the terms where all operators are contracted survives since all the contractions are proportional to the identity operator. Note that some or all of these surviving terms can still be zero, since many of the  contractions also vanish.
\end{proof}

\subsection{Quantized spinor field}\label{sec: proofs wick spinor}

We now prove \cref{thm: Wick spinor t-o full vev}. Analogous to the proof of \cref{thm: Wick complex t-o full vev}, we first prove the following
\begin{lem}\label{lem: towards Wick spinor t-o full vev}
  Using the notation of \cref{thm: Wick spinor t-o full vev}, the following identity holds:
  \begin{align}\label{eqn: towards Wick spinor t-o full vev}
    &(\op{a}_{\vec{p},r})_{\vec{p},r}\,(\op{b}_{\bar{\vec{p}},\bar{r}})_{\bar{\vec{p}},\bar{r}}\,\normalord \Psi^{\epsilon_{1}}(x_{1}) \cdots \nonumber\\
    &\cdots  \Psi^{\epsilon_{n}}(x_{n}) \normalord\,(\op{a}^{\dagger}_{\vec{k},s})_{\vec{k},s}\,(\op{b}^{\dagger}_{\bar{\vec{k}},\bar{s}})_{\bar{\vec{k}},\bar{s}} \nonumber\\
    &= \normalord(\op{a}_{\vec{p},r})_{\vec{p},r}\,(\op{b}_{\bar{\vec{p}},\bar{r}})_{\bar{\vec{p}},\bar{r}}\,\Psi^{\epsilon_{1}}(x_{1}) \cdots \nonumber\\
    &\hphantom{={}}\cdots  \Psi^{\epsilon_{n}}(x_{n}) \,(\op{a}^{\dagger}_{\vec{k},s})_{\vec{k},s}\,(\op{b}^{\dagger}_{\bar{\vec{k}},\bar{s}})_{\bar{\vec{k}},\bar{s}} \normalord\nonumber\\
    &\hphantom{={}}+ \sum_{\mathclap{\substack{\text{single}\\\text{contractions}}}} \normalord
    \contraction{a_{(\vec{p},r)_{1}} \cdots}{\vphantom{\Psi}a}{_{(\vec{p},r)_{i}} \cdots }{\Psi}%
    a_{(\vec{p},r)_{1}} \cdots a_{(\vec{p},r)_{i}} \cdots \Psi^{\epsilon_{j}}(x_{j}) \cdots b^{\dagger}_{(\bar{\vec{k}},\bar{s})_{\bar{N}}}\normalord\nonumber\\
    &\hphantom{={}}+ \sum_{\mathclap{\substack{\text{double}\\\text{contractions}}}} \cdots~.
  \end{align}
  where contractions are only allowed between two ladder operators, and between a ladder operator and a field (not between two fields).
\end{lem}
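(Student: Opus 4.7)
The plan is to mirror closely the strategy used for the quantized complex field in Lemma \ref{lem: towards Wick complex t-o full vev}, but with careful sign tracking throughout, since for spinor fields every transposition of two fermionic operators contributes a factor of $-1$ and the very definition of normal-ordering and of the contractions in (\ref{eqn: spinor field full vev contractions 1})--(\ref{eqn: spinor field full vev contractions 3}) is designed to absorb these signs. The proof will proceed in three stages.

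First I would record the elementary anticommutators of a single ladder operator with a single field or Dirac-conjugate field. Using the splitting $\Psi = \Psi_a + \Psi_b$ and $\conj\Psi = \conj\Psi_a + \conj\Psi_b$ from (\ref{eqn: spinor field parts}), together with the canonical anticommutation relations (\ref{eqn: acr ladder op spinor field}), a short calculation yields $\{a_{\vec k,s},\conj\Psi(x)\} = \id\,\conj\psi_{\vec k,s,+}(x)$ and $\{\conj\Psi(x),b^{\dagger}_{\vec k,s}\} = \id\,\conj\psi_{\vec k,s,-}(x)$, together with their counterparts for $\Psi(x)$, while $\{a,\Psi\} = \{b,\conj\Psi\} = 0$ and likewise on the right. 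These identities reproduce precisely the mixed contractions in (\ref{eqn: spinor field full vev contractions 3}).

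Second, I would treat the case of a single annihilation operator $a_{\vec k,s}$ prepended to a normal-ordered product $\normalord \Psi^{\epsilon_1}(x_1)\cdots \Psi^{\epsilon_n}(x_n)\normalord$. Decomposing each field into its $a/b$ parts, one can, exactly as in the scalar proof, group the $2^{n-1}$ product summands into pairs $(S_1 \Psi^{\epsilon_i}_a S_2, S_1^* \Psi^{\dagger\epsilon_i}_b S_2^*)$ that together reconstruct the omitted field $\Psi^{\epsilon_i}(x_i)$. Now push $a_{\vec k,s}$ to the right using $AB = -BA + \{A,B\}$. Each pure-exchange step produces a sign and the anticommutator step produces the contraction located at the original position of $\Psi^{\epsilon_i}(x_i)$. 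The signs accumulated while bringing $a_{\vec k,s}$ to position $i$ cancel exactly against the signs picked up later when renormal-ordering---which is the same cancellation that underpins the ``reorder-until-adjacent'' clause in Theorem \ref{thm: Wick spinor t-o full vev}---so the net result is the single-contraction term stated in (\ref{eqn: towards Wick spinor t-o full vev}), plus the fully pushed-through term $\normalord a_{\vec k,s}\Psi^{\epsilon_1}\cdots \Psi^{\epsilon_n}\normalord$. An analogous argument with a single $b$, or with a single creation operator appended on the right, yields the mirror identities.

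Third, I would combine the two one-sided results to treat a single annihilator on the left together with a single creator on the right, obtaining in addition a new type of term in which the leading annihilator contracts directly with the trailing creator via (\ref{eqn: spinor field full vev contractions 2}); the bookkeeping is identical to equations (\ref{eqn: towards Wick complex t-o full vev intermediate 1}) and (\ref{eqn: towards Wick complex t-o full vev intermediate 2}) of the complex case up to fermionic signs. The general identity (\ref{eqn: towards Wick spinor t-o full vev}) then follows by induction: at each step, peel off one outermost ladder operator and apply the one-operator identity inside the already-partially-expanded expression, which produces either a new contraction with an as-yet-uncontracted field or ladder operator, or places that operator inside the normal-ordered block. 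The main obstacle, and essentially the only nontrivial point beyond the bosonic argument, is to verify that the sign $(-1)^{m-1}$ generated when pushing a fermionic operator past $m-1$ other fermionic operators always conspires with the reordering sign implicit in the fermionic normal-ordering convention, so that the final contraction appears with the canonical sign dictated by (\ref{eqn: spinor field full vev contractions 1})--(\ref{eqn: spinor field full vev contractions 3}) together with the ``$(-1)$ per exchange'' rule of Theorem \ref{thm: Wick spinor t-o full vev}. Once this sign consistency is checked for the base cases of one and two operators, the inductive step carries it through automatically.
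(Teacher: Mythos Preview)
Your proposal is correct and follows essentially the same approach as the paper: both proofs record the elementary anticommutators $\{a_{\vec k,s},\conj\Psi\}$, $\{b_{\vec k,s},\Psi\}$, etc., push a single ladder operator through the normal-ordered product of fields via the $S_1,S_2$ decomposition, verify explicitly that the accumulated fermionic signs combine to the canonical $(-1)^{i-1}$ absorbed into the contraction convention, and then iterate to the general case. The paper is slightly more explicit in writing out the sign count $(-1)^{i+2m-1}=(-1)^{i-1}$, but your identification of this cancellation as the only nontrivial point beyond the bosonic argument is exactly the crux of the paper's proof as well.
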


The proof is very similar to the proof of \cref{lem: towards Wick complex t-o full vev}, but with the additional complexity that one has to keep track of signs introduced by anticommuting the involved operators.

\begin{proof}[Proof of \Cref{lem: towards Wick spinor t-o full vev}]
  To achieve normal-ordering, the annihilation operators have to be moved past the field operators and the creation operators to the right, and the creation operators subsequently past the fields to the left. Recall the anticommutation relations between the ladder operators \cref{eqn: acr ladder op spinor field}, with all remaining anticommutators vanishing. For the first step, we need to know how the annihilation operators anticommute with the fields. From the above relations and the separation of the spinor field according to \cref{eqn: spinor field parts,eqn: field separation spinor} one obtains:
  \begin{align}\label{eqn: acr w field towards Wick spinor t-o full vev}
    \{a_{\vec{k},s},\Psi(x)\} &= 0~,  & \{a_{\vec{k},s},\conj\Psi(x)\} &= \id\, \conj{\psi}_{\vec{k},s,+}(x)~,\nonumber\\
    \{b_{\vec{k},s},\conj\Psi(x)\} &= 0 ~, & \{b_{\vec{k},s},\Psi(x)\} &=\id\,  \psi_{\vec{k},s,-}(x)~.
  \end{align}
  
  Let us begin with one annihilation operator on the left: to move it to the right, it has to be anticommuted with any field operator $\Psi^{\epsilon_{i}}(x_{i})$ at some point, picking up an anticommutator in the process. Each field operator decomposes into two parts, one containing the creation operators, one the annihilation operators
  \begin{multline}
  a_{\vec{k},s}  \,\normalord \Psi^{\epsilon_{1}}(x_{1}) \cdots \Psi^{\epsilon_{i}}(x_{i})  \cdots \Psi^{\epsilon_{n}}(x_{n}) \normalord \\
  = a_{\vec{k},s}  \,\normalord \Psi^{\epsilon_{1}}(x_{1}) \cdots  [\Psi^{\epsilon_{i}}_{a}(x_{i}) + \Psi_{b}^{\epsilon_{i}}(x_{i})]\cdots \Psi^{\epsilon_{n}}(x_{n}) \normalord~.
  \end{multline}
  The same can be done for the other fields, and we essentially end up with sums of products of ladder operators (with sums in front, and multiplied mode functions), where each term contains one ladder operator per field. Let us pick out one of these products of ladder operators from the field $\Psi^{\epsilon_{1}}(x_{1}),\dots,\Psi^{\epsilon_{n}}(x_{n})$ without $\Psi^{\epsilon_{i}}(x_{i})$ and denote it $S$ for brevity. Both parts of $\Psi^{\epsilon_{i}}(x_{i})$ are multiplied with $S$, but since the entire product of field operators is then normal-ordered, the position where they are inserted into $S$ will be different. Through normal-ordering, $\Psi^{\epsilon_{i}}_{a}(x_{i})$ will be shifted from position $i$ to some position $i+m$, picking up a sign $(-1)^{m}$, and $\Psi^{\epsilon_{i}}_{b}(x_{i})$ will be shifted from position $i$ to some other position $i+l$, picking up a sign $(-1)^{l}$. Therefore,
  \begin{multline}\label{eqn: towards Wick spinor t-o full vev intermediate}
   a_{\vec{k},s}  \,\normalord \Psi^{\epsilon_{1}}(x_{1}) \cdots \Psi^{\epsilon_{i}}(x_{i})  \cdots \Psi^{\epsilon_{n}}(x_{n}) \normalord \\
   = \cdots + a_{\vec{k},s}  (-1)^{m}S_{1} \Psi^{\epsilon_{i}}_{a}(x_{i})  S_{2} \\
   + a_{\vec{k},s} (-1)^{l}S^{*}_{1} \Psi^{\epsilon_{i}}_{b}(x_{i})  S^{*}_{2} +\dots~,
  \end{multline}
  where $S_{1}S_{2} = S = S^{*}_{1}S^{*}_{2}$. Consider the first term, containing $\Psi^{\epsilon_{i}}_{a}(x_{i})$: the ladder operator $a_{\vec{k},s}$ is successively anticommuted through the operators of $S_{1}$, creating an additional term containing the anticommutator, and picking up a sign for $a_{\vec{k},s}$ in each step. By the time $a_{\vec{k},s}$ reaches $\Psi^{\epsilon_{i}}_{a}(x_{i})$, it already had to be anticommuted through the $(i+m-1)$ normal-ordered ladder operators in $S_{1}$ and has therefore picked up a sign $(-1)^{i+m-1}$, which leaves us with an overall sign $(-1)^{i+2m-1} = (-1)^{i-1}$. Anticommuting $a_{\vec{k},s}$ and $\Psi^{\epsilon_{i}}_{a}(x_{i})$, we get two terms:
  \begin{multline}
    (-1)^{i-1}S_{1}a_{\vec{k},s}\Psi^{\epsilon_{i}}_{a}(x_{i})S_{2}  \\
    = (-1)^{i-1}\{a_{\vec{k},s},\Psi^{\epsilon_{i}}_{a}(x_{i})\}S - \\
    (-1)^{i-1} S_{1}\Psi^{\epsilon_{i}}_{a}(x_{i}) a_{\vec{k},s}S_{2}~,
  \end{multline}
  where we have used the fact that the anticommutator is proportional to the identity operator in any case, and can therefore be pulled in front of the product $S_{1}S_{2} = S$. The same argumentation works on the second term,  $\Psi^{\epsilon_{i}}_{b}(x_{i})$, and we obtain
  \begin{multline}
    (-1)^{i-1}S^{*}_{1}a_{\vec{k},s}\Psi^{\epsilon_{i}}_{b}(x_{i})S^{*}_{2}  \\
    = (-1)^{i-1}\{a_{\vec{k},s},\Psi^{\epsilon_{i}}_{b}(x_{i})\}S \\
    - (-1)^{i-1} S^{*}_{1}\Psi^{\epsilon_{i}}_{b}(x_{i}) a_{\vec{k},s}S^{*}_{2}~.
  \end{multline}
  The two anticommutators can be combined to
  \begin{equation}
    \{a_{\vec{k},s},\Psi^{\epsilon_{i}}_{a}(x_{i})\} + \{a_{\vec{k},s},\Psi^{\epsilon_{i}}_{b}(x_{i})\} = \{a_{\vec{k},s},\Psi^{\epsilon_{i}}(x_{i})\}
  \end{equation}
  such that \cref{eqn: towards Wick spinor t-o full vev intermediate} becomes
  \begin{multline}
   a_{\vec{k},s}  \,\normalord \Psi^{\epsilon_{1}}(x_{1}) \cdots \Psi^{\epsilon_{i}}(x_{i})  \cdots \Psi^{\epsilon_{n}}(x_{n}) \normalord \\=
   \cdots + (-1)^{i}S_{1} \Psi^{\epsilon_{i}}_{a}(x_{i})  a_{\vec{k},s}  S_{2} + (-1)^{i}S^{*}_{1} \Psi^{\epsilon_{i}}_{b}(x_{i}) a_{\vec{k},s}  S^{*}_{2} \\
   + (-1)^{i-1} \{a_{\vec{k},s},\Psi^{\epsilon_{i}}(x_{i})\}S + \dots~.
  \end{multline}
 We have not written out the anticommutators $a_{\vec{k},s}$ has picked up with the operators in $S_{1}$ and $S_{1}^{*}$. Since this needs to be done for any of the field operators, we end up with
  \begin{multline}
    a_{\vec{k},s}  \,\normalord \Psi^{\epsilon_{1}}(x_{1})  \cdots \Psi^{\epsilon_{n}}(x_{n}) \normalord\\
     = (-1)^{n} \normalord \Psi^{\epsilon_{1}}(x_{1})  \cdots \Psi^{\epsilon_{n}}(x_{n})  \normalord a_{\vec{k},s}\\
    + \sum_{i=1}^{n} (-1)^{i-1} \{a_{\vec{k},s},\Psi^{\epsilon_{i}}(x_{i})\} \normalord \Psi^{\epsilon_{1}}(x_{1}) \cdots \\
    \cdots \widehat{\Psi^{\epsilon_{i}}(x_{i})}  \cdots \Psi^{\epsilon_{n}}(x_{n}) \normalord~.
  \end{multline}

 The first term on the right-hand side is normal-ordered, and the sign can be absorbed into anticommuting the annihilation operator back to the beginning of the product. The second term can be rewritten by taking into account the anticommutation relations \cref{eqn: acr w field towards Wick spinor t-o full vev} and defining contractions like in \cref{eqn: spinor field full vev contractions 3}
  \begin{multline}\label{eqn: intermediate result 1 towards Wick spinor t-o full vev 2}
    a_{\vec{k},s}  \,\normalord \Psi^{\epsilon_{1}}(x_{1})  \cdots \Psi^{\epsilon_{n}}(x_{n}) \normalord 
    = \normalord a_{\vec{k},s} \Psi^{\epsilon_{1}}(x_{1})  \cdots \Psi^{\epsilon_{n}}(x_{n})  \normalord \\
    +\sum_{i=1}^{n}%
    \contraction{\normalord}{\vphantom{\Psi}a}{_{\vec{k},s}\Psi^{\epsilon_{1}}(x_{1}) \cdots }{\Psi}%
    \normalord a_{\vec{k},s}\Psi^{\epsilon_{1}}(x_{1}) \cdots \Psi^{\epsilon_{i}}(x_{i}) \cdots \Psi^{\epsilon_{n}}(x_{n}) \normalord~.
  \end{multline}
  Note in particular that the sign is absorbed into the definition of the contraction. Since we have nowhere used that $\op{a}$ is a particle annihilation operator, the same identity holds for antiparticle annihilation operators $\op{b}$.
  
  Now consider what happens if a creation operator is added at the right. Using the last two equations, we can immediately write
  \begin{multline}\label{eqn: towards Wick spinor t-o full vev intermediate 2}
    a_{\vec{k},s}  \,\normalord \Psi^{\epsilon_{1}}(x_{1})  \cdots \Psi^{\epsilon_{n}}(x_{n}) \normalord \,a^{\dagger}_{\vec{p},r} \\ =
    (-1)^{n} \normalord \Psi^{\epsilon_{1}}(x_{1})  \cdots \Psi^{\epsilon_{n}}(x_{n})  \normalord\, a_{\vec{k},s}a^{\dagger}_{\vec{p},r} \\
    + \sum_{i=1}^{n}%
    \contraction{\normalord}{\vphantom{\Psi}a}{_{\vec{k},s}\Psi^{\epsilon_{1}}(x_{1}) \cdots }{\Psi}%
    \normalord a_{\vec{k},s}\Psi^{\epsilon_{1}}(x_{1}) \cdots \Psi^{\epsilon_{i}}(x_{i}) a^{\dagger}_{\vec{p},r} \Psi^{\epsilon_{n}}(x_{n}) \normalord\, a^{\dagger}_{\vec{p},r}~.
  \end{multline}
  Taking into account the anticommutation relations between the ladder operators and defining contractions \cref{eqn: spinor field full vev contractions 2}, the first term on the right-hand side can be expressed as
  \begin{multline}\label{eqn: towards Wick spinor t-o full vev intermediate 3}
    \contraction{\normalord\, }{a}{_{\vec{k},s}\Psi^{\epsilon_{1}}(x_{1})  \cdots \Psi^{\epsilon_{n}}(x_{n})}{a}%
    \normalord\, a_{\vec{k},s}\Psi^{\epsilon_{1}}(x_{1})  \cdots \Psi^{\epsilon_{n}}(x_{n})  a^{\dagger}_{\vec{p},r}\normalord \\
    + (-1)^{n+1}\normalord\Psi^{\epsilon_{1}}(x_{1})  + \cdots \Psi^{\epsilon_{n}}(x_{n}) \normalord \,a^{\dagger}_{\vec{p},r} a_{\vec{k},s}~.
  \end{multline}
  To handle the remaining terms, we need to evaluate expressions of the form
  \begin{equation}
    \normalord \Psi^{\epsilon_{1}}(x_{1}) \cdots \Psi^{\epsilon_{i}}(x_{i})  \cdots \Psi^{\epsilon_{n}}(x_{n}) \normalord\, a^{\dagger}_{\vec{p},r} ~.
  \end{equation}
  The relevant anticommutation relations are
  \begin{equation}\begin{aligned}\label{eqn: acr w field towards Wick spinor t-o full vev 2}
    \{\conj\Psi(x),a^{\dagger}_{\vec{k},s}\} &= 0~, & \{\Psi(x),a^{\dagger}_{\vec{k},s}\} &= \id\, \psi_{\vec{k},s,+}(x)~,  \\
    \{\Psi(x),b^{\dagger}_{\vec{k},s}\} &= 0~, & \{\conj\Psi(x),b^{\dagger}_{\vec{k},s}\} &= \id\, \conj{\psi}_{\vec{k},s,-}(x)~,
  \end{aligned}\end{equation}
  and a reasoning analogous to the one that led to \cref{eqn: intermediate result 1 towards Wick spinor t-o full vev 2} eventually yields
  \begin{multline}\label{eqn: intermediate result 2 towards Wick spinor t-o full vev 2}
    \,\normalord \Psi^{\epsilon_{1}}(x_{1})  \cdots \Psi^{\epsilon_{n}}(x_{n})   \normalord\, a^{\dagger}_{\vec{p},r}\\
    = \normalord \Psi^{\epsilon_{1}}(x_{1})  \cdots \Psi^{\epsilon_{n}}(x_{n}) a^{\dagger}_{\vec{p},r} \normalord \\
    +\sum_{i=1}^{n}%
    \contraction{\normalord \Psi^{\epsilon_{1}}(x_{1}) \cdots }{\Psi}{^{\epsilon_{i}}(x_{i}) \cdots  \Psi^{\epsilon_{n}}(x_{n})}{a}%
    \normalord \Psi^{\epsilon_{1}}(x_{1}) \cdots \Psi^{\epsilon_{i}}(x_{i})  \Psi^{\epsilon_{n}}(x_{n}) \cdots a^{\dagger}_{\vec{p},r}\normalord ~.
  \end{multline}
  Using this in \cref{eqn: towards Wick spinor t-o full vev intermediate 2,eqn: towards Wick spinor t-o full vev intermediate 3}, one obtains
  \begin{widetext}
    \begin{multline}
      a_{\vec{k},s}  \,\normalord \Psi^{\epsilon_{1}}(x_{1}) \cdots  \Psi^{\epsilon_{n}}(x_{n}) \normalord \,a^{\dagger}_{\vec{p},r} =
      \normalord a_{\vec{k},s}  \Psi^{\epsilon_{1}}(x_{1}) \cdots  \Psi^{\epsilon_{n}}(x_{n})  a^{\dagger}_{\vec{p},r} \normalord 
      +\sum_{i=1}^{n}%
      \contraction{\normalord}{\vphantom{\Psi}a}{_{\vec{k},s}\Psi^{\epsilon_{1}}(x_{1}) \cdots }{\Psi}%
      \normalord a_{\vec{k},s}\Psi^{\epsilon_{1}}(x_{1}) \cdots \Psi^{\epsilon_{i}}(x_{i}) \cdots \Psi^{\epsilon_{n}}(x_{n}) a^{\dagger}_{\vec{p},r}\normalord\\
      + \contraction{\normalord\, }{a}{_{\vec{k},s}\Psi^{\epsilon_{1}}(x_{1})  \cdots \Psi^{\epsilon_{n}}(x_{n})}{a}%
      \normalord\, a_{\vec{k},s}\Psi^{\epsilon_{1}}(x_{1})  \cdots \Psi^{\epsilon_{n}}(x_{n})  a^{\dagger}_{\vec{p},r}\normalord
      +\sum_{i=1}^{n}%
      \contraction{\normalord a_{\vec{k},s}\Psi^{\epsilon_{1}}(x_{1}) \cdots }{\Psi}{^{\epsilon_{i}}(x_{i}) \cdots  \Psi^{\epsilon_{n}}(x_{n})}{a}%
      \normalord a_{\vec{k},s}\Psi^{\epsilon_{1}}(x_{1}) \cdots \Psi^{\epsilon_{i}}(x_{i})  \Psi^{\epsilon_{n}}(x_{n}) \cdots a^{\dagger}_{\vec{p},r}\normalord \\
      +\sum_{i=1}^{n}\sum_{\substack{j=1\\j\neq i}}^{n}%
      \contraction{\normalord}{\vphantom{\Psi}a}{_{\vec{k},s}\Psi^{\epsilon_{1}}(x_{1}) \cdots \Psi^{\epsilon_{j}}(x_{j}) \cdots}{\Psi}%
      \bcontraction{\normalord a_{\vec{k},s}\Psi^{\epsilon_{1}}(x_{1}) \cdots}{\Psi}{^{\epsilon_{j}}(x_{j}) \cdots \Psi^{\epsilon_{i}}(x_{i}) \cdots \Psi^{\epsilon_{n}}(x_{n})}{a}
      \normalord a_{\vec{k},s}\Psi^{\epsilon_{1}}(x_{1}) \cdots \Psi^{\epsilon_{j}}(x_{j}) \cdots \Psi^{\epsilon_{i}}(x_{i}) \cdots \Psi^{\epsilon_{n}}(x_{n}) a^{\dagger}_{\vec{p},r}\normalord~.
    \end{multline}
  \end{widetext}
  
  Iterative application of \cref{eqn: intermediate result 1 towards Wick spinor t-o full vev 2,eqn: intermediate result 2 towards Wick spinor t-o full vev 2} then yields \cref{eqn: towards Wick spinor t-o full vev} in the general case.
\end{proof}

\begin{proof}[Proof of \Cref{thm: Wick spinor t-o full vev}]
  Similar to the proof of \cref{thm: Wick complex t-o full vev}, combining \cref{thm: Wick spinor t-o} with \cref{lem: towards Wick spinor t-o full vev} yields \cref{thm: Wick spinor t-o full vev}: at first, one obtains a completely normal ordered term plus all normal-ordered terms with single contractions (between two fields, between two ladder operators and between ladder operator and field) plus all normal-ordered term with two contractions and so on. Taking the vacuum expectation value, all terms vanish due to the normal-ordering, with the exception of those terms that are fully contracted and therefore proportional to the identity $\id$. Full contraction is only possible for an even number of operators, such that products containing an odd number of operators vanish.
\end{proof}

\subsection{UDW-type detector}\label{sec: proofs wick detector}

\Cref{thm: Wick detector t-o full vev} is essentially a simplification of the \cref{thm: Wick spinor t-o full vev} proven in the previous section. Again, first consider

\begin{lem}\label{lem: towards Wick detector t-o full vev}
  Using the notation of \cref{thm: Wick detector t-o full vev}, the following identity holds:
  \begin{multline}
    [\sigma^{-}]^{a}\normalord \op{\mu}(t_{1})\cdots\op{\mu}(t_{n}) \normalord [\sigma^{+}]^{b} \\= \normalord [\sigma^{-}]^{a} \op{\mu}(t_{1})\cdots\op{\mu}(t_{n})  [\sigma^{+}]^{b} \normalord \\
    + \sum_{\mathclap{\substack{\text{single}\\{\text{contractions}}}}}%
    \contraction{\normalord}{\sigma}{^{-}\cdots \op{\mu}(t_{1})\cdots}{\mu}
     \normalord \sigma^{-} \cdots \op{\mu}(t_{1})\cdots\op{\mu}(t_{i})\cdots\op{\mu}(t_{n})  \cdots \sigma^{+} \normalord \\
    + \sum_{\mathclap{\substack{\text{double}\\{\text{contractions}}}}}%
    \contraction{\normalord}{\sigma}{^{-}\cdots \op{\mu}(t_{1})\cdots\op{\mu}(t_{j})\cdots}{\mu}
    \bcontraction{\normalord \sigma^{-} \cdots \op{\mu}(t_{1})\cdots}{\op{\mu}}{(t_{j})\cdots\op{\mu}(t_{i})\cdots\op{\mu}(t_{n})  \cdots }{\sigma}%
     \normalord \sigma^{-} \cdots \op{\mu}(t_{1})\cdots\op{\mu}(t_{j})\cdots\op{\mu}(t_{i})\cdots\op{\mu}(t_{n})  \cdots \sigma^{+} \normalord~.
  \end{multline}
\end{lem}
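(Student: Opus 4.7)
The plan is to mirror the argument used in the proof of \Cref{lem: towards Wick spinor t-o full vev} for the spinor field, exploiting the fact that the monopole field $\mu(t)$ has the same algebraic structure as the spinor field but with only one mode, no spin, and no distinction between particles and antiparticles. First I would decompose each monopole operator into its creation and annihilation parts, $\mu(t_i) = \mu_+(t_i) + \mu_-(t_i)$ with $\mu_\pm(t_i) = \sigma^\pm e^{\pm i\Omega t_i}$, and record the relevant anticommutators derived from \cref{eqn: acr ladd op monopole}:
\begin{align*}
  \{\sigma^-,\mu(t)\} &= \id\,e^{+i\Omega t}~,\\
  \{\mu(t),\sigma^+\} &= \id\,e^{-i\Omega t}~,\\
  \{\sigma^-,\sigma^+\} &= \id~,
\end{align*}
matching exactly the contractions \cref{eqn: detector full vev contractions 1,eqn: detector full vev contractions 2,eqn: detector full vev contractions 3}.

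Next I would treat the case $a=1$, $b=0$ (a single $\sigma^-$ on the left) by moving $\sigma^-$ rightward through the normal-ordered product. For each monopole factor $\mu(t_i)$, one splits off the part $\mu_\pm(t_i)$ that appears at the corresponding position in the normal-ordered product (which introduces $(-1)^m$ and $(-1)^\ell$ signs depending on how many operators have been passed), anticommutes $\sigma^-$ through, and collects the anticommutator term. The signs conspire exactly as in the spinor proof so that each contribution can be written as $(-1)^{i-1}\{\sigma^-,\mu(t_i)\}$ times the normal-ordered product with $\mu(t_i)$ removed, i.e.\ a single contraction absorbing the sign into the definition of the contraction. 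This yields the identity
\begin{multline*}
  \sigma^- \normalord \mu(t_1)\cdots\mu(t_n)\normalord \\
  = \normalord \sigma^-\mu(t_1)\cdots\mu(t_n)\normalord + \sum_{i=1}^n \contraction{\normalord}{\sigma}{^-\mu(t_1)\cdots}{\mu}\normalord \sigma^-\mu(t_1)\cdots\mu(t_i)\cdots\mu(t_n)\normalord~.
\end{multline*}
A symmetric argument handles the case $a=0$, $b=1$.

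Then I would treat the general case $a=b=1$ by iterating: apply the single-$\sigma^-$ identity first, then in each resulting term apply the analogous single-$\sigma^+$ identity. The term in which neither ladder operator was contracted produces the completely normal-ordered term plus the ladder-ladder contraction $\contraction{}{\sigma}{^-}{\sigma} \sigma^-\sigma^+ = \id$, which arises from anticommuting $\sigma^-$ past the entire product and picking up the overall sign $(-1)^n$ precisely when re-collecting $\sigma^-\sigma^+$ into the contraction. Terms in which $\sigma^-$ was already contracted receive additional contractions of $\sigma^+$ with a remaining monopole factor, and vice versa; iterating produces all single, double, \dots\ up to full contractions, which is the claimed expansion.

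The main obstacle I anticipate is sign bookkeeping: as in the spinor case, each transposition of fermionic operators produces a $(-1)$, and one must verify that the cumulative signs from (i) the initial normal-ordering positions of $\mu_+(t_i)$ and $\mu_-(t_i)$, (ii) anticommuting $\sigma^\pm$ through the intervening operators, and (iii) reinserting the contracted pair into the normal-ordered expression combine to reproduce exactly the sign convention encoded in \cref{thm: Wick detector t-o full vev} (``every exchange of two operators introducing an overall factor of $(-1)$''). Since the algebraic structure is strictly simpler than for the spinor field, the calculation is a direct specialization of the one in \cref{sec: proofs wick spinor}, and no new difficulties arise beyond this careful accounting.
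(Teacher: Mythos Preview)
Your proposal is correct and follows essentially the same approach as the paper: the paper's own proof simply says the argument runs analogously to \cref{lem: towards Wick spinor t-o full vev}, using the anticommutators $\{\sigma^-,\mu(t)\}=\id\,e^{+\ii\Omega t}$ and $\{\mu(t),\sigma^+\}=\id\,e^{-\ii\Omega t}$ in place of \cref{eqn: acr w field towards Wick spinor t-o full vev,eqn: acr w field towards Wick spinor t-o full vev 2}, and otherwise follows the same steps. One small remark: since the lemma concerns only contractions between ladder operators and fields (or between two ladder operators), the relevant contractions are \cref{eqn: detector full vev contractions 2,eqn: detector full vev contractions 3} rather than \cref{eqn: detector full vev contractions 1}.
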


\begin{proof}[Proof of \Cref{lem: towards Wick detector t-o full vev}]
  The proof runs analogous to the proof of \cref{lem: towards Wick spinor t-o full vev}, but is somewhat simpler since there is only one type of ladder operator ($\sigma$) instead of two ($a$, $b$), and the monopole operator is self-adjoint, unlike the spinor field. This time, the relevant anticommutation relations are \cref{eqn: acr ladd op monopole}, and
  \begin{align}
    \{\sigma^{-},\mu(t)\} &= \id\,e^{+\ii  \Omega t}~, & \{\mu(t),\sigma^{+}\} &= \id\,e^{-\ii  \Omega t}
  \end{align}
  instead of \cref{eqn: acr w field towards Wick spinor t-o full vev} and \cref{eqn: acr w field towards Wick spinor t-o full vev 2}. This gives rise to the contractions defined in \cref{eqn: detector full vev contractions 2,eqn: detector full vev contractions 3}. Beyond that, the proof is follows the exact same steps as before.
\end{proof}

\begin{proof}[Proof of \Cref{thm: Wick detector t-o full vev}]
  The theorem follows directly from \cref{thm: Wick detector t-o} and \cref{lem: towards Wick detector t-o full vev}, analogous to the proof of \cref{thm: Wick spinor t-o full vev}.
\end{proof}

\bibliography{fermion_detector}

\begin{thebibliography}{53}%
\makeatletter
\providecommand \@ifxundefined [1]{%
 \@ifx{#1\undefined}
}%
\providecommand \@ifnum [1]{%
 \ifnum #1\expandafter \@firstoftwo
 \else \expandafter \@secondoftwo
 \fi
}%
\providecommand \@ifx [1]{%
 \ifx #1\expandafter \@firstoftwo
 \else \expandafter \@secondoftwo
 \fi
}%
\providecommand \natexlab [1]{#1}%
\providecommand \enquote  [1]{``#1''}%
\providecommand \bibnamefont  [1]{#1}%
\providecommand \bibfnamefont [1]{#1}%
\providecommand \citenamefont [1]{#1}%
\providecommand \href@noop [0]{\@secondoftwo}%
\providecommand \href [0]{\begingroup \@sanitize@url \@href}%
\providecommand \@href[1]{\@@startlink{#1}\@@href}%
\providecommand \@@href[1]{\endgroup#1\@@endlink}%
\providecommand \@sanitize@url [0]{\catcode `\\12\catcode `\$12\catcode
  `\&12\catcode `\#12\catcode `\^12\catcode `\_12\catcode `\%12\relax}%
\providecommand \@@startlink[1]{}%
\providecommand \@@endlink[0]{}%
\providecommand \url  [0]{\begingroup\@sanitize@url \@url }%
\providecommand \@url [1]{\endgroup\@href {#1}{\urlprefix }}%
\providecommand \urlprefix  [0]{URL }%
\providecommand \Eprint [0]{\href }%
\providecommand \doibase [0]{http://dx.doi.org/}%
\providecommand \selectlanguage [0]{\@gobble}%
\providecommand \bibinfo  [0]{\@secondoftwo}%
\providecommand \bibfield  [0]{\@secondoftwo}%
\providecommand \translation [1]{[#1]}%
\providecommand \BibitemOpen [0]{}%
\providecommand \bibitemStop [0]{}%
\providecommand \bibitemNoStop [0]{.\EOS\space}%
\providecommand \EOS [0]{\spacefactor3000\relax}%
\providecommand \BibitemShut  [1]{\csname bibitem#1\endcsname}%
\let\auto@bib@innerbib\@empty
\bibitem [{\citenamefont {Terno}(2014)}]{DannyLoc}%
  \BibitemOpen
  \bibfield  {author} {\bibinfo {author} {\bibfnamefont {D.~R.}\ \bibnamefont
  {Terno}},\ }\href {\doibase 10.1103/PhysRevA.89.042111} {\bibfield  {journal}
  {\bibinfo  {journal} {Phys. Rev. A}\ }\textbf {\bibinfo {volume} {89}},\
  \bibinfo {pages} {042111} (\bibinfo {year} {2014})}\BibitemShut {NoStop}%
\bibitem [{\citenamefont {Unruh}(1976)}]{unruh_notes_1976}%
  \BibitemOpen
  \bibfield  {author} {\bibinfo {author} {\bibfnamefont {W.~G.}\ \bibnamefont
  {Unruh}},\ }\href {\doibase 10.1103/PhysRevD.14.870} {\bibfield  {journal}
  {\bibinfo  {journal} {Phys. Rev. D}\ }\textbf {\bibinfo {volume} {14}},\
  \bibinfo {pages} {870} (\bibinfo {year} {1976})}\BibitemShut {NoStop}%
\bibitem [{\citenamefont
  {Seligman~DeWitt}(1979)}]{seligman_dewitt_quantum_1979}%
  \BibitemOpen
  \bibfield  {author} {\bibinfo {author} {\bibfnamefont {B.}~\bibnamefont
  {Seligman~DeWitt}},\ }in\ \href@noop {} {\emph {\bibinfo {booktitle} {General
  Relativity: An Einstein Centenary Survey}}},\ \bibinfo {editor} {edited by\
  \bibinfo {editor} {\bibfnamefont {S.~W.}\ \bibnamefont {Hawking}}\ and\
  \bibinfo {editor} {\bibfnamefont {W.}~\bibnamefont {Israel}}}\ (\bibinfo
  {publisher} {Cambridge University Press},\ \bibinfo {address} {Cambridge},\
  \bibinfo {year} {1979})\ p.\ \bibinfo {pages} {680}\BibitemShut {NoStop}%
\bibitem [{\citenamefont {Jonsson}\ \emph {et~al.}(2015)\citenamefont
  {Jonsson}, \citenamefont {Mart\'{i}n-Mart\'{i}nez},\ and\ \citenamefont
  {Kempf}}]{Jonsson2015}%
  \BibitemOpen
  \bibfield  {author} {\bibinfo {author} {\bibfnamefont {R.~H.}\ \bibnamefont
  {Jonsson}}, \bibinfo {author} {\bibfnamefont {E.}~\bibnamefont
  {Mart\'{i}n-Mart\'{i}nez}}, \ and\ \bibinfo {author} {\bibfnamefont
  {A.}~\bibnamefont {Kempf}},\ }\href {\doibase 10.1103/PhysRevLett.114.110505}
  {\bibfield  {journal} {\bibinfo  {journal} {Phys. Rev. Lett.}\ }\textbf
  {\bibinfo {volume} {114}},\ \bibinfo {pages} {110505} (\bibinfo {year}
  {2015})}\BibitemShut {NoStop}%
\bibitem [{\citenamefont {Jonsson}\ \emph {et~al.}(2014)\citenamefont
  {Jonsson}, \citenamefont {Mart\'{i}n-Mart\'{i}nez},\ and\ \citenamefont
  {Kempf}}]{jonsson_quantum_2014}%
  \BibitemOpen
  \bibfield  {author} {\bibinfo {author} {\bibfnamefont {R.~H.}\ \bibnamefont
  {Jonsson}}, \bibinfo {author} {\bibfnamefont {E.}~\bibnamefont
  {Mart\'{i}n-Mart\'{i}nez}}, \ and\ \bibinfo {author} {\bibfnamefont
  {A.}~\bibnamefont {Kempf}},\ }\href {\doibase 10.1103/PhysRevA.89.022330}
  {\bibfield  {journal} {\bibinfo  {journal} {Phys. Rev. A}\ }\textbf {\bibinfo
  {volume} {89}},\ \bibinfo {pages} {022330} (\bibinfo {year}
  {2014})}\BibitemShut {NoStop}%
\bibitem [{\citenamefont {Summers}\ and\ \citenamefont
  {Werner}(1985)}]{Algebra1}%
  \BibitemOpen
  \bibfield  {author} {\bibinfo {author} {\bibfnamefont {S.~J.}\ \bibnamefont
  {Summers}}\ and\ \bibinfo {author} {\bibfnamefont {R.~F.}\ \bibnamefont
  {Werner}},\ }\href {\doibase 10.1016/0375-9601(85)90093-3} {\bibfield
  {journal} {\bibinfo  {journal} {Phys. Lett. A}\ }\textbf {\bibinfo {volume}
  {110}},\ \bibinfo {pages} {257} (\bibinfo {year} {1985})}\BibitemShut
  {NoStop}%
\bibitem [{\citenamefont {Valentini}(1991)}]{Valentini1991}%
  \BibitemOpen
  \bibfield  {author} {\bibinfo {author} {\bibfnamefont {A.}~\bibnamefont
  {Valentini}},\ }\href {\doibase 10.1016/0375-9601(91)90952-5} {\bibfield
  {journal} {\bibinfo  {journal} {Phys. Lett. A}\ }\textbf {\bibinfo {volume}
  {153}},\ \bibinfo {pages} {321} (\bibinfo {year} {1991})}\BibitemShut
  {NoStop}%
\bibitem [{\citenamefont {Reznik}(2003)}]{Reznik2003}%
  \BibitemOpen
  \bibfield  {author} {\bibinfo {author} {\bibfnamefont {B.}~\bibnamefont
  {Reznik}},\ }\href {\doibase 10.1023/A:1022875910744} {\bibfield  {journal}
  {\bibinfo  {journal} {Found. Phys.}\ }\textbf {\bibinfo {volume} {33}},\
  \bibinfo {pages} {167} (\bibinfo {year} {2003})}\BibitemShut {NoStop}%
\bibitem [{\citenamefont {Steeg}\ and\ \citenamefont
  {Menicucci}(2009)}]{VerSteeg2009}%
  \BibitemOpen
  \bibfield  {author} {\bibinfo {author} {\bibfnamefont {G.~V.}\ \bibnamefont
  {Steeg}}\ and\ \bibinfo {author} {\bibfnamefont {N.~C.}\ \bibnamefont
  {Menicucci}},\ }\href {\doibase 10.1103/PhysRevD.79.044027} {\bibfield
  {journal} {\bibinfo  {journal} {Phys. Rev. D}\ }\textbf {\bibinfo {volume}
  {79}},\ \bibinfo {pages} {044027} (\bibinfo {year} {2009})}\BibitemShut
  {NoStop}%
\bibitem [{\citenamefont {Mart\'{i}n-Mart\'{i}nez}\ \emph
  {et~al.}(2013)\citenamefont {Mart\'{i}n-Mart\'{i}nez}, \citenamefont {Brown},
  \citenamefont {Donnelly},\ and\ \citenamefont {Kempf}}]{Farming}%
  \BibitemOpen
  \bibfield  {author} {\bibinfo {author} {\bibfnamefont {E.}~\bibnamefont
  {Mart\'{i}n-Mart\'{i}nez}}, \bibinfo {author} {\bibfnamefont {E.~G.}\
  \bibnamefont {Brown}}, \bibinfo {author} {\bibfnamefont {W.}~\bibnamefont
  {Donnelly}}, \ and\ \bibinfo {author} {\bibfnamefont {A.}~\bibnamefont
  {Kempf}},\ }\href {\doibase 10.1103/PhysRevA.88.052310} {\bibfield  {journal}
  {\bibinfo  {journal} {Phys. Rev. A}\ }\textbf {\bibinfo {volume} {88}},\
  \bibinfo {pages} {052310} (\bibinfo {year} {2013})}\BibitemShut {NoStop}%
\bibitem [{\citenamefont {Fulling}(1973)}]{fulling_nonuniqueness_1973}%
  \BibitemOpen
  \bibfield  {author} {\bibinfo {author} {\bibfnamefont {S.~A.}\ \bibnamefont
  {Fulling}},\ }\href {\doibase 10.1103/PhysRevD.7.2850} {\bibfield  {journal}
  {\bibinfo  {journal} {Phys. Rev. D}\ }\textbf {\bibinfo {volume} {7}},\
  \bibinfo {pages} {2850} (\bibinfo {year} {1973})}\BibitemShut {NoStop}%
\bibitem [{\citenamefont {Davies}(1975)}]{davies_scalar_1975}%
  \BibitemOpen
  \bibfield  {author} {\bibinfo {author} {\bibfnamefont {P.~C.~W.}\
  \bibnamefont {Davies}},\ }\href {\doibase 10.1088/0305-4470/8/4/022}
  {\bibfield  {journal} {\bibinfo  {journal} {J. Phys. A}\ }\textbf {\bibinfo
  {volume} {8}},\ \bibinfo {pages} {609} (\bibinfo {year} {1975})}\BibitemShut
  {NoStop}%
\bibitem [{\citenamefont {Grove}\ and\ \citenamefont
  {Ottewill}(1983)}]{grove_notes_1983}%
  \BibitemOpen
  \bibfield  {author} {\bibinfo {author} {\bibfnamefont {P.~G.}\ \bibnamefont
  {Grove}}\ and\ \bibinfo {author} {\bibfnamefont {A.~C.}\ \bibnamefont
  {Ottewill}},\ }\href {\doibase 10.1088/0305-4470/16/16/029} {\bibfield
  {journal} {\bibinfo  {journal} {J. Phys. A}\ }\textbf {\bibinfo {volume}
  {16}},\ \bibinfo {pages} {3905} (\bibinfo {year} {1983})}\BibitemShut
  {NoStop}%
\bibitem [{\citenamefont {Birrell}\ and\ \citenamefont
  {Davies}(1984)}]{birrell_quantum_1984}%
  \BibitemOpen
  \bibfield  {author} {\bibinfo {author} {\bibfnamefont {N.~D.}\ \bibnamefont
  {Birrell}}\ and\ \bibinfo {author} {\bibfnamefont {P.~C.~W.}\ \bibnamefont
  {Davies}},\ }\href@noop {} {\emph {\bibinfo {title} {Quantum Fields in Curved
  Space}}}\ (\bibinfo  {publisher} {Cambridge University Press},\ \bibinfo
  {address} {Cambridge},\ \bibinfo {year} {1984})\BibitemShut {NoStop}%
\bibitem [{\citenamefont {Crispino}\ \emph {et~al.}(2008)\citenamefont
  {Crispino}, \citenamefont {Higuchi},\ and\ \citenamefont
  {Matsas}}]{crispino_unruh_2008}%
  \BibitemOpen
  \bibfield  {author} {\bibinfo {author} {\bibfnamefont {L.~C.~B.}\
  \bibnamefont {Crispino}}, \bibinfo {author} {\bibfnamefont {A.}~\bibnamefont
  {Higuchi}}, \ and\ \bibinfo {author} {\bibfnamefont {G.~E.~A.}\ \bibnamefont
  {Matsas}},\ }\href {\doibase 10.1103/RevModPhys.80.787} {\bibfield  {journal}
  {\bibinfo  {journal} {Rev. Mod. Phys.}\ }\textbf {\bibinfo {volume} {80}},\
  \bibinfo {pages} {787} (\bibinfo {year} {2008})}\BibitemShut {NoStop}%
\bibitem [{\citenamefont {Hinton}(1983)}]{hinton_particle_1983}%
  \BibitemOpen
  \bibfield  {author} {\bibinfo {author} {\bibfnamefont {K.~J.}\ \bibnamefont
  {Hinton}},\ }\href {\doibase 10.1088/0305-4470/16/9/018} {\bibfield
  {journal} {\bibinfo  {journal} {J. Phys. A}\ }\textbf {\bibinfo {volume}
  {16}},\ \bibinfo {pages} {1937} (\bibinfo {year} {1983})}\BibitemShut
  {NoStop}%
\bibitem [{\citenamefont {Hinton}(1984)}]{hinton_particle_1984}%
  \BibitemOpen
  \bibfield  {author} {\bibinfo {author} {\bibfnamefont {K.~J.}\ \bibnamefont
  {Hinton}},\ }\href {\doibase 10.1088/0264-9381/1/1/006} {\bibfield  {journal}
  {\bibinfo  {journal} {Class. Quantum Grav.}\ }\textbf {\bibinfo {volume}
  {1}},\ \bibinfo {pages} {27} (\bibinfo {year} {1984})}\BibitemShut {NoStop}%
\bibitem [{\citenamefont {Sriramkumar}(2001)}]{sriramkumar_response_2001}%
  \BibitemOpen
  \bibfield  {author} {\bibinfo {author} {\bibfnamefont {L.}~\bibnamefont
  {Sriramkumar}},\ }\href@noop {} {\enquote {\bibinfo {title} {On the response
  of non-linearly coupled, accelerated detectors in odd-dimensional flat
  spacetimes},}\ } (\bibinfo {year} {2001}),\ \Eprint
  {http://arxiv.org/abs/gr-qc/0106054} {arXiv:gr-qc/0106054} \BibitemShut
  {NoStop}%
\bibitem [{\citenamefont {Candelas}\ and\ \citenamefont
  {Sciama}(1977)}]{candelas_irreversible_1977}%
  \BibitemOpen
  \bibfield  {author} {\bibinfo {author} {\bibfnamefont {P.}~\bibnamefont
  {Candelas}}\ and\ \bibinfo {author} {\bibfnamefont {D.~W.}\ \bibnamefont
  {Sciama}},\ }\href {\doibase 10.1103/PhysRevLett.38.1372} {\bibfield
  {journal} {\bibinfo  {journal} {Phys. Rev. Lett.}\ }\textbf {\bibinfo
  {volume} {38}},\ \bibinfo {pages} {1372} (\bibinfo {year}
  {1977})}\BibitemShut {NoStop}%
\bibitem [{\citenamefont {Takagi}(1986)}]{takagi_vacuum_1986}%
  \BibitemOpen
  \bibfield  {author} {\bibinfo {author} {\bibfnamefont {S.}~\bibnamefont
  {Takagi}},\ }\href {\doibase 10.1143/PTPS.88.1} {\bibfield  {journal}
  {\bibinfo  {journal} {Prog. Theor. Phys. Suppl.}\ }\textbf {\bibinfo {volume}
  {88}},\ \bibinfo {pages} {1} (\bibinfo {year} {1986})}\BibitemShut {NoStop}%
\bibitem [{\citenamefont {Reznik}\ \emph {et~al.}(2005)\citenamefont {Reznik},
  \citenamefont {Retzker},\ and\ \citenamefont
  {Silman}}]{reznik_violating_2005}%
  \BibitemOpen
  \bibfield  {author} {\bibinfo {author} {\bibfnamefont {B.}~\bibnamefont
  {Reznik}}, \bibinfo {author} {\bibfnamefont {A.}~\bibnamefont {Retzker}}, \
  and\ \bibinfo {author} {\bibfnamefont {J.}~\bibnamefont {Silman}},\ }\href
  {\doibase 10.1103/PhysRevA.71.042104} {\bibfield  {journal} {\bibinfo
  {journal} {Phys. Rev. A}\ }\textbf {\bibinfo {volume} {71}},\ \bibinfo
  {pages} {042104} (\bibinfo {year} {2005})}\BibitemShut {NoStop}%
\bibitem [{\citenamefont {Cliche}\ and\ \citenamefont
  {Kempf}(2011)}]{cliche_vacuum_2011}%
  \BibitemOpen
  \bibfield  {author} {\bibinfo {author} {\bibfnamefont {M.}~\bibnamefont
  {Cliche}}\ and\ \bibinfo {author} {\bibfnamefont {A.}~\bibnamefont {Kempf}},\
  }\href {\doibase 10.1103/PhysRevD.83.045019} {\bibfield  {journal} {\bibinfo
  {journal} {Phys. Rev. D}\ }\textbf {\bibinfo {volume} {83}},\ \bibinfo
  {pages} {045019} (\bibinfo {year} {2011})}\BibitemShut {NoStop}%
\bibitem [{\citenamefont {Mart{\'\i}n-Mart{\'\i}nez}\ \emph
  {et~al.}(2013{\natexlab{a}})\citenamefont {Mart{\'\i}n-Mart{\'\i}nez},
  \citenamefont {Brown}, \citenamefont {Donnelly},\ and\ \citenamefont
  {Kempf}}]{martin-martinez_sustainable_2013}%
  \BibitemOpen
  \bibfield  {author} {\bibinfo {author} {\bibfnamefont {E.}~\bibnamefont
  {Mart{\'\i}n-Mart{\'\i}nez}}, \bibinfo {author} {\bibfnamefont {E.~G.}\
  \bibnamefont {Brown}}, \bibinfo {author} {\bibfnamefont {W.}~\bibnamefont
  {Donnelly}}, \ and\ \bibinfo {author} {\bibfnamefont {A.}~\bibnamefont
  {Kempf}},\ }\href {\doibase 10.1103/PhysRevA.88.052310} {\bibfield  {journal}
  {\bibinfo  {journal} {Phys. Rev. A}\ }\textbf {\bibinfo {volume} {88}},\
  \bibinfo {pages} {052310} (\bibinfo {year} {2013}{\natexlab{a}})}\BibitemShut
  {NoStop}%
\bibitem [{\citenamefont {Lin}\ and\ \citenamefont
  {Hu}(2010)}]{lin_entanglement_2010}%
  \BibitemOpen
  \bibfield  {author} {\bibinfo {author} {\bibfnamefont {S.-Y.}\ \bibnamefont
  {Lin}}\ and\ \bibinfo {author} {\bibfnamefont {B.~L.}\ \bibnamefont {Hu}},\
  }\href {\doibase 10.1103/PhysRevD.81.045019} {\bibfield  {journal} {\bibinfo
  {journal} {Phys. Rev. D}\ }\textbf {\bibinfo {volume} {81}},\ \bibinfo
  {pages} {045019} (\bibinfo {year} {2010})}\BibitemShut {NoStop}%
\bibitem [{\citenamefont {Olson}\ and\ \citenamefont
  {Ralph}(2011)}]{olson_entanglement_2011}%
  \BibitemOpen
  \bibfield  {author} {\bibinfo {author} {\bibfnamefont {S.~J.}\ \bibnamefont
  {Olson}}\ and\ \bibinfo {author} {\bibfnamefont {T.~C.}\ \bibnamefont
  {Ralph}},\ }\href {\doibase 10.1103/PhysRevLett.106.110404} {\bibfield
  {journal} {\bibinfo  {journal} {Phys. Rev. Lett.}\ }\textbf {\bibinfo
  {volume} {106}},\ \bibinfo {pages} {110404} (\bibinfo {year}
  {2011})}\BibitemShut {NoStop}%
\bibitem [{\citenamefont {Cliche}\ and\ \citenamefont
  {Kempf}(2010)}]{cliche_relativistic_2010}%
  \BibitemOpen
  \bibfield  {author} {\bibinfo {author} {\bibfnamefont {M.}~\bibnamefont
  {Cliche}}\ and\ \bibinfo {author} {\bibfnamefont {A.}~\bibnamefont {Kempf}},\
  }\href {\doibase 10.1103/PhysRevA.81.012330} {\bibfield  {journal} {\bibinfo
  {journal} {Phys. Rev. A}\ }\textbf {\bibinfo {volume} {81}},\ \bibinfo
  {pages} {012330} (\bibinfo {year} {2010})}\BibitemShut {NoStop}%
\bibitem [{\citenamefont {Mart{\'\i}n-Mart{\'\i}nez}\ \emph
  {et~al.}(2013{\natexlab{b}})\citenamefont {Mart{\'\i}n-Mart{\'\i}nez},
  \citenamefont {Montero},\ and\ \citenamefont {del
  Rey}}]{martin-martinez_wavepacket_2013}%
  \BibitemOpen
  \bibfield  {author} {\bibinfo {author} {\bibfnamefont {E.}~\bibnamefont
  {Mart{\'\i}n-Mart{\'\i}nez}}, \bibinfo {author} {\bibfnamefont
  {M.}~\bibnamefont {Montero}}, \ and\ \bibinfo {author} {\bibfnamefont
  {M.}~\bibnamefont {del Rey}},\ }\href {\doibase 10.1103/PhysRevD.87.064038}
  {\bibfield  {journal} {\bibinfo  {journal} {Phys. Rev. D}\ }\textbf {\bibinfo
  {volume} {87}},\ \bibinfo {pages} {064038} (\bibinfo {year}
  {2013}{\natexlab{b}})}\BibitemShut {NoStop}%
\bibitem [{\citenamefont {Alhambra}\ \emph {et~al.}(2014)\citenamefont
  {Alhambra}, \citenamefont {Kempf},\ and\ \citenamefont
  {Mart\'in-Mart\'inez}}]{Alhambra2013}%
  \BibitemOpen
  \bibfield  {author} {\bibinfo {author} {\bibfnamefont {A.~M.}\ \bibnamefont
  {Alhambra}}, \bibinfo {author} {\bibfnamefont {A.}~\bibnamefont {Kempf}}, \
  and\ \bibinfo {author} {\bibfnamefont {E.}~\bibnamefont
  {Mart\'in-Mart\'inez}},\ }\href {\doibase 10.1103/PhysRevA.89.033835}
  {\bibfield  {journal} {\bibinfo  {journal} {Phys. Rev. A}\ }\textbf {\bibinfo
  {volume} {89}},\ \bibinfo {pages} {033835} (\bibinfo {year}
  {2014})}\BibitemShut {NoStop}%
\bibitem [{\citenamefont {Soffel}\ \emph {et~al.}(1980)\citenamefont {Soffel},
  \citenamefont {M{\"u}ller},\ and\ \citenamefont
  {Greiner}}]{soffel_dirac_1980}%
  \BibitemOpen
  \bibfield  {author} {\bibinfo {author} {\bibfnamefont {M.}~\bibnamefont
  {Soffel}}, \bibinfo {author} {\bibfnamefont {B.}~\bibnamefont {M{\"u}ller}},
  \ and\ \bibinfo {author} {\bibfnamefont {W.}~\bibnamefont {Greiner}},\ }\href
  {\doibase 10.1103/PhysRevD.22.1935} {\bibfield  {journal} {\bibinfo
  {journal} {Phys. Rev. D}\ }\textbf {\bibinfo {volume} {22}},\ \bibinfo
  {pages} {1935} (\bibinfo {year} {1980})}\BibitemShut {NoStop}%
\bibitem [{\citenamefont {Hawking}(1975)}]{hawking_particle_1975}%
  \BibitemOpen
  \bibfield  {author} {\bibinfo {author} {\bibfnamefont {S.~W.}\ \bibnamefont
  {Hawking}},\ }\href {\doibase 10.1007/BF02345020} {\bibfield  {journal}
  {\bibinfo  {journal} {Commun. Math. Phys.}\ }\textbf {\bibinfo {volume}
  {43}},\ \bibinfo {pages} {199} (\bibinfo {year} {1975})}\BibitemShut
  {NoStop}%
\bibitem [{\citenamefont {Montero}\ and\ \citenamefont
  {Mart{\'\i}n-Mart{\'\i}nez}(2011)}]{montero_fermionic_2011}%
  \BibitemOpen
  \bibfield  {author} {\bibinfo {author} {\bibfnamefont {M.}~\bibnamefont
  {Montero}}\ and\ \bibinfo {author} {\bibfnamefont {E.}~\bibnamefont
  {Mart{\'\i}n-Mart{\'\i}nez}},\ }\href {\doibase 10.1103/PhysRevA.83.062323}
  {\bibfield  {journal} {\bibinfo  {journal} {Phys. Rev. A}\ }\textbf {\bibinfo
  {volume} {83}},\ \bibinfo {pages} {062323} (\bibinfo {year}
  {2011})}\BibitemShut {NoStop}%
\bibitem [{\citenamefont {Iyer}\ and\ \citenamefont
  {Kumar}(1980)}]{iyer_detection_1980}%
  \BibitemOpen
  \bibfield  {author} {\bibinfo {author} {\bibfnamefont {B.~R.}\ \bibnamefont
  {Iyer}}\ and\ \bibinfo {author} {\bibfnamefont {A.}~\bibnamefont {Kumar}},\
  }\href {\doibase 10.1088/0305-4470/13/2/015} {\bibfield  {journal} {\bibinfo
  {journal} {J. Phys. A}\ }\textbf {\bibinfo {volume} {13}},\ \bibinfo {pages}
  {469} (\bibinfo {year} {1980})}\BibitemShut {NoStop}%
\bibitem [{\citenamefont {Takagi}(1985)}]{takagi_response_1985}%
  \BibitemOpen
  \bibfield  {author} {\bibinfo {author} {\bibfnamefont {S.}~\bibnamefont
  {Takagi}},\ }\href {\doibase 10.1143/PTP.74.501} {\bibfield  {journal}
  {\bibinfo  {journal} {Prog. Theor. Phys.}\ }\textbf {\bibinfo {volume}
  {74}},\ \bibinfo {pages} {501} (\bibinfo {year} {1985})}\BibitemShut
  {NoStop}%
\bibitem [{\citenamefont {Sriramkumar}\ and\ \citenamefont
  {Padmanabhan}(1996)}]{sriramkumar_finite-time_1996}%
  \BibitemOpen
  \bibfield  {author} {\bibinfo {author} {\bibfnamefont {L.}~\bibnamefont
  {Sriramkumar}}\ and\ \bibinfo {author} {\bibfnamefont {T.}~\bibnamefont
  {Padmanabhan}},\ }\href {\doibase 10.1088/0264-9381/13/8/005} {\bibfield
  {journal} {\bibinfo  {journal} {Class. Quantum Grav.}\ }\textbf {\bibinfo
  {volume} {13}},\ \bibinfo {pages} {2061} (\bibinfo {year}
  {1996})}\BibitemShut {NoStop}%
\bibitem [{\citenamefont {Louko}\ and\ \citenamefont
  {Satz}(2008)}]{louko_transition_2008}%
  \BibitemOpen
  \bibfield  {author} {\bibinfo {author} {\bibfnamefont {J.}~\bibnamefont
  {Louko}}\ and\ \bibinfo {author} {\bibfnamefont {A.}~\bibnamefont {Satz}},\
  }\href {\doibase 10.1088/0264-9381/25/5/055012} {\bibfield  {journal}
  {\bibinfo  {journal} {Class. Quantum Grav.}\ }\textbf {\bibinfo {volume}
  {25}},\ \bibinfo {pages} {055012} (\bibinfo {year} {2008})}\BibitemShut
  {NoStop}%
\bibitem [{\citenamefont {Diaz}\ and\ \citenamefont
  {Stephany}(2003)}]{diaz_radiative_2002}%
  \BibitemOpen
  \bibfield  {author} {\bibinfo {author} {\bibfnamefont {D.~E.}\ \bibnamefont
  {Diaz}}\ and\ \bibinfo {author} {\bibfnamefont {J.}~\bibnamefont
  {Stephany}},\ }\href {http://arxiv.org/abs/gr-qc/0201096} {\bibfield
  {journal} {\bibinfo  {journal} {Rev. Mex. Fis.}\ }\textbf {\bibinfo {volume}
  {49S3}},\ \bibinfo {pages} {120} (\bibinfo {year} {2003})}\BibitemShut
  {NoStop}%
\bibitem [{\citenamefont {Langlois}(2006)}]{langlois_causal_2006}%
  \BibitemOpen
  \bibfield  {author} {\bibinfo {author} {\bibfnamefont {P.}~\bibnamefont
  {Langlois}},\ }\href {\doibase 10.1016/j.aop.2006.01.013} {\bibfield
  {journal} {\bibinfo  {journal} {Ann. Phys. (Amsterdam)}\ }\textbf {\bibinfo
  {volume} {321}},\ \bibinfo {pages} {2027} (\bibinfo {year}
  {2006})}\BibitemShut {NoStop}%
\bibitem [{\citenamefont {B{\'e}ssa}\ \emph {et~al.}(2012)\citenamefont
  {B{\'e}ssa}, \citenamefont {Due{\~n}as},\ and\ \citenamefont
  {Svaiter}}]{bessa_accelerated_2012}%
  \BibitemOpen
  \bibfield  {author} {\bibinfo {author} {\bibfnamefont {C.~H.~G.}\
  \bibnamefont {B{\'e}ssa}}, \bibinfo {author} {\bibfnamefont {J.~G.}\
  \bibnamefont {Due{\~n}as}}, \ and\ \bibinfo {author} {\bibfnamefont {N.~F.}\
  \bibnamefont {Svaiter}},\ }\href {\doibase 10.1088/0264-9381/29/21/215011}
  {\bibfield  {journal} {\bibinfo  {journal} {Class. Quantum Grav.}\ }\textbf
  {\bibinfo {volume} {29}},\ \bibinfo {pages} {215011} (\bibinfo {year}
  {2012})}\BibitemShut {NoStop}%
\bibitem [{\citenamefont {Harikumar}\ and\ \citenamefont
  {Verma}(2013)}]{harikumar_uniformly_2013}%
  \BibitemOpen
  \bibfield  {author} {\bibinfo {author} {\bibfnamefont {E.}~\bibnamefont
  {Harikumar}}\ and\ \bibinfo {author} {\bibfnamefont {R.}~\bibnamefont
  {Verma}},\ }\href {\doibase 10.1142/S0217732313500636} {\bibfield  {journal}
  {\bibinfo  {journal} {Mod. Phys. Lett. A}\ }\textbf {\bibinfo {volume}
  {28}},\ \bibinfo {pages} {1350063} (\bibinfo {year} {2013})}\BibitemShut
  {NoStop}%
\bibitem [{\citenamefont {Mart{\'\i}n-Mart{\'\i}nez}\ and\ \citenamefont
  {Louko}(2014)}]{martin-martinez_particle_2014}%
  \BibitemOpen
  \bibfield  {author} {\bibinfo {author} {\bibfnamefont {E.}~\bibnamefont
  {Mart{\'\i}n-Mart{\'\i}nez}}\ and\ \bibinfo {author} {\bibfnamefont
  {J.}~\bibnamefont {Louko}},\ }\href {\doibase 10.1103/PhysRevD.90.024015}
  {\bibfield  {journal} {\bibinfo  {journal} {Phys. Rev. D}\ }\textbf {\bibinfo
  {volume} {90}},\ \bibinfo {pages} {024015} (\bibinfo {year}
  {2014})}\BibitemShut {NoStop}%
\bibitem [{\citenamefont {Alonso}\ \emph {et~al.}(1997)\citenamefont {Alonso},
  \citenamefont {Vincenzo},\ and\ \citenamefont
  {Mondino}}]{alonso_boundary_1997}%
  \BibitemOpen
  \bibfield  {author} {\bibinfo {author} {\bibfnamefont {V.}~\bibnamefont
  {Alonso}}, \bibinfo {author} {\bibfnamefont {S.~D.}\ \bibnamefont
  {Vincenzo}}, \ and\ \bibinfo {author} {\bibfnamefont {L.}~\bibnamefont
  {Mondino}},\ }\href {\doibase 10.1088/0143-0807/18/5/001} {\bibfield
  {journal} {\bibinfo  {journal} {Eur. J. Phys.}\ }\textbf {\bibinfo {volume}
  {18}},\ \bibinfo {pages} {315} (\bibinfo {year} {1997})}\BibitemShut
  {NoStop}%
\bibitem [{\citenamefont {Sakurai}\ and\ \citenamefont
  {Napolitano}(2011)}]{sakurai_modern_2011}%
  \BibitemOpen
  \bibfield  {author} {\bibinfo {author} {\bibfnamefont {J.~J.}\ \bibnamefont
  {Sakurai}}\ and\ \bibinfo {author} {\bibfnamefont {J.}~\bibnamefont
  {Napolitano}},\ }\href@noop {} {\emph {\bibinfo {title} {Modern Quantum
  Mechanics}}},\ \bibinfo {edition} {2nd}\ ed.\ (\bibinfo  {publisher}
  {Addison-Wesley},\ \bibinfo {address} {New York},\ \bibinfo {year}
  {2011})\BibitemShut {NoStop}%
\bibitem [{\citenamefont {Bjorken}\ and\ \citenamefont
  {Drell}(1965)}]{bjorken_relativistic_1965}%
  \BibitemOpen
  \bibfield  {author} {\bibinfo {author} {\bibfnamefont {J.~D.}\ \bibnamefont
  {Bjorken}}\ and\ \bibinfo {author} {\bibfnamefont {S.~D.}\ \bibnamefont
  {Drell}},\ }\href@noop {} {\emph {\bibinfo {title} {Relativistic Quantum
  Fields}}}\ (\bibinfo  {publisher} {McGraw-Hill},\ \bibinfo {address} {New
  York},\ \bibinfo {year} {1965})\BibitemShut {NoStop}%
\bibitem [{\citenamefont {Svaiter}\ and\ \citenamefont
  {Svaiter}(1992)}]{svaiter_inertial_1992}%
  \BibitemOpen
  \bibfield  {author} {\bibinfo {author} {\bibfnamefont {B.~F.}\ \bibnamefont
  {Svaiter}}\ and\ \bibinfo {author} {\bibfnamefont {N.~F.}\ \bibnamefont
  {Svaiter}},\ }\href {\doibase 10.1103/PhysRevD.46.5267} {\bibfield  {journal}
  {\bibinfo  {journal} {Phys. Rev. D}\ }\textbf {\bibinfo {volume} {46}},\
  \bibinfo {pages} {5267} (\bibinfo {year} {1992})}\BibitemShut {NoStop}%
\bibitem [{\citenamefont {Greiner}\ and\ \citenamefont
  {Reinhardt}(2008)}]{greiner_quantum_2008}%
  \BibitemOpen
  \bibfield  {author} {\bibinfo {author} {\bibfnamefont {W.}~\bibnamefont
  {Greiner}}\ and\ \bibinfo {author} {\bibfnamefont {J.}~\bibnamefont
  {Reinhardt}},\ }\href@noop {} {\emph {\bibinfo {title} {Quantum
  Electrodynamics}}}\ (\bibinfo  {publisher} {Springer},\ \bibinfo {address}
  {New York},\ \bibinfo {year} {2008})\BibitemShut {NoStop}%
\bibitem [{\citenamefont {Brenna}\ \emph {et~al.}(2013)\citenamefont {Brenna},
  \citenamefont {Brown}, \citenamefont {Mann},\ and\ \citenamefont
  {Mart\'{i}n-Mart\'{i}nez}}]{martin-martinez_Unruh2013}%
  \BibitemOpen
  \bibfield  {author} {\bibinfo {author} {\bibfnamefont {W.~G.}\ \bibnamefont
  {Brenna}}, \bibinfo {author} {\bibfnamefont {E.~G.}\ \bibnamefont {Brown}},
  \bibinfo {author} {\bibfnamefont {R.~B.}\ \bibnamefont {Mann}}, \ and\
  \bibinfo {author} {\bibfnamefont {E.}~\bibnamefont
  {Mart\'{i}n-Mart\'{i}nez}},\ }\href {\doibase 10.1103/PhysRevD.88.064031}
  {\bibfield  {journal} {\bibinfo  {journal} {Phys. Rev. D}\ }\textbf {\bibinfo
  {volume} {88}},\ \bibinfo {pages} {064031} (\bibinfo {year}
  {2013})}\BibitemShut {NoStop}%
\bibitem [{\citenamefont {Peskin}\ and\ \citenamefont
  {Schroeder}(1995)}]{peskin_introduction_1995}%
  \BibitemOpen
  \bibfield  {author} {\bibinfo {author} {\bibfnamefont {M.~E.}\ \bibnamefont
  {Peskin}}\ and\ \bibinfo {author} {\bibfnamefont {D.~V.}\ \bibnamefont
  {Schroeder}},\ }\href@noop {} {\emph {\bibinfo {title} {An Introduction To
  Quantum Field Theory}}}\ (\bibinfo  {publisher} {Addison-Wesley},\ \bibinfo
  {address} {New York},\ \bibinfo {year} {1995})\BibitemShut {NoStop}%
\bibitem [{\citenamefont {Br\'adler}\ and\ \citenamefont
  {J\'auregui}(2012)}]{Bradler2012}%
  \BibitemOpen
  \bibfield  {author} {\bibinfo {author} {\bibfnamefont {K.}~\bibnamefont
  {Br\'adler}}\ and\ \bibinfo {author} {\bibfnamefont {R.}~\bibnamefont
  {J\'auregui}},\ }\href {\doibase 10.1103/PhysRevA.85.016301} {\bibfield
  {journal} {\bibinfo  {journal} {Phys. Rev. A}\ }\textbf {\bibinfo {volume}
  {85}},\ \bibinfo {pages} {016301} (\bibinfo {year} {2012})}\BibitemShut
  {NoStop}%
\bibitem [{\citenamefont {Montero}\ and\ \citenamefont
  {Mart\'{i}n-Mart\'{i}nez}(2012)}]{Montero2012b}%
  \BibitemOpen
  \bibfield  {author} {\bibinfo {author} {\bibfnamefont {M.}~\bibnamefont
  {Montero}}\ and\ \bibinfo {author} {\bibfnamefont {E.}~\bibnamefont
  {Mart\'{i}n-Mart\'{i}nez}},\ }\href {\doibase 10.1103/PhysRevA.85.016302}
  {\bibfield  {journal} {\bibinfo  {journal} {Phys. Rev. A}\ }\textbf {\bibinfo
  {volume} {85}},\ \bibinfo {pages} {016302} (\bibinfo {year}
  {2012})}\BibitemShut {NoStop}%
\bibitem [{\citenamefont {Hodgkinson}\ \emph {et~al.}(2014)\citenamefont
  {Hodgkinson}, \citenamefont {Louko},\ and\ \citenamefont
  {Ottewill}}]{Jormacircular}%
  \BibitemOpen
  \bibfield  {author} {\bibinfo {author} {\bibfnamefont {L.}~\bibnamefont
  {Hodgkinson}}, \bibinfo {author} {\bibfnamefont {J.}~\bibnamefont {Louko}}, \
  and\ \bibinfo {author} {\bibfnamefont {A.~C.}\ \bibnamefont {Ottewill}},\
  }\href {\doibase 10.1103/PhysRevD.89.104002} {\bibfield  {journal} {\bibinfo
  {journal} {Phys. Rev. D}\ }\textbf {\bibinfo {volume} {89}},\ \bibinfo
  {pages} {104002} (\bibinfo {year} {2014})}\BibitemShut {NoStop}%
\bibitem [{\citenamefont {Ng}\ \emph {et~al.}(2014)\citenamefont {Ng},
  \citenamefont {Hodgkinson}, \citenamefont {Louko}, \citenamefont {Mann},\
  and\ \citenamefont {Mart\'{i}n-Mart\'{i}nez}}]{Keith2014}%
  \BibitemOpen
  \bibfield  {author} {\bibinfo {author} {\bibfnamefont {K.~K.}\ \bibnamefont
  {Ng}}, \bibinfo {author} {\bibfnamefont {L.}~\bibnamefont {Hodgkinson}},
  \bibinfo {author} {\bibfnamefont {J.}~\bibnamefont {Louko}}, \bibinfo
  {author} {\bibfnamefont {R.~B.}\ \bibnamefont {Mann}}, \ and\ \bibinfo
  {author} {\bibfnamefont {E.}~\bibnamefont {Mart\'{i}n-Mart\'{i}nez}},\ }\href
  {\doibase 10.1103/PhysRevD.90.064003} {\bibfield  {journal} {\bibinfo
  {journal} {Phys. Rev. D}\ }\textbf {\bibinfo {volume} {90}},\ \bibinfo
  {pages} {064003} (\bibinfo {year} {2014})}\BibitemShut {NoStop}%
\bibitem [{\citenamefont {Ju\'arez-Aubry}\ and\ \citenamefont
  {Louko}(2014)}]{JormaJuarez}%
  \BibitemOpen
  \bibfield  {author} {\bibinfo {author} {\bibfnamefont {B.~A.}\ \bibnamefont
  {Ju\'arez-Aubry}}\ and\ \bibinfo {author} {\bibfnamefont {J.}~\bibnamefont
  {Louko}},\ }\href {http://stacks.iop.org/0264-9381/31/i=24/a=245007}
  {\bibfield  {journal} {\bibinfo  {journal} {Class. Quantum Grav.}\ }\textbf
  {\bibinfo {volume} {31}},\ \bibinfo {pages} {245007} (\bibinfo {year}
  {2014})}\BibitemShut {NoStop}%
\bibitem [{\citenamefont {Brenna}\ \emph {et~al.}()\citenamefont {Brenna},
  \citenamefont {Mann},\ and\ \citenamefont
  {Mart\'{i}n-Mart\'{i}nez}}]{Anti_Unruh2013}%
  \BibitemOpen
  \bibfield  {author} {\bibinfo {author} {\bibfnamefont {W.~G.}\ \bibnamefont
  {Brenna}}, \bibinfo {author} {\bibfnamefont {R.~B.}\ \bibnamefont {Mann}}, \
  and\ \bibinfo {author} {\bibfnamefont {E.}~\bibnamefont
  {Mart\'{i}n-Mart\'{i}nez}},\ }\href@noop {} {\enquote {\bibinfo {title}
  {Anti-{U}nruh {P}henomena.}}\ }\Eprint {http://arxiv.org/abs/1504.02468}
  {arXiv:1504.02468 [quant-ph]} \BibitemShut {NoStop}%
\end{thebibliography}%

\end{document}